\newcolumntype{x}[1]{>{\centering\arraybackslash}p{#1}}
\newtheorem{thm}{Theorem}
\newtheorem*{thm*}{Theorem}
\newcommand{\setthmtag}[1]{% \settheoremtag{<tag>}
  \let\oldthethm\thethm% Store \thetheorem
  \renewcommand{\thethm}{#1}% Redefine it to a fixed value
  \g@addto@macro\endthm{% At \end{theorem}, ...
    \addtocounter{thm}{-1}% ...restore theorem counter value and...
    \global\let\thethm\oldthethm}% ...restore \thetheorem
  }
\newtheorem{prop}[thm]{Proposition}
\newtheorem*{prop*}{Proposition}
\newtheorem{lemma}[thm]{Lemma}
\newtheorem*{lemma*}{Lemma}
\newtheorem{cor}[thm]{Corollary}
\newtheorem*{cor*}{Corollary}
\newtheorem*{cj*}{Conjecture}
\newtheorem{Def}[thm]{Definition}
\newtheorem*{Def*}{Definition}
\def\thmhead@plain#1#2#3{%
  \thmname{#1}\thmnumber{\@ifnotempty{#1}{ }\@upn{#2}}%
  \thmnote{ {\the\thm@notefont#3}}}
\let\thmhead\thmhead@plain
\theoremstyle{definition}
\newtheorem{rem}[thm]{Remark}
\newtheorem*{note}{Note}
\newcommand{\bb}{\begin{equation}}
\newcommand{\bbb}{\begin{equation*}}
\newcommand{\ee}{\end{equation}}
\newcommand{\eee}{\end{equation*}}
\newcommand*{\coloneqq}{\mathrel{\vcenter{\baselineskip0.5ex \lineskiplimit0pt \hbox{\scriptsize.}\hbox{\scriptsize.}}} =}
\newcommand*{\eqqcolon}{= \mathrel{\vcenter{\baselineskip0.5ex \lineskiplimit0pt \hbox{\scriptsize.}\hbox{\scriptsize.}}}}
\newcommand\floor[1]{\left\lfloor#1\right\rfloor}
\newcommand{\texteq}[1]{\stackrel{\mathclap{\scriptsize \mbox{#1}}}{=}}
\newcommand{\textleq}[1]{\stackrel{\mathclap{\scriptsize \mbox{#1}}}{\leq}}
\newcommand{\textgeq}[1]{\stackrel{\mathclap{\scriptsize \mbox{#1}}}{\geq}}
\newcommand{\ketbra}[1]{\ket{#1}\!\!\bra{#1}}
\newcommand{\sumno}{\sum\nolimits}
\newcommand{\Alpha}{\alpha}
\newcommand{\tcr}[1]{{\color{red} #1}}
\newcommand{\tcb}[1]{{\color{blue} #1}}
\newcommand{\id}{\mathds{1}}
\newcommand{\R}{\mathds{R}}
\newcommand{\N}{\mathds{N}}
\newcommand{\C}{\mathds{C}}
\DeclareMathOperator{\Tr}{Tr}
\DeclareMathOperator{\rk}{rk}
\DeclareMathOperator{\co}{conv}
\DeclareMathOperator{\Span}{span}
\DeclareMathAlphabet{\pazocal}{OMS}{zplm}{m}{n}
\DeclareMathOperator{\supp}{supp}
\newcommand{\lsmatrix}{\left(\begin{smallmatrix}}
\newcommand{\rsmatrix}{\end{smallmatrix}\right)}
\newcommand\smatrix[1]{{%
  \scriptsize\arraycolsep=0.4\arraycolsep\ensuremath{\begin{pmatrix}#1\end{pmatrix}}}}
\newcommand\xxrightarrow[2][]{\mathrel{%
  \setbox2=\hbox{\stackon{\scriptstyle#1}{\scriptstyle#2}}%
  \stackunder[5pt]{%
    \xrightarrow{\makebox[\dimexpr\wd2\relax]{$\scriptstyle#2$}}%
  }{%
   \scriptstyle#1\,%
  }%
}}
\newcommand{\tends}[2]{\xxrightarrow[\! #2 \!]{\mathrm{#1}}}
\newcommand{\tendsn}[1]{\xxrightarrow[\! n\rightarrow \infty\!]{\mathrm{#1}}}
\newcommand{\ctends}[3]{\xxrightarrow[\raisebox{#3}{$\scriptstyle #2$}]{\raisebox{-0.7pt}{$\scriptstyle \mathrm{#1}$}}}
\newcommand*\rel@kern[1]{\kern#1\dimexpr\macc@kerna}
\newcommand*\widebar[1]{%
  \begingroup
  \def\mathaccent##1##2{%
    \rel@kern{0.8}%
    \overline{\rel@kern{-0.8}\macc@nucleus\rel@kern{0.2}}%
    \rel@kern{-0.2}%
  }%
  \macc@depth\@ne
  \let\math@bgroup\@empty \let\math@egroup\macc@set@skewchar
  \mathsurround\z@ \frozen@everymath{\mathgroup\macc@group\relax}%
  \macc@set@skewchar\relax
  \let\mathaccentV\macc@nested@a
  \macc@nested@a\relax111{#1}%
  \endgroup
}
\newcommand{\CC}{\pazocal{C}}
\newcommand{\DM}{D^M\!}
\newcommand{\DKL}{D_{K\! L}\!}
\newcommand{\M}{\pazocal{M}}
\newcommand{\X}{\pazocal{X}}
\newcommand{\NC}{N_r} %{N\! C_r}
\newcommand{\NCi}{N_r^\infty} %{N\! C_r}
\newcommand{\NCM}{N^M_r\!} %{N\! C^M_r\!}
\newcommand{\NCMi}{N^{M,\infty}_r\!} %{N\! C^M_r\!}
\newcommand{\HH}{\pazocal{H}}
\newcommand{\T}[1]{\pazocal{T}_{\mathrm{sa}}(\HH_{#1})}
\newcommand{\Tp}[1]{\pazocal{T}_{\mathrm{sa}}^+(\HH_{#1})}
\newcommand{\D}[1]{\pazocal{D}(\HH_{#1})}
\newcommand{\K}[1]{\pazocal{K}_{\mathrm{sa}}(\HH_{#1})}
\newcommand{\Kp}[1]{\pazocal{K}_{\mathrm{sa}}^+(\HH_{#1})}
\newcommand{\B}[1]{\pazocal{B}_{\mathrm{sa}}(\HH_{#1})}
\newcommand{\disp}{\mathcal{D}}
\newcommand{\tsa}{\pazocal{T}_{\mathrm{sa}}}
\newcommand{\hilb}{\pazocal{H}}
\newcommand{\CCM}{\pazocal{C}_m}
\newcommand{\CCMFD}{\pazocal{C}_m^{\mathrm{FD}}}
\newcommand{\CCFD}{\pazocal{C}_1^{\mathrm{FD}}}
\renewcommand{\tcr}[1]{{\color{red!80!black}#1}}
\let\tcb\relax
\let\tcr\relax
\begin{document}

\title{Asymptotic state transformations of continuous variable resources}

\author{Giovanni Ferrari}
%\email{}
\affiliation{Dipartimento di Fisica e Astronomia Galileo Galilei,
Universit\`a degli studi di Padova, via Marzolo 8, 35131 Padova, Italy}
\affiliation{Institut f\"{u}r Theoretische Physik und IQST, Universit\"{a}t Ulm, Albert-Einstein-Allee 11, D-89069 Ulm, Germany}

\author{Ludovico Lami}
\email{ludovico.lami@gmail.com}
\affiliation{Institut f\"{u}r Theoretische Physik und IQST, Universit\"{a}t Ulm, Albert-Einstein-Allee 11, D-89069 Ulm, Germany}

\author{Thomas Theurer}
%\email{}
\affiliation{Institut f\"{u}r Theoretische Physik und IQST, Universit\"{a}t Ulm, Albert-Einstein-Allee 11, D-89069 Ulm, Germany}

\author{Martin B. Plenio}
%\email{}
\affiliation{Institut f\"{u}r Theoretische Physik und IQST, Universit\"{a}t Ulm, Albert-Einstein-Allee 11, D-89069 Ulm, Germany}

\begin{abstract}
We study asymptotic state transformations in continuous variable quantum resource theories. In particular, we prove that monotones displaying lower semicontinuity and strong superadditivity can be used to bound asymptotic transformation rates in these settings. This removes the need for asymptotic continuity, which cannot be defined in the traditional sense for infinite-dimensional systems. We consider three applications, to the resource theories of (I)~optical nonclassicality, (II)~entanglement, and (III)~quantum thermodynamics. In cases~(II) and~(III), the employed monotones are the (infinite-dimensional) squashed entanglement and the free energy, respectively. For case~(I), we consider the measured relative entropy of nonclassicality and prove it to be lower semicontinuous and strongly superadditive. \tcr{One of our main technical contributions, and a key tool to establish these results, is a handy variational expression for the measured relative entropy of nonclassicality}. Our technique then yields computable upper bounds on asymptotic transformation rates, including those achievable under linear optical elements. We also prove a number of results which \tcr{guarantee that} the measured relative entropy of nonclassicality \tcr{is} bounded on any physically meaningful state and easily computable for some classes of states of interest, e.g., Fock diagonal states. We conclude by applying our findings to the problem of cat state manipulation and noisy Fock state purification.
\end{abstract}

\maketitle

\section{Introduction}

In recent years, the paradigm of quantum resource theories has established itself as the main framework to analyze and assess the operational usefulness of quantum resources~\cite{Bennett-RT, Coecke2016, quantum-resource-theories-review}. The general setting involves two sets of objects that are considered easily accessible: free states and free operations. Once these have been identified, the resource content of a state is determined by its transformation properties under free operations~\cite[Section~V]{quantum-resource-theories-review}. In the long-established tradition of classical~\cite{Shannon, CT} as well as quantum~\cite{Bennett-distillation, Plenio-Virmani, Horodecki-review} information theory, in this work we consider ultimate limitations on those transformation properties, and thus look at the asymptotic setting. Namely, we study free approximate conversion of a large number of copies of the initial state $\rho$ into as many copies of the target state $\sigma$ as possible, under the constraint that the approximation error vanishes asymptotically. The resulting transformation rate $R(\rho\!\to\!\sigma)$ can be turned into a whole family of resource quantifiers: for a fixed resourceful state $\sigma$ (respectively, $\rho$), the function $R(\,\cdot\!\to\!\sigma)$ (respectively, $R(\rho\! \to\! \cdot\,)^{-1}$) is a resource quantifier with a solid operational interpretation. In entanglement theory, for example, considering free all those transformations that can be implemented with local operations assisted by classical communication (LOCC) and choosing as fixed states Bell pairs, the above procedure leads to the distillable entanglement and the entanglement cost, respectively~\cite[Section~XV]{Horodecki-review}.

%%%%%%%%%%%%%%%%%%%%
%Beginning changed part
%Since exact computations of asymptotic transformation rates are often challenging, it is important to bound them. If $G$ is a resource monotone, i.e., a function from quantum states to the nonnegative real numbers that does not increase under free operations, the inequality $R(\rho\!\to\!\sigma)\leq \frac{G(\rho)}{G(\sigma)}$ holds if $G$ is (i)~additive on multiple copies of a state, and (ii)~asymptotically continuous~\cite{Vidal2000, Donald2002, Synak2006} (see also~\cite[Section~VI.A.5]{quantum-resource-theories-review}). While~(i) can be enforced by regularization~\cite[Section~VI.A.4]{quantum-resource-theories-review}, ensuring~(ii) is more challenging. Indeed, although asymptotic continuity holds for many monotones in finite-dimensional systems, it cannot even be defined in the usual sense in infinite dimensions. This is evident from the fact that the dimension of the underlying Hilbert space $d$ is explicitly present in the definition.

Since exact computations of asymptotic transformation rates are often challenging, it is important to \tcb{establish rigorous bounds on} them. In finite-dimensional resource theories, \tcb{it is possible do so as follows}: if $G$ is a resource monotone, i.e., a function from quantum states to the set of nonnegative real numbers that does not increase under free operations, the inequality $R(\rho\!\to\!\sigma)\leq \frac{G(\rho)}{G(\sigma)}$ holds if $G$ is (i)~additive on multiple copies of a state, and (ii)~asymptotically continuous~\cite{Vidal2000, Horodecki2000, Donald2002, Horodecki2001, Synak2006} (see also~\cite[Section~VI.A.5]{quantum-resource-theories-review}). Property~(i) can be enforced by regularization~\cite[Section~VI.A.4]{quantum-resource-theories-review}, and~(ii) turns out to hold for many monotones in finite-dimensional systems. For infinite-dimensional resource theories, this approach is however not viable, because the conventional definition of asymptotic continuity, which involves the dimension $d$ of the underlying Hilbert space, becomes meaningless. \tcb{And indeed, in infinite dimensions many monotones --- especially those based on entropic quantities --- are discontinuous everywhere~\cite{Wehrl-convergence, Wehrl, Eisert2002}. A weaker version of asymptotic continuity can be restored by imposing an energy constraint~\cite{Shirokov-AFW-1, Shirokov-AFW-2, Shirokov-AFW-3, Eisert2002}, yet doing so still does not result in any bound on the transformation rates, because the free operations employed are a priori \emph{not} required to be (uniformly) energy-constrained. Due to this seemingly merely technical complication, we are not aware of any general technique to upper bound asymptotic transformation rates in infinite-dimensional quantum resource theories prior to our work. We dub this state of affairs the ``asymptotic continuity catastrophe''.}
%of infinite-dimensional resource theories.}

%While exact computations are often challenging, the problem of establishing upper bounds on those rates can be tackled by employing monotones, i.e., functions that do not increase under free operations. Intuitively, such a function $G$ should obey the inequality $R(\rho\!\to\!\sigma)\leq \frac{G(\rho)}{G(\sigma)}$, reflecting the idea that no resource is created in the process of state conversion. 
%
%However, for this to hold, $G$ has to satisfy additional properties. It is for example sufficient that $G$ satisfies (i)~additivity on multiple copies of a state, and (ii)~asymptotic continuity~\cite{Vidal2000, Donald2002, Synak2006} (see also~\cite[Section~VI.A.5]{quantum-resource-theories-review}). While~(i) can be enforced by regularization~\cite[Section~VI.A.4]{quantum-resource-theories-review}, (ii)~poses a more serious threat to this program. Indeed, while asymptotic continuity holds for many monotones in finite-dimensional systems, it typically breaks down for infinite-dimensional ones.
%
%end changed part
%%%%%%%%%%%%%%%%%%%%%%%

\tcb{This situation is particularly undesirable because infinite-dimensional systems, especially quantum harmonic oscillators, are ubiquitous in physics, and --- as suggested by quantum field theory --- perhaps fundamental. The optical modes that underlie the flourishing field of continuous variable (CV) quantum technologies~\cite{Braunstein-review, CERF, weedbrook12} 
%, nowadays employed, e.g., for communication~\cite{Vaidman1994, Braunstein1998, Braunstein2000, Flamini2018}, computation~\cite{Braunstein-error-correction, Lloyd1998, Lloyd1999, KLM, Bromley2020}, and metrology~\cite{Caves1981, Schnabel2010, SGravi1, SGravi2}, 
are a prime example, but harmonic oscillators appear whenever the behavior of a physical system close to equilibrium is approximated to second order.}
%common harmonic approximation of physical systems close to equilibrium and

%Infinite-dimensional quantum systems, especially in the form of quantum harmonic oscillators, are however commonly encountered in applications. The optical modes that underlie the flourishing field of continuous variable (CV) quantum technologies~\cite{Braunstein-review, CERF, weedbrook12}, nowadays employed, e.g., for communication~\cite{Vaidman1994, Braunstein1998, Braunstein2000, Flamini2018}, computation~\cite{Braunstein-error-correction, Lloyd1998, Lloyd1999, KLM, Bromley2020}, and metrology~\cite{Caves1981, Schnabel2010, SGravi1, SGravi2}, belong precisely to this class. And indeed, in the CV setting many monotones -- especially those based on entropic quantities -- are discontinuous everywhere. A weaker version of asymptotic continuity can be restored by imposing an energy constraint~\cite{Shirokov-AFW-1}, yet doing so still does not result in any bound on the transformation rates, because the free operations employed are a priori \emph{not} required to be (uniformly) energy-constrained.

Here, we devise a \tcb{simple yet} general way to circumvent \tcb{the asymptotic continuity catastrophe}, and establish rigorous bounds on transformation rates that are \tcb{equally} valid for  %both
finite- and infinite-dimensional quantum resource theories. Our approach relies on monotones $G$ that satisfy, in addition to~(i), also (ii')~lower semicontinuity, which is much weaker than~(ii) and does not depend upon the Hilbert space dimension, and (iii)~strong superaddivity, i.e., $G(\rho_{AB})\geq G(\rho_A) + G(\rho_B)$.  We show how (i),~(ii'), and~(iii) combined imply the sought general bound $R(\rho\!\to\!\sigma)\leq \frac{G(\rho)}{G(\sigma)}$ on the transformation rate (Theorem~\ref{general_bound_rates_thm}).

%\tcb{
We then study three main applications, to the \tcb{infinite-dimensional} resource theories of: (I)~optical nonclassicality~\cite{Sperling2015, Tan2017, Yadin2018, NC-review, taming-PRA, taming-PRL}; (II)~quantum entanglement~\cite{Bennett-distillation, Plenio-Virmani, Horodecki-review}; and (III)~continuous variable quantum thermodynamics~\cite{Brandao-thermo,thermo-review}. Each of these applications rests upon a different strongly superadditive monotone, namely (I)~the measured relative entropy of nonclassicality, introduced and studied here, (II)~the squashed entanglement~\cite{tucci1999, squashed, faithful, rel-ent-sq, Shirokov-sq}, and (III)~the free energy~\cite{Brandao-thermo}.
Albeit strong subadditivity was known to hold for the latter two monotones, \tcb{it is only thanks to our Theorem~\ref{general_bound_rates_thm} that we are able to employ this mathematical property to deduce an upper bound on asymptotic transformation rates. To the extent of our knowledge, ours are the first such bounds for any and thus in particular the above infinite-dimensional resource theories.}
%no had been deduced prior to our work. Here, we exploit our Theorem~\ref{general_bound_rates_thm} to fill this gap and establish such bounds.

\tcb{From the technical standpoint, our main results are obtained in the context of nonclassicality~(I).} Here, one of our key contributions is the proof of a handy variational expression for the measured relative entropy of nonclassicality. This result relies on a careful application of Sion's Theorem, based, in turn, on a crucial \tcb{and carefully made} choice of topology on the space of trace class operators acting on an infinite-dimensional Hilbert space.
\tcb{This} \tcb{variational expression} has several implications, most notably: (a)~it immediately implies strong superadditivity, allowing us to deduce (b)~the sought bound on asymptotic transformation rates; and finally (c)~it points to a technique for estimating, up to arbitrary precision, said upper bound.
%While , in this work we prove that the same is true for the measured relative entropy of nonclassicality. We stress the fact that for resource theories (II) and (III) the monotones satisfying properties (i), (ii') and (iii) were already known, but no bound on asymptotic transformation rates had been proven yet because of the lack of asymptotic continuity. Theorem~\ref{general_bound_rates_thm} precisely fills this gap. In the case of (I), we also introduce a technique for estimating, up to any arbitrary precision, the upper bound on asymptotic conversion rates, which does not require to ever compute the actual resource monotone. This is a major advantage: resource monotones are seldom analytically computable, and extremely demanding to numerically simulate, even in finite-dimensional settings. 
Another useful result established here is the finiteness of our nonclassicality monotones for any state with either finite energy or finite Wehrl entropy. Other nonclassicality monotones, such as the standard robustness of nonclassicality~\cite{taming-PRA,taming-PRL}, often diverge, giving little to no information about the actual resource content of a state.

This manuscript is organized as follows. In Section~\ref{inf dimensional section} we introduce the basic notation and concepts of quantum mechanics for infinite-dimensional systems that will be used in the remainder of the work.
%We also briefly review the most relevant operator spaces and topologies that can be defined upon them. Then, we discuss continuous variable quantum mechanics and the related phase space formalism, which is central for the resource theory of optical nonclassicality, and finally the extensions of entropies and relative entropies to this scenario.
Then, Section~\ref{qrt section} features a concise review of the mathematical framework of quantum resource theories.
%giving rigorous definitions of what a quantum resource theory, a resource monotone, and an asymptotic transformation rate are.
In Section~\ref{main results section} we introduce our main results, in particular Theorems~\ref{general_bound_rates_thm},~\ref{main result thm}, and~\ref{bound_rates_NCMi_thm}, and briefly discuss their implications. The proofs start from Section~\ref{preliminary results section}, where we initiate the study of our monotones. In Section~\ref{proofs 1 section} we establish our first main result, Theorem~\ref{general_bound_rates_thm}, together with some of its consequences. Section~\ref{proofs 2 section} is devoted instead to the more involved proof of Theorems~\ref{main result thm} and~\ref{bound_rates_NCMi_thm}. In Section~\ref{secondary res section} we explore further properties of our nonclassicality monotones. In the subsequent Section~\ref{applications section} we apply them to study a wealth of examples, including noisy Fock states, Schr\"{o}dinger cat states, and squeezed states; we also test our bounds on rates for the case of distillation and dilution of Fock states and cat states in the theory of nonclassicality. Appendix~\ref{restr_asymp_cont_sec} is concerned with various restricted notions of asymptotic continuity for infinite-dimensional systems and with their limitations in proving general bounds on asymptotic transformation rates, which further motivates our analysis.
%we also prove some preliminary results on resource monotones and their regularizations, which will come in handy later. In the final subsection, we introduce our nonclassicality monotones and prove some of their main properties. In Section In Section~\ref{main res section} we prove the main results of the paper, and the majority of preliminary results needed for them. In particular, in Theorem~\ref{general_bound_rates_thm} we prove that a bound on asymptotic conversion rates can be obtained by means of lower semicontinuous and strong superadditive monotones, and in Theorem~\ref{main result thm} we prove that the measured relative entropy of nonclassicality can indeed be employed in such a bound. Finally, in Section~\ref{secondary res section} we prove a number of secondary results, which will be employed in Section~\ref{applications section}, when dealing with applications in quantum optics.

\section{Quantum mechanics for infinite-dimensional systems}\label{inf dimensional section}

%Any quantum system is associated to a Hilbert space $\HH$
With every quantum system one associates a Hilbert space $\HH$; in this work we will be mainly concerned with the case where $\dim\HH=\infty$, i.e., with infinite-dimensional quantum systems. When dealing with (both finite- and infinite-dimensional) open quantum systems, both physical states and observables are described in terms of linear self-adjoint operators acting on $\HH$ and fulfilling specific properties. As we will discuss in a moment, the structure of operator spaces is more complex in the infinite-dimensional case than in the finite-dimensional one, where many of them coincide. We start by setting the basic notation and definitions and introduce the objects that we will use in the following.

\subsection{Notation and definitions} \label{subsec_notation}

For a generic linear self-adjoint operator $X$ acting on a Hilbert space $\HH$ one can define the \textbf{operator norm} as follows: 
    \bbb
    \tcb{
    \|X\|_\infty\coloneqq\sup_{\ket{\psi}\in\HH\setminus\{0\}}\frac{|\braket{\psi|X|\psi}|}{\braket{\psi|\psi}}\,.
    }
    \eee
Operators with finite operator norm, i.e., $\|B\|_\infty<\infty$ are said to be \textbf{bounded}. \tcb{Moreover, an operator is said to be \textbf{trace class} if its \textbf{trace norm}
\bbb
\|T\|_1\coloneqq\sum_{n=0}^\infty\braket{\phi_n|\sqrt{T^\dagger T}|\phi_n}
\eee
is finite, where $\{\phi_n\}_n$ is any orthogonal basis of $\pazocal{H}$. For trace class operators we can define the \textbf{trace} as:
\bbb
\Tr [T]\coloneqq\sum_{n=0}^\infty \braket{\phi_n|T|\phi_n}\,,
\eee
which is independent of the choice of the orthogonal basis $\{\phi_n\}_n$. The trace norm of a trace class operator $T$ can then be written as $\|T\|_1=\Tr\left[\sqrt{T^\dagger T}\right]$.}

%Old definition of trace norm and trace class op
\begin{comment}
Moreover, an operator $T$ is said to be \textbf{trace class} if it has a well defined trace, i.e., if the quantity
\bbb
\Tr [T]\coloneqq\sum_{n=0}^\infty \braket{\phi_n|T|\phi_n}
\eee
is finite and independent from the choice of the basis $\{\ket{\phi_n}\}_{n\in\N}$ of $\HH$. For trace class operators it is possible to define also the \textbf{trace norm} as:
\bbb
\|T\|_1\coloneqq\Tr\left[\sqrt{T^\dagger T}\right]\,.
\eee
It is clear from the definition that $\|T\|_1<\infty$ for any trace class operator $T$.
\end{comment}

We are now ready to list below the most relevant operator spaces:
\begin{itemize}
	\item $\B{}$: the Banach space of bounded self-adjoint operators on $\HH$;
	\item $\T{}$: the Banach space of self-adjoint trace class operators on $\HH$;
	\item $\K{}$: the Banach space of self-adjoint \textbf{compact} operators on $\HH$, defined as the closure with respect to the operator norm of $\T{}$;
	\item $\D{}$: the set of \textbf{density operators} (i.e., positive semidefinite trace class operators with trace $1$) on $\HH$;
	\item $\Tp{}$: the cone of positive semidefinite (and hence self-adjoint) trace class operators on $\HH$;
	\item $\Kp{}$: the cone of positive semidefinite (and hence self-adjoint) compact operators on $\HH$;
\end{itemize}

One has that $\T{} \subseteq \K{} \subseteq \B{}$, with equality iff $\HH$ is finite dimensional. Also, the duality relation $\T{}^* = \B{}$ holds at the level of Banach spaces. We remind the reader that the dual of a Banach space $X$ equipped with a norm $\|\cdot\|_X$ is the vector space of all linear functionals $\varphi:X\to \R$ such that $\|\varphi\|_{X^*}\coloneqq \sup_{\|x\|_X\leq 1} |\varphi(x)|<\infty$, equipped with the norm $\|\cdot\|_{X^*}$.

A \textbf{quantum state} on a quantum system $A$ with Hilbert space $\HH_A$ is represented by a density operator $\rho_A\in \D{A}$. \textbf{Quantum channels} from $A$ to $B$, where $A,B$ are quantum systems, are completely positive trace preserving maps $\Lambda:\T{A}\to \T{B}$. For a quantum channel $\Lambda:\T{A}\to \T{B}$, the \textbf{adjoint} $\Lambda^\dag$ is the linear map $\Lambda^\dag: \B{B}\to \B{A}$ defined by $\Tr\left[T_A \Lambda^\dag (X_B)\right] \coloneqq \Tr\left[ \Lambda(T_A) X_B\right]$ for all $T_A\in \T{A}$ and $X_B\in \B{B}$. Among the simplest examples of quantum channels are \textbf{quantum measurements}, represented by \textbf{positive operator-valued measures} (POVM), i.e., finite collections $\M = \{E_x\}_{x\in \X}$ of positive semidefinite (bounded) operators $E_x \geq 0$ that obey the normalization rule $\sum_x E_x=\id$. Any quantum measurement can be written as a trace-preserving map by making use of classical flags $\{\ket{\phi_x}\}_{x\in \X}$: $\rho\mapsto\sumno_x\Tr[\rho E_x]\rho_x\otimes\ketbra{\phi_x}$, where $\rho_x$ is the output state in case the outcome $x$ is measured.

%Formally, a classical system $X$ is modelled as a measure space $\left(X, \mathcal{A}, \mu\right)$, with states being represented by measurable functions $f:X\to \R$ that are nonnegative almost everywhere and integrate to $1$, i.e., $\int_X d\mu(x)\, f(x)=1$. For a quantum system $A$, states of the hybrid classical-quantum system~\cite{Barchielli2006, Holevo-Kuznetsova} $XA$ are instead represented by functions $\rho_{XA} :X\to \T{A}$ that are positive semidefinite almost everywhere and whose trace integrates to $1$, namely, $\int_X d\mu(x)\, \Tr \rho(x) = 1$. The reduced state $\rho_X$ on $X$ is the function $f(x)\coloneqq \Tr\rho(x)$, while the reduced state $\rho_A$ on $A$ is the density operator $\int_X d\mu(x)\, \rho(x)$.

It is well known that the topological structure of infinite-dimensional spaces is much richer than in the finite-dimensional case. There is a wealth of topologies that can be defined on infinite-dimensional Banach spaces, and in particular on the operator spaces discussed above~\cite{wiki-operator-topologies}. In light of this fact, and for later convenience, we provide here a quick guide:
\begin{itemize}
	\item the \textbf{weak operator topology} on $\B{}$ (and hence on $\T{}$ and $\D{}$) is the coarsest topology that makes all functionals $A \mapsto \braket{\psi |A| \psi}$ continuous, for all $\ket{\psi}\in \HH$;
	\item the \textbf{weak* topology} on $\T{}$ is the coarsest topology that makes all functionals $T\mapsto \Tr[TK]$ continuous, for all $K\in \K{}$;
	\item the \textbf{weak topology} on $\T{}$ is the coarsest topology that makes all functionals $T\mapsto \Tr[TA]$ continuous, for all $A\in \B{}$;
	\item the \textbf{trace norm topology} on $\T{}$ is the one induced by the trace norm $\|\cdot\|_1$;
	\item the \textbf{operator norm topology} on $\B{}$ is the one induced by the operator norm $\|\cdot\|_\infty$.
\end{itemize}
The role of the weak* topology on $\T{}$ will play a special role for us (cf.~Lemma~\ref{Cm tilde compact lemma}).

\begin{rem} \label{w* rem}
The weak* topology is the topology induced by the Banach space $\K{}$ on its dual $\K{}^*=\T{}$. Therefore, by the Banach--Alaoglu theorem the unit ball $B_{\T{}}\coloneqq \{T\in \T{}: \|T\|_1\leq 1\}$ of $\T{}$ is weak* compact. This fact will be crucial for one of the main results of the work.
\end{rem}

We conclude this section by stating some useful facts about operator topologies. \tcb{We start by noting the following remarkable lemma, originally discovered by Davies~\cite[Lemma~4.3]{Davies1969} --- see also the `gentle measurement lemma' by Winter~\cite[Lemma~9]{VV1999} for a refined version.}
%To begin with, a notable consequence of the gentle measurement lemma~\cite[Lemma~9]{VV1999} is the following.

\begin{lemma}[{\cite[Lemma~4.3]{Davies1969}}] \label{swot lemma}
For a net\footnote{\tcb{A net $(x_\alpha)_\alpha$ on some set $\pazocal{X}$ is any function of the form $x:A\to \pazocal{X}$, where $A$ is a directed set, i.e., a set $A$ equipped with a preorder $\leq$ such that for all $a,b\in A$ there exists $c\in A$ with the property that $a\leq c$ and $b\leq c$.}} $(\omega_\alpha)_\alpha\subseteq \Tp{}$ of positive semidefinite trace class operators, if $\omega_\alpha \ctends{wot}{\alpha}{2pt} \omega \in \Tp{}$ in the weak operator topology, and moreover $\Tr[\omega_\alpha] \ctends{}{\,\alpha\,}{2pt} \Tr[\omega]$, then $\omega_\alpha \ctends{\,n\,}{\alpha}{2pt} \omega$ in norm.
\end{lemma}

%\begin{proof}
%The proof of~\cite{Davies1969} can be extended without difficulty to the case of nets.
%\end{proof}

Since two topologies are equal if and only if they have the same convergent nets, it is immediate to deduce the following.

\begin{cor} \label{swot cor}
The weak topology and the norm topology coincide on $\Tp{}$. They also coincide with the weak operator topology on $\D{}$.
\end{cor}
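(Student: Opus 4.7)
The plan is to invoke Lemma~\ref{swot lemma} together with the hint that two topologies coincide whenever they have the same convergent nets. On $\T{}$ we have the chain $\mathrm{WOT}\subseteq\mathrm{weak}\subseteq\mathrm{norm}$: the second containment follows from the estimate $|\Tr[TA]|\leq \|T\|_1\|A\|_\infty$, while the first holds because WOT is generated by the functionals $T\mapsto \Tr[T\ket{\psi}\!\bra{\psi}]$, which form a subfamily of those generating the weak topology. Hence on each of the two ambient sets, $\Tp{}$ and $\D{}$, it suffices to show that the coarsest-converging nets are already norm convergent.

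For the first assertion, I would take an arbitrary weakly convergent net $(\omega_\alpha)_\alpha\subseteq\Tp{}$ with weak limit $\omega\in\Tp{}$, meaning $\Tr[\omega_\alpha A]\to \Tr[\omega A]$ for all $A\in\B{}$. Specializing the test operator to $A=\ket{\psi}\!\bra{\psi}$ for each $\ket{\psi}\in\HH$ yields WOT convergence $\omega_\alpha\to\omega$, while the choice $A=\id$ gives $\Tr[\omega_\alpha]\to\Tr[\omega]$. Lemma~\ref{swot lemma} then upgrades these two facts to trace-norm convergence, so the weak and norm topologies coincide on $\Tp{}$.

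For the second assertion, I would take a WOT convergent net $(\rho_\alpha)_\alpha\subseteq\D{}$ with limit $\rho\in\D{}$. Since $\Tr[\rho_\alpha]=1=\Tr[\rho]$ trivially for density operators, the hypotheses of Lemma~\ref{swot lemma} are again satisfied, and $\rho_\alpha\to\rho$ in trace norm follows. The chain $\mathrm{WOT}\subseteq\mathrm{weak}\subseteq\mathrm{norm}$ then collapses into equality on $\D{}$.

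There is no substantive obstacle: both assertions are direct consequences of Lemma~\ref{swot lemma}, with the only subtle point being that the trace-convergence hypothesis required by that lemma comes for free---from testing against $A=\id$ in the first case, and from the unit-trace normalization of density operators in the second. The identification of nets with limits inside $\Tp{}$ or $\D{}$ is built into the statement, so no additional closure argument is needed.
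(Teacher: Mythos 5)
Your proof is correct and is precisely the argument the paper intends: the paper dismisses the corollary as "immediate" from Lemma~\ref{swot lemma} together with the net characterization of topological equality, and you have simply filled in the details (the chain $\mathrm{WOT}\subseteq\mathrm{weak}\subseteq\mathrm{norm}$, testing against $\ketbra{\psi}$ and $\id$ to verify the hypotheses of the lemma, and the automatic trace normalization on $\D{}$). No gaps; this matches the paper's approach.
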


%\begin{proof}
%Two topologies are equal iff they have the same convergent nets. A weakly convergent net inside $\Tp{}$ converges also with respect to the weak operator topology, and moreover also the sequence of traces converges to the corresponding limit (because $\id$ is a bounded operator). Hence, convergence in norm follows from Lemma~\ref{swot lemma}. A net of density operators that converges with respect to the weak operator topology has constant trace, hence we can again apply Lemma~\ref{swot lemma} and deduce convergence in norm.
%\end{proof}

\begin{rem}
The norm topology does not coincide with the weak operator topology on $\Tp{}$. For instance, the sequence of Fock states $\left( \ketbra{n}\right)_n$ converges to $0$ in the weak operator topology, but it is not convergent in the norm topology (for instance because it is not of Cauchy type).
\end{rem}

\subsection{Continuous variable systems}

Among all infinite-dimensional quantum systems, a central role is played by \textbf{continuous variable (CV) systems}, and here, perhaps most notably, by finite collections of harmonic oscillators. The Hilbert space corresponding to an $m$-mode CV system is composed of all square-integrable complex-valued functions on the Euclidean space $\R^m$, denoted with $\HH_m=L^2(\R^m)$; one can then identify $\HH_m\simeq \HH_1^{\otimes m}$. Note that we will adopt the convention $\hbar=1$ hereafter. Then, the \textbf{canonical operators} $x_j$ and $p_j\coloneqq -i \frac{\partial }{\partial x_j}$ ($j=1,\ldots, m$) satisfy the \textbf{canonical commutation relations} $[x_j,x_k]= 0 = [p_j,p_k]$ and $[x_j,p_k]=i \delta_{jk} \id$, with $\id$ denoting the identity over $\HH_m$. It is customary to define the annihilation and creation operators by
\bb
a_j\coloneqq \frac{x_j + i p_j}{\sqrt2}\, ,\qquad a_j^\dag \coloneqq \frac{x_j - i p_j}{\sqrt2}\, .
\ee
In terms of $a_j,a_j^\dag$, the canonical commutation relations take the form $[a_j,a_k]\equiv 0$, $[a_j, a_k^\dag] = \delta_{jk} \id$.

On a single-mode system, \textbf{Fock states} are defined for $k\in \N$ by $\ket{k}\coloneqq \frac{1}{\sqrt{k!}} (a^\dag)^k\ket{0}$, where $\ket{0}$ is the \textbf{vacuum state}. For $\alpha\in \C$, the associated \textbf{coherent state} takes the form~\cite{Schroedinger1926-coherent, Klauder1960, Glauber1963, Sudarshan1963}
\bb
\ket{\alpha}\coloneqq e^{-\frac{|\alpha|^2}{2}}\sum_{k=0}^\infty \frac{\alpha^k}{\sqrt{k!}}\, \ket{k}\, .
\label{coherent}
\ee
Extending these definitions to multimode systems is \tcr{quite straightforward}. For $k = (k_1,\ldots, k_m)^\intercal\in \N^m$, one sets $\ket{k}\coloneqq \bigotimes_{j=1}^m \ket{k_j}$; analogously, for $\Alpha = (\alpha_1,\ldots, \alpha_m)^\intercal \in \C^m$, a multimode coherent state is defined by $\ket{\alpha} \coloneqq \bigotimes_{j=1}^m \ket{\alpha_j}$.

The \textbf{displacement operators} form a special family of unitary operators acting on $\HH_m$. For $\Alpha\in \C^m$, they are defined by
\bb
\disp(\Alpha)\coloneqq \exp \left[ \sumno_{j=1}^m \left(\alpha_j a_j^\dag - \alpha_j^* a_j\right) \right] .
\label{disp}
\ee
They satisfy the identity
\bb
\disp(\alpha) \disp(\beta) = e^{\frac12\left(\alpha^\intercal \beta^* - \alpha^\dag \beta\right)}\,\disp(\alpha+\beta)\, ,
\label{Weyl}
\ee
called the \textbf{Weyl form of the canonical commutation relations}, for all $\alpha,\beta\in \C^m$, and they yield coherent states upon acting on the vacuum, i.e.,
\bb
\disp(\alpha) \ket{0} = \ket{\alpha} \qquad \forall\ \alpha\in \C^m\, .
\label{disp_vacuum}
\ee
For an arbitrary trace class operator $T\in \T{m}$, its \textbf{characteristic function} $\chi_T:\C^m\to \C$ is given by
\bb
\chi_T(\alpha) \coloneqq \Tr [T\disp(\alpha)]\, .
\label{chi}
\ee
For a $m$-mode quantum state $\rho\in\D{m}$, a quantity which is intimately related to its characteristic function is the \textbf{Husimi $\mathbf{Q}$-function} $Q:\C^m\to \C$, defined by $Q_\rho(\alpha) \coloneqq \frac{1}{\pi^m} \braket{\alpha|\rho|\alpha}$~\cite{Husimi}.

\subsection{Entropies and relative entropies}\label{entropies subsection}

The (von Neumann) \textbf{entropy} of some positive semidefinite trace class operator $A\in \tsa^+(\hilb)$ can be defined as
\bb
S(A) \coloneqq - \Tr\left[A \log_2 A \right] .
\label{entropy}
\ee
Note that this is a well-defined although possibly infinite quantity. One way to make sense of the expression~\eqref{entropy} is via the infinite sum $S(A) = \sum_i (- a_i \log_2 a_i)$, where $A=\sum_i a_i \ketbra{a_i}$ is the spectral decomposition of $A$\, where we convene that $0\log_2 0=0$.
Since $a_i\tends{}{i\to\infty} 0$ because $A$ is trace class, the terms of the above sum are eventually positive. Hence, the sum itself can be assigned a well-defined value, possibly $+\infty$. An alternative approach is to define is the \textbf{Wehrl entropy} instead:
\bb
S_W(\rho)\coloneqq -\int d^{2m}\alpha\,Q_\rho(\alpha)\log_2 \left(\pi^m\, Q_\rho(\alpha)\right)\,.
\label{Wehrl}
\ee
It is well known that $S_W(\rho)\geq S(\rho)$ for any quantum state $\rho\in\D{}$.

The \textbf{relative entropy} between two positive $A,B\in \tsa^+(\hilb)$ is usually written as~\cite{Umegaki1962, PETZ-ENTROPY}
\begin{equation} \label{rel ent}
    D(A\|B) \coloneqq \Tr\left[A (\log_2 A - \log_2 B)\right] .
\end{equation}
Again, the above expression is well defined and possibly infinite~\cite{Lindblad1973}. To see why, we represent it as the infinite sum $D(A \| B) \coloneqq \sum_{i,j} \left| \braket{a_i | b_j}\right|^2 \left( a_i \log_2 a_i - a_i \log_2 b_j + \log_2(e) (b_j - a_i) \right) + \log_2(e) \Tr[A-B]$, where $A = \sum_i a_i \ketbra{a_i}$ and $B = \sum_j b_j \ketbra{b_j}$ are the spectral decompositions of $A$ and $B$, respectively, and we assume that only terms with $a_i>0$ and $b_j>0$ are included. As above, we follow the convention of setting $0\log_2 0=0$, and we set $D(A\|B)=+\infty$ if there exist two indices $i$ and $j$ with $a_i>0$, $b_j=0$, and $\braket{a_i | b_j}\neq 0$. As detailed in~\cite{Lindblad1973}, the convexity of $a\mapsto a\log_2 a$ implies that all terms of the above infinite sum are non-negative, making the expression well defined. In light of the above discussion, it is not difficult to realize that a necessary condition for $D(A \|B)$ to be finite is that $\supp A \subseteq \supp B$. Thus, up to projecting everything onto a subspace we will often assume that $B$ is faithful, i.e., that $B>0$. \tcb{The relative entropy can be endowed with an operational interpretation in the context of asymmetric hypothesis testing~\cite{Hiai1991, Buscemi2019, Wang2019}.}

An alternative approach to the quest for defining a quantum relative entropy could be that of bringing the problem back to the classical setting by means of quantum measurements. Namely, for a state $\rho$ and a measurement $\M = \{E_x\}_{x\in \X}$, we define the associated outcome probability distribution on $\X$ as $P^{\M}_\rho (x)\coloneqq \Tr\left[ \rho E_x\right]$. Remembering that for two classical probability distributions $p$ and $q$ the \textbf{Kullback--Leibler divergence} is given by $\DKL (p\|q)\coloneqq \sum_x p_x (\log_2 p_x - \log_2 q_x)$~\cite{Kullback-Leibler}, let us define the \textbf{measured relative entropy} between any two states $\rho$ and $\sigma$ as~\cite{Donald1986, Hiai1991}
\bb
\DM (\rho\|\sigma) \coloneqq \sup_{\M} \DKL \left( P^{\M}_\rho \big\| P^{\M}_\rho\right) .
\label{measured re}
\ee
It is known that $\DM(\rho\|\sigma)\leq D(\rho\|\sigma)$ for all pairs of states $\rho,\sigma$~\cite{Donald1986}. Recently, extending a result by Petz~\cite{Petz-old}, Berta et al.\ have shown that for finite-dimensional systems equality holds if and only if $[\rho,\sigma]=0$~\cite{Berta2017}.

\section{Quantum Resource Theories}\label{qrt section}

In this section we introduce a general notion of quantum resource theory, and some related concepts and results. Note that our definition is slightly different from that in the recent review by Chitambar and Gour~\cite[Definition~1]{quantum-resource-theories-review}, in that we require also parallel composition (i.e., tensor product) of free operations to be free.

\begin{Def} \label{RT}
A \textbf{quantum resource theory} (QRT) is a pair $\mathcal{R} = \left( \mathcal{S}, \mathcal{F} \right)$, where $\mathcal{S}$ is a family of quantum systems that is closed under tensor products, in the sense that $A,B\in \mathcal{S}$ implies that $AB \coloneqq A\otimes B \in \mathcal{S}$; and contains the trivial system $1$ with Hilbert space $\C$, while $\mathcal{F}$, called the set of free operations, is a mapping that assigns to every pair of systems $A,B\in \mathcal{S}$ a set of channels from system $A$ to $B$. Such a set will be denoted with $\mathcal{F} (A\to B)$.\footnote{Here, classical registers are thought of as special $d$-dimensional quantum systems $X$ ($d<\infty$) with the property that any free operation $\Lambda\in \mathcal{F}(A\to XB)$ satisfies that $(\Delta_X\otimes I_B)\circ \Lambda=\Lambda$, with $\Delta_X(\cdot)\coloneqq \sum_{i=1}^{d} \ketbra{i}(\cdot)\ketbra{i}$ being the dephasing map. In other words, if a quantum system plays the role of a classical register, then no state apart from those that are diagonal in a fixed orthonormal basis $\{\ket{i}\}_{i=1,\ldots, d}$ is ever accessible with free operations.}
We will require that the following three consistency conditions are satisfied:
\begin{enumerate}[(i)]
\item for all $A\in \mathcal{S}$, the identity is a free operation on $A$, in formula $I_A \in \mathcal{F}(A\to A)$;
\item free operations are closed under sequential compositions, namely, if $A,B,C\in \mathcal{S}$ and $\Lambda\in\mathcal{F}(A\to B)$, $\Gamma\in \mathcal{F}(B\to C)$, then also $\Gamma\circ \Lambda\in \mathcal{F}(A\to C)$;
\item free operations are closed under parallel compositions, namely, if for $j=1,2$ one chooses $A_j,B_j\in \mathcal{S}$ and $\Lambda_j\in\mathcal{F}(A_j\to B_j)$, then also $\Lambda_1\otimes \Lambda_2 \in \mathcal{F}(A_1\otimes A_2\to B_1\otimes B_2)$.
\end{enumerate}
%If the partial trace $\Tr_A:AB\to B$ is a free operation for all $A,B\in\mathcal{S}$, we say that $\mathcal{R}$ is a \textbf{QRT with free partial traces}.
If every system in $\mathcal{S}$ is finite dimensional, we will say that $\mathcal{R}$ itself is \textbf{finite dimensional}.
\end{Def}

\subsection{Monotones}

Given a QRT $\mathcal{R}$ as above, one defines the set of free states on the system $A\in \mathcal{S}$ as
\bb
\mathcal{F}_S(A) \coloneqq \mathcal{F} (1\to A)\, .
\label{free_states}
\ee
Clearly, if partial traces are free, then $\Tr_A\left[\mathcal{F}_S(AB)\right]\subseteq \mathcal{F}_S(B)$. A central role in our paper is played by resource quantifiers, i.e., monotones. We define them as follows.  

\begin{Def} \label{monotones}
Let $\mathcal{R} = \left( \mathcal{S}, \mathcal{F} \right)$ be a resource theory. A mapping $G$ assigning to each $A\in \mathcal{S}$ a function $G_A:\D{A}\to [0,+\infty]$ on the set of states on $A$ that takes on values in the extended reals $[0,+\infty]$ is called a \textbf{resource monotone} --- or simply a \textbf{monotone} --- if
\begin{enumerate}[(i)]
\item $G_B \left( \Lambda(\rho)\right) \leq G_A (\rho)$ holds for all states $\rho$ on $A\in \mathcal{S}$ and for all free operations $\Lambda\in \mathcal{F}(A\to B)$, where $B\in \mathcal{S}$ is arbitrary;
\item $G_A(\sigma)=0$ for all $\sigma\in\mathcal{F}_S(A)$, with $\mathcal{F}_S(A)$ defined by~\eqref{free_states}.
\end{enumerate}
A monotone $G$ is said to be:
\begin{enumerate}[(a)]
\item \textbf{faithful}, if $G_A(\rho)=0$ implies that $\rho\in \mathcal{F}_S(A)$;
\item \textbf{convex}, if all functions $G_A$ are convex, i.e., $G_A\left( \sumno_i p_i \rho_i \right) \leq \sum_i p_i G_A(\rho_i)$ for all $A\in \mathcal{S}$ and all statistical ensembles $\{p_i, \rho_i\}$ on $A$;
\item \tcb{\textbf{asymptotically continuous}~\cite{Vidal2000, Donald2002, Horodecki2001, Synak2006} on some subsets of systems $\mathcal{S}'\subseteq \mathcal{S}$, if for all $A\in \mathcal{S}'$ we have that $\dim \HH_A<\infty$, and moreover there exist two continuous functions $f,g:[0,1]\to \R$ \tcb{independent of $\dim \HH_A$} such that $f(0) = g(0) = 0$ and
\bb
\left| G_A(\rho) - G_A(\sigma) \right| \leq f(\epsilon) \log(\dim \HH_A) + g(\epsilon)
\ee
for all $\rho,\sigma\in \D{A}$ at trace distance $\epsilon\coloneqq \frac12 \left\|\rho-\sigma\right\|_1$.}
\item \textbf{lower semicontinuous}, if $G_A$ is lower semicontinuous as a function on $\D{A}$ for all $A\in \mathcal{S}$, i.e., if $\lim_{n\to\infty}\left\|\rho_n-\rho\right\|_1=0$ for a sequence of states on $A$ implies that $\liminf_{n\to\infty} G_A(\rho_n)\geq G_A(\rho)$;
\item \textbf{strongly superadditive}, if $G_{AB}(\rho_{AB}) \geq G_A(\rho_A) + G_B(\rho_B)$ holds for all $A,B\in \mathcal{S}$ and for all states $\rho_{AB}\in \D{AB}$;
\item \textbf{superadditive}, if $G_{AB}(\rho_A \otimes \sigma_B) \geq G_A(\rho_A) + G_B(\sigma_B)$ for all $A,B\in \mathcal{S}$ and for all states $\rho_{A}\in \D{A}$ and $\sigma_B\in \D{B}$;
\item \textbf{weakly superadditive}, if $G_{A_1\ldots A_n}\left( \rho^{\otimes n}\right) \geq n\, G_A(\rho)$ for all $A\in \mathcal{S}$, \tcb{for all $n$} and all states $\rho\in \D{A}$, where $A_1\ldots A_n$ denotes the joint system formed by $n$ copies of $A$;
\item \textbf{additive}, if $G_{AB}(\rho_A \otimes \sigma_B) = G_A(\rho_A) + G_B(\sigma_B)$ for all $A,B\in \mathcal{S}$ and for all states $\rho_{A}\in \D{A}$ and $\sigma_B\in \D{B}$;
\item[(j)] \textbf{weakly additive}, if $G_{A_1\ldots A_n}\left( \rho^{\otimes n}\right) = n G_A(\rho)$ for all $A\in \mathcal{S}$ and all states $\rho\in \D{A}$, where $A_1\ldots A_n$ denotes the joint system formed by $n$ copies of $A$.
\end{enumerate} 
\end{Def}

\begin{rem}
In Definition~\ref{monotones}, $\text{(e)}\Rightarrow\text{(f)}\Rightarrow\text{(g)}$ and $\text{(h)}\Rightarrow\text{(j)}$.
\end{rem}

\begin{rem}
Any monotone is automatically invariant under free unitaries whose inverse is also free.
\end{rem}

\begin{rem}
The notions of \tcb{upper semicontinuous,} strongly subadditive, or subadditive monotone are obtained by reversing the inequalities and exchanging $\liminf$ with $\limsup$ in~(d),~(e), and~(f) of Definition~\ref{monotones}.
\end{rem}

\begin{note}
In what follows, with a slight abuse of notation we will often drop the subscript of $G$ specifying the system it refers to, and think of a monotone $G$ as a function defined directly on the collection of states on all possible systems $A\in \mathcal{S}$.
\end{note}

\subsection{Transformation rates}

We continue by recalling the definition of asymptotic transformation rate.

\begin{Def} \label{rate_def}
Let $(\mathcal{S},\mathcal{F})$ be a QRT. For any two systems $A,B\in\mathcal{S}$ and any two states $\rho_A\in\D{A}$ and $\sigma_B\in\D{B}$, the corresponding \textbf{(standard) asymptotic transformation rate} is given by
\bb
R(\rho_{A} \to \sigma_{B}) \coloneqq \sup\left\{r: \lim_{n\to\infty} \inf_{\Lambda_n\in\mathcal{F}\left( A^n\to B^{\floor{rn}}\right)} \left\|\Lambda_n\left( \rho_{A}^{\otimes n}\right) - \sigma_{B}^{\otimes \floor{r n}} \right\|_1 = 0 \right\} ,
\label{rate}
%\tag{\ref{rate}}
\ee
where $A^n$ denotes the system composed of $n$ copies of $A$. Any number $r>0$ in the set on the right-hand side of~\eqref{rate} is called a (standard) \textbf{achievable rate} for the transformation $\rho_{A} \to \sigma_{B}$.
\end{Def}

The above definition captures the intuitive notion of maximum yield of copies of the target state $\sigma_B$ that can be obtained per copy of the initial state $\rho_A$ by means of free operations and with asymptotically vanishing error. In Definition~\ref{rate_def}, we have measured the error using the global trace distance. However, it is possible and sometimes even reasonable to modify the error criterion. For instance, in a situation where the output copies are distributed to noninteracting parties, what is relevant is the maximum local error rather than the global one. This train of thought inspires the following definition.

\begin{Def} \label{max_rate_def}
Let $(\mathcal{S},\mathcal{F})$ be a QRT. For any two systems $A,B\in\mathcal{S}$ and any two states $\rho_A\in\D{A}$ and $\sigma_B\in\D{B}$, the corresponding \textbf{maximal asymptotic transformation rate} is given by
\bb
\widetilde{R}(\rho_{A} \to \sigma_{B})\coloneqq \sup\left\{ r : \lim_{n\to\infty} \inf_{\Lambda_n\in\mathcal{F}\left( A^n\to B^{\floor{rn}}\right)} \max_{j=1,\ldots, \floor{rn}} \left\| \left( \Lambda_n \left(\rho_{A}^{\otimes n}\right) \right)_{j} - \sigma_{B} \right\|_1 = 0 \right\} ,
\label{max_rate}
%\tag{\ref{max_rate}}
\ee
where for a state $\Omega\in \D{B^{k}}$ defined on $k$ copies of $B$ we defined $\Omega_j\coloneqq \Tr_{B^k\setminus B_j} \left[\Omega\right]\in \D{B_j}$ as the reduced state on the $j^{\text{th}}$ subsystem. Any number $r>0$ in the set on the right-hand side of~\eqref{rate} is called a \textbf{maximally achievable rate} for the transformation $\rho_{A} \to \sigma_{B}$.
\end{Def}

It is immediate to see that for any given pair of states the maximal rate always upper bounds the corresponding standard rate (Lemma~\ref{R_smaller_Rtilde_lemma}).

\subsection{Infinite-dimensional quantum resource theories}

The prime example of a quantum resource theory is naturally that of entanglement~\cite{Horodecki-review, Bennett-distillation, Bennett-distillation-mixed, quantum-resource-theories-review}. In spite of their central importance, very little is known about many fundamental operational questions in the infinite-dimensional case~\cite{introeisert, Eisert2002, Shirokov2016, Shirokov-sq}, with a partial exception being the theory of Gaussian entanglement~\cite{Werner01, Giedke01, giedkemode, adesso07, revisited}. We can formally define entanglement as a resource theory as follows.
\begin{Def}
The resource theory of bipartite \textbf{entanglement} is defined by setting:
\begin{itemize}
\item $\mathcal{S}$ to be the family of all (possibly infinite-dimensional) quantum systems $A\!:\!B$, where the colon indicates the bipartition in separate parties;
\item $\mathcal{F}_S(A\!:\!B) = \text{\emph{Sep}}(A\!:\!B) \coloneqq \overline{\co}\left\{ \ketbra{\phi_A}\otimes\ketbra{\psi_B}:\,\ket{\phi_A}\in\HH_A,\,\ket{\psi_B}\in\HH_B \right\}$ for any bipartite system $A\!:\!B\in\mathcal{S}$, \tcb{where $\overline{\co}$ denotes the closed (in trace norm topology) convex hull of a set};
\item $\mathcal{F}(A\!:\!B\to A'\!:\!B')=\text{\emph{LOCC}}(A\!:\!B\to A'\!:\!B')$ is the set of all \emph{LOCC} protocols from $A\!:\!B$ to $A'\!:\!B'$.
\end{itemize}
\end{Def}

Another important example is the resource theory of quantum thermodynamics~\cite{Brandao-thermo, CVThermo1, CVThermo2}. Just as that of entanglement, it can be constructed for finite-dimensional systems as well. However, in accordance with the spirit of this work, we will focus on the continuous variable case from now on.

\begin{Def}
The resource theory of \textbf{quantum thermodynamics} is defined by setting:
\begin{itemize}
\item $\mathcal{S}$ to be the family of all (possibly infinite-dimensional) quantum systems $A$ equipped with a Hamiltonian $H_A$ satisfying the Gibbs hypothesis, i.e., $\Tr\left[\exp^{-\beta H_A}\right]<\infty$ for any inverse temperature $\beta>0$; we also assume $H_{AB}=H_A+H_B$ for any systems $A,B\in\mathcal{S}$;
\item once an inverse temperature $\beta$ has been fixed for all systems, $\mathcal{F}_S(A) = \{\gamma_A\}$, with
\bbb
\gamma_A\coloneqq\frac{\exp^{-\beta H_A}}{\Tr\left[ \exp^{-\beta H_A}\right]}
\eee
being the \textbf{thermal state}, for any system $A\in\mathcal{S}$ with Hamiltonian $H_A$;
\item $\mathcal{F}(A\to B)$ to encompass all quantum channels $\Lambda:\T{A}\to\T{B}$ such that $\Lambda(\gamma_A)=\gamma_B$ for systems $A,B\in\mathcal{S}$ with thermal states $\gamma_A$ and $\gamma_B$ respectively (\textbf{Gibbs-preserving operations}).
\end{itemize}
\end{Def}

In the case where the family $\mathcal{S}$ contains only continuous variable quantum systems, other specific resource theories emerge naturally, as a result of operational or technological constraints. For example, the resource theory of optical nonclassicality~\cite{Sperling2015, Tan2017, Yadin2018, NC-review, taming-PRA, taming-PRL} is based on the premise that statistical mixtures of coherent states
%~\cite{Schroedinger1926-coherent, Klauder1960, Glauber1963, Sudarshan1963, KLAUDER-SUDARSHAN, Glauber1965} 
are easy to synthesize, hence free, and ``classical'', as they most closely approximate classical electromagnetic waves. On the other hand, operationally, nonclassical states, such as Fock states~\cite{KLM, single-photon}, squeezed states~\cite{Kennard1927, Walls1983, Slusher1985, Andersen2016, Schnabel2017}, cat states~\cite{Dodonov1974, Ralph2003, Lund2008, Joo2011, Facon2016, Brask2010, Lee2013, Sychev2017}, or NOON states~\cite{Sanders1989, Dowling2008}, play an increasingly central role in applications. A formal definition of this resource theory is as follows.

\begin{Def}
The resource theory of (\textbf{optical}) \textbf{nonclassicality} is defined by setting:
\begin{itemize}
\item $\mathcal{S}$ to be the family of all continuous variable quantum systems;
\item $\mathcal{F}_S(A) = \CC_m \coloneqq \overline{\co}\left\{ \ketbra{\Alpha}:\, \Alpha\in \C^m\right\}$ for any $m$-mode system $A\in\mathcal{S}$;
\item $\mathcal{F}(A\to B)$, with $A,B\in\mathcal{S}$ being $m$ and $m'$-mode systems respectively, to encompass all quantum channels $\Lambda:\T{m}\to\T{m'}$ such that $\Lambda(\CC_m)\subseteq \CC_{m'}$ (\textbf{classical operations}).
\end{itemize}
\end{Def}

The so-called classical operations comprise, but are possibly not limited to, channels that can be obtained through passive linear optics, destructive measurements, and feed-forward of measurement outcomes~\cite{Tan2017, Yadin2018}. Note that the set of classical states and that of classical channels are both convex.

\section{Main results} \label{main results section}

In the present section we state the main results of our work. The starting point is the general bound on asymptotic transformation rates in Theorem~\ref{general_bound_rates_thm}. We then explore its consequences for the resource theories of entanglement and quantum thermodynamics in Corollaries~\ref{bound_rates_Esq_cor} and~\ref{bound_rates_free_energy_cor}, respectively. To apply it to the resource theory of nonclassicality, instead, we need to introduce and study two new monotones, the measured relative entropy of nonclassicality $\NCM$ and its regularized version (Definition~\ref{NC_measures_def}). One of our main technical contributions is the proof of a powerful variational expression for $\NCM$ (Theorem~\ref{main result thm}), from which we deduce the lower semicontinuity and --- most importantly --- the strong superadditivity of the measured relative entropy of nonclassicality $\NCM$ (Theorem~\ref{bound_rates_NCMi_thm}). 

%In order to do so, we have to prove some preliminary results. The section is organized as follows: in subsection~\ref{bound on rates subsection} we prove our bound on asymptotic transformation rates, and apply it to the resource theories of entanglement and quantum thermodynamics; in subsection~\ref{variational expressions subsection} we extend to the infinite-dimensional case some useful variational expressions for the measured relative entropy, first introduced in~\cite{Berta2017}, to be used later on; in subsection~\ref{the long march subsection} we introduce the auxiliary monotone $\Gamma$, and we actually start to pave the way for applying our bound to the resource theory of nonclassiclaity; finally, in subsection~\ref{main results subsection} we prove the measured relative entropy of nonclassicality to be lower semicontinuous and strongly superadditive, and we employ it for bounding asymptotic transformation rates.

\subsection{General bound on asymptotic rates}\label{bound on rates subsection}

Our first result is a general bound on (maximal) transformation rates that works for all quantum resource theories, \tcb{including infinite-dimensional ones. Formally, it is a generalization of a well-known bound holding for many monotones in finite dimensions~\cite[Section~VI.A.5]{quantum-resource-theories-review}. In this simpler context, it is possible to prove that if a resource monotone $G$ is weakly additive and moreover asymptotically continuous (Definition~\ref{monotones}, (j)~and~(c)) then the asymptotic transformation rate $R(\rho\!\to\!\sigma)$ between any two states $\rho,\sigma$ (Definition~\ref{rate_def}) satisfies that
\bb
R(\rho\!\to\!\sigma)\leq \frac{G(\rho)}{G(\sigma)}\, ,
\label{rate_bound_finite_dim}
\ee
\tcr{provided that the right-hand side is well defined. This bound has been proved in~\cite[Theorem~4]{Horodecki2001} (see also~\cite[Section~VI.A.5]{quantum-resource-theories-review}) using techniques developed in~\cite{Horodecki2000} and~\cite{Donald2002} --- see especially~\cite[Propositions~19,~20, and~22]{Donald2002} for a thorough discussion of many possible variations of the underlying hypotheses. In fact, the proof is so simple and enlightening that it is worth summarising it here. For any achievable rate $r$, i.e., for any element of the set in~\eqref{rate}, calling $A,B$ the systems to which $\rho,\sigma$ pertain, we can construct the sequence of maps $\Lambda_n\in\mathcal{F}\left( A^n\to B^{\floor{rn}}\right)$ such that $\epsilon_n \coloneqq \frac12 \left\|\Lambda_n\left( \rho^{\otimes n}\right) - \sigma^{\otimes \floor{r n}} \right\|_1$ satisfies $\lim_{n\to\infty} \epsilon_n = 0$. Leveraging properties~(j) and~(c) in Definition~\ref{monotones}, one then obtains that
\bb
n\, G(\rho) \texteq{(j)} G\left(\rho^{\otimes n}\right) \geq G\left(\Lambda_n\left( \rho^{\otimes n}\right)\right) \textgeq{(c)} G\big(\sigma^{\otimes \floor{rn}}\big) - f(\epsilon_n) \log \big( d^{\floor{rn}} \big) - g(\epsilon_n) \texteq{(j)} \floor{rn} \left(G(\sigma) - f(\epsilon_n) \log d \right) - g(\epsilon_n)\, ,
\label{finite_dim_proof}
\ee
where we called $d\coloneqq \dim \HH_B$. Dividing by $n$ and taking the limit $n\to\infty$ yields $r\leq G(\rho)/G(\sigma)$, and in turn~\eqref{rate_bound_finite_dim} once one takes the supremum over $r$.

Inequality~\eqref{rate_bound_finite_dim}, applied to different weakly additive and asymptotically continuous monotones, yields many of the commonly employed bounds on rates as far as finite-dimensional resources are concerned.}\footnote{\tcr{Notable exceptions are monotones based on the partial transpose in entanglement theory~\cite{negativity, plenioprl, irreversibility}.}} Unfortunately, for infinite-dimensional systems the notion of asymptotic continuity %\tcb{loses its meaning},
\tcr{becomes empty, and consequently the argument in~\eqref{finite_dim_proof} breaks down at the step marked as~(c). In fact, prior to our work \tcb{there seemed to be a lack of} technical tools} to address the approximation error allowed in the transformation~\eqref{rate} \tcr{in the infinite-dimensional case}. 
%(cf.~\eqref{rate})\footnote{\tcb{The precise point at which the finite-dimensional reasoning breaks down can be found in the second line of~\cite[Eq.~(67)]{quantum-resource-theories-review}.}}.
Due to this ``asymptotic continuity catastrophe'', in the existing literature prior to our work we could not locate \emph{any} upper bound on asymptotic transformation rates that holds for infinite-dimensional resource theories. The theorem below, whose proof is remarkably simple \tcr{(and yet very different from that in~\eqref{finite_dim_proof})} but whose applicability is surprisingly wide, remedies this regrettable state of affairs at least in the case where the employed monotone $G$ is strongly superadditive. An alternative but ultimately less satisfactory approach \tcr{to circumvent the asymptotic continuity catastrophe is sketched out} in Appendix~\ref{restr_asymp_cont_sec}, to which we refer the reader interested in more details on this point.}
%not necessarily finite-dimensional.

%In finite-dimensional resource theories, \tcb{it is possible do so by} using the following inequality: if $G$ is a resource monotone, i.e., a function from quantum states to the set of nonnegative real numbers that does not increase under free operations, the inequality $R(\rho\!\to\!\sigma)\leq \frac{G(\rho)}{G(\sigma)}$ holds if $G$ is (i)~additive on multiple copies of a state, and (ii)~asymptotically continuous~\cite{Vidal2000, Donald2002, Horodecki2001, Synak2006} (see also~\cite[Section~VI.A.5]{quantum-resource-theories-review}). Property~(i) can be enforced by regularization~\cite[Section~VI.A.4]{quantum-resource-theories-review}, and~(ii) turns out to hold for many monotones in finite-dimensional systems.

\begin{thm} \label{general_bound_rates_thm}
For a given QRT, not necessarily finite dimensional, let $G$ be a monotone that is strongly superadditive, weakly additive, and lower semicontinuous. Then, for all states $\rho_A,\sigma_B$, it holds that
\bb
\label{bound on rates equation}
R(\rho_A\!\to\!\sigma_B)\leq \widetilde{R}(\rho_A\!\to\!\sigma_B) \leq \frac{G(\rho_A)}{G(\sigma_B)} \,,
\ee
whenever the rightmost side is well defined.
\end{thm}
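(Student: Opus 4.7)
The plan is to first reduce to the statement $\widetilde{R}(\rho_A\to\sigma_B)\leq G(\rho_A)/G(\sigma_B)$, since the inequality $R\leq \widetilde{R}$ will be provided by Lemma~\ref{R_smaller_Rtilde_lemma} (it follows from monotonicity of the trace distance under partial trace). The only nontrivial case is when $G(\sigma_B)>0$ and $G(\rho_A)<\infty$; otherwise the right-hand side is either trivially $+\infty$ or the statement is vacuous under the ``well defined'' proviso.

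Fix any maximally achievable rate $r>0$, and let $(\Lambda_n)_n$ with $\Lambda_n\in\mathcal{F}(A^n\to B^{\floor{rn}})$ be a sequence witnessing achievability. Write $\Omega_n\coloneqq \Lambda_n(\rho_A^{\otimes n})$, so that by Definition~\ref{max_rate_def} the local reductions satisfy
\bb
\eta_n \coloneqq \max_{j=1,\ldots,\floor{rn}} \left\|(\Omega_n)_j-\sigma_B\right\|_1 \xrightarrow[n\to\infty]{} 0. \notag
\ee
Apply the three hypotheses in sequence. Monotonicity combined with weak additivity of $G$ gives
\bb
G(\Omega_n)\;\leq\; G\bigl(\rho_A^{\otimes n}\bigr)\;=\; n\,G(\rho_A). \notag
\ee
Strong superadditivity, applied iteratively to the $\floor{rn}$-partite state $\Omega_n$, yields
\bb
G(\Omega_n)\;\geq\;\sum_{j=1}^{\floor{rn}} G\bigl((\Omega_n)_j\bigr)\;\geq\;\floor{rn}\,\min_{j} G\bigl((\Omega_n)_j\bigr). \notag
\ee
Let $j_n^*$ achieve the minimum on the right-hand side. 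Because of the uniform-in-$j$ convergence $(\Omega_n)_j\to \sigma_B$ in trace norm (granted by $\eta_n\to 0$), we have in particular $(\Omega_n)_{j_n^*}\to \sigma_B$, so lower semicontinuity of $G$ delivers
\bb
\liminf_{n\to\infty}\,\min_{j} G\bigl((\Omega_n)_j\bigr)\;=\;\liminf_{n\to\infty} G\bigl((\Omega_n)_{j_n^*}\bigr)\;\geq\; G(\sigma_B). \notag
\ee
Chaining the three inequalities and dividing by $n$ gives $G(\rho_A)\geq r\,G(\sigma_B)$, and taking the supremum over achievable $r$ yields the claimed bound.

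The only delicate point is the step where the min over $j$ is handled, because a priori lower semicontinuity only bounds the liminf along a \emph{fixed} sequence. The trick is that the maximal-rate error criterion in Definition~\ref{max_rate_def} controls \emph{all} marginals uniformly, so the diagonal subsequence $(\Omega_n)_{j_n^*}$ still converges to $\sigma_B$ and lower semicontinuity applies. This is precisely the reason why passing through $\widetilde{R}$ (rather than handling $R$ directly) is convenient: the standard rate $R$ would force one to extract a sequence of well-behaved marginals from vanishing global error, which is more cumbersome; here the needed uniform control is built into the definition. Edge cases ($G(\rho_A)=\infty$ or $G(\sigma_B)=0$) are absorbed by the hypothesis that $G(\rho_A)/G(\sigma_B)$ is well defined.
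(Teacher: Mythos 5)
Your proof is correct and follows essentially the same route as the paper's: monotonicity plus weak additivity for the upper bound on $G(\Omega_n)$, strong superadditivity to descend to the single-copy marginals, a diagonal sequence of minimizing indices, and lower semicontinuity applied to that sequence (which converges thanks to the uniform control over all marginals built into Definition~\ref{max_rate_def}). The remark you flag as the delicate point is exactly the same device the paper uses in its steps 4 and 5.
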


%Although we will consider applications to some specific resource theories, the above result is completely general. In particular, it applies to all those monotones that are not asymptotically continuous, for instance because the underlying Hilbert space is infinite-dimensional.
With the above result at hand, we can now establish rigorous bounds on transformation rates in operationally important examples of infinite-dimensional quantum resource theories.

% there is no need for any additional kind of constraints (e.g.\ energy constraints) on the free protocol employed to achieve the rate; the bound is actually valid for maximal asymptotic transformation rates, making the result even stronger than the finite-dimensional counterparts known before.

\subsection{First consequences: resource theories of entanglement and quantum thermodynamics}

We start with the QRT of entanglement. To the extent of our knowledge, there is no available technique to derive upper bounds on the transformation rate $R(\rho_{AB}\to \sigma_{AB})$ in terms of known monotones. Even the energy-constrained version of asymptotic continuity established by Shirokov~\cite{Shirokov-AFW-1, Shirokov-AFW-2, Shirokov-AFW-3} for many entanglement monotones does not suffice to this purpose. This is because we need continuity estimates on the output system, and --- while the input, consisting of many copies of a known state, is naturally energy constrained --- the output, being produced by a general unconstrained free channel, is not. We could of course impose such an energy constraint artificially, by enforcing the parties to operate with LOCCs that are \emph{uniformly} energy-constrained; however, the operational motivation behind this assumption is somewhat dubious; this is especially so if energy is much cheaper than entanglement, which is often the case in experimental practice. We refer the reader to Appendix~\ref{restr_asymp_cont_sec} for a more in-depth discussion of these points.

To apply Theorem~\ref{general_bound_rates_thm} to the case at hand, we need an entanglement monotone that obeys strong superadditivity. The \emph{squashed entanglement}, denoted $E_{sq}$, is a natural candidate~\cite{tucci1999, squashed, faithful, rel-ent-sq}. Shirokov~\cite{Shirokov-sq} has shown how to extend its definition to infinite-dimensional systems~\cite[Eq.~(17)]{Shirokov-sq}. \tcb{We report the definition of squashed entanglement later in Section~\ref{section squashed entanglement}}. Applying Theorem~\ref{general_bound_rates_thm} to it we deduce the following corollary.

\begin{cor} \label{bound_rates_Esq_cor}
Let $\rho_{AB}$ and $\sigma_{A'B'}$ be two bipartite states such that
\bb
\min\left\{ S(\rho_A),\, S(\rho_B),\, S(\rho_{AB})\right\} <\infty\, ,\qquad \min\left\{ S(\sigma_{A'}),\, S(\sigma_{B'}),\, S(\sigma_{A'B'})\right\} <\infty\, .
\label{strange_sets}
\ee
Then, in the QRT of entanglement it holds that
\bb
R(\rho_{AB}\!\to\!\sigma_{A'B'})\leq \widetilde{R}(\rho_{AB}\!\to\!\sigma_{A'B'}) \leq \frac{E_{sq}(\rho_{AB})}{E_{sq}(\sigma_{A'B'})}\, .
\label{bound_rates_Esq}
\ee
%If $\rho_{AB}, \sigma_{AB}\in \mathcal{D}_{A:B}^*$, and in the definition of transformation rate we restrict to LOCC that preserve the set $\mathcal{D}_{A:B}^*$, then~\eqref{bound_rates_Esq} also holds with $E_{sq}$ instead of $\textit{\^E}_{sq}$.
\end{cor}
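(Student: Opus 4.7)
The plan is to reduce Corollary~\ref{bound_rates_Esq_cor} to a direct application of Theorem~\ref{general_bound_rates_thm}, with $G = E_{sq}$. To do so, I would verify that the squashed entanglement in Shirokov's infinite-dimensional formulation~\cite{Shirokov-sq} satisfies the four hypotheses needed: (a) it is a monotone under LOCC (and hence a monotone in our QRT sense, with $E_{sq}(\sigma)=0$ for $\sigma\in\mathrm{Sep}$); (b) it is strongly superadditive, i.e., $E_{sq}(\rho_{A_1A_2B_1B_2})\geq E_{sq}(\rho_{A_1B_1}) + E_{sq}(\rho_{A_2B_2})$ with respect to the cut $A_1A_2\!:\!B_1B_2$; (c) it is weakly additive, in fact additive on tensor products, i.e., $E_{sq}(\rho_{AB}\otimes\sigma_{A'B'})= E_{sq}(\rho_{AB})+E_{sq}(\sigma_{A'B'})$ across $AA'\!:\!BB'$; and (d) it is lower semicontinuous on the set of bipartite states with the trace-norm topology. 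All four properties were established in the finite-dimensional setting in~\cite{squashed} and lifted to the infinite-dimensional case by Shirokov in~\cite{Shirokov-sq} (lower semicontinuity and the additivity/superadditivity identities being, respectively, Theorem~2 and Proposition~3 there, modulo the exact numbering).

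The role of the hypothesis~\eqref{strange_sets} is to guarantee that $E_{sq}(\rho_{AB})$ and $E_{sq}(\sigma_{A'B'})$ are finite, so that the ratio on the right-hand side of~\eqref{bound_rates_Esq} is well defined. Indeed, the infinite-dimensional squashed entanglement as defined in~\cite[Eq.~(17)]{Shirokov-sq} is finite as soon as at least one of the marginal entropies or the global entropy of the state is finite; this is precisely what~\eqref{strange_sets} imposes on both $\rho_{AB}$ and $\sigma_{A'B'}$. In particular, $E_{sq}(\sigma_{A'B'})>0$ (since $\sigma_{A'B'}$ would otherwise be separable and the bound trivial, with $R$ possibly infinite), so the quotient is a well-defined element of $[0,+\infty)$.

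With (a)--(d) in hand and~\eqref{strange_sets} ensuring finiteness, the conclusion is immediate: Theorem~\ref{general_bound_rates_thm} applied to $G=E_{sq}$ in the QRT of bipartite entanglement yields
\bb
R(\rho_{AB}\!\to\!\sigma_{A'B'})\leq \widetilde{R}(\rho_{AB}\!\to\!\sigma_{A'B'}) \leq \frac{E_{sq}(\rho_{AB})}{E_{sq}(\sigma_{A'B'})},
\ee
as claimed.

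The only nontrivial point, and the one I would expect to require the most care, is making sure that the infinite-dimensional extension of $E_{sq}$ used in Shirokov's work actually satisfies strong superadditivity globally (not just under extra regularity assumptions) and lower semicontinuity on the whole domain of interest; this requires tracking through Shirokov's variational definition and his continuity bounds for the conditional mutual information. Everything else is bookkeeping: matching Shirokov's hypotheses to~\eqref{strange_sets} and invoking Theorem~\ref{general_bound_rates_thm}.
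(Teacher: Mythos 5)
There is a genuine gap, and you actually put your finger on it yourself in your last paragraph without resolving it: the application of Theorem~\ref{general_bound_rates_thm} directly to $G=E_{sq}$ requires lower semicontinuity of $E_{sq}$ on the \emph{whole} state space, and this is not available for Shirokov's infinite-dimensional $E_{sq}$. In the proof of Theorem~\ref{general_bound_rates_thm}, lower semicontinuity is invoked on the sequence of output states $\left(\Lambda_n\left(\rho_A^{\otimes n}\right)\right)_{j_n}$ converging to the target; these outputs are produced by completely unconstrained free channels, so there is no way to confine them to a nice subdomain (e.g., finite-entropy or finite-energy states) on which partial continuity results for $E_{sq}$ might apply. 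What Shirokov proves to be lower semicontinuous \emph{everywhere} is not $E_{sq}$ but the auxiliary quantity $\textit{\^E}_{sq}$ of Eq.~\eqref{Esq_hat}, obtained by truncating the state with local finite-dimensional projectors before evaluating the squashed entanglement. The paper's proof therefore applies Theorem~\ref{general_bound_rates_thm} to $\textit{\^E}_{sq}$ (which is strongly superadditive, additive, and lower semicontinuous everywhere) to get $\widetilde{R}\leq \textit{\^E}_{sq}(\rho_{AB})/\textit{\^E}_{sq}(\sigma_{A'B'})$, and only then converts back to $E_{sq}$.

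This also corrects your reading of the hypothesis~\eqref{strange_sets}: its role is not primarily to ensure finiteness of the two squashed entanglements, but to guarantee, via Shirokov's result, that $E_{sq}(\rho_{AB})=\textit{\^E}_{sq}(\rho_{AB})$ and $E_{sq}(\sigma_{A'B'})=\textit{\^E}_{sq}(\sigma_{A'B'})$, so that the bound established for $\textit{\^E}_{sq}$ can be restated in terms of $E_{sq}$ as in~\eqref{bound_rates_Esq}. Without routing the argument through $\textit{\^E}_{sq}$, the step in your proof where you invoke "lower semicontinuity on the set of bipartite states with the trace-norm topology" for $E_{sq}$ itself is unsupported, and the proof does not close.
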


Another possible application of Theorem~\ref{general_bound_rates_thm} is to the QRT of thermodynamics. The quantity $G(\rho_A)\coloneqq \frac{1}{\beta} D(\rho_A\|\gamma_A)$, which coincides with the free energy difference between $\rho_A$ and $\gamma_A$ when $\Tr\left[\rho_A H_A\right]<\infty$, can be seen to be strongly superadditive, additive and lower semicontinuous. We deduce the following.

\begin{cor} \label{bound_rates_free_energy_cor}
In the QRT of thermodynamics, for all states $\rho_{A}, \sigma_{B}$ it holds that
\bb
R(\rho_{A}\!\to\!\sigma_{B})\leq \widetilde{R}(\rho_{A}\!\to\!\sigma_{B}) \leq \frac{D(\rho_A\|\gamma_A)}{D(\sigma_B\|\gamma_B)} \,.
\label{bound_rates_free_energy}
\ee
\end{cor}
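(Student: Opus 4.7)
The plan is to verify that $G(\rho)\coloneqq \tfrac{1}{\beta}\,D(\rho\|\gamma)$ satisfies the three hypotheses of Theorem~\ref{general_bound_rates_thm}; the bound~\eqref{bound_rates_free_energy} then follows by direct application, with the prefactor $1/\beta$ cancelling in the ratio $G(\rho_A)/G(\sigma_B)$.

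First I would verify that $G$ is a monotone in the sense of Definition~\ref{monotones}. Vanishing on the unique free state is immediate from $D(\gamma_A\|\gamma_A)=0$. For monotonicity under a Gibbs-preserving channel $\Lambda\in\mathcal{F}(A\to B)$, the defining constraint $\Lambda(\gamma_A)=\gamma_B$ combined with the data-processing inequality for the Umegaki relative entropy (valid also in infinite dimensions) gives $D(\Lambda(\rho_A)\|\gamma_B)=D(\Lambda(\rho_A)\|\Lambda(\gamma_A))\leq D(\rho_A\|\gamma_A)$.

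Next I would check the three analytical properties demanded by Theorem~\ref{general_bound_rates_thm}. The Gibbs hypothesis $H_{AB}=H_A+H_B$ forces $\gamma_{AB}=\gamma_A\otimes\gamma_B$, hence $\log\gamma_{AB}=\log\gamma_A\otimes\id+\id\otimes\log\gamma_B$. Plugging this into~\eqref{rel ent} evaluated on a product state yields additivity of $D(\cdot\|\gamma)$ on tensor products, which is strictly stronger than the weak additivity required. For strong superadditivity, a formal computation delivers the well-known identity
\[
D(\rho_{AB}\|\gamma_A\otimes\gamma_B)-D(\rho_A\|\gamma_A)-D(\rho_B\|\gamma_B)=D(\rho_{AB}\|\rho_A\otimes\rho_B),
\]
whose right-hand side is nonnegative (this being equivalent to subadditivity of the von Neumann entropy). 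Finally, lower semicontinuity of $\rho\mapsto D(\rho\|\gamma)$ with respect to the trace-norm topology is the classical result of Lindblad~\cite{Lindblad1973}.

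The main care point in the infinite-dimensional setting is that the individual terms in the displayed identity can diverge, so the algebraic rearrangement is not literally justified everywhere. I would handle this either by giving a direct proof of the strong superadditivity of $D(\cdot\|\gamma_A\otimes\gamma_B)$ via the data-processing inequality applied to the partial-trace channel (which does not split the relative entropy into possibly ill-defined pieces), or via a truncation-and-limit argument: establish the inequality on finite-rank approximants of $\rho_{AB}$ where every term is finite, and pass to the limit using lower semicontinuity of $D$ on both sides. With all three hypotheses established, Theorem~\ref{general_bound_rates_thm} yields~\eqref{bound_rates_free_energy}; degenerate cases (e.g.\ $\sigma_B=\gamma_B$) make the right-hand side equal to $+\infty$ under the standard convention, so the bound is vacuous rather than ill-defined.
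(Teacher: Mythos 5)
Your proposal follows the same route as the paper: verify that $G(\rho)=\frac{1}{\beta}D(\rho\|\gamma)$ is a weakly additive, strongly superadditive, lower semicontinuous monotone and then invoke Theorem~\ref{general_bound_rates_thm}. The paper's proof is exactly this, with the three properties justified by $\gamma_{AB}=\gamma_A\otimes\gamma_B$, a citation of~\cite[Corollary~5.21]{PETZ-ENTROPY} for superadditivity, and~\cite[Proposition~5.23]{PETZ-ENTROPY} for lower semicontinuity. Your extra check of monotonicity under Gibbs-preserving maps via data processing is correct and worth making explicit.

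One concrete flaw in your contingency plan: your first proposed repair of strong superadditivity in infinite dimension --- ``the data-processing inequality applied to the partial-trace channel'' --- does not work, because applying $\Tr_B$ only yields $D(\rho_{AB}\|\gamma_A\otimes\gamma_B)\geq D(\rho_A\|\gamma_A)$, a single marginal term; there is no way to add the two one-sided bounds to obtain the sum $D(\rho_A\|\gamma_A)+D(\rho_B\|\gamma_B)$. The correct infinite-dimensional statement is precisely the superadditivity result $D(\rho_{AB}\|\sigma_A\otimes\sigma_B)\geq D(\rho_A\|\sigma_A)+D(\rho_B\|\sigma_B)$ proved in~\cite[Corollary~5.21]{PETZ-ENTROPY}, which the paper cites directly; your chain-rule identity with $D(\rho_{AB}\|\rho_A\otimes\rho_B)\geq 0$ is the finite-dimensional shadow of that result. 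Your second repair (truncation and limit) also needs more care than stated, since lower semicontinuity controls the right-hand side of the inequality in the limit but not the left-hand side, for which one would need convergence of $D(\rho_{AB}^{(n)}\|\gamma_{AB})$ to $D(\rho_{AB}\|\gamma_{AB})$ along the truncation sequence. The cleanest fix is simply to cite the known infinite-dimensional superadditivity theorem, as the paper does.
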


Let us stress that Corollary~\ref{bound_rates_free_energy_cor} extends the results of Brand\~{a}o et al.~\cite{Brandao-thermo}, which are valid in finite-dimensional systems, to \emph{all} quantum systems where a QRT of thermodynamics can be constructed. This is clearly a crucial improvement because of the ubiquity of harmonic oscillators in physical applications.

\subsection{Further consequences: resource theory of nonclassicality}

Over the past decades, there have been proposals to quantify the nonclassicality of quantum states of light, e.g., by their distance from the set of classical states~\cite{Hillery1987, Dodonov2000, Marian2004, BrandaoPlenio1}, by the amount of noise needed in order to make them classical~\cite{Lee1991, Luetkenhaus1995}, by their potential for entanglement generation~\cite{Asboth2005, Vogel2014, Killoran2016} or for metrological advantage~\cite{Kwon2019}, by the negativity~\cite{Kenfack2004, Tan2020}, the variances~\cite{Yadin2018} or other features~\cite{Vogel2000, Vogel2000-comment, Vogel2000-reply, Richter2002, idel2016, Bohmann2020} of their phase-space distributions, or by the minimum number of superposed coherent states needed to reproduce the target state~\cite{Gehrke2012}. Unfortunately, none of these monotones appears to yield bounds on asymptotic transformation rates, for they fail to satisfy asymptotic continuity. In fact, to the extent of our knowledge, no rigorous bounds on those rates are known for the resource theory of optical nonclassicality. Indeed, the transformations considered in Yadin et al.~\cite[Theorems~2 and~3]{Yadin2018} are probabilistic but exact, and moreover single-shot rather than asymptotic. One could argue that especially their zero-error nature somewhat limits their operational relevance in applications.

We therefore pursue a different approach. In analogy to what was previously done for entanglement~\cite{Vedral1997,vedral1998entanglement}, we use the relative entropies introduced in Section~\ref{entropies subsection} to construct nonclassicality measures.

\begin{Def} \label{NC_measures_def}
Let $\rho\in \D{m}$ be an $m$-mode state. The \textbf{relative entropy of nonclassicality} and the \textbf{measured relative entropy of nonclassicality} of $\rho$ are defined respectively as:
\bb
\NC(\rho) \coloneqq \inf_{\sigma\in\CC_m}D(\rho\|\sigma)\, ,\quad \NCM(\rho) \coloneqq \inf_{\sigma\in\CC_m}\DM(\rho\|\sigma)\, .
\label{NC_and_NCM}
\ee
%Their lower and upper regularizations as given by Definition~\ref{regularization_def} are denoted with $N_r^{\downarrow,\infty}, N_r^{\uparrow,\infty}$ and $N_r^{M,\downarrow,\infty}, N_r^{M,\uparrow,\infty}$, respectively.
\end{Def}

Note that our definition of $\NC$ differs from that of Marian et al.~\cite{Marian2004}, in that $\sigma$ is allowed to be an arbitrary classical state, not necessarily Gaussian. It is not difficult to see that $\NC$ and $\NCM$ are faithful and convex nonclassicality monotones (Lemma~\ref{NC_subadd_lemma}). Since $\NC$ is also subadditive (again, Lemma~\ref{NC_subadd_lemma}), its regularization $\NCi(\rho) \coloneqq \lim_{n\to \infty}\! \frac{\NC(\rho^{\otimes n})}{n}$ is well defined by Fekete's lemma~\cite{Fekete1923} and also subadditive (Corollary~\ref{NCi_monotone_cor}). We will show that both $\NC$ and $\NCi$ are always finite on bounded-energy states (Proposition~\ref{bounded energy prop}), but that there exist infinite-energy states $\rho$ such that $\NC(\rho)=\NCi(\rho)=\infty$ (Proposition~\ref{NC infinite prop}). Explicit computations or tight estimates for the measured relative entropy of nonclassicality for Fock-diagonal states, squeezed states, and cat states are reported in Sections~\ref{Fock-diagonal subsec}--\ref{cat states subsec}.

%Their regularized versions
%\bb
%\NCi(\rho)\!\coloneqq\! \lim_{n\to \infty}\! \frac{\NC(\rho^{\otimes n})}{n}\, ,\quad \NCMi(\rho)\!\coloneqq \!\lim_{n\to \infty} \!\frac{\NCM(\rho^{\otimes n})}{n}\,,
%\label{NCinfty_and_NCMinfty}
%\ee
%where the limits exist by Fekete's lemma~\cite{Fekete1923}, are weakly additive nonclassicality monotones.

It might not be clear at this point why to introduce $\NCM$ alongside with $\NC$, given that the former quantity involves one more nested optimization than the latter. However, we now show that its computation can be notably simplified. %\tcb{
%In what follows, $S(\rho)\coloneqq -\Tr[\rho \log_2\rho]$ denotes the (von Neumann) entropy. We will restrict to 
%}
%is harder to compute than $D(\rho\|\sigma)$ in general, and at least in principle it involves an optimization over an infinite-dimensional space. The reason why this is a good idea after all resides in the following result.

\begin{thm} \label{main result thm}
For all $m$-mode finite-entropy states $\rho$, it holds that
\bb
\NCM(\rho) = \sup_{L>0} \left\{ \Tr \left[\rho \log_2 L\right] - \log_2 \sup_{\Alpha\in \C^m} \braket{\Alpha| L | \Alpha} \right\}\,, \label{second expression for ncm}
\ee
where $L$ ranges over all positive trace class operators on $\HH_m$ (equivalently, on all positive normalized states).
\end{thm}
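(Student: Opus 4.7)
The plan is to combine a variational representation of the measured relative entropy with a minimax swap based on Sion's theorem, bringing the nested infimum-supremum defining $\NCM(\rho)$ down to a single supremum. Working momentarily in natural logarithms (the base-$2$ form in the theorem is recovered by dividing by $\ln 2$ at the end), the starting point is the variational identity of Berta et al.~\cite{Berta2017}, extended to the infinite-dimensional setting with $L$ ranging over positive trace class operators on $\HH_m$:
\bbb
\DM(\rho\|\sigma)\,\ln 2 = \sup_{L>0}\left\{\Tr[\rho\,\ln L] + 1 - \Tr[\sigma L]\right\}.
\eee
The ``$+1$'' form is crucial because the integrand $f(\sigma,L)\coloneqq \Tr[\rho\,\ln L] + 1 - \Tr[\sigma L]$ is affine in $\sigma$ and, by operator concavity of the logarithm, concave in $L$, which are exactly the hypotheses of Sion's theorem. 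Substituting into the definition of $\NCM(\rho)$ yields $\NCM(\rho)\,\ln 2 = \inf_{\sigma\in \CC_m}\sup_{L>0} f(\sigma,L)$, and the task reduces to swapping the order of $\inf$ and $\sup$.

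The main obstacle is topological: $\CC_m$ is not compact in trace norm, so Sion's theorem cannot be applied directly. The resolution, which is the technical heart of the argument, is to pass to the weak* topology on $\T{m} = \K{m}^*$, under which the unit ball of $\T{m}$ is compact by Banach--Alaoglu (Remark~\ref{w* rem}). I would invoke Lemma~\ref{Cm tilde compact lemma} to enlarge $\CC_m$ to its weak*-closed convex hull $\widetilde{\CC}_m$, which is convex and weak*-compact. Since every admissible $L$ is trace class and hence compact, the map $\sigma\mapsto\Tr[\sigma L]$ is automatically weak*-continuous, and so $f(\cdot,L)$ is weak*-continuous on $\widetilde{\CC}_m$; all hypotheses of Sion's theorem are then in force, yielding
\bbb
\inf_{\sigma\in\widetilde{\CC}_m}\sup_{L>0} f(\sigma,L) = \sup_{L>0}\inf_{\sigma\in\widetilde{\CC}_m} f(\sigma,L).
\eee

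To transfer this identity back to $\CC_m$, I would verify that enlarging the infimum domain does not change its value. Using Choquet's representation theorem applied to the weak*-compact metrizable convex set $\widetilde{\CC}_m$, one sees that every $\sigma^*\in\widetilde{\CC}_m$ can be written as $\sigma^* = \lambda\sigma$ with $\sigma\in \CC_m$ and $\lambda\in [0,1]$. A direct change of variables $L\mapsto L/\lambda$ then gives $\sup_L f(\lambda\sigma, L) = \sup_L f(\sigma, L) - \ln\lambda\ge \sup_L f(\sigma, L)$ for $\lambda\in (0,1]$, while $\sup_L f(0,L) = +\infty$; hence sub-normalized elements of $\widetilde{\CC}_m$ never lower the infimum, and $\inf_{\widetilde{\CC}_m}\sup_L f = \inf_{\CC_m}\sup_L f = \NCM(\rho)\,\ln 2$. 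The inner infimum then collapses by affinity to $\inf_{\sigma\in\widetilde{\CC}_m} f(\sigma,L) = \Tr[\rho\,\ln L] + 1 - \sup_\alpha\braket{\alpha|L|\alpha}$, using that $\sup_{\sigma\in\widetilde{\CC}_m}\Tr[\sigma L] = \sup_{\alpha\in\C^m}\braket{\alpha|L|\alpha}$ (a weak*-continuous linear functional attains its supremum on a weak*-compact convex set at an extreme point, with $0$ irrelevant since $L\ge 0$).

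Assembling the pieces yields
\bbb
\NCM(\rho)\,\ln 2 = \sup_{L>0}\left\{\Tr[\rho\,\ln L] + 1 - \sup_{\alpha\in\C^m}\braket{\alpha|L|\alpha}\right\}.
\eee
Optimizing the right-hand side over positive rescalings $L\mapsto cL$ (with stationary value $c^* = 1/\sup_\alpha\braket{\alpha|L|\alpha}$) converts the tail ``$+1 - \sup_\alpha\braket{\alpha|L|\alpha}$'' into ``$-\ln\sup_\alpha\braket{\alpha|L|\alpha}$''; dividing by $\ln 2$ to return to base-$2$ logarithms produces~\eqref{second expression for ncm}, and the same scale invariance shows that $L$ can equivalently be restricted to positive normalized states. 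The hardest step I anticipate is rigorously establishing the Berta-type variational formula in the infinite-dimensional setting with $L$ constrained to positive trace class operators, as it is exactly this constraint that makes $\sigma\mapsto\Tr[\sigma L]$ weak*-continuous and allows Sion's theorem to apply; the finite-entropy hypothesis on $\rho$ is used throughout to ensure that $\Tr[\rho\,\ln L]$ is well-defined and to control the finite-dimensional approximations needed to lift the variational identity from the finite-dimensional case.
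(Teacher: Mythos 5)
Your proposal is correct and follows essentially the same route as the paper's proof: the infinite-dimensional Berta variational formula with $L$ restricted to trace class operators for finite-entropy $\rho$ (Lemmas~\ref{Berta variational lemma} and~\ref{restrictions on L}), the weak*-compactness of $\widetilde{\CC}_m$ (Lemma~\ref{Cm tilde compact lemma}), Sion's minimax theorem, the collapse of the inner infimum onto coherent states via extreme points, and the final rescaling to pass from the linearized to the logarithmic form. Two minor remarks: the one hypothesis of Sion's theorem you leave unverified is the upper semicontinuity of $L\mapsto \Tr[\rho\log_2 L]$, which the paper obtains from $\Tr[\rho\log_2 L]=-S(\rho)-D(\rho\|L)$ and the lower semicontinuity of the relative entropy; and your appeal to Choquet's theorem is unnecessary, since $\co\left(\CC_m\cup\{0\}\right)=\left\{\lambda\sigma:\lambda\in[0,1],\ \sigma\in\CC_m\right\}$ follows directly from the convexity of $\CC_m$.
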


The proof of Theorem~\ref{main result thm} involves two main ingredients. This first one is a generalization of the variational program for $\DM$ put forth by Berta et al.~\cite[Lemma~1]{Berta2017} to the infinite-dimensional case, which may be of independent interest.

\begin{lemma} \label{Berta variational lemma}
Let $\rho\in \D{}$ be a density operators on a (possibly infinite-dimensional) Hilbert space $\HH$, and let $\sigma\in \Tp{}$ be positive semidefinite and nonzero. Then
\begin{align}
	\DM (\rho\|\sigma) &= \sup_{h\in \B{}} \left\{ \Tr \left[\rho h\right] - \log_2 \Tr\left[ \sigma 2^h\right] \right\} \label{Berta variational 1} \\
	&= \sup_{h\in \B{}} \left\{ \Tr \left[\rho h\right] + \log_2 (e) \left(1 - \Tr \left[\sigma 2^h\right]\right) \right\} \label{Berta variational 2} \\
	&= \sup_{0<\delta \id<L\in \B{}} \left\{ \Tr \left[\rho \log_2 L\right] - \log_2 \Tr \left[\sigma L\right] \right\} \label{Berta variational 3bis} \\
	&= \sup_{0<\delta \id<L\in \B{}} \left\{ \Tr \left[\rho \log_2 L\right] + \log_2 (e) \left( 1 - \Tr \left[\sigma L\right]\right) \right\} \label{Berta variational 4bis} \\
	&= \sup_{0<L\in \B{}} \left\{ \Tr \left[\rho \log_2 L\right] - \log_2 \Tr \left[\sigma L\right] \right\} \label{Berta variational 3} \\
	&= \sup_{0<L\in \B{}} \left\{ \Tr \left[\rho \log_2 L\right] + \log_2 (e) \left(1 - \Tr \left[\sigma L\right]\right) \right\} . \label{Berta variational 4} 
\end{align}
\tcb{The notation $L>\delta \mathbb{1}$ in the supremum in equations~\eqref{Berta variational 3bis} and~\eqref{Berta variational 4bis} means that $L$ is required to have eigenvalues bounded from below by a positive quantity, i.e., to satisfy $L>\delta \mathbb{1}$ for some $\delta>0$, depending on $L$.}
\end{lemma}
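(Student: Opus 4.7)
My plan is to reduce the six claimed identities to a single one, say equation~\eqref{Berta variational 3bis}, and obtain the remaining five by elementary algebraic manipulations. The substitution $L=2^h$ matches \eqref{Berta variational 1} with \eqref{Berta variational 3bis}: as $h\in\B{}$ ranges over bounded self-adjoint operators, $L=2^h$ ranges precisely over the operators in $\B{}$ satisfying $\delta\id<L$ for some $\delta>0$, since $e^{-\|h\|_\infty}\id\leq 2^h\leq e^{\|h\|_\infty}\id$. The passage between the $-\log_2\Tr[\sigma L]$ and the $+\log_2(e)(1-\Tr[\sigma L])$ versions is controlled by the scalar inequality $\log_2 x\leq(x-1)\log_2 e$, which gives $F_2(L)\leq F_1(L)$ pointwise and hence $\sup F_2\leq \sup F_1$, together with the observation that $F_1$ is scale-invariant while $F_2$ is not: optimizing $c>0$ in $F_2(cL)$ at $c^*=1/\Tr[\sigma L]$ recovers $F_1(L)$ and delivers the reverse inequality. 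Finally, \eqref{Berta variational 3} matches \eqref{Berta variational 3bis} by approximating any $0<L\in\B{}$ by $L_\delta\coloneqq L+\delta\id$ and invoking operator monotone convergence $\log_2 L_\delta\downarrow \log_2 L$ as $\delta\downarrow 0$.

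To prove the core identity $\DM(\rho\|\sigma)=$ RHS of \eqref{Berta variational 3bis}, I would proceed by a sandwich argument. For the $\leq$ direction, I start from the classical (Donsker--Varadhan) variational formula
\bbb
\DKL(p\|q)=\sup_{\phi:\X\to(0,\infty)}\left\{\sumno_x p(x)\log_2\phi(x)-\log_2\sumno_x q(x)\phi(x)\right\},
\eee
valid on any finite alphabet $\X$ for a probability $p$ and a positive measure $q$. Given a finite POVM $\M=\{E_x\}_{x\in\X}$ and $\phi>0$, set $L\coloneqq\sumno_x\phi(x)E_x$; this operator belongs to $\B{}$ and satisfies $(\min_x\phi(x))\id\leq L\leq(\max_x\phi(x))\id$, so it is admissible in \eqref{Berta variational 3bis}. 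One has $\Tr[\sigma L]=\sumno_x q(x)\phi(x)$ by linearity, while the key estimate $\Tr[\rho\log_2 L]\geq\sumno_x p(x)\log_2\phi(x)$ follows from operator Jensen: Naimark's dilation provides an isometry $V:\HH\to\tilde\HH$ and a PVM $\{\tilde P_x\}$ on $\tilde\HH$ with $E_x=V^*\tilde P_xV$, whence $L=V^*\tilde L V$ with $\tilde L\coloneqq\sumno_x\phi(x)\tilde P_x$ and $\log_2\tilde L=\sumno_x(\log_2\phi(x))\tilde P_x$. Since $-\log_2$ is operator convex on $(0,\infty)$, the Hansen--Pedersen--Jensen inequality gives $\log_2(V^*\tilde L V)\geq V^*(\log_2\tilde L)V$; pairing with $\rho$ and taking sups over $\phi$ and $\M$ yields $\DM(\rho\|\sigma)\leq$ RHS.

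For the $\geq$ direction, fix $L\in\B{}$ with $\delta\id<L$, so $\spec L\subseteq[\delta,\|L\|_\infty]$. Given $\epsilon>0$, partition this interval geometrically as $\delta=\lambda_0<\lambda_1<\cdots<\lambda_K$ with $\lambda_{k+1}=(1+\epsilon)\lambda_k$. The spectral projectors $P_k\coloneqq\mathds{1}_{[\lambda_k,\lambda_{k+1})}(L)$ form a finite PVM, and the discretization $L'\coloneqq\sumno_k\lambda_kP_k$ satisfies $L'\leq L\leq(1+\epsilon)L'$. Operator monotonicity of $\log_2$ yields $\log_2 L\leq\log_2 L'+\log_2(1+\epsilon)\,\id$, and also $\Tr[\sigma L]\geq\Tr[\sigma L']$, so that
\bbb
\Tr[\rho\log_2 L]-\log_2\Tr[\sigma L]\leq\sumno_k p_k\log_2\lambda_k-\log_2\sumno_k q_k\lambda_k+\log_2(1+\epsilon),
\eee
with $p_k\coloneqq\Tr[\rho P_k]$ and $q_k\coloneqq\Tr[\sigma P_k]$. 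The first two terms on the right are bounded above by $\DKL(p\|q)\leq\DM(\rho\|\sigma)$ by the classical variational formula (applied to $\phi(k)=\lambda_k$), and sending $\epsilon\downarrow 0$ closes the gap.

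The principal technical hurdle is the operator-Jensen step in infinite dimensions: Naimark's dilation of a finite POVM generally lives on an infinite-dimensional Hilbert space, and Hansen--Pedersen for $-\log_2$ must be applied to the bounded operator $\tilde L$. One must verify that $\spec\tilde L$ lies in a compact subset of $(0,\infty)$ (here automatic, since $\phi$ takes only finitely many positive values) so that the inequality $\log_2(V^*\tilde L V)\geq V^*\log_2(\tilde L)V$ holds bona fide as an operator inequality on $\HH$, and that it pairs correctly with the trace-class $\rho$. The remaining analytical steps---the monotone-convergence passage from \eqref{Berta variational 3bis} to \eqref{Berta variational 3}, and the rescaling/Young-type manipulations connecting the $-\log_2\Tr[\sigma L]$ forms to the $+\log_2(e)(1-\Tr[\sigma L])$ forms---are then comparatively routine.
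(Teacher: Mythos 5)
Your proposal is correct, and the chain of reductions among the six expressions (the substitution $L=2^h$, the tangent-line/rescaling argument linking $-\log_2\Tr[\sigma L]$ to $\log_2(e)(1-\Tr[\sigma L])$, and the passage $L\mapsto L+\delta\id$ with monotone convergence) coincides with the paper's steps 1--3. The two proofs diverge in how they establish the core identity. For the inequality $\DM(\rho\|\sigma)\leq\mathrm{RHS}$ your route is essentially the paper's in different clothing: the paper plugs the Donsker--Varadhan optimizer $\phi(x)=\Tr[\rho E_x]/\Tr[\sigma E_x]$ directly into $L=\sum_x\phi(x)E_x$ and applies operator Jensen in the form $\sum_x\sqrt{E_x}\,f(c_x\id)\sqrt{E_x}\leq f\bigl(\sum_x c_xE_x\bigr)$, whereas you keep $\phi$ general and route the Jensen step through a Naimark dilation; both are valid, and the paper also isolates the degenerate case $\Tr[\sigma E_x]=0<\Tr[\rho E_x]$ explicitly, which in your version is absorbed into the Donsker--Varadhan supremum being $+\infty$ (worth a sentence, but not a gap). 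For the reverse inequality $\mathrm{RHS}\leq\DM(\rho\|\sigma)$ your argument is genuinely different and arguably more direct: you discretize the spectrum of a fixed $L\geq\delta\id$ into finitely many geometric bins, obtaining a finite PVM with an explicit $\log_2(1+\epsilon)$ loss that vanishes, whereas the paper first reduces $L$ to the form $\id+R$ with $R$ of finite rank (via a truncating projector, operator Jensen, and trace-norm error estimates) and only then passes to the spectral PVM via a scalar maximization inequality. Your spectral-binning argument avoids the finite-rank approximation entirely and trades the paper's additive $\epsilon$-dependent error terms for a single multiplicative one; the paper's finite-rank reduction, on the other hand, is reused later (it underlies claim (ii) of Lemma~\ref{restrictions on L}), which is presumably why the authors chose it.
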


\begin{rem}
For the case where the measurements in~\eqref{measured re} are restricted to be projective (i.e., $\M=\{E_x\}_{x\in \X}$ with $E_x$ a projector for all $x$, and $\sumno_x E_x=\id$), the expression in~\eqref{Berta variational 1} has been obtained already by Petz~\cite[Proposition~7.13]{PETZ-ENTROPY}).
\end{rem}

The above Lemma~\ref{Berta variational lemma} is proved in Section~\ref{variational expressions subsection}. By applying it to the program in~\eqref{NC_and_NCM} we are left with a nested optimization of the form $\sup \inf$. Then, the second critical ingredient that is needed to arrive at a proof of Theorem~\ref{main result thm} is an application of Sion's minimax theorem~\cite{sion} that allows us to exchange infimum and supremum in this resulting expression. This is technically challenging, as meeting the compactness hypothesis in Sion's theorem requires a careful choice of topology on the domain of optimization. The crucial technical contribution here is Lemma~\ref{Cm tilde compact lemma}, which establishes the compactness of the set of subnormalized classical states with respect to the weak*-topology (see Section~\ref{subsec_notation}). Along the way, we introduce and study an auxiliary quantity $\Gamma$ (Definition~\ref{Gamma_def} and Proposition~\ref{Gamma_properties_prop}).

An immediate consequence of Theorem~\ref{main result thm} is the superadditivity of $\NCM$ on finite-entropy states. This fact allows us to successfully construct the regularization $\NCMi$.

\begin{cor} \label{superadditivity_cor}
When computed on finite-entropy states, $\NCM$ is lower semicontinuous and strongly superadditive, meaning that
\bb
\NCM(\rho_{AB}) \geq \NCM(\rho_A) + \NCM(\rho_B)\qquad \forall\ \rho_{AB}:\ S(\rho_A), S(\rho_B) < \infty\, .
\ee 
Therefore, for any finite-entropy state $\rho$ its regularization
\bb
\NCMi(\rho)\!\coloneqq \!\lim_{n\to \infty} \!\frac{\NCM(\rho^{\otimes n})}{n}
\label{NCMi}
\ee
is a well defined nonclassicality monotone. It is lower semicontinuous, strongly superadditive, and weakly additive. Furthermore,
\bb
\NCM(\rho) \leq \NCMi(\rho) \leq \NCi(\rho) \leq \NC(\rho)
\label{hierarchy}
\ee
holds whenever $S(\rho)<\infty$. In particular, both $\NCi$ and $\NCMi$ are also faithful, at least on finite-entropy states.
\end{cor}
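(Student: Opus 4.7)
The plan is to extract strong superadditivity and lower semicontinuity of $\NCM$ directly from the variational formula~\eqref{second expression for ncm}, and then transfer these properties to the regularisation $\NCMi$ by standard arguments. A preliminary observation: by subadditivity of the von Neumann entropy, whenever $S(\rho_A),S(\rho_B)<\infty$ one also has $S(\rho_{AB})<\infty$, so Theorem~\ref{main result thm} applies simultaneously to $\rho_{AB}$, $\rho_A$, and $\rho_B$.

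For strong superadditivity I would restrict the supremum in~\eqref{second expression for ncm} applied to $\rho_{AB}$ to factorised operators $L_{AB}=L_A\otimes L_B$ with $L_A,L_B>0$ positive trace class. Functional calculus gives $\log_2(L_A\otimes L_B)=\log_2 L_A\otimes \id + \id\otimes \log_2 L_B$, hence $\Tr[\rho_{AB}\log_2 L_{AB}]=\Tr[\rho_A\log_2 L_A]+\Tr[\rho_B\log_2 L_B]$. Since multimode coherent states factorise as $\ket{\Alpha_A,\Alpha_B}=\ket{\Alpha_A}\otimes\ket{\Alpha_B}$, we also have
\[
\sup_{\Alpha_{AB}}\braket{\Alpha_{AB}|L_A\otimes L_B|\Alpha_{AB}}=\Bigl(\sup_{\Alpha_A}\braket{\Alpha_A|L_A|\Alpha_A}\Bigr)\Bigl(\sup_{\Alpha_B}\braket{\Alpha_B|L_B|\Alpha_B}\Bigr),
\]
so taking $\log_2$ turns the product into a sum. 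Optimising independently over $L_A$ and $L_B$ then yields $\NCM(\rho_{AB})\geq \NCM(\rho_A)+\NCM(\rho_B)$.

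For lower semicontinuity I would exhibit $\NCM$ as a pointwise supremum of trace-norm continuous affine functions of $\rho$. The form~\eqref{second expression for ncm} is not immediately suitable, because for positive trace class $L$ the operator $\log_2 L$ is unbounded from below and $\rho\mapsto \Tr[\rho\log_2 L]$ is only upper semicontinuous. I would fix this by rerunning the Sion minimax swap used in the proof of Theorem~\ref{main result thm}, starting instead from the bounded-$L$ formulas~\eqref{Berta variational 3bis}--\eqref{Berta variational 4bis} of Lemma~\ref{Berta variational lemma}, where $L$ satisfies $0<\delta\id<L\in\B{}$. This produces the alternative expression
\[
\NCM(\rho)=\sup_{0<\delta\id<L\in\B{}}\Bigl\{\Tr[\rho\log_2 L]-\log_2\sup_{\Alpha\in\C^m}\braket{\Alpha|L|\Alpha}\Bigr\},
\]
in which $\log_2 L$ is bounded, making each summand trace-norm continuous in $\rho$; hence $\NCM$ is l.s.c.\ as a supremum of continuous functions.

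The properties of $\NCMi$ then follow by routine regularisation. Strong superadditivity makes $a_n\coloneqq \NCM(\rho^{\otimes n})$ superadditive, so Fekete's lemma guarantees that the limit~\eqref{NCMi} exists and equals $\sup_n a_n/n\in [0,+\infty]$. The monotone conditions of Definition~\ref{monotones}(i)--(ii) are inherited from $\NCM$: vanishing on classical states is immediate, while monotonicity under a classical $\Lambda$ follows from $\NCM(\Lambda^{\otimes n}(\rho^{\otimes n}))\leq \NCM(\rho^{\otimes n})$ (since $\Lambda^{\otimes n}$ is also classical) on dividing by $n$ and passing to the limit. Strong superadditivity of $\NCMi$ follows by applying strong superadditivity of $\NCM$ to $\rho_{AB}^{\otimes n}$, whose $A^n, B^n$ marginals $\rho_A^{\otimes n},\rho_B^{\otimes n}$ have finite entropies $nS(\rho_A),nS(\rho_B)$; weak additivity is the standard $\NCMi(\rho^{\otimes k})=\lim_n\NCM(\rho^{\otimes kn})/n=k\NCMi(\rho)$; and lower semicontinuity follows from $\NCMi=\sup_n\NCM(\rho^{\otimes n})/n$ being a supremum of l.s.c.\ functions, using trace-norm continuity of $\rho\mapsto\rho^{\otimes n}$. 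The hierarchy~\eqref{hierarchy} is immediate: $\NCM\leq \NCMi$ from superadditivity, $\NCi\leq \NC$ from subadditivity of $\NC$ (Lemma~\ref{NC_subadd_lemma}), and $\NCMi\leq \NCi$ from termwise application of $\DM\leq D$. Faithfulness of $\NCMi$ and $\NCi$ on finite-entropy states follows from $\NCMi(\rho)\geq \NCM(\rho)>0$ for any nonclassical finite-entropy $\rho$, combined with faithfulness of $\NCM$ from Lemma~\ref{NC_subadd_lemma}. The step I expect to be most subtle is the lower semicontinuity argument: justifying the bounded-$L$ variant of the variational formula effectively requires reproducing the minimax exchange of Theorem~\ref{main result thm} under a different constraint on $L$.
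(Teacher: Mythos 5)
Your overall architecture matches the paper's: pull strong superadditivity and lower semicontinuity out of the variational formula, then regularize via Fekete and deduce the hierarchy and faithfulness by routine comparisons. The superadditivity argument (restriction to product operators $L_A\otimes L_B$, factorization of the coherent-state supremum), the Fekete step, the hierarchy, and the faithfulness claim are all correct and essentially identical to what the paper does via the auxiliary quantity $\Gamma$ in Proposition~\ref{Gamma_properties_prop} and Corollary~\ref{Gamma_regularization_cor}.

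There is, however, a genuine gap in your lower-semicontinuity step. You propose to obtain the bounded-$L$ variational formula by ``rerunning the Sion minimax swap'' over the domain $\{L\in\B{}:L\geq\delta\id>0\}$. This cannot work as stated: Sion's theorem is applied in the paper with $\widetilde{\CC}_m$ compact in the \emph{weak*} topology, and the map $\sigma\mapsto\Tr[\sigma L]$ is weak*-continuous only when $L$ is a \emph{compact} operator (the weak* topology is induced by the predual $\K{m}$). An operator satisfying $L\geq\delta\id$ on an infinite-dimensional space is never compact, and indeed $\sigma\mapsto -\Tr[\sigma L]$ fails to be weak*-lower semicontinuous for such $L$ (take $L=\id$ and $\sigma_n=\ketbra{n}\to 0$ weak* with $\Tr[\sigma_n]=1$). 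This is exactly why the paper first proves Lemma~\ref{restrictions on L}, restricting $L$ to be trace class (hence compact) under the hypothesis $S(\rho)<\infty$, before invoking Sion. The conclusion you want is nevertheless true, but it is obtained by sandwiching rather than by a second minimax: the proof of Theorem~\ref{main result thm} ends with~\eqref{actually_proved}, namely $\NCM(\rho)=\Gamma(\rho)$ for finite-entropy $\rho$, where $\Gamma$ (Definition~\ref{Gamma_def}, Lemma~\ref{variational_Gamma_lemma}) is precisely the supremum over bounded $h$ (equivalently over $0<\delta\id<L\in\B{m}$) and is lower semicontinuous and strongly superadditive \emph{everywhere} by Proposition~\ref{Gamma_properties_prop}; the chain is $\Gamma\leq\NCM\leq(\text{Sion with trace class }L)\leq\Gamma$. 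Replacing your Sion rerun with a citation of~\eqref{actually_proved} and Proposition~\ref{Gamma_properties_prop} closes the gap and in fact reduces the whole corollary to the paper's one-line inheritance argument.
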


%as we do in the SM~\cite{Note1}. The main point to obtain the second one is to switch the order of the $\inf$ and the $\sup$. This is usually done using the celebrated Sion's minimax theorem~\cite{sion}. However, the theorem requires some compactness and continuity conditions which easily fail in infinite dimension. Indeed, depending on the chosen topology for $\CC_m$, either the set fails to be compact or the objective function fails to be lower semicontinuous in $\sigma$. The idea is to extend the $\inf$ to the previously defined set $\widetilde{\CC}_m$ (this does not change the result of the optimization: the $\inf$ is obviously attained among unit trace operators), which is compact with respect to the weak$^*$ topology, and finally prove that the $\sup$ can be restricted to compact operator. At this point the objective function is even continuous in $\sigma$, by definition of weak$^*$ continuity.

The variational expression in Theorem~\ref{main result thm} has many more consequences. For example, we use it to establish upper and lower bounds on $\NCM$ and its regularization $\NCMi$ based on the Wehrl entropy (Proposition~\ref{upper and lower bound for ncm prop}), which translate to tight estimates of these quantifiers for Gaussian states (Corollary~\ref{upper and lower bound for ncm for gaussian states cor}). The most important application is however the following.

%\eqref{second expression for ncm}, it is now possible to prove a number of properties for $\NCM$, which will be crucial in the following. To begin with, let us state the following result.
%\tcb{In what follows, we will often restrict to states  with finite local entropies (FLE).}

\begin{thm} \label{bound_rates_NCMi_thm}
Let $\rho,\sigma$ be two CV states with finite entropy, i.e., such that $S(\rho),S(\sigma)<\infty$. Then the transformation rates in the resource theory of nonclassicality obey the inequalities
\bb
R(\rho\!\to\!\sigma)\leq \widetilde{R}(\rho\!\to\!\sigma) \leq \frac{\NCMi(\rho)}{\NCMi(\sigma)} \leq \frac{\NC(\rho)}{\NCM(\sigma)} \,,
\label{bound_rates_NCMi}
\ee
provided that the ratios on the right-hand sides are well defined.
\end{thm}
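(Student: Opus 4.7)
The plan is to apply Theorem \ref{general_bound_rates_thm} with the monotone $G = \NCMi$. By Corollary \ref{superadditivity_cor}, when restricted to finite-entropy states, the regularized quantity $\NCMi$ is a well-defined nonclassicality monotone that is lower semicontinuous, strongly superadditive, and weakly additive. Since by hypothesis both $S(\rho) < \infty$ and $S(\sigma) < \infty$, and since $S(\rho^{\otimes n}) = n\, S(\rho) < \infty$ for all $n$ (analogously for $\sigma$), all tensor powers entering the definition of the asymptotic rate remain within the finite-entropy regime where the required three properties hold. This means the proof of Theorem \ref{general_bound_rates_thm} goes through without modification when specialized to the present setting.

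Applying Theorem \ref{general_bound_rates_thm} to $G = \NCMi$ then directly yields
\[
\widetilde{R}(\rho \to \sigma) \leq \frac{\NCMi(\rho)}{\NCMi(\sigma)},
\]
which is the central inequality in the chain. The leftmost inequality $R(\rho \to \sigma) \leq \widetilde{R}(\rho \to \sigma)$ is already established as Lemma \ref{R_smaller_Rtilde_lemma} (the maximum local error dominates the global trace-norm error by a standard triangle-inequality argument).

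For the rightmost inequality, I would invoke the hierarchy \eqref{hierarchy} from Corollary \ref{superadditivity_cor}, namely $\NCM(\tau) \leq \NCMi(\tau) \leq \NC(\tau)$ for every finite-entropy state $\tau$. Applying the upper bound to the numerator ($\NCMi(\rho) \leq \NC(\rho)$) and the lower bound to the denominator ($\NCMi(\sigma) \geq \NCM(\sigma)$), and noting that the well-definedness hypothesis in particular forces $\NCM(\sigma) > 0$ (hence $\NCMi(\sigma) > 0$, by faithfulness on finite-entropy states), one obtains
\[
\frac{\NCMi(\rho)}{\NCMi(\sigma)} \leq \frac{\NC(\rho)}{\NCM(\sigma)},
\]
completing the chain.

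The conceptual weight of the proof rests on Corollary \ref{superadditivity_cor}, whose strong-superadditivity part is nontrivial and ultimately relies on the variational expression of Theorem \ref{main result thm} together with Theorem \ref{general_bound_rates_thm}. Once those results are established, the present theorem is essentially an assembly step. The only mild technical point is the restriction of $\NCMi$ to finite-entropy states, which one should check does not obstruct the application of Theorem \ref{general_bound_rates_thm}: because tensor powers preserve finite entropy and because lower semicontinuity is required at the target states $\sigma^{\otimes \lfloor rn \rfloor}$, which themselves have finite entropy, no issue arises. I do not expect any serious obstacle beyond bookkeeping.
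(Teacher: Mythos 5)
There is a genuine gap in your central step. You apply Theorem~\ref{general_bound_rates_thm} directly to $G=\NCMi$, arguing that the finite-entropy restriction causes no trouble because the tensor powers $\rho^{\otimes n}$ and $\sigma^{\otimes\floor{rn}}$ all have finite entropy. But the proof of Theorem~\ref{general_bound_rates_thm} does not evaluate $G$ only on those states: monotonicity, strong superadditivity, and lower semicontinuity are invoked on the protocol \emph{outputs} $\Lambda_n\left(\rho^{\otimes n}\right)$ and on their single-copy marginals $\left(\Lambda_n\left(\rho^{\otimes n}\right)\right)_{j_n}$. A free classical channel can map a finite-entropy state to an infinite-entropy one (e.g., a replacement channel preparing an infinite-entropy classical state), and trace-norm convergence to $\sigma^{\otimes\floor{rn}}$ does not rescue you, since the von Neumann entropy is not trace-norm continuous in infinite dimension. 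On infinite-entropy states Corollary~\ref{superadditivity_cor} gives you nothing: there $\NCMi$ is not even known to exist as a limit (Fekete's lemma needs superadditivity of $\NCM$, which is established only for finite entropy), let alone to be strongly superadditive or to satisfy the lower-semicontinuity inequality along the sequence $\left(\Lambda_n\left(\rho^{\otimes n}\right)\right)_{j_n}\to\sigma$. So the claim that the proof of Theorem~\ref{general_bound_rates_thm} ``goes through without modification'' is precisely the point that fails.

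The paper closes exactly this hole by applying Theorem~\ref{general_bound_rates_thm} to the auxiliary monotone $\Gamma^\infty$, which by Corollary~\ref{Gamma_regularization_cor} is well defined, weakly additive, strongly superadditive, and lower semicontinuous on \emph{all} states, and only at the two endpoints $\rho$ and $\sigma$ --- where finite entropy holds --- uses the identity $\Gamma=\NCM$ from Theorem~\ref{main result thm} (Eq.~\eqref{actually_proved}) to convert $\Gamma^\infty(\rho)/\Gamma^\infty(\sigma)$ into $\NCMi(\rho)/\NCMi(\sigma)$. Your first and last inequalities (Lemma~\ref{R_smaller_Rtilde_lemma} and the hierarchy~\eqref{hierarchy}) are handled correctly; it is only the middle inequality that requires the detour through $\Gamma^\infty$.
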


%Our proof rests upon a variational program for $\NCM$ that also yields a family of computable upper bounds on the ratio $\frac{\NCMi(\rho)}{\NCMi(\sigma)}$. We further propose techniques to accurately estimate $\NCM$, $\NCMi$, and the rates in terms of experimentally accessible quantities (Proposition~\ref{upper and lower bound for ncm thm}), and apply them to many relevant cases.

To the best of our knowledge,~\eqref{bound_rates_NCMi} is the first explicit bound on asymptotic transformation rates in the context of CV nonclassicality. However, it would amount to a rather futile theoretical statement if not complemented with a systematic way of upper bounding the ratio $\NC(\rho)/\NCM(\sigma)$. Note that $\NC$ can be estimated from above by simply making suitable ansatzes in~\eqref{NC_and_NCM}. The a priori less trivial task of lower bounding $\NCM$ can be carried out thanks to Theorem~\ref{main result thm}.
%Up to this point, we have not touched upon the operationally important question of finiteness of our nonclassicality measures. 
%A resource quantifier which diverges on a large class of physically relevant states has obviously a limited capacity of producing nontrivial bounds on conversion rates.

As an immediate application of Theorem~\ref{bound_rates_NCMi_thm}, we consider the paradigmatic example of (Schr\"{o}dinger) cat state manipulation~\cite{Lund2004, Laghout2013, Sychev2017, Sychev2018, Wang2018, Oh2018}. For $\alpha\in \C$, cat states are defined by~\cite{Dodonov1974}
\bb
\ket{\psi_{\alpha}^\pm}\coloneqq \frac{1}{\sqrt{2\left( 1 \pm e^{-2|\alpha|^2}\right)}} \left(\ket{\alpha}\pm \ket{-\alpha}\right) ,
\label{cat}
\ee
where $\ket{\pm \alpha}$ are coherent states~\eqref{coherent}. The transformations we look at are $\psi_\alpha^+\to \psi_{\sqrt2 \alpha}^+$ (amplification) and $\psi_{\sqrt2\alpha}^+\to \psi_{\alpha}^+\otimes \psi_{\alpha}^-$ (sign-randomized dilution). A protocol for amplification using linear optical elements and quadrature measurements has been designed by Lund et al.~\cite{Lund2004}. We present an ameliorated version of it (Proposition~\ref{Ferrari_prop}), together with a simple protocol for sign-randomized dilution (Proposition~\ref{sign_randomized_dilution_cat_prop}). The lower bounds on rates given by these explicit protocols are shown in Figure~\ref{protocols_fig}. The upper bound derived via Theorem~\ref{bound_rates_NCMi_thm} is asymptotically tight for the dilution task, but not in the case of amplification. This is due to the fact that our quantifiers all saturate to $1$ for cat states with $|\alpha|\to\infty$.

\begin{figure}[ht]
\centering
%\begin{subfigure}{\textwidth}
\includegraphics[scale=0.8]{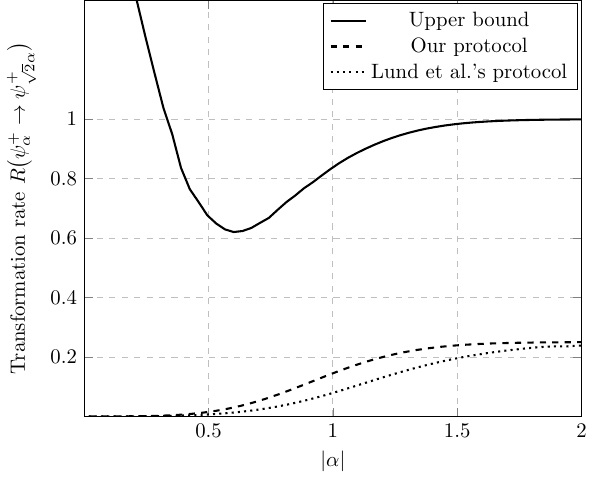}
%\label{concentration_fig}
%\end{subfigure}
%\begin{subfigure}{\textwidth}
%\vspace{2.5ex}
\includegraphics[scale=0.8]{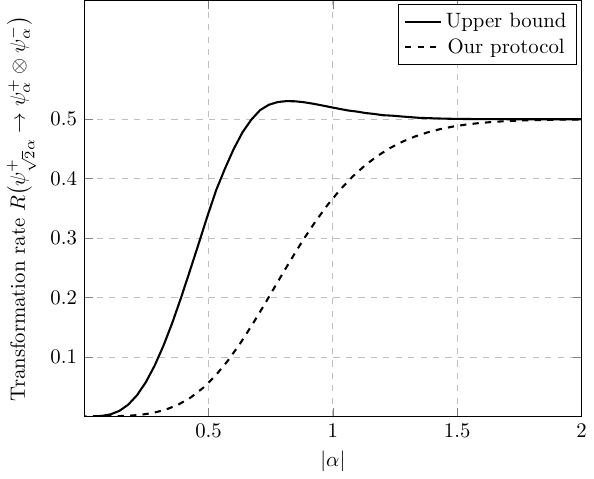}
%\label{dilution_fig}
%\end{subfigure}
\caption{Upper and lower bounds on asymptotic transformation rates of Schr{\"o}dinger cat states.} 
%\mbp{Comment: It looks to me that if all the curves were in one graph (with suitable colour coding) then they would not overlap. If that was possible, then we would save an awful lot of space.}
%in the QRT of nonclassicality. %We look at the transformations $\psi_\alpha^+\to \psi_{\sqrt2 \alpha}^+$ (amplification, top figure) and $\psi_{\sqrt2\alpha}^+\to \psi_{\alpha}^+\otimes \psi_{\alpha}^-$ (sign-randomized dilution, bottom figure).
\label{protocols_fig}
\end{figure}

%Clearly, all of these measures are invariant under the action of displacements~\eqref{disp}, because these form a group of classical operations. In fact, they are also invariant under any passive quantum operation, i.e., under any unitary $U$ on $\HH_m$ such that $\left[ U,\, \sumno_j a_j^\dag a_j\right]=0$.

\section{Preliminary results} \label{preliminary results section}

Throughout this section we lay the ground for the proof of our main results, studying general properties of monotone regularization (Section~\ref{subsec_monotone_regularization}) and investigating in more detail nonclassicality monotones (Section~\ref{subsec_nonclassicality_monotones}).

\subsection{Generalities about monotone regularization} \label{subsec_monotone_regularization}

It turns out that any monotone $G$ can be made weakly additive by a procedure known as ``regularization''. 

\begin{Def} \label{regularization_def}
Let $\left( \mathcal{S}, \mathcal{F} \right)$ be a QRT equipped with a monotone $G$. Then the functions
\begin{align}
G^{\downarrow,\infty}(\rho) &\coloneqq \liminf_{n\to\infty} \frac1n\, G\left( \rho^{\otimes n}\right) , \label{regularization_down} \\
G^{\uparrow,\infty}(\rho) &\coloneqq \limsup_{n\to\infty} \frac1n\, G\left( \rho^{\otimes n}\right) \label{regularization_up} 
\end{align}
are called the \textbf{lower and upper regularizations} of $G$. On the domain of states $\rho$ such that $G^{\downarrow,\infty}(\rho) = G^{\uparrow,\infty}(\rho)\eqqcolon G^{\infty}(\rho)$ one can speak of a unique \textbf{regularization} $G^\infty$.
\end{Def}

The following result is immediate from the definition.

\begin{lemma} \label{regularization_lemma}
Let $\left( \mathcal{S}, \mathcal{F} \right)$ be a QRT equipped with a monotone $G$. Then the lower and upper regularizations $G^{\downarrow,\infty}$ and $G^{\uparrow,\infty}$ given by Definition~\ref{regularization_def} are also monotones. Moreover, $G^\infty$ is weakly additive on its domain, i.e., $G^{\downarrow,\infty}(\rho) = G^{\uparrow,\infty}(\rho)$ for a state $\rho$ implies that $G^{\infty}(\rho^{\otimes n})\equiv n\, G(\rho)$ for all $n\in \N_+$.
\end{lemma}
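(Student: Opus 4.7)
The plan is to verify the two defining properties of a monotone (Definition~\ref{monotones}) for each of $G^{\downarrow,\infty}$ and $G^{\uparrow,\infty}$, and then to deduce weak additivity on the domain where they agree from an elementary subsequence argument.

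First I would check monotonicity under free operations. The closure of free operations under parallel composition (condition~(iii) of Definition~\ref{RT}) ensures that for any $\Lambda\in\mathcal{F}(A\to B)$ the tensor power $\Lambda^{\otimes n}$ lies in $\mathcal{F}(A^n\to B^n)$ for all $n\in\N_+$, so monotonicity of $G$ gives $G\bigl((\Lambda(\rho))^{\otimes n}\bigr) = G\bigl(\Lambda^{\otimes n}(\rho^{\otimes n})\bigr)\le G(\rho^{\otimes n})$. Dividing by $n$ and taking $\liminf_{n\to\infty}$ (respectively $\limsup_{n\to\infty}$) yields $G^{\downarrow,\infty}(\Lambda(\rho))\le G^{\downarrow,\infty}(\rho)$ and the analogous inequality for $G^{\uparrow,\infty}$. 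Vanishing on free states is equally immediate: since $\mathcal{F}_S(A)=\mathcal{F}(1\to A)$ by~\eqref{free_states}, parallel composition applied to free states themselves shows that $\sigma\in\mathcal{F}_S(A)$ implies $\sigma^{\otimes n}\in\mathcal{F}_S(A^n)$, so $G(\sigma^{\otimes n})=0$ for all $n$ and both regularized quantities vanish.

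For weak additivity, set $a_n\coloneqq G(\rho^{\otimes n})/n$. The hypothesis $G^{\downarrow,\infty}(\rho)=G^{\uparrow,\infty}(\rho)\eqqcolon G^\infty(\rho)$ is precisely the statement that $(a_n)_n$ converges in $[0,+\infty]$ to $G^\infty(\rho)$. For fixed $k\in\N_+$, the sequence controlling the regularization of $\rho^{\otimes k}$ is $b_n\coloneqq G\bigl((\rho^{\otimes k})^{\otimes n}\bigr)/n = k\cdot a_{kn}$, which is $k$ times a subsequence of the convergent $(a_m)_m$ and therefore converges to $k\,G^\infty(\rho)$. Taking $\liminf$ and $\limsup$ of $(b_n)_n$ yields $G^{\downarrow,\infty}(\rho^{\otimes k})=G^{\uparrow,\infty}(\rho^{\otimes k})=k\,G^\infty(\rho)$, so $\rho^{\otimes k}$ lies in the domain of $G^\infty$ and $G^\infty(\rho^{\otimes k})=k\,G^\infty(\rho)$ (reading the right-hand side of the stated identity as $n\,G^\infty(\rho)$, which appears to be the intended formula).

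There is no genuine obstacle here: the proof rests only on the parallel-composition axiom~(iii) of Definition~\ref{RT} and on the elementary fact that every subsequence of a convergent $[0,+\infty]$-valued sequence inherits its limit. The lemma is therefore really a bookkeeping observation preparing the ground for Fekete-type arguments (as invoked just after Definition~\ref{NC_measures_def}), where additional structure such as subadditivity upgrades the mere existence of $G^\infty$ on its domain to existence everywhere.
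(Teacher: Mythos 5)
Your proof is correct and follows essentially the same route as the paper's: monotonicity and vanishing on free states via closure under parallel composition, and weak additivity via the observation that $\tfrac1n G((\rho^{\otimes k})^{\otimes n}) = k\cdot\tfrac{1}{kn}G(\rho^{\otimes kn})$ is $k$ times a subsequence of the convergent sequence $\tfrac1m G(\rho^{\otimes m})$. You also correctly identify (and repair) the typo in the statement, where $n\,G(\rho)$ should read $n\,G^\infty(\rho)$, consistent with what the paper actually proves.
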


\begin{proof}
Let us start by showing that, e.g., $G^{\downarrow,\infty}$ is a monotone. Since parallel composition of free operations is free, for all $\rho\in \D{A}$ and for all $\Lambda\in \mathcal{F}\left(A\to B\right)$, with $A,B\in \mathcal{S}$, we obtain that
\bbb
G^{\downarrow,\infty}\left( \Lambda(\rho) \right) = \liminf_{n\to\infty} \frac1n\, G\left( \Lambda(\rho)^{\otimes n}\right) = \liminf_{n\to\infty} \frac1n\, G\left( \Lambda^{\otimes n} \left(\rho^{\otimes n}\right)\right) \leq \liminf_{n\to\infty} \frac1n\, G\left( \rho^{\otimes n} \right) = G^{\downarrow,\infty} (\rho)\, .
\eee
Moreover, if $\rho$ is free, also $\rho^{\otimes n}$ is so, and hence $G^{\downarrow, \infty}(\rho)=0$ as well. This proves the first claim.

Now, by definition $G^{\downarrow,\infty}(\rho) = G^{\uparrow,\infty}(\rho)$ implies that the sequence $\left(\tcb{\frac{1}{k}} G(\rho^{\otimes k}) \right)_{k\in \N_+}$ has a limit. If that is the case, then clearly $G^{\infty}(\rho^{\otimes n}) = \lim_{k\to \infty} \frac1k\, G\left( \rho^{\otimes kn}\right) = n \lim_{n\to\infty} \frac{1}{kn}\, G\left( \rho^{\otimes kn}\right) = n \, G^\infty(\rho)$ for all $n\in \N_+$.
\end{proof}

A useful fact that is slightly less obvious is as follows.

\begin{lemma} \label{regularization_superadd_lemma}
Let $\left( \mathcal{S}, \mathcal{F} \right)$ be a QRT equipped with a monotone $G$ that is weakly superadditive. Then:
\begin{enumerate}[(i)]
\item the regularization $G^\infty$ in Definition~\ref{regularization_def} exists for all states $\rho$, i.e., $G^{\downarrow,\infty}(\rho) = G^{\uparrow,\infty}(\rho)\eqqcolon G^\infty(\rho)$ for all $\rho\in \D{A}$ with $A\in\mathcal{S}$; it is also weakly additive and satisfies $G^\infty\tcb{\geq}G$;
\item If $G$ is also (strongly) superadditive, then $G^{\infty}$ is (strongly) superadditive as well;
\item If $G$ is lower semicontinuous, then so is $G^\infty$.
\end{enumerate}
\end{lemma}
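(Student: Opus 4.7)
The plan is to prove each part separately by reducing to numerical properties of the sequence $a_n \coloneqq G(\rho^{\otimes n})$ associated to a fixed state $\rho$. For (i), applying weak superadditivity to $\rho^{\otimes k}$, viewed as a state on the joint system $A^k\in\mathcal{S}$, yields $a_{kn}\geq n\,a_k$ for all $k,n\geq 1$; in particular $a_n/n\geq a_1 = G(\rho)$. A Fekete-type argument then shows $\lim_n a_n/n = \sup_n a_n/n\in[0,+\infty]$. This is cleanest under full superadditivity, where $a_{m+n} = G(\rho^{\otimes m}\otimes\rho^{\otimes n})\geq a_m+a_n$ follows directly from the hypothesis and the classical Fekete lemma applies verbatim. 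Hence $G^{\downarrow,\infty}(\rho) = G^{\uparrow,\infty}(\rho) \eqqcolon G^\infty(\rho)$. The bound $G^\infty(\rho)\geq G(\rho)$ is immediate from $a_n/n\geq G(\rho)$, and weak additivity $G^\infty(\rho^{\otimes k}) = k\,G^\infty(\rho)$ follows from $G^\infty(\rho^{\otimes k}) = \lim_n G(\rho^{\otimes kn})/n = k\lim_n a_{kn}/(kn) = k\,G^\infty(\rho)$, the last equality evaluating the known limit along the subsequence $m = kn$.

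For (ii), suppose $G$ is strongly superadditive. The state $\rho_{AB}^{\otimes n}$, regarded as a state on $A^n B^n$, has marginals $\rho_A^{\otimes n}$ on $A^n$ and $\rho_B^{\otimes n}$ on $B^n$, so applying strong superadditivity at the $n$-copy level gives
\[
G\!\left(\rho_{AB}^{\otimes n}\right)\;\geq\;G\!\left(\rho_A^{\otimes n}\right)+G\!\left(\rho_B^{\otimes n}\right).
\]
Dividing by $n$ and invoking part (i) to take the limit yields $G^\infty(\rho_{AB})\geq G^\infty(\rho_A) + G^\infty(\rho_B)$. The (ordinary) superadditive case is analogous, upon identifying $(\rho_A\otimes\sigma_B)^{\otimes n}$ with $\rho_A^{\otimes n}\otimes\sigma_B^{\otimes n}$ via a permutation of tensor factors.

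For (iii), suppose $G$ is lower semicontinuous and let $\rho_n\to\rho$ in trace norm. An elementary induction on $k$, based on the product inequality $\|\tau_1\otimes\sigma_1-\tau_2\otimes\sigma_2\|_1 \leq \|\tau_1-\tau_2\|_1\,\|\sigma_1\|_1 + \|\tau_2\|_1\,\|\sigma_1-\sigma_2\|_1$, shows that $\rho_n^{\otimes k}\to\rho^{\otimes k}$ in trace norm for each fixed $k$. Lower semicontinuity of $G$ then gives $\liminf_n G(\rho_n^{\otimes k})\geq G(\rho^{\otimes k})$ for every $k$. Combining this with the representation $G^\infty(\cdot) = \sup_k G(\cdot^{\otimes k})/k$ from part (i) and the elementary bound $\liminf_n\sup_k \geq \sup_k\liminf_n$ yields
\begin{equation*}
\liminf_n G^\infty(\rho_n)\;\geq\;\sup_k\liminf_n\frac{G(\rho_n^{\otimes k})}{k}\;\geq\;\sup_k\frac{G(\rho^{\otimes k})}{k}\;=\;G^\infty(\rho),
\end{equation*}
as required.

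The main technical subtlety sits in part (i): upgrading the monotone-level hypothesis to the numerical superadditivity $a_{m+n}\geq a_m + a_n$ that drives the classical Fekete lemma. Under full superadditivity of $G$ (the case actually relevant for (ii) and for the applications later in the paper) this step is immediate and the remaining arguments in (i)--(iii) are essentially bookkeeping; under only weak superadditivity one is forced to work with the weaker derived inequality $a_{kn}\geq n\,a_k$ and to argue convergence via suitable subsequences, which is the most delicate point of the whole proof.
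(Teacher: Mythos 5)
Your parts (ii) and (iii) follow the paper's proof essentially verbatim: the paper also obtains (strong) superadditivity of $G^\infty$ by applying the hypothesis at the $n$-copy level and passing to the limit, and obtains lower semicontinuity from the representation $G^\infty(\rho)=\sup_{n}\frac1n G(\rho^{\otimes n})$ as a pointwise supremum of lower semicontinuous functions. Your explicit verification that $\rho\mapsto\rho^{\otimes k}$ is trace-norm continuous is a detail the paper leaves implicit, and your derivation of weak additivity along the subsequence $m=kn$ matches Lemma~\ref{regularization_lemma}.

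The substantive point is part (i), and you have correctly located the difficulty without resolving it. The paper's proof asserts that weak superadditivity makes $a_n\coloneqq G(\rho^{\otimes n})$ a superadditive numerical sequence, $a_{n+m}\geq a_n+a_m$, and then invokes Fekete; as you note, the definition of weak superadditivity (item (f) of Definition~\ref{monotones}) only yields $a_{kn}\geq n\,a_k$. Your closing suggestion that convergence can still be extracted from this weaker inequality ``via suitable subsequences'' cannot be carried out: the condition $a_{kn}\geq n\,a_k$ for all $k,n$ does not force $a_n/n$ to converge. For instance, $a_n=n$ for even $n$ and $a_n=0$ for odd $n$ satisfies $a_{kn}\geq n\,a_k$ (with equality when $k$ is even and trivially when $k$ is odd), yet $a_n/n$ oscillates between $0$ and $1$. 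To make part (i) rigorous one needs $G$ to be superadditive at least on tensor powers of a single state, i.e., $G(\rho^{\otimes(n+m)})\geq G(\rho^{\otimes n})+G(\rho^{\otimes m})$, which is precisely what item (e) of Definition~\ref{monotones} supplies and what lets Fekete's lemma run. That stronger property does hold in every application of the lemma in the paper ($\NC$ is subadditive, $\Gamma$ is strongly superadditive), so the uses are safe; but as written, both your argument and the paper's leave part (i) unestablished under mere weak superadditivity, and your text explicitly defers the one step that would close it.
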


\begin{rem}
The above result is still valid if we replace superadditivity with subadditivity, lower semicontinuity with upper semicontinuity, and reverse all inequalities.
\end{rem}

\begin{proof}[Proof of Lemma~\ref{regularization_superadd_lemma}]
Due to weak superadditivity, for all states $\rho$ the sequence $\left( a_n\right)_{n\in \N_+}$ defined by $a_n \coloneqq G(\rho^{\otimes n})$ is superadditive, meaning that $a_{n+m}\geq a_n+a_m$. Therefore, by Fekete's lemma~\cite{Fekete1923} $\lim_{n\to\infty}\frac{a_n}{n}$ exists, and it satisfies that $\lim_{n\to\infty}\frac{a_n}{n} = \sup_{n\in \N_+} \frac{a_n}{n}$. Therefore,
\bbb
G^\infty(\rho) = \lim_{n\to\infty} \frac1n\, G\left( \rho^{\otimes n} \right) = \sup_{n\in \N_+} \frac1n\, G\left( \rho^{\otimes n} \right)
\eee
is well defined for all $\rho$, and satisfies $G^\infty(\rho)\geq G(\rho)$. This proves (i).

Now we proceed to prove points (ii) and (iii). We already saw in Lemma~\ref{regularization_lemma} that \tcb{$G^\infty$} is a weakly additive monotone, so it suffices to show that it is (strongly) superadditive if $G$ was such. This is immediate to establish (we prove it only for strong superadditivity, as the superadditivity case is completely analogous):
\bbb
G^\infty(\rho_{AB}) = \lim_{n\to\infty} \frac1n\, G\left( \rho_{AB}^{\otimes n} \right) \geq \lim_{n\to\infty} \frac1n\, \left( G\left( \rho_{A}^{\otimes n} \right) + G\left( \rho_{B}^{\otimes n} \right) \right) = \lim_{n\to\infty} \frac1n\, G\left( \rho_{A}^{\otimes n} \right) + \lim_{n\to\infty} \frac1n\, G\left( \rho_{B}^{\otimes n} \right) .
\eee
To see that $G^\infty$ is lower semicontinuous if so was $G$, just notice that $G^\infty(\rho) = \sup_{n\in \N_+} \frac1n\, G\left( \rho^{\otimes n} \right)$ is the pointwise supremum of lower semicontinuous functions and thus must itself be lower semicontinuous.
\end{proof}

\subsection{Nonclassicality monotones} \label{subsec_nonclassicality_monotones}

If the reader is worried by the proliferation of regularized measures in Definition~\ref{NC_measures_def}, they should not be. In fact, we will show that the regularizations are unique in all physically interesting cases. We are able to readily prove the equality between $N_r^{\downarrow,\infty}$ and $N_r^{\uparrow,\infty}$, while a proof for $N_r^{M,\downarrow,\infty}$ and $N_r^{M,\uparrow,\infty}$ will be given at the end of Section~\ref{main results subsection}. The first step is to prove that the quantity we just defined are actually good resource monotones.

\begin{lemma} \label{NC_subadd_lemma}
The quantities $\NC$ and $\NCM$ are faithful and convex nonclassicality monotones. They obey the inequality $\NC\geq \NCM$. Moreover, $\NC$ is subadditive.
\end{lemma}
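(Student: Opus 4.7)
The plan is to verify each property by reducing to the well-known analogous properties of the relative entropies $D$ and $\DM$, together with the stability of the classical set $\CC_m$ under the relevant operations.

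First I would check the monotonicity property. For a classical operation $\Lambda\in \mathcal{F}(A\to B)$, by definition $\Lambda(\CC_m)\subseteq \CC_{m'}$, so using the data processing inequality for $D$ (and, respectively, for $\DM$, which follows from the data processing of the classical Kullback--Leibler divergence applied to the post-measurement distributions) we obtain
\begin{equation*}
\NC(\Lambda(\rho)) = \inf_{\tau\in \CC_{m'}} D(\Lambda(\rho)\|\tau) \leq \inf_{\sigma\in \CC_m} D(\Lambda(\rho)\|\Lambda(\sigma)) \leq \inf_{\sigma\in \CC_m} D(\rho\|\sigma) = \NC(\rho)\,,
\end{equation*}
and analogously for $\NCM$. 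Vanishing on free states is immediate: if $\sigma\in \CC_m$, then $D(\sigma\|\sigma)=\DM(\sigma\|\sigma)=0$, so $\NC(\sigma)=\NCM(\sigma)=0$.

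Next I would establish convexity. Joint convexity of $D$ in both arguments (and the same for $\DM$, which inherits it from the classical KL divergence through the supremum over POVMs) allows the following argument: for a convex combination $\rho = \sum_i p_i \rho_i$ and near-optimal classical states $\sigma_i$ for each $\rho_i$, the mixture $\sigma = \sum_i p_i \sigma_i$ lies in $\CC_m$ by convexity of $\CC_m$, and
\begin{equation*}
\NC(\rho) \leq D\bigl(\textstyle\sum_i p_i \rho_i \,\big\|\, \sum_i p_i \sigma_i\bigr) \leq \sum_i p_i D(\rho_i\|\sigma_i) \leq \sum_i p_i \NC(\rho_i) + \varepsilon\,.
\end{equation*}
Sending $\varepsilon\to 0$ yields convexity of $\NC$, and the argument for $\NCM$ is identical. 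The inequality $\NC(\rho)\geq \NCM(\rho)$ is then immediate from the pointwise bound $D(\rho\|\sigma)\geq \DM(\rho\|\sigma)$ recalled in Section~\ref{entropies subsection}.

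For faithfulness, suppose $\NCM(\rho)=0$; a fortiori the same holds for $\NCM$ if $\NC(\rho)=0$ by the previous inequality, so it suffices to handle the measured version. I invoke quantum Pinsker's inequality in its measured form: for any POVM $\M$, classical Pinsker gives $\DKL(P^\M_\rho\|P^\M_\sigma)\geq \frac{1}{2\ln 2}\|P^\M_\rho - P^\M_\sigma\|_1^2$, and choosing the Helstrom two-outcome POVM associated to $\rho-\sigma$ yields $\|P^\M_\rho - P^\M_\sigma\|_1 = \|\rho-\sigma\|_1$, hence $\DM(\rho\|\sigma)\geq \frac{1}{2\ln 2}\|\rho-\sigma\|_1^2$. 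Consequently there exists a sequence $(\sigma_n)\subseteq \CC_m$ with $\|\rho-\sigma_n\|_1\to 0$, and since $\CC_m$ is closed in trace norm by its very definition, $\rho\in \CC_m$.

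Finally, for subadditivity of $\NC$, I use that for $\sigma_A\in \CC_{m_A}$ and $\sigma_B\in \CC_{m_B}$ one has $\sigma_A\otimes \sigma_B\in \CC_{m_A+m_B}$ (a direct computation on convex combinations of coherent states, since $\ket{\alpha}\otimes\ket{\beta}=\ket{(\alpha,\beta)}$), together with additivity of the relative entropy on tensor product states: $D(\rho_A\otimes \rho_B\|\sigma_A\otimes \sigma_B) = D(\rho_A\|\sigma_A) + D(\rho_B\|\sigma_B)$. Taking the infimum over $\sigma_A\in \CC_{m_A}$ and $\sigma_B\in \CC_{m_B}$ then yields $\NC(\rho_A\otimes \rho_B)\leq \NC(\rho_A)+\NC(\rho_B)$. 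The main (mild) obstacle is the faithfulness step, since the infimum in the definition of $\NC,\NCM$ need not be attained and one must pass to an approximating sequence, but the combination of Pinsker and the norm-closedness of $\CC_m$ handles it cleanly.
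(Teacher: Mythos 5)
Your proof is correct and follows essentially the same route as the paper's: data processing for $D$ and $\DM$ combined with $\Lambda(\CC_m)\subseteq\CC_{m'}$ for monotonicity, joint convexity of the relative entropies plus convexity of $\CC_m$ for convexity, Pinsker's inequality together with the Helstrom measurement and the trace-norm closedness of $\CC_m$ for faithfulness, and restriction to product ansatzes plus additivity of $D$ for subadditivity of $\NC$. Your explicit handling of the non-attained infimum via an approximating sequence in the faithfulness step is a small point the paper leaves implicit, but the argument is the same.
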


\begin{proof}
The argument is completely standard. The inequality $\NC\geq \NCM$ is obvious, and follows from the same relation between the relative entropy and its measured version. Since both $D(\cdot\|\cdot)$ and $\DM(\cdot\|\cdot)$ obey the data processing inequality, for every classical channel $\Lambda:\T{m}\to \T{m'}$ we obtain that
\bbb
\NC\left( \Lambda(\rho)\right) = \inf_{\sigma'\in \CC_{m'}} D\left( \Lambda(\rho) \big\| \sigma' \right) \leq \inf_{\sigma'\in \Lambda\left(\CC_{m'}\right)} D\left( \Lambda(\rho) \big\| \sigma' \right) = \inf_{\sigma\in \CC_m} D\left( \Lambda(\rho) \big\| \Lambda(\sigma) \right) \leq \inf_{\sigma\in \CC_m} D\left( \rho \| \sigma \right) = \NC(\rho)\, , 
\eee
and analogously for $\NCM$. This proves monotonicity.

Convexity descends from the fact that both $\NC$ and $\NCM$ are defined as the infimum of a jointly convex function on a convex domain. For example,
\begin{align*}
\NC\left(\sumno_i p_i \rho_i\right) &= \inf_{\sigma\in \CC_m} D\left( \sumno_i p_i \rho_i\, \big\|\, \sigma\right) \\
&= \inf_{\{\sigma_i\}_i\subseteq \CC_m} D\left( \sumno_i p_i \rho_i\, \big\|\, \sumno_i p_i \sigma_i \right) \\
&\leq \inf_{\{\sigma_i\}_i\in \CC_m} \sum_i p_i\, D\left(\rho_i \| \sigma_i \right) \\
&= \sum_i p_i\, \inf_{\sigma_i \in \CC_m}D\left(\rho_i \| \sigma_i \right) \\
&= \sum_i p_i\, \NC(\rho_i)\, .
\end{align*}
The proof for $\NCM$ is entirely analogous.

Faithfulness follows, e.g., from Pinsker's inequality $\DKL(p\|q)\geq \frac12 \log_2(e) \|p-q\|_1^2$~\cite{Pinsker}, which implies that
\bbb
\DM(\rho\|\sigma) = \sup_{\M} \DKL\left( P_\rho^\M \big\| P_\sigma^\M \right) \geq \frac12 \log_2(e) \sup_{\M} \left\|P_\rho^\M - P_\sigma^\M \right\|_1^2 = \frac12 \log_2(e)\, \|\rho-\sigma\|_1^2\, ,
\eee
where in the last line we used the elementary fact that the trace distance is achieved by the (binary) measurement $\{\Pi,\id-\Pi\}$, with $\Pi$ being the projector onto the positive subspace of $\rho-\sigma$.

To prove the
%strong
\tcb{subadditivity} of $\NC$, just notice that for all $(m+n)$-mode CV systems $AB$ it holds that
\begin{align*}
\NC(\rho_A\otimes \sigma_B) &= \inf_{\sigma_{AB} \in \CC_{m+n}} D\left(\rho_A\otimes \sigma_B \|\sigma_{AB}\right) \\
&\leq \inf_{\sigma_A\otimes \sigma_B\in \CC_{m+n}} D\left(\rho_A\otimes \sigma_B \|\sigma_A \otimes \sigma_B \right) \\
&= \inf_{\sigma_A\in \CC_m,\, \sigma_B\in \CC_n} \left\{ D\left(\rho_{A} \|\sigma_A \right) + D\left(\rho_{B} \| \sigma_B \right) \right\} \\
&= \NC(\rho_A) + \NC(\rho_B)\, ,
\end{align*}
where in the third line we used the identity~\cite[Eq.~(5.22)]{PETZ-ENTROPY}.
\end{proof}

\begin{cor} \label{NCi_monotone_cor}
The functions $N_r^{M,\downarrow,\infty}, N_r^{M,\uparrow,\infty}$ are nonclassicality monotones. The regularization $N_r^{\downarrow,\infty} = N_r^{\uparrow, \infty}\eqqcolon \NCi$ is unique and is a weakly additive nonclassicality monotone; it satisfies that $\NCi\leq \NC$.
\end{cor}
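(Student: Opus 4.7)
The plan is to deduce everything from the general regularization machinery of Section~\ref{subsec_monotone_regularization} combined with the basic properties of $\NC$ and $\NCM$ established in Lemma~\ref{NC_subadd_lemma}. Concretely, the first claim --- that $N_r^{M,\downarrow,\infty}$ and $N_r^{M,\uparrow,\infty}$ are nonclassicality monotones --- is an immediate instance of Lemma~\ref{regularization_lemma} applied with $G=\NCM$, since Lemma~\ref{NC_subadd_lemma} ensures that $\NCM$ is itself a (faithful, convex) monotone. The same Lemma~\ref{regularization_lemma} applied with $G=\NC$ shows that $N_r^{\downarrow,\infty}$ and $N_r^{\uparrow,\infty}$ are monotones, so the only nontrivial thing left to establish is the coincidence $N_r^{\downarrow,\infty}=N_r^{\uparrow,\infty}$, together with weak additivity and the bound $\NCi\leq \NC$.

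For this, I would exploit the subadditivity of $\NC$ proved in Lemma~\ref{NC_subadd_lemma}, which is precisely the hypothesis required by the dual form of Lemma~\ref{regularization_superadd_lemma} noted in the remark just below it. Fix an arbitrary state $\rho\in\D{m}$ and set $a_n\coloneqq \NC(\rho^{\otimes n})$. Subadditivity of $\NC$ in the form $\NC(\rho_A\otimes \sigma_B)\leq \NC(\rho_A)+\NC(\sigma_B)$ yields $a_{n+k}\leq a_n+a_k$ for all $n,k\in\N_+$. Fekete's lemma~\cite{Fekete1923} then guarantees that
\begin{equation*}
\lim_{n\to\infty} \frac{a_n}{n}\ =\ \inf_{n\in\N_+} \frac{a_n}{n}
\end{equation*}
exists in $[0,+\infty]$. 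In particular $N_r^{\downarrow,\infty}(\rho)=N_r^{\uparrow,\infty}(\rho)$, so the unique regularization $\NCi(\rho)$ is well defined, and the infimum expression immediately gives $\NCi(\rho)\leq a_1=\NC(\rho)$.

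Finally, the fact that $\NCi$ is itself a nonclassicality monotone, and that it is weakly additive on its domain (which here is the whole state space), is exactly the content of Lemma~\ref{regularization_lemma}: the coincidence $N_r^{\downarrow,\infty}=N_r^{\uparrow,\infty}=\NCi$ forces $\NCi(\rho^{\otimes n})=n\,\NCi(\rho)$ for all $n\in\N_+$, and monotonicity under free (i.e., classical) operations, as well as vanishing on classical states, are inherited from $\NC$ through the regularization. No step here is a real obstacle: the work has been done in Lemma~\ref{NC_subadd_lemma} (to get subadditivity and monotonicity of $\NC$) and in Lemmas~\ref{regularization_lemma}--\ref{regularization_superadd_lemma} (to run the Fekete-type regularization argument); the corollary is just an assembly of these pieces.
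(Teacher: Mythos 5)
Your proposal is correct and follows essentially the same route as the paper, whose proof is exactly "Follows directly from Lemmata~\ref{NC_subadd_lemma} and~\ref{regularization_superadd_lemma}" --- i.e., combining the subadditivity and monotonicity of $\NC$ and $\NCM$ with the regularization machinery (in its subadditive form noted in the remark after Lemma~\ref{regularization_superadd_lemma}). You have merely unpacked the Fekete argument that the paper leaves implicit.
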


\begin{proof}
Follows directly from Lemmata~\ref{NC_subadd_lemma} and~\ref{regularization_superadd_lemma}.
\end{proof}

We now argue that the monotones $\NC, \NCM$ behave like useful resource quantifiers on states of physical interest. An essential basic feature is finiteness on bounded-energy states, where the energy is measured by the total photon number Hamiltonian.

\begin{prop}\label{bounded energy prop}
Let $\rho$ be an $m$-mode state with finite mean photon number $E\coloneqq \Tr \left[\rho \left( \sumno_{j=1}^m a_j^\dag a_j \right)\right] < \infty$. Then
\bb
\NCM(\rho) \leq \NC(\rho) \leq m\, g(E/m)\, ,
\ee
where $g(x)\coloneqq (x+1)\log_2 (x+1) - x\log_2 x$.
\end{prop}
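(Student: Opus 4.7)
The first inequality $\NCM(\rho)\le\NC(\rho)$ has already been established in Lemma~\ref{NC_subadd_lemma}, so only the bound $\NC(\rho)\le m\,g(E/m)$ remains. The plan is to produce an explicit classical ansatz in the infimum defining $\NC(\rho)$ and compute the resulting relative entropy.

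Concretely, set $n\coloneqq E/m$ and take
\[
\sigma \coloneqq \tau_n^{\otimes m},\qquad \tau_n \coloneqq \frac{1}{n+1}\sum_{k=0}^\infty \left(\frac{n}{n+1}\right)^{\!k}\ketbra{k}=\int \frac{d^2\alpha}{\pi n}\,e^{-|\alpha|^2/n}\,\ketbra{\alpha}\,,
\]
i.e.\ the product of single-mode thermal states with mean photon number $n$. Since $\tau_n$ has a (Gaussian) positive Glauber--Sudarshan $P$-representation, it lies in $\CC_1$, and so $\sigma\in\CC_m$; hence $\NC(\rho)\le D(\rho\|\sigma)$.

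Next, I would compute $D(\rho\|\sigma)=-S(\rho)-\Tr[\rho\log_2\sigma]$ explicitly. Using the spectral decomposition of $\tau_n$ one has
\[
\log_2\tau_n = -\log_2(n+1)\,\id+\log_2\!\left(\tfrac{n}{n+1}\right)\,a^\dag a\,,
\]
and additivity of the logarithm over tensor products gives $\log_2\sigma=\sum_{j=1}^m\log_2\tau_n^{(j)}$. Taking the expectation in $\rho$ and using $\sum_j\Tr[\rho\,a_j^\dag a_j]=E=mn$, the entire trace depends only on $E$:
\[
-\Tr[\rho\log_2\sigma]=m\log_2(n+1)+mn\log_2\!\left(\tfrac{n+1}{n}\right)=m\bigl[(n+1)\log_2(n+1)-n\log_2 n\bigr]=m\,g(E/m)\,.
\]
Equivalently, $\sigma$ is the maximum-entropy state on $m$ modes with total mean photon number $E$, so the previous identity is just $-\Tr[\rho\log_2\sigma]=S(\sigma)=m\,g(E/m)$.

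Combining, $D(\rho\|\sigma)=-S(\rho)+m\,g(E/m)\le m\,g(E/m)$, which is the sought bound. The only point requiring a word of care is the well-posedness of the decomposition $D(\rho\|\sigma)=-S(\rho)-\Tr[\rho\log_2\sigma]$: since $\sigma$ is faithful and $\Tr[\rho\log_2\sigma]$ is finite by the calculation above, Klein's inequality $D(\rho\|\sigma)\ge0$ yields $S(\rho)\le m\,g(E/m)<\infty$, so no $\infty-\infty$ ambiguity arises. I do not anticipate any real obstacle; the argument is entirely self-contained once the thermal ansatz is made.
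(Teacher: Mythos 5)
Your proposal is correct and follows essentially the same route as the paper: both bound $\NC(\rho)$ by the relative entropy to a product of classical thermal states and use $-\Tr[\rho\log_2\sigma]=m\,g(E/m)$ together with $S(\rho)\geq 0$. The only cosmetic difference is that you plug in the optimal mean photon number $\nu=E/m$ directly, whereas the paper keeps $\nu$ free and invokes the variational identity $g(x)=\inf_{\nu\geq 0}\left\{\log_2(1+\nu)-x\log_2\left(\tfrac{\nu}{1+\nu}\right)\right\}$.
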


\begin{proof}
%We start by noticing that since $\NC$ is invariant under displacement\tcb{[Do you show this somewhere?]}, we can assume w.l.o.g.\ that $\Tr[\rho a_j]=0$ for all $j\in \{1,\ldots, m\}$, so that $N(\rho) = N_0(\rho) = N$. Now, consider the $m\times m$ covariance matrix $W$ defined by $W_{jk} \coloneqq \Tr[\rho\, a_j^\dag a_k]$. Since for an arbitrary $m\times m$ unitary matrix $U$ we can induce the transformation $a_j \mapsto \sum_k U^*_{jk} a_k$ by passive symplectic unitaries, which leave the relative entropy of nonclassicality invariant\tcb{[same here]}, we can w.l.o.g.\ assume that $W_{jj} = \Tr[\rho\, a_j^\dag a_j] = N/m$. To see why, remember that given a Hermitian matrix $X$ with spectrum $\lambda \in \R^m$ and a vector $\mu\prec \lambda$, where $\prec$ denotes majorisation, we can always find a unitary $U$ such that $\mathrm{diag} (UXU^\dag) =\mu$ (this is known as Schur--Horn theorem, see, e.g.,~\cite[Theorems~4.3.45 and~4.3.48]{HJ1}). In our case, the vector of constant entries $\frac{\Tr W}{m}\, e = \frac{N}{m}\, e$, where $e\coloneqq (1,\ldots, 1)^\intercal \in \R^m$, is definitely majorized by the spectrum of $W$, which implies our claim. Therefore, hereafter we assume that $\Tr[\rho\, a_j^\dag a_j] = N/m$.
It is well known that the entropy of an $m$-mode state with finite mean photon number $E$ is at most $m g(E/m)$, which indeed corresponds to the entropy of the thermal state with the same energy. Hence, $\rho$ has finite entropy, so that~\eqref{hierarchy} holds. Thus, we only have to show that $\NC(\rho)\leq m g(E/m)$. For an arbitrary $\nu\geq 0$, let
\bb
\tau_\nu \coloneqq \frac{1}{1+\nu} \sum_{n=0}^\infty \left( \frac{\nu}{1+\nu}\right)^n \ketbra{n} = \frac{1}{1+\nu} \left( \frac{\nu}{1+\nu}\right)^{a^\dag a}
\label{tau}
\ee
be the single-mode thermal state of mean photon number $\nu$. It is well known that $\tau_\nu\in \CC_1$, and hence $\tau_\nu^{\otimes m} \in \CC_m$, for all $\nu\in [0,\infty)$. 
%For an $m$-tuple $\vec{\nu}= (\nu_1,\ldots, \nu_m)\in [0,\infty)^m$, set $\sigma(\vec{\nu}) \coloneqq \tau_{\nu_1}\otimes\ldots \otimes \tau_{\nu_m}$.
Therefore,
\begin{align*}
\NC(\rho) &\leq \inf_{\nu\geq 0} D\left( \rho\, \big\|\, \tau_\nu^{\otimes m} \right) \\
&= \inf_{\nu\geq 0} \left\{ - S(\rho) + m \log_2 (1+\nu) - E \log_2 \left( \frac{\nu}{1+\nu} \right) \right\} \\
%&= \inf_{\nu\geq 0} \left\{ - S(\rho) - \sumno_j \left( - \log_2 (\nu_j+1) + \frac{N}{m} \log_2 \left( \frac{\nu_j}{\nu_j+1}\right) \right) \right\} \\
&= - S(\rho) + m\, g\left( E/m \right)\, ,
\end{align*}
where we used the variational representation
\bbb
g(x) = \inf_{\nu\geq 0} \left\{\log_2(1+\nu) - x \log_2 \left( \frac{\nu}{1+\nu}\right)\right\} ,
\eee
whose proof is elementary.
%\tcb{[Maybe this only shows my ignorance, but the above equation is equal to $g(x) = \inf_{\nu\geq 1} \left\{(1+x)\log(1+\nu) - x \log_2 \nu\right\}$. To make this equal to the $g(x)$ in the theorem, I must choose $\nu=x$ (which one also obtains from a partial derivative). What if $N/m=x<1$? Or is this connected to the "irreducible" above that I did not understand and cannot happen? ]}
\end{proof}

Further results on our nonclassicality monotones will be given in Section~\ref{secondary res section}.

\section{Proof of Theorem~\ref{general_bound_rates_thm} and of Corollaries~\ref{bound_rates_Esq_cor} and~\ref{bound_rates_free_energy_cor}} \label{proofs 1 section}

\subsection{Proof of Theorem~\ref{general_bound_rates_thm}}

In this section we prove our first main result, Theorem~\ref{general_bound_rates_thm}. We start with a simple lemma, which justifies the name of maximal asymptotic transformation rate given to the quantity in Definition~\ref{max_rate_def} (cf.\ Definition~\ref{max_rate_def}).

\begin{lemma} \label{R_smaller_Rtilde_lemma}
Let $(\mathcal{S},\mathcal{F})$ be a QRT. For any two systems $A,B\in\mathcal{S}$ and any two states $\rho_A\in\D{A}$ and $\sigma_B\in\D{B}$, it holds that
\bb
R(\rho_{A} \to \sigma_{B}) \leq \widetilde{R}(\rho_{A} \to \sigma_{B})\, .
\ee
\end{lemma}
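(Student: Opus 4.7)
The plan is very short: this lemma is essentially a one-line consequence of the monotonicity of the trace distance under partial trace (more generally, under any CPTP map). The only ingredient needed is the standard fact that for any $k$ and any two states $\Omega, \Omega' \in \D{B^k}$, and any $j\in\{1,\dots,k\}$,
\[
\left\| \Omega_j - \Omega'_j \right\|_1 \;=\; \left\| \Tr_{B^k\setminus B_j}\!\left[ \Omega - \Omega'\right] \right\|_1 \;\leq\; \|\Omega - \Omega'\|_1,
\]
which follows from the fact that partial trace is a quantum channel and trace norm is contractive under such maps.

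Concretely, I would pick an arbitrary achievable rate $r$ in the standard sense, so that there exists a sequence of free operations $\Lambda_n \in \mathcal{F}(A^n \to B^{\lfloor rn \rfloor})$ with $\bigl\|\Lambda_n(\rho_A^{\otimes n}) - \sigma_B^{\otimes \lfloor rn \rfloor}\bigr\|_1 \xrightarrow{n\to\infty} 0$. Setting $\Omega^{(n)} \coloneqq \Lambda_n(\rho_A^{\otimes n})$ and noting that the $j$-th marginal of $\sigma_B^{\otimes \lfloor rn \rfloor}$ is exactly $\sigma_B$, I apply the contractivity inequality above to each $j\in\{1,\dots,\lfloor rn \rfloor\}$ to obtain
\[
\max_{j=1,\dots,\lfloor rn\rfloor} \bigl\| \Omega^{(n)}_j - \sigma_B \bigr\|_1 \;\leq\; \bigl\| \Omega^{(n)} - \sigma_B^{\otimes \lfloor rn \rfloor} \bigr\|_1 \;\xrightarrow{n\to\infty}\; 0.
\]
Hence $r$ also lies in the set defining $\widetilde{R}(\rho_A \to \sigma_B)$, and taking the supremum over all standard achievable rates yields $R(\rho_A \to \sigma_B) \leq \widetilde{R}(\rho_A \to \sigma_B)$.

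There is no real obstacle here: the inequality reflects the obvious fact that controlling the global error is a stronger requirement than controlling each marginal error, and the proof does not depend on finite dimensionality at any point. The only care needed is notational, namely making explicit that $(\sigma_B^{\otimes \lfloor rn\rfloor})_j = \sigma_B$ so that the same target state appears on both sides of the marginal estimate.
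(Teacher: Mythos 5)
Your proof is correct and follows essentially the same route as the paper's: both rely on the contractivity of the trace norm under the partial trace (data processing) to bound each marginal error by the global error, and then conclude that any standard achievable rate is also maximally achievable. Your added remark that $(\sigma_B^{\otimes \lfloor rn\rfloor})_j = \sigma_B$ is a useful explicit detail that the paper leaves implicit.
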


\begin{proof}
For all $n$ and all free operations $\Lambda_n \in\mathcal{F}\left( A^n\to B^{\floor{rn}}\right)$, the data processing inequality for the trace norm~\cite{Ruskai1994} implies that
\bbb
\max_{j=1,\ldots, \floor{rn}} \left\| \left(\Lambda_n \left(\rho_{A}^{\otimes n}\right) \right)_{j} - \sigma_{B} \right\|_1 \leq \left\|\Lambda_n\left( \rho_{A}^{\otimes n}\right) - \sigma_{B}^{\otimes \floor{r n}} \right\|_1\, .
\eee
Therefore, a sequence of protocols that achieves a rate $r$ in~\eqref{rate} (i.e., that makes the global error vanish) achieves the same rate in~\eqref{max_rate} (because the maximum local error will also vanish). The claim follows.
\end{proof}

We are now ready to present the proof of Theorem~\ref{general_bound_rates_thm}.

\begin{proof}[Proof of Theorem~\ref{general_bound_rates_thm}]
It suffices to show that $\widetilde{R}(\rho_A\!\to\!\sigma_B) \leq \frac{G(\rho_A)}{G(\sigma_B)}$. For any sequence of free operations $\Lambda_n\in\mathcal{F}\left( A^n \to B^{\floor{rn}}\right)$ satisfying%$\liminf_{n\to\infty}  \left\|\Lambda_n\!\left( \rho_{\!A}^{\otimes n}\right) - \sigma_{\!B}^{\otimes \floor{r n}} \right\|_1\!\! = 0$,
\tcb{, for all $j$, $\liminf_{n\to\infty}\left\| \left( \Lambda_n \left(\rho_{A}^{\otimes n}\right) \right)_{j} - \sigma_{B} \right\|_1=0$}
it holds that
\begin{align*}
G(\rho_A) &\texteq{1}  \liminf_{n\to\infty} \frac1n\, G\big( \rho_A^{\otimes n} \big) \\
&\textgeq{2} \liminf_{n\to\infty} \frac1n\, G\left(\Lambda_n\big(\rho_A^{\otimes n}\big)\right) \\
&\textgeq{3} \liminf_{n\to\infty} \frac1n\, \sum_{j=1}^{\floor{rn}} G\left(\left(\Lambda_n\big(\rho_A^{\otimes n}\big)\right)_j\right) \\
&\geq \liminf_{n\to\infty} \frac{\floor{rn}}{n} \min_j G\left(\left(\Lambda_n\big(\rho_A^{\otimes n}\big)\right)_j\right) \\
&\texteq{4} \liminf_{n\to\infty} \frac{\floor{rn}}{n} G\left(\left(\Lambda_n\big(\rho_A^{\otimes n}\big)\right)_{j_n}\right)\\
&= r \liminf_{n\to\infty} G\left(\left(\Lambda_n\big(\rho_A^{\otimes n}\big)\right)_{j_n}\right) \\
&\textgeq{5} r\, G(\sigma_B)\, .
\end{align*}
Here, 1~holds due to weak additivity, even without the lim inf and for every $n$; 2~comes from monotonicity; 3~from strong superadditivity; in~4 we constructed a sequence of indices $j_n$ achieving the minimum; finally, 5~descends from lower semicontinuity and the assumption on $\Lambda_n$. Then a supremum over $r$ yields the claim.
\end{proof}

Before moving on to the study of the applications, it is perhaps instructive to compare the above argument with \tcr{the one we saw in~\eqref{finite_dim_proof}, where the same bound on rates was proved (under different assumptions) in the finite-dimensional setting.} 
%the more standard proof of a statement such as~\eqref{bound on rates equation} in finite dimension.
The main difference lies in step~3, in which we exploit strong superadditivity to move the error analysis to the single-copy level, where it is ultimately tackled by means of lower semicontinuity (step~5). \tcr{In~\eqref{finite_dim_proof}, instead, asymptotic continuity was leveraged to carry out an error analysis directly at the many-copy level.} 
%In a standard finite-dimensional proof, one would instead leverage asymptotic continuity to carry out an error analysis directly at the many-copy level.
This type of ideas had been previously exploited in~\cite[Theorem~4 and Remark~10]{GrandTour}.

\subsection{Proof of Corollary~\ref{bound_rates_Esq_cor}}
\label{section squashed entanglement}

We now apply Theorem~\ref{general_bound_rates_thm} to the resource theory of entanglement. Let us start by fixing some terminology. The \textbf{squashed entanglement} of a bipartite state $\rho_{AB}$ of a finite-dimensional bipartite system $AB$ is defined by~\cite{tucci1999, squashed, faithful, rel-ent-sq}
\bb
E_{sq}(\rho_{AB}) \coloneqq \frac12 \inf_{\rho_{ABE}} I(A:B|E)_\rho\, ,
\label{Esq}
\ee
where the infimum is over all extensions $\rho_{ABE}$ of the state $\rho_{AB}$, i.e., over all tripartite states $\rho_{ABE}$ satisfying that $\Tr_E \left[\rho_{ABE}\right]=\rho_{AB}$, and
\bb
I(A:B|E)_\rho \coloneqq S(\rho_{AE}) + S(\rho_{BE}) - S(\rho_E) - S(\rho_{ABE})
\label{conditional_mutual_information}
\ee
is the conditional mutual information. The problem with the above definition is that it cannot be extended directly to the infinite-dimensional case, because the right-hand side of~\eqref{conditional_mutual_information} may contain the undefined expression $\infty-\infty$~\cite{Shirokov-sq, Shirokov2016}.

Fortunately, Shirokov has found a way out of this impasse. The first step is to construct the conditional mutual information via an alternative expression to~\eqref{conditional_mutual_information}, namely,
\bb
I(A:B|E)_\rho = \sup_{\Pi_A} \left\{ I(A:BE)_{\Pi_A \rho \Pi_A} - I(A:E)_{\Pi_A \rho \Pi_A} \right\} ,
\label{conditional_mutual_information_alternative}
\ee
where the supremum is over all finite-dimensional projectors $\Pi_A$ on $A$. An equivalent expression is obtained by exchanging $A$ and $B$ in~\eqref{conditional_mutual_information_alternative}. Clearly,~\eqref{conditional_mutual_information_alternative} reduces to~\eqref{conditional_mutual_information} when $A$ is finite dimensional.

With~\eqref{conditional_mutual_information_alternative} at hand,~\eqref{Esq} can be extended without difficulty to the infinite-dimensional case~\cite[Eq.~(17)]{Shirokov-sq}. In order for this to work, we have to keep in mind that the system $E$ could and in general will be infinite-dimensional.

An alternative strategy to generalize the squashed entanglement to infinite-dimensional systems could be that of truncating the state directly by means of local finite-dimensional projectors. This results in a different function $\textit{\^E}_{sq}$, defined by~\cite[Eq.~(37)]{Shirokov-sq}
\bb
\textit{\^E}_{sq}(\rho_{AB}) \coloneqq \sup_{\Pi_A, \Pi_B} E_{sq}\left( (\Pi_A\otimes \Pi_B)\, \rho_{AB}\, (\Pi_A\otimes \Pi_B) \right) ,
\label{Esq_hat}
\ee
where the infimum runs over all finite-dimensional projectors $\Pi_A$ and $\Pi_B$. The nested optimizations hidden in~\eqref{Esq_hat} make $\textit{\^E}_{sq}$ a slightly less desirable quantity than $E_{sq}$. Nevertheless, we will find it useful in intermediate computations. 

The main properties of the two functions $E_{sq}$ and $\textit{\^E}_{sq}$ that we will use are as follows:
\begin{enumerate}[(a)]
\item both $E_{sq}$ and $\textit{\^E}_{sq}$ are strongly superadditive~\cite[Propositions~2B and~3B]{Shirokov-sq};
\item both $E_{sq}$ and $\textit{\^E}_{sq}$ are additive, and hence also weakly additive~\cite[Propositions~2B and~3B]{Shirokov-sq};
\item $\textit{\^E}_{sq}$ is lower semicontinuous everywhere~\cite[Proposition~3A]{Shirokov-sq};
\item $E_{sq}(\rho_{AB}) \equiv \textit{\^E}_{sq}(\rho_{AB})$ on all states with $\min\left\{ S(\rho_A),\, S(\rho_B),\, S(\rho_{AB}) \right\}<\infty$~\cite[Proposition~3C]{Shirokov-sq}. 
\end{enumerate}

\begin{proof}[Proof of Corollary~\ref{bound_rates_Esq_cor}]
It suffices to write that
\bbb
R(\rho_{AB}\!\to\!\sigma_{A'B'})\leq \widetilde{R} (\rho_{AB}\!\to\!\sigma_{A'B'}) \textleq{1} \frac{\textit{\^E}_{sq}(\rho_{AB})}{\textit{\^E}_{sq}(\sigma_{A'B'})} \texteq{2} \frac{E_{sq}(\rho_{AB})}{E_{sq}(\sigma_{A'B'})}\, ,
\eee
where 1~is just an application of Theorem~\ref{general_bound_rates_thm}, made possible by properties~(a),~(b), and~(c) above, while 2~follows from~\eqref{strange_sets} and property~(d).
\end{proof}

\subsection{Proof of Corollary~\ref{bound_rates_free_energy_cor}}

We now move on to the case of quantum thermodynamics at some (fixed) inverse temperature $\beta>0$. The monotone~\cite{Brandao-thermo}
\bb
G(\rho_A)\coloneqq \frac1\beta D(\rho_A \|\gamma_A)
\ee
is easily seen to be:
\begin{enumerate}[(a)]
\item strongly superadditive, because
\bbb
G(\rho_{AB}) = \frac1\beta D(\rho_{AB} \|\gamma_{AB}) =  \frac1\beta D(\rho_{AB} \|\gamma_A \otimes \gamma_B) \geq \frac1\beta D(\rho_A \|\gamma_A) + \frac1\beta D(\rho_B \| \gamma_B)\, ,
\eee
where the first identity is a consequence of the fact that $H_{AB}=H_A+H_B$, while the inequality follows from~\cite[Corollary~5.21]{PETZ-ENTROPY};
\item additive and hence weakly additive, since
\bbb
G(\rho_A \otimes \sigma_B) = \frac1\beta D\left(\rho_A \otimes \sigma_B \| \gamma_{AB} \right) = \frac1\beta D\left(\rho_A \otimes \sigma_B \| \gamma_A \otimes \gamma_B \right) = \frac1\beta D\left(\rho_A \| \gamma_A \right) + \frac1\beta D\left( \sigma_B \| \gamma_B \right) ;
\eee
finally,
\item lower semicontinuous, as follows, e.g., from~\cite[Proposition~5.23]{PETZ-ENTROPY}.
\end{enumerate}

\begin{proof}[Proof of Corollary~\ref{bound_rates_free_energy_cor}]
Thanks to properties~(a),~(b), and~(c) above, the claim follows directly from Theorem~\ref{general_bound_rates_thm}.
\end{proof}

\section{The long march towards Theorems~\ref{main result thm} and~\ref{bound_rates_NCMi_thm}} \label{proofs 2 section}

Throughout this section, we introduce all the necessary technical tools to arrive at a proof of Theorems~\ref{main result thm} and~\ref{bound_rates_NCMi_thm}. Along the way, we prove also Lemma~\ref{Berta variational lemma} (Section~\ref{variational expressions subsection}) and Corollary~\ref{superadditivity_cor} (Section~\ref{main results subsection})

\subsection{Proof of the variational expression for the measured relative entropy (Lemma~\ref{Berta variational lemma})} \label{variational expressions subsection}

The main goal of this subsection is to prove Lemma~\ref{Berta variational lemma}, which extends to the infinite-dimensional case the variational expressions for the measured relative entropy introduced in~\cite{Berta2017}.

Let us start by highlighting the main differences and similarities between the six variational expressions reported in Lemma~\ref{Berta variational lemma}\tcb{, reported here for the reader's convenience:
\begin{align}
	\DM (\rho\|\sigma) &= \sup_{h\in \B{}} \left\{ \Tr \left[\rho h\right] - \log_2 \Tr\left[ \sigma 2^h\right] \right\} \tag{\ref{Berta variational 1}} \\
	&= \sup_{h\in \B{}} \left\{ \Tr \left[\rho h\right] + \log_2 (e) \left(1 - \Tr \left[\sigma 2^h\right]\right) \right\} \tag{\ref{Berta variational 2}} \\
	&= \sup_{0<\delta \id<L\in \B{}} \left\{ \Tr \left[\rho \log_2 L\right] - \log_2 \Tr \left[\sigma L\right] \right\} \tag{\ref{Berta variational 3bis}} \\
	&= \sup_{0<\delta \id<L\in \B{}} \left\{ \Tr \left[\rho \log_2 L\right] + \log_2 (e) \left( 1 - \Tr \left[\sigma L\right]\right) \right\} \tag{\ref{Berta variational 4bis}} \\
	&= \sup_{0<L\in \B{}} \left\{ \Tr \left[\rho \log_2 L\right] - \log_2 \Tr \left[\sigma L\right] \right\} \tag{\ref{Berta variational 3}} \\
	&= \sup_{0<L\in \B{}} \left\{ \Tr \left[\rho \log_2 L\right] + \log_2 (e) \left(1 - \Tr \left[\sigma L\right]\right) \right\} . \tag{\ref{Berta variational 4}}
\end{align}
}
\vspace{-3ex}
\begin{itemize}
\item We see immediately that they can be grouped in pairs:~\eqref{Berta variational 1} and~\eqref{Berta variational 2};~\eqref{Berta variational 3bis} and~\eqref{Berta variational 4bis}; finally,~\eqref{Berta variational 3} and~\eqref{Berta variational 4}. The two expressions in each pair involve an optimization over exactly the same set, and differ only by the objective function, which contains a $-\log_2 x$ in~\eqref{Berta variational 1},~\eqref{Berta variational 3bis}, and~\eqref{Berta variational 3}, and its linearized version $\log_2(e) (1-x)$ in~\eqref{Berta variational 2},~\eqref{Berta variational 4bis}, and~\eqref{Berta variational 4}.
    \item The programs in~\eqref{Berta variational 3bis} and~\eqref{Berta variational 4bis} contain an optimization over all bounded operators $L$ that are also bounded away from $0$, i.e., such that $L\geq \delta \id$ for some $\delta>0$, where $\id$ is the identity on $\HH$.
    \item In the programs~\eqref{Berta variational 3} and~\eqref{Berta variational 4} we instead removed this latter constraint, and optimized only on positive operators $L>0$. Of course, this is a priori not the same: in infinite dimensions, it can happen --- e.g., for any strictly positive density operator --- that $L>0$ but there is no uniform bound $L\geq \delta \id>0$.
    \item Since in~\eqref{Berta variational 3} and~\eqref{Berta variational 4} the operator $\log_2 L$ is possibly unbounded from below, it may happen that $\Tr[\rho \log_2 L]=-\infty$. This is not a problem, because we always have that $\Tr[\sigma L]>0$ and hence $-\log_2 \Tr[\sigma L]<+\infty$; therefore, the first addend is the only one that may diverge, and no uncertainties of the form $-\infty+\infty$ can arise in the objective function. 
\end{itemize}

\begin{proof}[Proof of Lemma~\ref{Berta variational lemma}]
Following the above observations, we divide the proof in several smaller steps.
\begin{enumerate}

    \item Let us start by showing that~\eqref{Berta variational 1} is equivalent to~\eqref{Berta variational 2},~\eqref{Berta variational 3bis} to~\eqref{Berta variational 4bis}, and~\eqref{Berta variational 3} to~\eqref{Berta variational 4}. We only present the argument for the equivalence between~\eqref{Berta variational 1} and~\eqref{Berta variational 2}, as the others are entirely analogous. First, from the inequality $\log_2 x\leq \log_2(e) (x-1)$ we see that
    \bbb
    \Tr[\rho h]-\log_2\Tr \left[\sigma 2^h\right]\geq \Tr[\rho h] + \log_2(e) \left(1-\Tr\left[\sigma 2^h\right]\right)
    \eee
    for any $h$. At the same time, the expression~\eqref{Berta variational 1} is manifestly invariant under transformations of the type $h\mapsto h+\lambda I$ for any $\lambda\in\R$. So, we can always choose a $\lambda$ in both expressions such that $\Tr\left[\sigma 2^h\right]=1$, thus saturating the aforementioned inequality.

    \item Now, observe that~\eqref{Berta variational 1} is equivalent to~\eqref{Berta variational 3bis}, upon a change in         parametrization $h=\log_2 L$. In fact, $\log_2 L$ is bounded if and only if $L$ itself is bounded and moreover $L\geq \delta \id>0$. This implies that the variational expressions in~\eqref{Berta variational 1},~\eqref{Berta variational 2},~\eqref{Berta variational 3bis}, and~\eqref{Berta variational 4bis} all coincide.

    \item We now show that they also coincide with those in~\eqref{Berta variational 3} and~\eqref{Berta variational 4}. Clearly, since the optimization in~\eqref{Berta variational 3} is over a larger set than that in~\eqref{Berta variational 3bis}, its value cannot decrease. Therefore, to prove equality we only have to prove that
    \bbb
    \sup_{0<\delta \id<L\in \B{}} \left\{ \Tr\left[ \rho \log_2 L\right] - \log_2 \Tr\left[ \sigma L \right]\right\} \geq \sup_{0<L\in \B{}} \left\{ \Tr \left[\rho \log_2 L\right] - \log_2 \Tr [\sigma L] \right\} .
    \eee
    To this end, pick a bounded $L>0$, and let us show how to construct a family of bounded $L_\delta\geq \delta \id>0$ such that
    \begin{equation} \label{limit in delta for L_delta}
	\lim_{\delta\to 0^+}\left\{\Tr\left[\rho\log_2 L_\delta\right] - \log_2\Tr[\sigma L_\delta] \right\} = \Tr[\rho\log_2 L] - \log_2\Tr[\sigma L]\, .
	\end{equation}
    Since the expression $\Tr[\rho\log_2 L]-\log_2\Tr[\sigma L]$ is clearly scale-invariant in $L$, i.e., it takes the same value for $L$ and $\lambda L$, for all $\lambda>0$, we can assume without loss of generality that $L\leq \id/2$. For $0<\delta\leq 1/2$, set $L_\delta\coloneqq L + \delta \id\geq \delta \id$.
    
    Using the spectral theorem for bounded operators~\cite[Theorem~7.12]{HALL}, we can find a projection-valued measure $\mu$ on $[0,1/2]$ such that $L=\int_0^{1/2} \lambda d\mu(\lambda)$ and therefore $L_\delta = \int_0^{1/2} (\lambda+\delta) d\mu(\lambda)$. Defining the real-valued measure $\mu_\rho$ on $[0,1/2]$ such that $\mu_\rho(X) = \Tr[\rho \mu(X)]$ for all measurable sets $X\subseteq [0,1/2]$, we have that
    \bbb
    \Tr\left[\rho\left(-\log_2 L\right)\right] = \int_0^{1/2} (-\log_2 \lambda) d\mu_\rho(\lambda)\, ,\qquad \Tr\left[\rho\left(-\log_2 L_\delta\right)\right] = \int_0^{1/2} \left(-\log_2 (\lambda+\delta)\right) d\mu_\rho(\lambda)\, .
    \eee
    Since the functions $\lambda \mapsto -\log_2 (\lambda+\delta)$ are pointwise monotonically decreasing in $\delta$, converge pointwise to $\lambda \mapsto -\log_2 \lambda$, and all the functions involved are nonnegative, we can apply Beppo Levi's monotone convergence theorem~\cite{BeppoLevi} (see also~\cite[Theorem~11.28]{RUDIN-PRINCIPLES}) and conclude that
    \bbb
    \lim_{\delta\to 0^+} \Tr\left[\rho\left(-\log_2 L_\delta\right)\right] = \lim_{\delta\to 0^+} \int_0^{1/2} \left(-\log_2 (\lambda+\delta)\right) d\mu_\rho(\lambda) = \int_0^{1/2} \left(-\log_2 \lambda\right) d\mu_\rho(\lambda) = \Tr\left[\rho \left(-\log_2 L\right)\right] .
    \eee
    On the other hand, clearly $\Tr[\sigma L_\delta] = \Tr[\sigma L] + \delta$ converges to $\Tr[\sigma L]>0$ as $\delta\to 0^+$. This proves~\eqref{limit in delta for L_delta}, and thus allows us to conclude that the optimizations in~\eqref{Berta variational 1}--\eqref{Berta variational 4} all coincide.

%   For any positive and bounded $L=\sum_j \ell_j\ketbra{\ell_j}$ we can define $L_\delta=\sum_j \ell_{j,\delta}\ketbra{\ell_j}=\sum_j \max\{\ell_j,\delta\}\ketbra{\ell_j}$. By construction $L_\delta>\delta\mathbb{1}$: now we just have to prove that
%	for any $\rho=\sum_j p_j \ketbra{a_j}$ and $\sigma$. Thanks to the scale invariance in L of the expression~\eqref{Berta variational 3}, we can assume $L\leq\mathbb{1}$. Then the series
%	\begin{equation}
%	\Tr\rho\log_2  L_\delta=\sum_{j,k} p_j |\braket{a_j|\ell_k}|^2 \log_2 \ell_{k,\delta}
%	\end{equation}
%	is well defined since all its terms are negative for $\delta<1$. Moreover, each of its terms is monotonically decreasing in $\delta$, which ensures that the series is continuous in $\delta$:  \tcb{[This I don't see...maybe continous in $\delta$?]}      \begin{equation}
%	\lim_{\delta\to0}\Tr\rho\log_2 L_\delta=\Tr\rho\log_2 L\,.
%	\end{equation}
%	We can apply a similar procedure for $\Tr\sigma A$. Hence,~\eqref{limit in delta for L_delta} is actually an equality, and the claim is proved. \tcb{[maybe add a sentence how this relates to general h?]}

    \item We now show that the variational program in~\eqref{Berta variational 4bis} actually yields the measured relative entropy $\DM(\rho\|\sigma)$. To begin, we prove that in~\eqref{Berta variational 4bis} we can restrict $L$ to be of the form $L=I+R$, with $\rk R<\infty$, without changing the value of the supremum. To this end, pick $L$ such that $1/m\leq L\leq m$ for some $m>0$, and consider an arbitrary $\epsilon>0$. Construct a finite-dimensional projector $P$ such that $\left\|\rho - P\rho P\right\|_1,\, \left\|\sigma - P \sigma P\right\|_1 \leq \epsilon$. Then,
	\begin{align*}
	&\Tr \left[\rho \log_2 L\right] + \log_2(e) \left(1 - \Tr \left[\sigma L\right]\right) \\
	&\qquad \textleq{1} \Tr \left[P\rho P \log_2 L\right] + \log_2(e) \left(1 - \Tr \left[P \sigma P L\right]\right) + \epsilon (\log_2 m + m \log_2(e)) \\
	&\qquad \textleq{2} \Tr \left[\rho \log_2 (PLP + \id - P)\right] + \log_2(e) \left(1 - \Tr\left[ P \sigma P L\right]\right) + \epsilon (\log_2 m + m \log_2(e)) \\
	&\qquad \textleq{3} \Tr\left[ \rho \log_2 (PLP + \id - P)\right] + \log_2(e) \left(1 - \Tr\left[ \sigma (PLP + \id - P) \right]\right) + \epsilon (\log_2 m + (m+1) \log_2(e))\, . 
	\end{align*}
    Here, 1~follows because $\|\log_2 L\|_\infty\leq \log_2 m$ and $\|L\|_\infty\leq m$ (where $\|\cdot\|_\infty$ is the operator norm), in~2 we applied the operator Jensen inequality~\cite{Hansen2003} to the operator-concave function $\log_2$, and 3~is an application of the estimate $\Tr[\sigma (\id - P)] = \Tr[\sigma - P\sigma P]\leq \left\| \sigma - P \sigma P\right\|_1\leq \epsilon$. We see that up to introducing an arbitrarily small error we can substitute $L\mapsto PLP + \id - P = \id +R$, where $\rk R \leq \rk P<\infty$.
    
	Now, let $R$ be of finite rank, and denote with $R=\sum_{n=1}^N \lambda_n P_n$ its spectral decomposition. Then $L= \id + R = \sum_{n=0}^N (1+\lambda_n) P_n$, where $P_0 \coloneqq \id - \sum_{n=1}^N P_n$ and $\lambda_0=0$, and consequently
	\begin{align*}
	&\Tr[\rho \log_2 L] + \log_2(e) \left(1 - \Tr[\sigma L]\right) \\
	&\qquad\qquad = \log_2(e) (1-\Tr[\sigma]) + \sum_{n=0}^N \left(\log_2(1+\lambda_n)\Tr[\rho P_n] - \log_2(e) \lambda_n\Tr[\sigma P_n]\right) \\
	&\qquad\qquad \textleq{4} \log_2(e) (1-\Tr[\sigma]) + \sum_{n=1}^N \left( \Tr[\rho P_n] \log_2 \frac{\Tr[\rho P_n]}{\Tr[\sigma P_n]} - \log_2(e)\left(\Tr[\rho P_n] - \Tr[\sigma P_n]\right) \right) \\
	&\qquad\qquad \textleq{5} \log_2(e) (1-\Tr[\sigma]) + \sum_{n=0}^N \left( \Tr[\rho P_n] \log_2 \frac{\Tr[\rho P_n]}{\Tr[\sigma P_n]} - \log_2(e)\left(\Tr[\rho P_n] - \Tr[\sigma P_n]\right) \right) \\
	&\qquad\qquad = \sum_{n=0}^N \Tr[\rho P_n] \log_2 \frac{\Tr[\rho P_n]}{\Tr[\sigma P_n]} \\
	&\qquad\qquad \texteq{6} \DKL\left( P^\M_\rho \big\| P^\M_\sigma\right) \\
	&\qquad\qquad \leq \DM (\rho\|\sigma) .
	\end{align*}
	Here, the inequality in~4 comes from the estimate $a \log_2 (1+x) - \log_2(e) b x \leq a \log_2 \frac{a}{b} - \log_2(e) (a - b)$, (which can be proven simply by maximisation in $x$), while 5~is a consequence of the fact that $a \log_2 \frac{a}{b} - \log_2(e) (a - b)\geq 0$ for all $a,b\geq 0$.
	%\tcb{(replace $a/b$ by $x$ and minimise in $x$)},
	In~6, we introduced the measurement $\M\coloneqq \{P_x\}_{x\in \{0,\ldots, N\}}$.

	The converse is proved with exactly the same argument put forth by Berta et al.\ in the proof of~\cite[Lemma~1]{Berta2017}. Namely, let $\M=\{E_x\}_{x\in \X}$ be a quantum measurement. If there exists $x\in \X$ such that $\Tr[\sigma E_x]=0< \Tr[\rho E_x]$, then on the one hand clearly $\DM(\rho\|\sigma) \geq \DKL\left( P^\M_\rho \big\| P^\M_\sigma\right) = +\infty$. On the other, we see that the kernels of $\rho$ and $\sigma$ obey $\ker(\sigma)\nsubseteq \ker(\rho)$, i.e., there exists a pure state $\ket{\psi}\in \ker(\sigma)\setminus \ker(\rho)$. Setting $L=\lambda\psi + \id-\psi$ and letting $\lambda\to +\infty$ proves that the variational program in~\eqref{Berta variational 4bis} is unbounded from above, as it should be.

	We now consider the case where $\Tr[\sigma E_x]=0$ only when also $\Tr[\rho E_x]=0$. Introduce the set
	\bbb
	\widetilde{\X}\coloneqq \{x\in \X: \Tr[\rho E_x]\Tr [\sigma E_x]>0 \}\,,
	\eee
	and write:
	\begin{align*}
	\DKL\left( P^\M_\rho \big\| P^\M_\sigma\right) =& \sum_{x\in \widetilde{\X}} \Tr[\rho E_x] \left( \log_2 \Tr[\rho E_x] - \log_2 \Tr[\sigma E_x] \right) \\
	=& \Tr\left[ \rho \sumno_{x\in \widetilde{\X}} \sqrt{E_x}\, \log_2 \left( \frac{\Tr[\rho E_x]}{\Tr[\sigma E_x]}\cdot \id\right) \sqrt{E_x} \right] \\
	\textleq{7}& \Tr\left[ \rho \log_2 \left( \sumno_{x\in \widetilde{\X}} \frac{\Tr[\rho E_x]}{\Tr[\sigma E_x]}\, E_x\right) \right] \\
	\texteq{8}& \Tr\left[ \rho \log_2 L \right] + \log_2(e)\left(1 - \Tr[\sigma L]\right) ,
	\end{align*}
	where 7~is again an application of the operator Jensen inequality~\cite{Hansen2003} to the operator-concave function $\log_2$, and in~8 we defined $L\coloneqq \sum_x \frac{\Tr[\rho E_x]}{\Tr[\sigma E_x]}\, E_x$, so that $\Tr[\sigma L]=1$.
\end{enumerate}
\end{proof}

\begin{rem} \label{vanishing 2nd state measured rem}
The programs~\eqref{Berta variational 2},~\eqref{Berta variational 4bis}, and~\eqref{Berta variational 4} are all well defined also for $\sigma=0$. They yield $\DM(\rho\|0)=+\infty$, as it should be.
\end{rem}

\subsection{The monotone $\boldsymbol{\Gamma}$}\label{the long march subsection}

In order to arrive at a proof of Theorem~\ref{main result thm}, we first formalize the definition of the quantity that appears on the right-hand side of~\eqref{second expression for ncm}.

\begin{Def} \label{Gamma_def}
For an arbitrary $m$-mode state $\rho$, let us construct the quantity
\begin{align}
\Gamma(\rho) \coloneqq&\ \sup_{h\in \B{m}} \left\{ \Tr[\rho h] - \log_2 \sup_{\alpha\in \C^m} \braket{\alpha|2^h|\alpha}\right\} \label{Gamma_1} \\
=& \sup_{h\in \B{m}} \left\{ \Tr[\rho h] + \log_2(e) \left( 1- \sup_{\alpha\in \C^m} \braket{\alpha|2^h|\alpha}\right) \right\} \label{Gamma_2}
\end{align}
\end{Def}

Note that since $2^h>0$, there must exist some $\alpha\in \C^m$ such that $\braket{\alpha|2^h|\alpha}>0$.
%Before proceeding further, we argue that $\Gamma$ is well defined. Indeed, as previously observed for the expressions in Lemma~\ref{Berta variational lemma}, the objective function can only diverge to $-\infty$ if $\Tr[\rho \log_2 L]=-\infty$, which may be the case if $\log_2 L$ is unbounded from below. However, since we are dealing with a maximization, the operators $L$ for which this happens are automatically excluded.
Moreover, the two programs in~\eqref{Gamma_1} and~\eqref{Gamma_2} are equivalent, as can be verified by following the same strategy as in step~1 of the proof of Lemma~\ref{Berta variational lemma}. This ensures that $\Gamma$ is indeed well defined. Let us now establish some of its basic properties.

\begin{lemma} \label{variational_Gamma_lemma}
For an $m$-mode state $\rho$, we have that
\begin{align}
	\Gamma(\rho) &= \sup_{0<\delta \id<L\in \B{m}} \left\{ \Tr \left[\rho \log_2 L\right] - \log_2 \sup_{\alpha\in \C^m} \braket{\alpha|L|\alpha} \right\} \label{Gamma variational 3bis} \\
	&= \sup_{0<\delta \id<L\in \B{m}} \left\{ \Tr \left[\rho \log_2 L\right] + \log_2 (e) \left( 1 - \sup_{\alpha\in \C^m} \braket{\alpha|L|\alpha} \right) \right\} \label{Gamma variational 4bis} \\
	&= \sup_{0<L\in \B{m}} \left\{ \Tr \left[\rho \log_2 L\right] - \log_2 \sup_{\alpha\in \C^m} \braket{\alpha|L|\alpha} \right\} \label{Gamma variational 3} \\
	&= \sup_{0<L\in \B{m}} \left\{ \Tr \left[\rho \log_2 L\right] + \log_2 (e) \left(1 - \sup_{\alpha\in \C^m} \braket{\alpha|L|\alpha} \right) \right\} . \label{Gamma variational 4} 
\end{align}
\end{lemma}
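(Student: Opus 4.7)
The plan is to mirror the proof of Lemma~\ref{Berta variational lemma}, with the role of the linear functional $L \mapsto \Tr[\sigma L]$ replaced throughout by the convex functional $L \mapsto \sup_{\alpha \in \C^m}\braket{\alpha|L|\alpha}$. The six expressions form three natural pairs which I will show coincide in three separate steps.

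First, I would show within each pair (e.g., \eqref{Gamma_1} vs \eqref{Gamma_2}) that the $-\log_2(\cdot)$ version equals the linearized $\log_2(e)(1-\cdot)$ version. The inequality $\log_2 x \leq \log_2(e)(x-1)$ valid for $x>0$ immediately gives one direction pointwise. For the reverse, the objective of \eqref{Gamma_1} is invariant under the shift $h \mapsto h + \lambda I$: since coherent states are normalized ($\braket{\alpha|\alpha}=1$), we have $\sup_\alpha \braket{\alpha|2^{h+\lambda I}|\alpha}= 2^\lambda \sup_\alpha \braket{\alpha|2^h|\alpha}$, so the shift in $\Tr[\rho h]$ is cancelled by the logarithm. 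Hence I may rescale $h$ to enforce $\sup_\alpha \braket{\alpha|2^h|\alpha}=1$, at which point the pointwise inequality above is saturated and the two functionals agree.

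Second, the map $h \leftrightarrow L = 2^h$ provides a bijection between $h \in \B{m}$ and the set $\{L \in \B{m} : L \geq \delta I \text{ for some } \delta>0\}$, giving directly the equality of \eqref{Gamma_1}--\eqref{Gamma_2} with \eqref{Gamma variational 3bis}--\eqref{Gamma variational 4bis}. Third, to extend the optimization from $L > \delta I$ to arbitrary bounded $L>0$, I would approximate any such $L$ by $L_\delta \coloneqq L + \delta I \geq \delta I$ and take $\delta \to 0^+$. On the coherent-state side, $\sup_\alpha \braket{\alpha|L_\delta|\alpha}=\sup_\alpha \braket{\alpha|L|\alpha}+\delta$ converges to $\sup_\alpha \braket{\alpha|L|\alpha}>0$. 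On the entropic side, using the spectral decomposition $L=\int \lambda\, d\mu(\lambda)$ on $[0,\|L\|_\infty]$ and setting $d\mu_\rho(\lambda)=\Tr[\rho\, d\mu(\lambda)]$, one has $\Tr[\rho(-\log_2 L_\delta)] = \int -\log_2(\lambda+\delta)\, d\mu_\rho(\lambda)$. The integrands are nonnegative (after rescaling $L$ to ensure $L \leq I/2$, using the scale invariance) and increase pointwise to $-\log_2 \lambda$ as $\delta \searrow 0$, so Beppo Levi's monotone convergence theorem yields $\Tr[\rho \log_2 L_\delta] \to \Tr[\rho \log_2 L]$, including the case where the limit is $-\infty$. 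The full objective converges with no $-\infty+\infty$ ambiguity, since the second term $-\log_2 \sup_\alpha \braket{\alpha|L|\alpha}$ is always finite.

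The proof is essentially a routine adaptation of Lemma~\ref{Berta variational lemma}; the main obstacle is conceptual rather than technical, namely verifying that the nonlinearity of $L \mapsto \sup_\alpha \braket{\alpha|L|\alpha}$ does not disrupt the argument. The key observation that makes everything go through is that this functional is still positively homogeneous of degree one in $L$ and still satisfies $\sup_\alpha \braket{\alpha|L+\delta I|\alpha}=\sup_\alpha \braket{\alpha|L|\alpha}+\delta$, which are precisely the two properties of $\Tr[\sigma L]$ invoked in the original proof (assuming $\Tr \sigma = 1$ for the second). The scale invariance of the objective under $h \mapsto h + \lambda I$ used in step one, and the separation of $L_\delta$ from zero used in step three, both rely exclusively on these two properties, so the adaptation is clean.
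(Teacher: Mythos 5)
Your proposal is correct and follows essentially the same route as the paper, which simply states that the argument proceeds exactly as in steps~1--3 of the proof of Lemma~\ref{Berta variational lemma}; your explicit identification of the two properties of $L\mapsto\sup_{\alpha}\braket{\alpha|L|\alpha}$ that make the adaptation work (degree-one positive homogeneity and additivity under $L\mapsto L+\delta\id$ thanks to the normalization of coherent states) is precisely what justifies the paper's terse reference.
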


\begin{proof}
The argument proceeds exactly as in steps~1--3 of the proof of Lemma~\ref{Berta variational lemma}.
\end{proof}

We deduce the following elementary but important properties of the function $\Gamma$.

\begin{prop} \label{Gamma_properties_prop}
The function $\Gamma$ in Definition~\ref{Gamma_def} is a convex, lower semicontinuous, strongly superadditive nonclassicality monotone. It holds that $\Gamma (\rho) \leq \NCM(\rho)$ for all states $\rho$.
\end{prop}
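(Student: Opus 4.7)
The entire result should fall out of the variational expressions collected in Lemma~\ref{variational_Gamma_lemma}, which exhibit $\Gamma$ as a pointwise supremum of a family of affine maps in the argument~$\rho$. Concretely, writing
\[
\Gamma(\rho)=\sup_{h\in\B{m}} \bigl\{ \Tr[\rho h] + c(h)\bigr\},\qquad c(h)\coloneqq \log_2(e)\bigl(1-\sup_{\Alpha\in\C^m}\braket{\Alpha|2^h|\Alpha}\bigr),
\]
the map $\rho\mapsto \Tr[\rho h]$ is trace-norm continuous and affine for every bounded self-adjoint $h$. Convexity and lower semicontinuity of $\Gamma$ are therefore immediate from this representation, since the pointwise supremum of a family of continuous affine functions is convex and lower semicontinuous. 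The vanishing condition $\Gamma(\sigma)=0$ for $\sigma\in\CC_m$ is also easy: the choice $h=0$ yields $\Gamma(\sigma)\geq 0$, while for the reverse inequality one observes that whenever $\sigma$ is a convex combination (or closed-convex-hull limit) of coherent states, $\Tr[\sigma L]\leq \sup_{\Alpha}\braket{\Alpha|L|\Alpha}$, so the objective in~\eqref{Gamma variational 4} is bounded above by $\Tr[\sigma \log_2 L] + \log_2(e)(1-\Tr[\sigma L])$, whose supremum over $L>0$ is $\DM(\sigma\|\sigma)=0$ by Lemma~\ref{Berta variational lemma}.

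To establish monotonicity under a classical operation $\Lambda:\T{m}\to\T{m'}$, the plan is to pull back an arbitrary ansatz $L'$ on the output through the adjoint channel: set $L\coloneqq \Lambda^\dag(L')$. Since $\Lambda$ is trace preserving, $\Lambda^\dag$ is unital and completely positive, so $L'\geq\delta\id$ entails $L\geq\delta\id$, permitting the use of the variational expression~\eqref{Gamma variational 3bis}. The crucial input is the operator Jensen inequality~\cite{Hansen2003} applied to the operator-concave function $\log_2$ and the unital CP map $\Lambda^\dag$, giving $\log_2 \Lambda^\dag(L')\geq \Lambda^\dag(\log_2 L')$, hence $\Tr[\rho\log_2 \Lambda^\dag(L')]\geq \Tr[\Lambda(\rho)\log_2 L']$. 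At the same time, since $\Lambda(\CC_m)\subseteq \CC_{m'}$, for any coherent state $\ket{\Alpha}$ we have $\Tr[\ketbra{\Alpha}\Lambda^\dag(L')]=\Tr[\Lambda(\ketbra{\Alpha})L']\leq \sup_{\sigma'\in\CC_{m'}}\Tr[\sigma' L']=\sup_{\Alpha'}\braket{\Alpha'|L'|\Alpha'}$, where the last equality uses boundedness of $L'$ and trace-norm continuity of $\sigma'\mapsto\Tr[\sigma' L']$ to reduce the supremum over the closed convex hull $\CC_{m'}$ to a supremum over its generators. Plugging both inequalities into~\eqref{Gamma variational 3bis} for $\rho$ gives $\Gamma(\rho)\geq \Gamma(\Lambda(\rho))$. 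This is the main technical step; the only delicate point is justifying operator Jensen for potentially unbounded $\log_2$, which is why the formulation~\eqref{Gamma variational 3bis} with $L\geq\delta\id$ is the right one to use.

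Strong superadditivity is proved by the standard tensor ansatz. Given $L_A\geq \delta\id_A$ and $L_B\geq\delta\id_B$, use $L_{AB}=L_A\otimes L_B$ in~\eqref{Gamma variational 3bis}. Two identities do the job: $\log_2(L_A\otimes L_B)=\log_2 L_A\otimes\id_B+\id_A\otimes\log_2 L_B$ splits the first term as $\Tr[\rho_A\log_2 L_A]+\Tr[\rho_B\log_2 L_B]$; and the fact that multimode coherent states factorize, $\ket{\Alpha_{AB}}=\ket{\Alpha_A}\otimes\ket{\Alpha_B}$, together with positivity of the matrix elements $\braket{\Alpha|L|\Alpha}$, yields $\sup_{\Alpha_{AB}}\braket{\Alpha_{AB}|L_A\otimes L_B|\Alpha_{AB}}=\bigl(\sup_{\Alpha_A}\braket{\Alpha_A|L_A|\Alpha_A}\bigr)\bigl(\sup_{\Alpha_B}\braket{\Alpha_B|L_B|\Alpha_B}\bigr)$, so that $-\log_2$ of it splits additively. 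Taking independent suprema over $L_A$ and $L_B$ yields $\Gamma(\rho_{AB})\geq \Gamma(\rho_A)+\Gamma(\rho_B)$.

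Finally, the bound $\Gamma\leq\NCM$ follows by combining Lemma~\ref{Berta variational lemma} with the max--min inequality. Writing
\[
\NCM(\rho)=\inf_{\sigma\in\CC_m}\sup_{L>0}\bigl\{\Tr[\rho\log_2 L]+\log_2(e)(1-\Tr[\sigma L])\bigr\}\geq \sup_{L>0}\inf_{\sigma\in\CC_m}\{\cdot\}\,,
\]
the inner infimum over $\sigma\in\CC_m$ of $-\Tr[\sigma L]$ equals $-\sup_{\sigma\in\CC_m}\Tr[\sigma L]=-\sup_{\Alpha}\braket{\Alpha|L|\Alpha}$ (again using trace-norm continuity of the functional to reduce to coherent-state generators), which by~\eqref{Gamma variational 4} is exactly $\Gamma(\rho)$. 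The whole argument is self-contained in the variational formulae already established; the only subtle moment, as flagged above, is the operator-Jensen step in the monotonicity proof.
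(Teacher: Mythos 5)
Your proposal is correct and follows essentially the same route as the paper's proof: convexity and lower semicontinuity from the pointwise-supremum representation, monotonicity via the adjoint channel together with the operator Jensen inequality for $\log_2$, strong superadditivity via the product ansatz $L_{AB}=L_A\otimes L_B$, and $\Gamma\leq\NCM$ via the max--min inequality applied to Lemma~\ref{Berta variational lemma}. Your two small additions --- explicitly checking that $\Gamma$ vanishes on classical states, and working with the $L\geq\delta\id$ formulation to keep $\log_2 L$ bounded in the Jensen step --- are sound refinements of the same argument rather than a different approach.
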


\begin{proof}
First of all, $\Gamma$ is convex and lower semicontinuous because it is the pointwise supremum of convex-linear and lower semicontinuous functions $\rho \mapsto \Tr[\rho h] - \log_2 \sup_{\alpha\in \C^m} \braket{\alpha|2^h|\alpha}$ (cf.\ Definition~\ref{Gamma_def}). To see that it is a nonclassicality monotone, consider $\rho\in \D{m}$ and a classical channel $\Lambda:\T{m}\to \T{m'}$, and write
\begin{align}
    \Gamma\left( \Lambda(\rho)\right) &= \sup_{0<L'\in \B{m'}} \left\{ \Tr \left[\Lambda(\rho) \log_2 L'\right] - \log_2 \sup_{\alpha\in \C^{m'}} \braket{\alpha|L'|\alpha} \right\} \nonumber \\
    &\texteq{1} \sup_{0<L'\in \B{m'}} \left\{ \Tr \left[\rho\, \Lambda^\dag \left(\log_2 L'\right) \right] - \log_2 \sup_{\sigma'\in \CC_{m'}} \Tr[\sigma' L'] \right\} \nonumber \\
    &\textleq{2} \sup_{0<L'\in \B{m'}} \left\{ \Tr \left[\rho \log_2 \Lambda^\dag \left(L'\right) \right] - \log_2 \sup_{\sigma'\in \CC_{m'}} \Tr[\sigma' L'] \right\} \nonumber \\
    &\textleq{3} \sup_{0<L'\in \B{m'}} \left\{ \Tr \left[\rho \log_2 \Lambda^\dag \left(L'\right) \right] - \log_2 \sup_{\sigma \in \CC_{m}} \Tr[\Lambda(\sigma) L'] \right\} \label{Gamma_properties_eq1} \\
    &= \sup_{0<L'\in \B{m'}} \left\{ \Tr \left[\rho \log_2 \Lambda^\dag \left(L'\right) \right] - \log_2 \sup_{\sigma \in \CC_{m}} \Tr[\sigma \Lambda^\dag(L')] \right\} \nonumber \\
    &\textleq{4} \sup_{0<L\in \B{m}} \left\{ \Tr \left[\rho \log_2 L \right] - \log_2 \sup_{\sigma \in \CC_{m}} \Tr[\sigma L] \right\} \nonumber \\
    &= \Gamma(\rho)\, . \nonumber
\end{align}
The justification of the above derivation is as follows. 1: We used the definition of adjoint map, and observed that since $L'$ is bounded and $\CC_{m'}=\overline{\co}\left\{\ketbra{\alpha}:\, \alpha\in \C^{m'}\right\}$, it holds that $\sup_{\sigma'\in \CC_{m'}} \Tr[\sigma' L'] = \sup_{\alpha\in \C^{m'}} \braket{\alpha|L'|\alpha}$. 2: We applied the operator Jensen inequality~\cite{Hansen2003} to the operator-concave function $\log_2$. 3: We restricted the inner supremum over $\sigma'$ to classical states of the form $\sigma'=\Lambda(\sigma)$, with $\sigma\in \CC_m$. 4: We observed that if $0<L'\in \B{m'}$ then also $0<\Lambda^\dag(L')\in \B{m}$, which can be seen by noticing that $\Tr\left[\omega\, \Lambda^\dag(L')\right]=\Tr\left[\Lambda(\omega) L'\right]>0$ for all states $\omega\in \D{m}$.

We now prove that $\Gamma$ is strongly superadditive. To this end, we take an arbitrary $(m+n)$-mode state $\rho_{AB}$ and write
\begin{align*}
    \Gamma\left(\rho_{AB}\right) &= \sup_{0<L_{AB}\in \B{m+n}} \left\{ \Tr \left[\rho_{AB} \log_2 L_{AB}\right] - \log_2 \sup_{\alpha_A \in \C^m\!\!,\; \alpha_B \in \C^n %(\alpha^\intercal_A, \alpha^\intercal_B)^\intercal \in \C^{m+n}
    } \left(\bra{\alpha_A}\otimes \bra{\alpha_B}\right)\tcb{L_{AB}}\left( \ket{\alpha_A}\otimes \ket{\alpha_B} \right) \right\} \\
    &\textgeq{5} \sup_{\substack{0<L_A\in \B{m},\\0<L_B\in \B{n}}} \left\{ \Tr \left[\rho_{AB} \log_2 (L_A \otimes L_B)\right] - \log_2 \sup_{\alpha_A \in \C^m\!\!,\; \alpha_B \in \C^n} \left(\bra{\alpha_A}\otimes \bra{\alpha_B}\right)\tcb{(L_A\otimes L_B)}\left( \ket{\alpha_A}\otimes \ket{\alpha_B} \right) \right\} \\
    &= \sup_{\substack{0<L_A\in \B{m},\\0<L_B\in \B{n}}} \left\{ \Tr \left[\rho_A \log_2 L_A\right] + \Tr\left[\rho_B \log_2 L_B\right] - \log_2 \left[ \left( \sup_{\alpha_A\in \C^{m}} \braket{\alpha_A|L_A|\alpha_A} \right) \left(\sup_{\alpha_B\in \C^{n}} \braket{\alpha_B|L_B|\alpha_B}\right)\right] \right\} \\
    &= \sup_{0<L_{A}\in \B{m}} \left\{ \Tr \left[\rho_{A} \log_2 L_{A}\right] - \log_2 \sup_{\alpha_{\!A}\in \C^{m}} \braket{\alpha_{A}|L_{A}|\alpha_{A}} \right\} \\
    &\qquad \qquad + \sup_{0<L_{B}\in \B{n}} \left\{ \Tr \left[\rho_{B} \log_2 L_{B}\right] - \log_2 \sup_{\alpha_{B}\in \C^{n}} \braket{\alpha_{B}|L_{B}|\alpha_{B}} \right\} \\
    &= \Gamma(\rho_A)+\Gamma(\rho_B)\, ,
\end{align*}
%Old equation, with misleading "||" symbols around operators
\begin{comment}
\begin{align*}
    \Gamma\left(\rho_{AB}\right) &= \sup_{0<L_{AB}\in \B{m+n}} \left\{ \Tr \left[\rho_{AB} \log_2 L_{AB}\right] - \log_2 \sup_{(\alpha^\intercal_A, \alpha^\intercal_B)^\intercal \in \C^{m+n}} \left(\bra{\alpha_A}\otimes \bra{\alpha_B}\right)|L_{AB}|\left( \ket{\alpha_A}\otimes \ket{\alpha_B} \right) \right\} \\
    &\textgeq{5} \sup_{\substack{0<L_A\in \B{m},\\0<L_B\in \B{n}}} \left\{ \Tr \left[\rho_{AB} \log_2 (L_A \otimes L_B)\right] - \log_2 \sup_{(\alpha^\intercal_A, \alpha^\intercal_B)^\intercal \in \C^{m+n}} \left(\bra{\alpha_A}\otimes \bra{\alpha_B}\right)\big|(L_A\otimes L_B)\big|\left( \ket{\alpha_A}\otimes \ket{\alpha_B} \right) \right\} \\
    &= \sup_{\substack{0<L_A\in \B{m},\\0<L_B\in \B{n}}} \left\{ \Tr \left[\rho_A \log_2 L_A\right] + \Tr\left[\rho_B \log_2 L_B\right] - \log_2 \left[ \left( \sup_{\alpha_A\in \C^{m}} \braket{\alpha_A|L_A|\alpha_A} \right) \left(\sup_{\alpha_B\in \C^{n}} \braket{\alpha_B|L_B|\alpha_B}\right)\right] \right\} \\
    &= \sup_{0<L_{A}\in \B{m}} \left\{ \Tr \left[\rho_{A} \log_2 L_{A}\right] - \log_2 \sup_{\alpha_{\!A}\in \C^{m}} \braket{\alpha_{A}|L_{A}|\alpha_{A}} \right\} \\
    &\qquad \qquad + \sup_{0<L_{B}\in \B{n}} \left\{ \Tr \left[\rho_{B} \log_2 L_{B}\right] - \log_2 \sup_{\alpha_{B}\in \C^{n}} \braket{\alpha_{B}|L_{B}|\alpha_{B}} \right\} \\
    &= \Gamma(\rho_A)+\Gamma(\rho_B)\, ,
\end{align*}
\end{comment}
where in~5 we restricted the supremum to product operators $L_{AB}=L_A \otimes L_B$. It remains to establish the inequality $\Gamma\leq \NCM$. This is done as follows:
\begin{align*}
\NCM(\rho) &= \inf_{\sigma\in \CC_m} \DM(\rho\|\sigma) \\
&\texteq{6} \inf_{\sigma\in \CC_m} \sup_{0<L\in \B{m}} \left\{ \Tr \left[\rho \log_2 L\right] - \log_2 \Tr \left[\sigma L\right] \right\} \\
&\textgeq{7} \sup_{0<L\in \B{m}} \inf_{\sigma\in \CC_m} \left\{ \Tr [\rho \log_2 L] - \log_2 \Tr [\sigma L] \right\} \\
&= \sup_{0<L\in \B{m}} \left\{ \Tr [\rho \log_2 L] - \log_2 \sup_{\sigma\in \CC_m} \Tr [\sigma L] \right\} \\
&\texteq{8} \sup_{0<L\in \B{m}} \left\{ \Tr [\rho \log_2 L] - \log_2 \sup_{\alpha\in \C^m} \braket{\alpha|L|\alpha} \right\}\,.
\end{align*}
Here, in~6 we employed the variational representation~\eqref{Berta variational 3} for the measured relative entropy, in~7 we remembered that
\bbb
\inf_{x\in X} \sup_{y\in Y} f(x,y) \geq \sup_{y\in Y} \inf_{x\in X} f(x,y)
\eee
holds for an arbitrary function $f:X\times Y\to \R$ on any product set $X\times Y$, and finally in~8 we noted that since $\CC_m=\overline{\co}\left\{\ketbra{\alpha}:\, \alpha\in \C^m\right\}$ and the function $\sigma\mapsto \Tr[\sigma L]$ is linear and trace-norm continuous (because $L$ is bounded), it achieves the maximum on the extreme points of $\CC_m$, i.e., on coherent states.
\end{proof}

\begin{cor} \label{Gamma_regularization_cor}
The regularization $\Gamma^\infty(\rho)\coloneqq \lim_{n\to\infty} \frac1n\, \Gamma(\rho^{\otimes n})$ exists and is unique for all states $\rho$. It is a lower semicontinuous, weakly additive, and strongly superadditive nonclassicality monotone, and it satisfies that $\Gamma^\infty\geq \Gamma$.
\end{cor}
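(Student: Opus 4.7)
The plan is to observe that Corollary~\ref{Gamma_regularization_cor} is an essentially immediate consequence of the general machinery for monotone regularization developed in Section~\ref{subsec_monotone_regularization}, once the properties of $\Gamma$ established in Proposition~\ref{Gamma_properties_prop} are in hand. No new technical work is required; the proof should be a short verification that the hypotheses of Lemma~\ref{regularization_superadd_lemma} are met by $\Gamma$.

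Concretely, I would first recall from Proposition~\ref{Gamma_properties_prop} that $\Gamma$ is a nonclassicality monotone which is strongly superadditive and lower semicontinuous. Strong superadditivity implies, via the trivial choice $\rho_{AB} = \rho^{\otimes 2}$ (and iteration), weak superadditivity in the sense of item~(f) of Definition~\ref{monotones}. This is the only nontrivial input needed to invoke Lemma~\ref{regularization_superadd_lemma}(i), which then yields existence and uniqueness of $\Gamma^\infty(\rho) = \lim_{n\to\infty} \frac{1}{n}\Gamma(\rho^{\otimes n})$ on every state via Fekete's lemma, together with the inequality $\Gamma^\infty \geq \Gamma$ and weak additivity of $\Gamma^\infty$. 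Applying part~(ii) of the same lemma, strong superadditivity of $\Gamma$ is inherited by $\Gamma^\infty$, and applying part~(iii), lower semicontinuity of $\Gamma$ is inherited by $\Gamma^\infty$ as well (it is the pointwise supremum over $n$ of the lower semicontinuous functions $\rho \mapsto \frac{1}{n}\Gamma(\rho^{\otimes n})$).

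It remains only to point out that $\Gamma^\infty$ is itself a nonclassicality monotone. This follows either from the general Lemma~\ref{regularization_lemma} (the lower and upper regularizations of any monotone are monotones, and here they coincide), or by a direct check: if $\rho$ is classical then so is $\rho^{\otimes n}$ for all $n$, so $\Gamma(\rho^{\otimes n}) = 0$ and hence $\Gamma^\infty(\rho) = 0$; and for any free channel $\Lambda$, monotonicity of $\Gamma$ together with the fact that $\Lambda^{\otimes n}$ is free (parallel composition of free operations is free, cf.\ Definition~\ref{RT}(iii)) gives $\Gamma(\Lambda(\rho)^{\otimes n}) = \Gamma(\Lambda^{\otimes n}(\rho^{\otimes n})) \leq \Gamma(\rho^{\otimes n})$, and dividing by $n$ and taking the limit yields $\Gamma^\infty(\Lambda(\rho)) \leq \Gamma^\infty(\rho)$.

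Since every clause of the claimed corollary has now been accounted for, the argument is complete. There is no real obstacle here; the work was already done in Proposition~\ref{Gamma_properties_prop} and Lemma~\ref{regularization_superadd_lemma}, and the corollary is essentially bookkeeping.
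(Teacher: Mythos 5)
Your argument is correct and matches the paper's own proof, which simply combines Lemma~\ref{regularization_superadd_lemma} with Proposition~\ref{Gamma_properties_prop} exactly as you do. The only remark worth making is that you spell out the reduction from strong to weak superadditivity and the monotonicity check, which the paper leaves implicit.
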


\begin{proof}
Follows by combining Lemma~\ref{regularization_superadd_lemma} and Proposition~\ref{Gamma_properties_prop}.
\end{proof}

\subsection{Two more technical lemmata}

In order to prove Theorem~\ref{main result thm}, and from there deduce Theorem~\ref{bound_rates_NCMi_thm}, we need two more technical lemmata. The first one tells us that provided a state $\rho$ has finite entropy, which will most definitely be the case in all situations of physical interest, we can take the operator $L$ in the variational program for $\NCM$ to be not only bounded but also trace class.

\begin{lemma}\label{restrictions on L}
On an $m$-mode system, let
\bb
\widetilde{\CC}_m \coloneqq \co\left( \CC_m \cup\{0\}\right)
\label{Cm tilde}
\ee
denote the set of subnormalized classical states. Then, the measured relative entropy of nonclassicality admits the variational expressions
\begin{align}
\NCM(\rho) &= \inf_{\sigma\in\CC_m}\sup_{\substack{0<L\in\B{m}}}\left\{\Tr[\rho\log_2  L] + \log_2(e) \left( 1 - \Tr[\sigma L] \right) \right\} \label{rough_variational_NCM} \\
&=\inf_{\sigma\in\widetilde{\CC}_m}\sup_{\substack{0<L\in\B{m}}}\left\{\Tr[\rho\log_2 L] + \log_2(e) \left( 1 - \Tr[\sigma L] \right) \right\} \label{rough_variational_NCM_alt}
\end{align}
for all $m$-mode states $\rho$. Moreover, in both~\eqref{rough_variational_NCM} and~\eqref{rough_variational_NCM_alt}:
\begin{enumerate}[(i)]
%\item \tcb{we can assume that the spectrum of $L$ is bounded away from $0$, i.e., $L\geq \delta\id>0$ for some $\delta>0$;}
\item if $S(\rho)<\infty$, we can assume that $L\in\T{m}$ is \tcb{of trace class, and that $-\Tr \rho \log_2 L <\infty$};
\item if $\rk\rho<\infty$, we can assume that \tcb{$\supp L = \supp \rho$ and hence $\rk L<\infty$, with the convention that $-\Tr \rho \log_2 L$ is computed on the common support of $\rho$ and $L$}.
\end{enumerate}
\end{lemma}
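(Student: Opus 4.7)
The plan is to derive~\eqref{rough_variational_NCM} by directly inserting the variational formula~\eqref{Berta variational 4} of Lemma~\ref{Berta variational lemma} into the definition $\NCM(\rho) = \inf_{\sigma \in \CC_m} \DM(\rho\|\sigma)$, and then to establish the equivalence with~\eqref{rough_variational_NCM_alt} by a one-line scaling argument: for any subnormalized $\sigma = \lambda \sigma_0$ with $\sigma_0 \in \CC_m$ and $\lambda \in (0,1)$, optimizing $L \to L/\lambda$ in the inner sup yields $\DM(\rho\|\sigma_0) - \log_2 \lambda \geq \DM(\rho\|\sigma_0)$, while $\sigma = 0$ yields $+\infty$ by Remark~\ref{vanishing 2nd state measured rem}; hence enlarging the infimum to $\widetilde{\CC}_m$ does not change the value.

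The refinements (i) and (ii) form the technical core. The strategy is to truncate in the spectral basis of $\rho$: writing $\rho = \sum_k r_k \ketbra{e_k}$, set $Q_n \coloneqq \sum_{k \leq n} \ketbra{e_k}$, and given an arbitrary bounded $0<L\in\B{m}$ with $\delta \id \leq L \leq M\id$ (which is no loss of generality up to $\varepsilon$ by step~3 of the proof of Lemma~\ref{Berta variational lemma}), define
\[
L_n \coloneqq Q_n L Q_n \,+\, \sum_{k>n} r_k \ketbra{e_k} \,\in\, \Tp{m}.
\]
This $L_n$ is trace class (a finite-rank piece plus a positive trace-class tail) and is block-diagonal in the $\{\ket{e_k}\}$ basis. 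Since $\rho$ commutes with $Q_n$, the two pieces of the objective split cleanly:
\begin{align*}
\Tr[\rho \log_2 L_n] &= \Tr\!\left[Q_n \rho Q_n \log_2\!\bigl(Q_n L Q_n\big|_{Q_n\HH}\bigr)\right] + \sum_{k>n} r_k \log_2 r_k,\\
\Tr[\sigma L_n] &= \Tr[Q_n \sigma Q_n\, L] + \sum_{k>n} r_k \braket{e_k|\sigma|e_k}.
\end{align*}

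Under the finite-entropy hypothesis $S(\rho) < \infty$, the entropic tail $\sum_{k>n} r_k \log_2 r_k \to 0$, and the other tail $\sum_{k>n} r_k \braket{e_k|\sigma|e_k} \leq \sum_{k>n} r_k \to 0$. The remaining contributions are controlled by the operator Jensen inequality~\cite{Hansen2003} applied to the operator-concave function $\log_2$ under the isometric inclusion $Q_n\HH \hookrightarrow \HH$, which yields $\log_2(Q_n L Q_n|_{Q_n\HH}) \geq Q_n(\log_2 L) Q_n$ and therefore
\[
\Tr\!\left[Q_n \rho Q_n \log_2\!\bigl(Q_n L Q_n\big|_{Q_n\HH}\bigr)\right] \geq \Tr[Q_n\rho Q_n \log_2 L] \xrightarrow[n\to\infty]{} \Tr[\rho \log_2 L],
\]
where the limit uses boundedness of $\log_2 L$ (guaranteed by $\delta\id\leq L\leq M\id$) together with $Q_n \rho Q_n \to \rho$ in trace norm; likewise $\Tr[Q_n \sigma Q_n L] \to \Tr[\sigma L]$ by trace-norm convergence of $Q_n \sigma Q_n$ to $\sigma$ and boundedness of $L$. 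Combining these estimates, the $\liminf_n$ of the objective at $L_n$ is at least the value at $L$, which proves (i). Item (ii), under $\rk \rho = N < \infty$, is obtained from the same construction: for $n\geq N$ the entropic tail is vacuous and $L_n$ has finite rank, so the natural restriction of the supremum to the support of $\rho$ makes the argument go through verbatim.

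The principal obstacle, common to (i) and (ii), is precisely the one-sided control of the log term $\liminf_n \Tr[Q_n\rho Q_n \log_2(Q_n L Q_n|_{Q_n\HH})] \geq \Tr[\rho \log_2 L]$, and it is here that the choice of the eigenbasis of $\rho$ is essential: only in that basis do the orthogonal-block contributions reduce to the entropic tail $\sum_{k>n} r_k \log_2 r_k$, which is guaranteed to vanish precisely when $S(\rho)<\infty$. In a generic basis (e.g., the Fock basis) the corresponding tail would depend on matrix elements of $\rho$ whose decay is not controlled by finite entropy, and the truncation scheme would break down.
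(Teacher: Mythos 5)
Your proof is correct and follows essentially the same route as the paper's: the same truncation $L_n = P_n L P_n + (\text{tail of }\rho)$ in the eigenbasis of $\rho$, the same appeal to the operator Jensen inequality for $\log_2$, and the same use of $S(\rho)<\infty$ to kill the entropic tail, with only cosmetic differences (you normalize to $\delta\id\leq L\leq M\id$ and split the objective block-diagonally, whereas the paper keeps general $L$ with $\Tr[\rho\log_2 L]$ finite and uses the pinching map $X\mapsto P_nXP_n+Q_nXQ_n$). The one point worth making explicit is that $Q_n\sigma Q_n\to\sigma$ in trace norm requires the enumeration $\{\ket{e_k}\}_k$ to include the kernel of $\rho$ so that $Q_n\to\id$ strongly, after which the gentle measurement lemma applies.
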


\begin{proof}
As we have already seen, the expression~\eqref{rough_variational_NCM} is obtained by plugging~\eqref{Berta variational 4} into the definition~\eqref{NC_and_NCM} of measured relative entropy of nonclassicality. To see that also~\eqref{rough_variational_NCM_alt} holds, just notice that
\begin{align*}
\NCM(\rho) &= \inf_{\sigma\in \CC_m} \DM(\rho\|\sigma) \\
&= \inf_{\sigma\in \CC_m,\, \lambda \in [0,1]} \left\{\DM(\rho\|\sigma) - \log_2 \lambda \right\} \\
&= \inf_{\sigma\in \CC_m,\, \lambda \in [0,1]} \DM(\rho\|\lambda\sigma) \\
&= \inf_{\sigma\in \widetilde{\CC}_m} \DM(\rho\| \sigma) \\
&= \inf_{\sigma\in \widetilde{\CC}_m} \sup_{0<L\in \B{m}} \left\{ \Tr [\rho \log_2 L] + \log_2(e) \left( 1 - \Tr[\sigma L] \right) \right\} ,
\end{align*}
where the last step is once again~\eqref{Berta variational 4}. We now prove claims~(i) and~(ii) for~\eqref{rough_variational_NCM}.

We start by observing that restricting the set of operators $L$ over which we optimize can only decrease the final value of the program. Thus, it suffices to establish the opposite inequality. We start from claim~(i). Let $\rho$ be a finite-entropy $m$-mode state with spectral decomposition $\rho=\sum_{k=0}^\infty p_k \ketbra{e_k}$. We can assume without loss of generality that $\overline{\Span}\{e_k\}_{k\in \N}=\HH_m$, i.e., that $\{e_k\}_{k\in \N}$ forms a basis of the entire Hilbert space. Pick a bounded but not necessarily trace class operator $L$ that can enter the expression~\eqref{rough_variational_NCM}. Without loss of generality, we can assume that
\bb
-\infty < \Tr[\rho \log_2 L] = \sum_{k=0}^\infty p_k \braket{e_k|\log_2 L|e_k} < +\infty\, .
\label{restrictions on L eq1}
\ee
\tcb{In fact, if this is not the case the objective function evaluates to $-\infty$.}

For a certain $n\in \N$, construct the completely positive unital map $\Pi_n:\B{m}\to \B{m}$ given by $\Pi(X)\coloneqq P_nXP_n+Q_nXQ_n$, where $P_n\coloneqq \sum_{k=0}^{n-1}\ketbra{e_k}$ is the projector onto the the linear span $\Span\{\ket{e_k}\}_{k=0,\ldots, n-1}$ of the first $n$ eigenvectors of $\rho$, and $Q_n\coloneqq \id-P_n = \sum_{k=n}^\infty \ketbra{e_k}$. Set $\rho = \rho_n + \delta_n$, with $\rho_n\coloneqq P_n \rho P_n$ and $\delta_n \coloneqq Q_n \rho Q_n$, and define the trace class operator $L_n\coloneqq P_n L P_n + \delta_n$. Then, we have that
\begin{align}
    \Tr [\rho \log_2 L] &= \Tr [\rho_n \log_2 L] + \Tr [\delta_n \log_2 L] \nonumber \\
    &\texteq{1} \Tr [\Pi(\rho_n) \log_2 L] + \Tr [\delta_n \log_2 L] \nonumber \\
    &\texteq{2} \Tr [\rho_n\, \Pi \left( \log_2 L\right)] + \Tr [\delta_n \log_2 L] \nonumber \\
    &\textleq{3} \Tr [\rho_n \log_2 \Pi(L)] + \Tr [\delta_n \log_2 L] \label{restrictions on L eq2} \\
    &\texteq{4} \Tr [\rho_n \log_2 (P_n L P_n + \delta_n)] + \Tr [\delta_n \log_2 L] \nonumber \\
    &= \Tr [\rho \log_2 L_n] - \Tr [\delta_n \log_2 \delta_n] + \Tr [\delta_n \log_2 L]\, . \nonumber
\end{align}
Here, in~1 we observed that $\rho_n=\Pi(\rho_n)$, in~2 we used the easily verified fact that $\Pi=\Pi^\dag$, in~3 we applied the operator Jensen inequality~\cite{Hansen2003}, and finally in~4 we changed the \tcb{component of the argument of the first logarithm} on the subspace $\supp Q_n$, which is irrelevant because the trace is against $\rho_n$\tcb{, whose support is orthogonal to that of $Q_n$}.
Now, since $S(\rho) = -\Tr [\rho\log_2 \rho] = \sum_{k=0}^\infty p_k \log_2\frac{1}{p_k} < \infty$, we see that
\bb
\lim_{n\to\infty} \left( - \Tr [\delta_n \log_2 \delta_n] \right) = \lim_{n\to\infty} \sum_{k=n}^\infty p_k \log_2\frac{1}{p_k} = 0\, .
\label{restrictions on L eq3}
\ee
Moreover,~\eqref{restrictions on L eq1} implies that
\bb
\lim_{n\to\infty} \Tr [\delta_n \log_2 L] = \lim_{n\to\infty} \sum_{k=n}^\infty p_k \braket{e_k|\log_2 L|e_k} = 0\, .
\label{restrictions on L eq4}
\ee
Putting~\eqref{restrictions on L eq2}--\eqref{restrictions on L eq4} together, we see that
\bb
\liminf_{n\to\infty} \Tr [\rho \log_2 L_n] \geq \Tr [\rho \log_2 L]\, .
\label{restrictions on L eq5}
\ee

On the other hand, since $\overline{\Span}\{\ket{e_k}\}_{k\in \N} = \HH_m$, we have that $\lim_{n\to\infty} \Tr [\sigma P_n] = 1$ and therefore, by the gentle measurement lemma~\cite{Davies1969, VV1999} (see also~\cite[Lemma~9.4.2]{Wilde2}), 
\bb
\lim_{n\to\infty} \left\|\sigma - P_n \sigma P_n\right\|_1 = 0\, .
\label{restrictions on L eq6}
\ee
This immediately implies that
\begin{align}
\liminf_{n\to\infty} \left( 1 - \Tr [\sigma L_n] \right) &= \liminf_{n\to\infty} \left( 1 - \Tr [P_n\sigma P_n L] - \Tr [\sigma \delta_n] \right) \nonumber \\
&\textgeq{5} \liminf_{n\to\infty} \left( 1 - \Tr [\sigma L] - \left\|\sigma - P_n\sigma P_n\right\|_1 \|L\|_\infty - \Tr [\delta_n] \right) \label{restrictions on L eq7}
\\
&\texteq{6} 1 - \Tr [\sigma L]\, . \nonumber
\end{align}
Here, 5~comes from the fact that $L$ is bounded and also that $\sigma\leq \id$, while 6~descends from~\eqref{restrictions on L eq6} and from the elementary observation that since $\Tr[\rho] = \sum_{k=0}^\infty p_k = 1$ it follows that $\lim_{n\to\infty} \Tr [\delta_n] = \lim_{n\to\infty} \sum_{k=n}^\infty p_k = 0$.

Finally, combining~\eqref{restrictions on L eq5} and~\eqref{restrictions on L eq7} we deduce that
\begin{align*}
\liminf_{n\to\infty} \left( \Tr [\rho \log_2 L_n] + \log_2(e) \left( 1 - \Tr [\sigma L_n] \right) \right) &\geq \liminf_{n\to\infty} \Tr [\rho \log_2 L_n] + \log_2(e) \liminf_{n\to\infty} \left(1 - \Tr [\sigma L_n] \right) \\
&\geq \Tr [\rho \log_2 L] + \log_2(e) \left(1- \Tr [\sigma L]\right) .
\end{align*}
Remembering that $L_n$ is a trace class operator, this in turn implies that
\bbb
\sup_{\substack{0<L\in\B{m}}}\left\{\Tr[\rho\log_2 L] + \log_2(e) \left( 1- \Tr[\sigma L]\right) \right\} \leq \sup_{\substack{0<L\in\T{m}}}\left\{\Tr[\rho\log_2  L] + \log_2(e) \left( 1- \Tr[\sigma L]\right) \right\} ,
\eee
thus showing that in fact equality holds. The proof of claim~(i) is now complete.

As for claim~(ii), it suffices to repeat the above reasoning and observe that if $\rk \rho < \infty$ then $\delta_n=0$ for sufficiently large $n$, thus entailing that $\rk L_n < \infty$.
\end{proof}

Our second preliminary lemma presents a technical result whose topological content will be indispensable for a careful application of Sion's minimax theorem to the variational program~\eqref{rough_variational_NCM}.

\begin{lemma} \label{Cm tilde compact lemma}
The cone
\bb
\CC_m^+ \coloneqq \left\{\lambda\sigma: \lambda\geq 0,\, \sigma \in \CC_m\right\} \subset \Tp{m}
\ee
generated by the set of classical states is closed with respect to the weak* topology on $\T{m}$. Therefore, the set $\widetilde{\CC}_m = \co \left( \CC_m \cup \{0\} \right)$ of subnormalized classical states, defined in~\eqref{Cm tilde}, is weak*-compact.
\end{lemma}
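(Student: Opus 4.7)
The plan is to realize $\widetilde{\CC}_m$ as the image of a vaguely compact set of measures under a weak*-continuous map into $\T{m}$, and then lift the resulting weak*-compactness of $\widetilde{\CC}_m$ to weak*-closedness of the cone $\CC_m^+$ via the Krein--Smulian theorem.

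The first step is to show that $f_K(\alpha) := \braket{\alpha|K|\alpha}$ belongs to $C_0(\C^m)$ for every $K \in \K{m}$. For a rank-one projector $\ketbra{\psi}$ with $\ket{\psi}$ supported on finitely many Fock states, $|\braket{\alpha|\psi}|^2$ is explicitly a Gaussian times a polynomial in $\alpha,\bar{\alpha}$, hence in $C_0(\C^m)$; Fock truncation combined with finite-rank-in-operator-norm approximation of a generic compact $K$ translates into uniform convergence of the corresponding $f_K$'s, preserving membership in $C_0$. Next, I define $\Phi(\mu) := \int \ketbra{\alpha}\, d\mu(\alpha)$ as a Bochner integral in $\T{m}$, absolutely convergent since $\|\ketbra{\alpha}\|_1 \equiv 1$. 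The identity $\Tr[\Phi(\mu) K] = \int f_K\, d\mu$ then shows that $\Phi$ is continuous from the vague topology on the set $M^+_{\leq 1}(\C^m)$ of sub-probability Borel measures to the weak* topology on $\T{m}$; by Banach--Alaoglu applied to $C_0(\C^m)^*$, the source $M^+_{\leq 1}(\C^m)$ is vaguely compact, so $\Phi(M^+_{\leq 1}(\C^m))$ is weak*-compact.

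The core step is the identification $\Phi(M^+_{\leq 1}(\C^m)) = \widetilde{\CC}_m$. For $\Phi(M^+_{\leq 1}) \subseteq \widetilde{\CC}_m$, I approximate $\mu$ by discrete sub-probability measures $\mu_n$ placing point masses on small cells of a partition of $\C^m$ (using tightness of any fixed finite measure); each $\Phi(\mu_n)$ is a finite subconvex combination of coherent states, hence in $\widetilde{\CC}_m$. Weak convergence $\mu_n \to \mu$ tested against bounded continuous functions yields $\Phi(\mu_n) \to \Phi(\mu)$ in the weak topology of $\T{m}$, and since $\Tr[\Phi(\mu_n)] = \mu_n(\C^m) \to \mu(\C^m) = \Tr[\Phi(\mu)]$, Corollary~\ref{swot cor} upgrades the convergence to trace norm; trace-norm closedness of $\widetilde{\CC}_m$ then places $\Phi(\mu)$ in it. For the reverse inclusion, any $\sigma \in \CC_m$ is by definition a trace-norm limit of finite convex combinations $\Phi(\mu_n)$ with $\mu_n$ discrete probability measures; vague compactness of $M^+_{\leq 1}$ extracts a subnet $\mu_{n_\gamma}$ converging vaguely to some $\mu \in M^+_{\leq 1}$, and weak*-continuity of $\Phi$ combined with uniqueness of weak*-limits forces $\Phi(\mu) = \sigma$; the direct identity $\mu(\C^m) = \Tr[\Phi(\mu)] = \Tr[\sigma] = 1$ confirms that $\mu$ is in fact a probability measure, so $\sigma \in \Phi(M^+_{\leq 1})$. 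A rescaling by $\lambda \in [0,1]$ handles the remainder of $\widetilde{\CC}_m$, yielding its weak*-compactness.

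Finally, the cone $\CC_m^+$ is convex, and because $\|\lambda \sigma\|_1 = \lambda$ for $\sigma \in \CC_m$, its intersection with the closed trace-norm ball of radius $r$ equals $r \widetilde{\CC}_m$, which is weak*-compact and hence weak*-closed by the previous step. The Krein--Smulian theorem (applicable since $\CC_m^+$ is convex inside the dual Banach space $\K{m}^* = \T{m}$) then yields that $\CC_m^+$ itself is weak*-closed, proving the lemma. The principal obstacle throughout is the reverse inclusion $\CC_m \subseteq \Phi(P(\C^m))$: vague limits of probability measures can in principle lose mass to infinity, and it is only the combination of Corollary~\ref{swot cor} with the trivial trace-preserving identity $\Tr[\Phi(\mu)] = \mu(\C^m)$ that rules out this possibility in our setting.
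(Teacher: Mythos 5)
Your proof is correct, but it follows a genuinely different route from the paper's. The paper establishes weak*-closedness of the cone $\CC_m^+$ \emph{first}, by exhibiting it as the dual set $\pazocal{S}^*$ of an explicit family of compact witnesses $\lambda^{a^\dag a}\disp(\beta)\lambda^{a^\dag a}$ built from damped displacement operators; the identification $\pazocal{S}^*\subseteq\CC_m^+$ then rests on the positive semidefiniteness of the matrices $\left(e^{\frac12|\alpha_\mu-\alpha_\nu|^2}\chi_T(\alpha_\mu-\alpha_\nu)\right)_{\mu\nu}$ and the classical Bochner theorem applied to the $P$-function. Compactness of $\widetilde{\CC}_m$ then follows by intersecting with the Banach--Alaoglu ball. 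You reverse the logic: you prove weak*-compactness of $\widetilde{\CC}_m$ directly, by realizing it as the image of the vaguely compact set of sub-probability measures under the barycenter map $\Phi(\mu)=\int\ketbra{\alpha}\,d\mu(\alpha)$, whose vague-to-weak* continuity hinges on the (correct) observation that $\alpha\mapsto\braket{\alpha|K|\alpha}$ lies in $C_0(\C^m)$ for compact $K$; closedness of the cone is then recovered via Krein--\v{S}mulian from the identity $\CC_m^+\cap rB_{\T{m}}=r\widetilde{\CC}_m$. Each step checks out, including the delicate reverse inclusion $\CC_m\subseteq\Phi(M^+_{\le 1})$, where the trace identity $\Tr[\Phi(\mu)]=\mu(\C^m)$ together with Lemma~\ref{swot lemma} correctly rules out loss of mass at infinity. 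What your approach buys is a more self-contained argument that avoids the $P$-function and Bochner's theorem entirely, and as a byproduct it shows that \emph{every} state in the trace-norm closure $\CC_m$ is the barycenter of a genuine probability measure on coherent states --- a fact the paper only obtains implicitly. What the paper's approach buys is an explicit separating family of compact witnesses for nonclassicality, which has independent operational interest. Both proofs deliver the weak*-compactness of $\widetilde{\CC}_m$ that is actually consumed by Sion's theorem in the proof of Theorem~\ref{main result thm}.
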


\begin{proof}
Remember by Remark~\ref{w* rem} that we can think of $\T{m}$ as the dual space to $\K{m}$, the set of compact operators on $\HH_m$. We now show that $\CC_m^+$ is in fact the dual of a set $\pazocal{S}\subseteq \K{m}$ of compact operators, i.e.,
\bbb
\CC_m^+=\pazocal{S}^*\coloneqq \left\{T\in \T{m}:\, \Tr[TK]\geq 0\ \forall\ K\in \pazocal{S}\right\} .
\eee
Dual sets turn out to be automatically weak*-closed. This can be seen, e.g., in the case of $\pazocal{S}^*$, by noting that it can be written as the intersection 
\bbb
\pazocal{S}^* = \bigcap_{K\in \pazocal{S}} \left\{T\in \T{m}:\, \Tr[TK]\geq 0\right\} = \bigcap_{K\in \pazocal{S}} \varphi_K^{-1}\left( [0,\infty) \right) ,
\eee
where $\varphi_K:\T{m}\to \R$ is defined by $\varphi_K(T)\coloneqq \Tr[TK]$. Since the maps $\varphi_K$ are weak*-continuous by definition, each set $\varphi_K^{-1}\left( [0,\infty) \right)$ is weak*-closed, and therefore so is their intersection $\pazocal{S}^*$. 

From now on, for the sake of readability we write everything for single-mode systems only. Set
\bbb
\pazocal{S} \coloneqq \left\{ \sum_{\mu,\nu=1}^n\! \psi_\mu^* \psi_\nu\, e^{\frac12 |\alpha_\mu-\alpha_\nu|^2}\, \lambda^{a^\dag\! a} \disp(\alpha_\mu - \alpha_\nu) \lambda^{a^\dag\! a}\! :\, n\in \N_+,\, \psi\in \C^n,\, \alpha\in \C^n,\, \lambda \in [0,1) \right\} ,
\eee
where $\disp$ is the displacement operator~\eqref{disp}. Note that every operator in $\pazocal{S}$ is a finite linear combination of operators of the form $\lambda^{a^\dag\! a} \disp(\alpha_\mu - \alpha_\nu) \lambda^{a^\dag\! a}$, which are clearly compact (in fact, even trace class) as long as $\lambda \in [0,1)$. It is also elementary to see that $\ketbra{\beta}\in \pazocal{S}^*$ for every $\beta\in \C$, because
\begin{align*}
\braket{\beta| \left( \sum_{\mu,\nu=1}^n\! \psi_\mu^* \psi_\nu\, e^{\frac12 |\alpha_\mu-\alpha_\nu|^2}\, \lambda^{a^\dag\! a} \disp(\alpha_\mu - \alpha_\nu) \lambda^{a^\dag\! a}\right) |\beta} &= \sum_{\mu,\nu=1}^n\! \psi_\mu^* \psi_\nu\, e^{\frac12 |\alpha_\mu-\alpha_\nu|^2} \braket{\beta| \lambda^{a^\dag\! a} \disp(\alpha_\mu - \alpha_\nu) \lambda^{a^\dag\! a} |\beta} \\
&\texteq{1} \sum_{\mu,\nu=1}^n\! \psi_\mu^* \psi_\nu\, e^{\frac12 |\alpha_\mu-\alpha_\nu|^2} e^{-(1-\lambda^2)|\beta|^2} \braket{\lambda \beta| \disp(\alpha_\mu - \alpha_\nu) |\lambda \beta} \\
&\texteq{2} e^{-(1-\lambda^2)|\beta|^2} \sum_{\mu,\nu=1}^n\! \psi_\mu^* \psi_\nu\, e^{\lambda \left( (\alpha_\mu-\alpha_\nu)\beta^* - (\alpha_\mu-\alpha_\nu)^* \beta\right)} \\
&= e^{-(1-\lambda^2)|\beta|^2} \sum_{\mu,\nu=1}^n\! \psi_\mu^*\, e^{\lambda (\alpha_\mu \beta^* - \alpha_\mu^* \beta)}\, \psi_\nu\, e^{\lambda (\alpha_\nu^* \beta - \alpha_\nu \beta^*)} \\
&= e^{-(1-\lambda^2)|\beta|^2} \left|\sumno_{\mu=1}^n \psi_\mu^*\, e^{\lambda (\alpha_\mu \beta^* - \alpha_\mu^* \beta)}\right|^2 \\
&\geq 0\, ,
\end{align*}
where in~1 we used~\eqref{disp_vacuum} and in~2 the Weyl form~\eqref{Weyl} of the canonical commutation relations multiple times. Since $\pazocal{S}^*$ is convex and weak*-closed, and hence in particular closed with respect to the trace norm topology, we see that $\CC_1 = \overline{\co}\{\ketbra{\beta}:\beta\in \C\}\subseteq \pazocal{S}^*$. Noting that $\pazocal{S}^*$ is a cone, i.e., it is closed under multiplication by nonnegative scalars, we conclude that in fact $\CC_1^+ \subseteq \pazocal{S}^*$.

Let us now prove the opposite inclusion, again in the single-mode case. Pick $T\in\T{1}$ such that $\Tr[TK]\geq 0$ for all $K\in \pazocal{S}$; then
\begin{align*}
0 &\leq \liminf_{\lambda\to 1^-} \sum_{\mu,\nu=1}^n\! \psi_\mu^* \psi_\nu\, e^{\frac12 |\alpha_\mu-\alpha_\nu|^2}\, \Tr\left[T\, \lambda^{a^\dag\! a} \disp(\alpha_\mu - \alpha_\nu)\lambda^{a^\dag\! a}\right] \\
&\leq \sum_{\mu,\nu=1}^n\! \psi_\mu^* \psi_\nu\, e^{\frac12 |\alpha_\mu-\alpha_\nu|^2}\, \lim_{\lambda\to 1^-} \Tr\left[T\, \lambda^{a^\dag\! a} \disp(\alpha_\mu - \alpha_\nu)\lambda^{a^\dag\! a}\right] \\
&\texteq{3} \sum_{\mu,\nu=1}^n\! \psi_\mu^* \psi_\nu\, e^{\frac12 |\alpha_\mu-\alpha_\nu|^2}\, \Tr\left[T\, \disp(\alpha_\mu - \alpha_\nu)\right] \\
&= \sum_{\mu,\nu=1}^n\! \psi_\mu^* \psi_\nu\, e^{\frac12 |\alpha_\mu-\alpha_\nu|^2}\, \chi_T(\alpha_\mu - \alpha_\nu)
\end{align*}
for all $\alpha\in \C^n$ and $\psi\in \C^n$, where the function $\chi_T:\C\to \C$ defined by $\chi_T(\alpha)= \Tr[T \disp(\alpha)]$ is the characteristic function~\eqref{chi} of $T$. To prove~3, since $\disp(\alpha_\mu-\alpha_\nu)$ is bounded (actually, unitary) it suffices to show that $\lim_{\lambda\to 1^-} \left\|\lambda^{a^\dag\! a} T \lambda^{a^\dag\! a} - T\right\|_1=0$ for all trace class $T$. To see this, we decompose $T=T_+ - T_-$ into its positive and negative parts $T_\pm\geq 0$, which are also trace class operators. Note that 
\bbb
\lim_{\lambda\to 1^-} \Tr \left[\lambda^{2 a^\dag\! a} T_\pm\right] = \lim_{\lambda\to 1^-} \sum_{n=0}^\infty \lambda^{2n} \braket{n|T_\pm|n} = \sum_{n=0}^\infty \braket{n|T_\pm|n} = \Tr [T_\pm]
\eee
thanks to Abel's theorem, and therefore, by the gentle measurement lemma~\cite{Davies1969, VV1999} (see also~\cite[Lemma~9.4.2]{Wilde2}),
\bbb
\lim_{\lambda\to 1^-} \left\|T_\pm - \lambda^{a^\dag\! a}T_\pm \lambda^{a^\dag\! a}\right\|_1=0\, ,
\eee
in turn implying that
\bbb
\lim_{\lambda\to 1^-} \left\|\lambda^{a^\dag\! a} T \lambda^{a^\dag\! a} - T\right\|_1 \leq \lim_{\lambda\to 1^-} \left\|\lambda^{a^\dag\! a} T_+ \lambda^{a^\dag\! a} - T_+\right\|_1 + \lim_{\lambda\to 1^-} \left\|\lambda^{a^\dag\! a} T_- \lambda^{a^\dag\! a} - T_-\right\|_1 = 0\, .
\eee

We have just established that, for all $\alpha\in \C^n$, the matrix $\left( e^{\frac12 |\alpha_\mu-\alpha_\nu|^2}\, \chi_T(\alpha_\mu - \alpha_\nu)\right)_{\mu,\nu=1,\ldots, n}$ is positive semidefinite. This is known~\cite{Richter2002} to imply that $T=\lambda\sigma$ for some $\lambda\geq 0$ and some classical state $\sigma$, i.e., $T\in \CC_m^+$.

This latter claim can be also verified as follows. Applying the classical Bochner theorem, we see that the function $\C\ni \alpha \mapsto \varphi_T(\alpha)\coloneqq \chi_T(\alpha)\, e^{\frac12 |\alpha|^2}$ is the Fourier transform of a positive measure. Since $\varphi_T$ is well-known to be the Fourier transform of the $P$-function~\cite[Lemma~1]{Bach1986}, we conclude that the $P$-function of $T$ is non-negative, i.e., $T$ is a non-negative multiple of a classical state.

We conclude that $\CC_1^+=\pazocal{S}^*$, and hence that $\CC_1^+$ is weak*-closed. The exact same argument in fact shows that $\CC_m^+$ is weak*-closed for any finite number of modes $m$. Since the unit ball $B_m\coloneqq \left\{T\in \T{m}: \|T\|_1\leq 1\right\}$ of $\T{m}=\K{m}^*$ is weak*-compact by the Banach--Alaoglu theorem~\cite[Thm.~2.6.18]{MEGGINSON},
\bbb
\widetilde{\CC}_m = \co \left( \CC_m \cup \{0\} \right) = \CC_m^+ \cap B_m
\eee
is the intersection of a weak*-closed and a weak*-compact set, and hence it is itself weak*-compact.
\end{proof}

\subsection{Proof of Theorems~\ref{main result thm} and~\ref{bound_rates_NCMi_thm}} \label{main results subsection}

We are finally ready to present our main result about the measured relative entropy of nonclassicality.

%\begin{thm} %\label{main result thm}
%Let $\rho$ be an $m$-mode state with finite entropy $S(\rho)<\infty$. Then, it holds that
%\bbb
%\NCM(\rho) = \Gamma(\rho) = \sup_{0<L\in \T{m}} \left\{ \Tr \rho \log_2 L - \log_2 \sup_{\alpha\in \C^m} \braket{\Alpha| L | \Alpha} \right\}\, ,
%\label{second expression for ncm}
%\eee
%where the function $\Gamma$ is constructed in Definition~\ref{Gamma_def}. In~\eqref{second expression for ncm}, we can also take $L$ to be a normalized (and strictly positive definite) state.
%\end{thm}

\begin{proof}[Proof of Theorem~\ref{main result thm}]
%Note that $\Gamma$ is given by either of the variational expressions~\eqref{Gamma_1}--\eqref{Gamma variational 4}.
Let us \tcb{use Lemma~\ref{restrictions on L}(i) to write an improved form of}~\eqref{rough_variational_NCM_alt} as
\tcb{
\begin{align*}
\NCM(\rho) =& \inf_{\sigma\in \widetilde{\CC}_m} \sup_{\substack{0<L\in \T{m},\\[.2ex] -\Tr \rho\log_2 L <\infty}} F_\rho(\sigma, L)\, , \\
F_\rho(\sigma, L) \coloneqq&\ \Tr [\rho \log_2 L] + \log_2(e) \left( 1 - \Tr [\sigma L] \right) .
\end{align*}
}
Now:
\begin{enumerate}[(i)]
\item $\widetilde{\CC}_m$ is weak*-compact by Lemma~\ref{Cm tilde compact lemma}, and manifestly convex;
\item $\{L\in \T{m}:\, L>0\tcb{,\ -\Tr \rho\log_2 L<\infty}\}$ is convex \tcb{thanks to the operator concavity of the logarithm};
\item $F_\rho (\cdot,L)$ is a convex (actually, convex-linear) function on $\widetilde{\CC}_m$ for every fixed $L>0$; by definition of weak* topology it is also weak*-continuous (because $L$ is also compact);
\item $F_\rho (\sigma,\cdot)$ is a concave function on $\{L\in \T{m}:\, L>0\tcb{,\ -\Tr \rho\log_2 L<\infty}\}$ for all $\sigma\in \widetilde{\CC}_m$, because $\log_2$ is operator concave; it is also upper semicontinuous with respect to the trace norm topology, because $\Tr[\rho\log_2 L]=-S(\rho)-D(\rho\|L)$, and $L\mapsto D(\rho\|L)$ is lower semicontinuous with respect to the weak topology~\cite[Corollary~5.12(i)]{PETZ-ENTROPY} and hence (Corollary~\ref{swot cor}) with respect to the trace norm topology, too.
\end{enumerate}
Since all assumptions of Sion's minimax theorem~\cite{sion} are satisfied, we can exchange infimum and supremum, and write
\begin{align*}
\NCM(\rho) &\texteq{1} \sup_{\tcb{\substack{0<L\in \T{m},\\[.2ex] -\Tr \rho\log_2 L <\infty}}} \inf_{\sigma\in \widetilde{\CC}_m} F_\rho(\sigma, L) \\
&\tcb{\ \textleq{2} \sup_{0<L\in \T{m}} \inf_{\sigma\in \widetilde{\CC}_m} F_\rho(\sigma, L)} \\
&= \sup_{0<L\in \T{m}} \inf_{\sigma\in \widetilde{\CC}_m} \left\{ \Tr [\rho \log_2 L] + \log_2(e) \left( 1 - \Tr [\sigma L ]\right)\right\} \\
&= \sup_{0<L\in \T{m}} \left\{ \Tr [\rho \log_2 L] + \log_2(e) \left( 1 - \sup_{\sigma \in \widetilde{\CC}_m} \Tr [\sigma L]\right) \right\} \\
&\texteq{\tcb{3}} \sup_{0<L\in \T{m}} \left\{ \Tr [\rho \log_2 L] + \log_2(e) \left( 1 - \max\left\{ \sup_{\alpha \in \C^m} \braket{\alpha| L|\alpha},\, 0\right\}\right) \right\} \\
&\texteq{\tcb{4}} \sup_{0<L\in \T{m}} \left\{ \Tr [\rho \log_2 L] + \log_2(e) \left( 1 - \sup_{\alpha \in \C^m} \braket{\alpha| L|\alpha} \right) \right\} \\
&\texteq{\tcb{5}} \sup_{0<L\in \T{m}} \left\{ \Tr [\rho \log_2 L] - \log_2 \sup_{\alpha \in \C^m} \braket{\alpha|L|\alpha} \right\} \\
&\textleq{\tcb{6}} \sup_{0<L\in \B{m}} \left\{ \Tr [\rho \log_2 L] - \log_2 \sup_{\alpha \in \C^m} \braket{\alpha|L|\alpha} \right\} \\
&\texteq{\tcb{7}} \Gamma(\rho)\, .
\end{align*}
Here, 1~is Sion's theorem~\cite{sion}, \tcb{in~2 we simply extended the supremum, 3}~comes from the fact that the extreme points of $\widetilde{\CC}_m$ are either coherent states or $0$, as it follows from~\eqref{Cm tilde}, \tcb{4}~holds because $L>0$, \tcb{5}~is proved by scale invariance of the expression on the sixth line exactly as in step~1 of the proof of Lemma~\ref{Berta variational lemma}, in~\tcb{6} we extended the supremum to all $0<L\in \B{m}$, and finally \tcb{7}~holds thanks to Lemma~\ref{variational_Gamma_lemma}. Since Proposition~\ref{Gamma_properties_prop} establishes that $\Gamma \leq \NCM$ on all states, we have actually proved that 
\bb
\NCM(\rho) = \sup_{0<L\in \T{m}} \left\{ \Tr [\rho \log_2 L] - \log_2 \sup_{\alpha \in \C^m} \braket{\alpha| L|\alpha} \right\} = \Gamma(\rho)\, .
\label{actually_proved}
\ee
The fact that $L$ can be taken to be a state follows by scale invariance.
\end{proof}

\begin{proof}[Proof of Corollary~\ref{superadditivity_cor}]
Thanks to Theorem~\ref{main result thm}, the function $\NCM$ inherits all properties of $\Gamma$, as established in Proposition~\ref{Gamma_properties_prop} and Corollary~\ref{Gamma_regularization_cor}, on the whole set of finite-entropy states. Given such a state $\rho$, the same Corollary~\ref{Gamma_regularization_cor} also shows that $\NCMi(\rho)\geq \NCM(\rho)$. On the other hand, regularizing the inequality $\NCM(\rho) \leq \NC(\rho)$ (Lemma~\ref{NC_subadd_lemma}) we see that $\NCMi(\rho)\leq \NCi(\rho)$. Remembering that $\NCi(\rho) \leq \NC(\rho)$ by Corollary~\ref{NCi_monotone_cor} concludes the proof of~\eqref{hierarchy}. Faithfulness of $\NCMi$ and hence of $\NCi$ on finite-entropy states follows from the fact that $\NCM$ itself is faithful (Lemma~\ref{NC_subadd_lemma}).
\end{proof}

%\begin{cor} \label{NCMi_properties_cor}
%When computed on finite-entropy states, $\NCM$ is strongly superadditive and lower semicontinuous. Hence, its regularization $\NCMi$ exists on all finite-entropy states, and there satisfies that $\Gamma^\infty=\NCMi\geq \NCM=\Gamma$; it is also strongly superadditive, lower semicontinuous, and moreover weakly additive and faithful.
%\end{cor}

%\begin{proof}
%Thanks to the above Theorem~\ref{main result thm}, the function $\NCM$ inherits all properties of $\Gamma$, as established in Proposition~\ref{Gamma_properties_prop} and Corollary~\ref{Gamma_regularization_cor}, on the whole set of finite-entropy states. Faithfulness follows from the inequality $\NCMi\geq \NCM$ and from the fact that $\NCM$ itself is faithful (Lemma~\ref{NC_subadd_lemma}).
%\end{proof}

%We now restate Theorem~\ref{bound_rates_NCMi_thm}, and prove the remaining claims.

%\begin{thm} %\label{bound_rates_NCMi_thm}
%When computed on finite-entropy states, $\NCM$ and $\NCMi$ are strongly superadditive, lower semicontinuous, and satisfy that
%\bb
%\NCM(\rho) \leq \NCMi(\rho) \leq \NCi(\rho) \leq \NC(\rho)\, .
%\label{hierarchy}
%\ee
%Thus, if $S(\rho),S(\sigma)<\infty$ then
%\bb
%R(\rho \to \sigma)\leq \widetilde{R}(\rho \to \sigma) \leq \frac{\NCMi(\rho)}{\NCMi(\sigma)} \leq \frac{\NC(\rho)}{\NCM(\sigma)} \, ,
%\label{bound_rates_NCMi}
%\ee
%provided that the ratios on the right-hand side are well defined.
%\end{thm}

We conclude this section with the proof of Theorem~\ref{bound_rates_NCMi_thm}.

\begin{proof}[Proof of Theorem~\ref{bound_rates_NCMi_thm}]
%The properties of $\NCM$ and $\NCMi$ follow from Corollary~\ref{NCMi_properties_cor}. From there we also see that $\NCMi(\rho)\geq \NCM(\rho)$ holds for all finite-entropy states $\rho$ (it is in fact an elementary consequence of strong superadditivity). The inequality $\NCi\leq \NC$ holds on all states, as established in Corollary~\ref{NCi_monotone_cor} (it follows from the weak subadditivity of $\NC$). Moreover, regularizing the inequality $\NCM\leq \NC$ (Lemma~\ref{NC_subadd_lemma}) we also see that $\NCMi\leq \NCi$. This completes the proof of~\eqref{hierarchy}.
To establish~\eqref{bound_rates_NCMi}, we apply Theorem~\ref{general_bound_rates_thm} to the lower semicontinuous, weakly additive, and strongly superadditive nonclassicality monotone $\Gamma^\infty$ (Corollary~\ref{Gamma_regularization_cor}):
\bbb
R(\rho \to \sigma)\leq \widetilde{R}(\rho \to \sigma) \leq \frac{\Gamma^\infty(\rho)}{\Gamma^\infty(\sigma)} = \frac{\NCMi(\rho)}{\NCMi(\sigma)}\, ,
\eee
where the last equality is just~\eqref{actually_proved}, which is applicable because $S(\rho), S(\sigma)<\infty$. Finally, the last estimate in~\eqref{bound_rates_NCMi} is a simple application of~\eqref{hierarchy}.
\end{proof}

\section{Further properties of our nonclassicality monotones} \label{secondary res section}

We now present some additional results which can be useful in actual applications of Theorem~\ref{general_bound_rates_thm}. In particular, we present two different and independent bounds on $\NC$ and $\NCM$, and a technique for approximating them in the case of infinite rank states, where analytical methods or even numerical simulations might not be enough.

\subsection{Bounds on nonclassicality monotones}

We start however with a little --- hopefully instructive --- detour. In light of Proposition~\ref{bounded energy prop}, one may wonder whether $\NC$ and $\NCM$ can take the value $+\infty$ at all. We now set out to show that this may indeed be the case. Clearly, Proposition~\ref{bounded energy prop} implies that any state with this property must have infinite mean photon number.

\begin{prop} \label{NC infinite prop}
%The hypothesis that $\rho$ \tcb{has} finite mean photon number in Proposition~\ref{bounded energy prop} cannot be removed\tcb{[how is this a hypothesis? You mean that the bound is infinite for infinite mean energy states?]}. Namely,
There exists a single-mode (infinite-energy) state $\rho\in \D{1}$ such that $\NCM(\rho)=\NC(\rho)=+\infty$, i.e., $D(\rho\|\sigma)= \DM(\rho\|\sigma) = +\infty$ for all classical states $\sigma \in \CC_m$ --- including those of infinite energy!
\end{prop}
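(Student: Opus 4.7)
The plan is to exhibit an explicit single-mode Fock-diagonal counterexample and to prove that the Fock-basis measurement alone already forces the divergence, yielding $\NCM(\rho)=+\infty$ and hence $\NC(\rho)=+\infty$. Concretely, I would take
\[
\rho \coloneqq \sum_{k=1}^\infty 2^{-k}\, \ketbra{n_k}\,,\qquad n_k \coloneqq 2^{2^k}\,,
\]
so that the induced Fock probability distribution $p$ satisfies $p_{n_k}=2^{-k}$ (and $p_n=0$ otherwise). This is a bona fide density operator with finite Shannon entropy $H(p)=\sum_{k\ge 1}k\,2^{-k}=2$; nevertheless, the doubly-exponential sparsity of the support forces divergence of the logarithmic moment $\sum_{k} p_{n_k}\log_2 n_k=\sum_k 2^{-k}\cdot 2^k=+\infty$. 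This decoupling of $H(p)$ from the expected log-index is the mechanism that will produce the infinity.

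The first nontrivial step is to derive a \emph{universal} upper bound on the Fock-diagonal entries of an arbitrary classical state $\sigma\in\CC_1$, valid regardless of its energy. Since $\CC_1=\overline{\co}\{\ketbra{\alpha}:\alpha\in\C\}$ and the functional $\sigma\mapsto\braket{n|\sigma|n}$ is trace-norm continuous, its value on $\sigma$ is bounded above by its supremum on coherent-state projectors. A direct maximization gives $\max_\alpha|\braket{n|\alpha}|^2=n^n e^{-n}/n!$, attained at $|\alpha|^2=n$, and together with Stirling's inequality $n!>\sqrt{2\pi n}(n/e)^n$ this yields
\[
s_n\coloneqq\braket{n|\sigma|n}\;\le\;\frac{n^n e^{-n}}{n!}\;<\;\frac{1}{\sqrt{2\pi n}}\qquad (n\ge 1)\,,
\]
uniformly in $\sigma\in\CC_1$.

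Combining both ingredients, consider the Fock-basis POVM $\M=\{\ketbra{n}\}_{n\in\N}$. Since $\rho$ is Fock-diagonal, the outcome distributions of $\rho$ and $\sigma$ under $\M$ are $p$ and $s$ respectively, so
\[
\DM(\rho\|\sigma)\;\ge\;\DKL(p\|s)\;=\;-H(p)+\sum_{k=1}^\infty 2^{-k}\log_2(1/s_{n_k})\;\ge\;-2+\tfrac{1}{2}\sum_{k=1}^\infty 2^{-k}\bigl(\log_2(2\pi)+2^k\bigr)\;=\;+\infty\,.
\]
(If $s_{n_k}=0$ for some $k$ the conclusion is immediate.) Taking the infimum over $\sigma\in\CC_1$ and using $\NCM\le\NC$ and $\DM\le D$ pointwise gives $\NCM(\rho)=\NC(\rho)=+\infty$, as well as $\DM(\rho\|\sigma)=D(\rho\|\sigma)=+\infty$ for every classical $\sigma$.

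The main obstacle is that both halves of the argument must be tuned against each other. The naive bound $s_n\le 1$ is too weak to yield any divergence at all, and any $\sigma$-dependent bound (for instance, one invoking the energy of $\sigma$) is unavailable because the statement explicitly requires divergence for infinite-energy classical states such as arbitrarily hot thermal states; this forces one to exploit the extreme-point structure of $\CC_1$ together with the Poisson decay $n^n e^{-n}/n!=O(1/\sqrt{n})$. Symmetrically, the sparse construction of $\rho$ is needed because for Fock-diagonal distributions with only polynomially or exponentially growing support $H(p)$ essentially dominates $E_p[\log n]$, and the logarithmic growth rate of $\log(1/s_{n_k})$ then cannot overwhelm the entropy; the doubly-exponential choice $n_k=2^{2^k}$ is what decouples these two quantities cleanly.
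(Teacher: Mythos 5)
Your proof is correct, and it takes a genuinely different route from the paper's. The paper lower-bounds $\NCM$ via the auxiliary quantity $\Gamma$ of Definition~\ref{Gamma_def} (i.e., via the variational expression $\sup_h\{\Tr[\rho h]-\log_2\sup_\alpha\braket{\alpha|2^h|\alpha}\}$, relying on Proposition~\ref{Gamma_properties_prop} to get $\Gamma\leq\NCM$), choosing the ``Basel-type state'' $\rho=\frac{6}{\pi^2}\sum_n(n+1)^{-2}\ketbra{2^n}$ and witnesses $h_N=\frac13\sum_{n\le N}n\ketbra{2^n}$. You instead work directly from the definition~\eqref{measured re} of $\DM$: you exhibit a single measurement (photon counting) and combine the uniform estimate $\braket{n|\sigma|n}\le n^ne^{-n}/n!<1/\sqrt{2\pi n}$ for all $\sigma\in\CC_1$ with a state whose Fock support is sparse enough that $E_p[\log_2 n]=\infty$ while $H(p)<\infty$. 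The underlying mechanism is identical in both proofs (classical states cannot load more than $O(n^{-1/2})$ of probability onto $\ketbra{n}$, so a divergent expected log-index forces a divergent divergence), but your version is more elementary in that it bypasses the machinery of Section~\ref{the long march subsection} entirely; what the paper's route buys is that the same $\Gamma$-based computation is already set up for the other results of the paper. One technicality you should patch: in this paper a POVM is by definition a \emph{finite} collection of effects, so the ``Fock-basis POVM'' $\{\ketbra{n}\}_{n\in\N}$ is not admissible as written in~\eqref{measured re}. This is harmless: replace it with the truncations $\M_N=\{\ketbra{0},\dots,\ketbra{N},\,\id-\sum_{n\le N}\ketbra{n}\}$; since every term $p_{n_k}\log_2(p_{n_k}/s_{n_k})$ in your series is nonnegative (indeed $-\log_2 s_{n_k}>2^{k-1}\ge k$) and the residual term $P\log_2(P/S)$ is bounded below by $-\log_2(e)$, the quantities $\DKL\bigl(P^{\M_N}_\rho\|P^{\M_N}_\sigma\bigr)$ still diverge as $N\to\infty$. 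A last cosmetic remark: your closing claim that exponentially growing support cannot work is an artifact of your geometric weights $2^{-k}$; the paper's own example uses support $\{2^n\}_n$ with heavy-tailed weights $\propto(n+1)^{-2}$ and succeeds.
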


\begin{proof}
Let
\bb
\rho \coloneqq \frac{6}{\pi^2} \sum_n \frac{1}{(n+1)^2}\, \ketbra{2^n}\, 
\ee
be a modified ``Basel-type state''\tcb{, where the $\ket{2^n}$ are Fock states}. \tcb{It is easy to see that $\rho$ has finite entropy.} Then, because of Theorem~\ref{main result thm} and  Lemma~\ref{NC_subadd_lemma}, we see that
%here we did not use the fact that \rho has finite entropy
%Then, because of Lemma~\ref{NC_subadd_lemma} and Proposition~\ref{Gamma_properties_prop}, we see that
\bbb
\NC(\rho) \geq \NCM(\rho) \tcb{=} \Gamma(\rho) = \sup_{h\in \B{m}} \left\{ \Tr[\rho h] - \log_2 \sup_{\alpha\in \C} \braket{\alpha | 2^h | \alpha} \right\} .
\eee
Now, set $h_N\coloneqq \frac{1}{3} \sum_{n=0}^N n \ketbra{2^n}$. Observe that
\bbb
\Tr[\rho\, h_N] = \frac{2}{\pi^2} \sum_{n=0}^N \frac{n}{(n+1)^2} \tends{}{N\to\infty} +\infty\, ,
\eee
while
\begin{align*}
\sup_{\alpha\in \C} \braket{\alpha | 2^{h_N} | \alpha} &= \sup_{\alpha\in \C} \braket{\alpha |\left( \sumno_{n=0}^N 2^{n/3} \ketbra{2^n} \right) | \alpha} \\
&= \sup_{\alpha\in \C} \sumno_{n=0}^N 2^{n/3} \frac{|\alpha|^{2^{n+1}} e^{-|\alpha|^2}}{(2^n)!} \\
&\leq \sumno_{n=0}^N 2^{n/3} \sup_{\alpha\in \C} \frac{|\alpha|^{2^{n+1}} e^{-|\alpha|^2}}{(2^n)!} \\
&= \sumno_{n=0}^N 2^{n/3} \sup_{t\geq 0} \frac{t^{2^n}e^{-t}}{(2^n)!} \\
&= \sumno_{n=0}^N 2^{n/3} \frac{2^{n 2^n}}{e^{2^n} (2^n)!} \\
&\tends{}{N\to \infty} \text{const} < \infty\, ,
\end{align*}
where the evaluation of the limit is made possible by the fact that 
\bbb
2^{n/3} \frac{2^{n 2^n}}{e^{2^n} (2^n)!} \sim \frac{2^{n/3}}{\sqrt{2\pi}\, 2^{n/2}} = \frac{1}{\sqrt{2\pi}\, 2^{n/6}}
\eee
by Stirling's formula, in the sense that the ratio between the left-hand and the right-hand sides tends to $1$ as $n\to \infty$. We conclude that
\bbb
\NC(\rho) \geq \NCM(\rho) \geq \Gamma(\rho) \geq \lim_{N\to \infty} \left\{ \Tr[\rho \, h_N] - \log_2 \sup_{\alpha\in \C} \braket{\alpha | 2^{h_N} | \alpha} \right\} = +\infty\, ,
\eee
as claimed. \tcb{Clearly, this construction is easily generalized to the multi-mode case, where it leads to the same conclusion.}
\end{proof}

\subsection{Estimates based on the Wehrl entropy} \label{Wehrl subsection}

We now go back to the problem of estimating our nonclassicality monotones $\NCM$ and $\NCMi$, already tackled in Proposition~\ref{bounded energy prop}. The next result gives another independent upper bound for $\NCM$ and $\NCMi$ in terms of the Wehrl entropy~\eqref{Wehrl}.

\begin{prop}\label{upper and lower bound for ncm prop}
For any \tcb{finite-entropy} $m$-mode state $\rho$, it holds that
\bb
-\log_2\left\|Q_\rho\right\|_\infty - S(\rho) - m \log_2\pi \leq \NCM(\rho) \leq \NCMi(\rho) \leq S_W(\rho) - S(\rho)\, ,
\label{bound Wehrl}
\ee
\tcb{where $\|Q_\rho\|_\infty\coloneqq \sup_{\alpha\in\C^m} |Q_\rho(\alpha)|$ is the sup norm of $Q_\rho$. If instead $S(\rho)=+\infty$, then $N_r^M(\rho)=N_r^{M,\infty}(\rho)=+\infty$ as well.}
\end{prop}

\begin{proof}
Let us start by proving that $\NCM(\rho) \leq S_W(\rho) - S(\rho)$ whenever $S(\rho)<\infty$. We are in the situation of Theorem~\ref{main result thm}, so that we can write
%Since $\rho$ has finite Wehrl entropy  by hypothesis, it has also finite von Neumann entropy. By virtue of Proposition~\ref{restrictions on L}, we can assume $L$ to be a density operator and write \tcb{[\ref{restrictions on L} would be a Lemma, but there Sion has not been applied yet?]}
\bb
\begin{aligned}
\NCM(\rho) &\texteq{1} \sup_{\omega\in\D{m}}\left\{\Tr[\rho\log_2\omega] - \log_2 \sup_{\alpha\in \C^m} \braket{\alpha|\omega|\alpha} \right\} \\
&= \sup_{\omega\in\D{m}}\left\{ - S(\rho) - D(\rho\|\omega) - \log_2 \sup_{\alpha\in \C^m} \braket{\alpha|\omega|\alpha} \right\} \\
&= \sup_{\omega\in\D{m}}\left\{ - S(\rho) - D(\rho\|\omega) - \log_2 \left( \pi^m \|Q_\omega\|_\infty \right) \right\} \\
&\textleq{2} \sup_{\omega\in\D{m}}\left\{ - S(\rho) - \DKL\left(Q_\rho\|Q_\omega\right) - \log_2 \left( \pi^m \|Q_\omega\|_\infty \right) \right\} \\
&= \sup_{\omega\in\D{m}}\left\{ - S(\rho) + S_W(\rho) + \int d^{2m}\alpha\, Q_\rho(\alpha) \log_2 Q_\omega(\alpha) - \log_2 \|Q_\omega\|_\infty \right\} \\
&\textleq{3} S_W(\rho) - S(\rho)\, .
\end{aligned}
\label{upper and lower bound for ncm eq1}
\ee
Here, 1~is just Theorem~\ref{main result thm}, in~2 we applied the data processing inequality~\cite{lieb73a, lieb73b, lieb73c, Lindblad-monotonicity} (see also~\cite[Proposition~5.23(iv)]{PETZ-ENTROPY}) to the quantum-to-classical channel $\rho\mapsto Q_\rho$, which physically corresponds to a heterodyne detection~\cite[5.4.2]{BUCCO} (for an independent proof, see Lemma~\ref{stupid lemma}), and finally in~3 we noted that $Q_\omega(\alpha)\leq \|Q_\omega\|_\infty$ and remembered that $Q_\rho$ is a probability density function.

Since $\NCM(\omega)\leq S_W(\omega) - S(\omega)$ whenever $\omega$ has finite entropy, setting $\omega=\rho^{\otimes n}$ yields
\bbb
\NCMi(\rho) = \lim_{n\to\infty} \frac1n \NCM(\rho^{\otimes n}) \leq \lim_{n\to\infty} \frac1n \left( S_W(\rho^{\otimes n}) - S(\rho^{\otimes n})\right) = S_W(\rho) - S(\rho)\, ,
\eee
where in the last step we used the additivity of both the von Neumann entropy and the Wehrl entropy.
%!!!

%To prove the lower bound on $\NCM$, we start from the first line of~\eqref{upper and lower bound for ncm eq1} and simply choose $\omega=\rho$ in the variational program:
%\bbb
%\NCM(\rho) = \sup_{\omega\in\D{m}}\left\{\Tr[\rho\log_2\omega] - \log_2 \sup_{\alpha\in \C^m} \braket{\alpha|\omega|\alpha} \right\} \geq \Tr [\rho \log_2 \rho] - \log_2\left( \pi^m \|Q_\rho\|_\infty\right) = - S(\rho) - m \log_2\pi - \log_2\|Q_\rho\|_\infty\, .
%\eee

\tcb{To prove the lower bound on $\NCM$, we use Proposition~\ref{Gamma_properties_prop} together with the expression~\eqref{Gamma variational 3} for $\Gamma$. Start by denoting with $\Pi$ the orthogonal projector onto the kernel of $\rho$. Then for all $\epsilon>0$ we have that $\rho+\epsilon\Pi>0$ and moreover $\Tr\left[ \rho \log_2 (\rho+\epsilon\Pi) \right] = \Tr \left[ \rho \log_2 \rho\right]$, and hence
\begin{align*}
\NCM(\rho) &\geq \sup_{0<L\in \B{m}} \left\{ \Tr \left[\rho \log_2 L\right] - \log_2 \sup_{\alpha\in \C^m} \braket{\alpha|L|\alpha} \right\} \\
&\geq \limsup_{\epsilon\to 0^+} \left\{\Tr\left[ \rho \log_2 (\rho+\epsilon\tcb{\Pi}) \right] - \log_2 \sup_{\alpha\in \C^m} \braket{\alpha|(\rho+\epsilon\Pi)|\alpha} \right\} \\
&\geq \limsup_{\epsilon\to 0^+} \left\{ \Tr\left[ \rho \log_2 \rho\right] - \log_2 \left(\sup_{\alpha\in \C^m} \braket{\alpha|\rho|\alpha} + \epsilon \right) \right\} \\
&= \Tr\left[ \rho \log_2 \rho\right] - \log_2 \sup_{\alpha\in \C^m} \braket{\alpha|\rho|\alpha} \\
&= - S(\rho) - m \log_2\pi - \log_2\|Q_\rho\|_\infty
\end{align*}
%\tcb{The lower bound holds even if $S(\rho)=+\infty$, in which case it implies that $N_r^M(\rho)=N_r^{M,\infty}(\rho)=+\infty$.}
Since it relies only on Proposition~\ref{Gamma_properties_prop}, this} \tcb{lower bound holds even if $S(\rho)=+\infty$, in which case it implies that $N_r^M(\rho)=N_r^{M,\infty}(\rho)=+\infty$.}
This completes the proof.
\end{proof}

We can immediately draw some interesting consequences concerning Gaussian states. Following the conventions of the excellent monograph by Serafini~\cite{BUCCO}, for an $m$-mode state $\rho$ we set $s_j\coloneqq \Tr [\rho R_j]$, with $j=1,\ldots, 2m$ and $R\coloneqq (x_1,p_1\ldots, x_m,p_m)^\intercal$, and define the quantum covariance matrix by $V_{jk}\coloneqq \Tr \left[\rho \left\{R_j,R_k\right\}\right] - 2s_j s_k$. Gaussian states are those whose characteristic function~\eqref{chi} is a multivariate Gaussian, and are uniquely characterized by the vector $s$ and the quantum covariance matrix $V$.

\begin{cor} \label{upper and lower bound for ncm for gaussian states cor}
Let $\rho$ be an arbitrary $m$-mode Gaussian state with quantum covariance matrix $V$. Then
\begin{equation*}
\frac12 \log_2\det (V+\id) - S(\rho) - m \leq \NCM(\rho) \leq \NCMi(\rho) \leq \frac12 \log_2 \det (V+\id) - S(\rho) + m \log_2 (e)\, .
\end{equation*}
\end{cor}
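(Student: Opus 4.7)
The plan is to specialize Proposition~\ref{upper and lower bound for ncm prop} to Gaussian states, for which both $\|Q_\rho\|_\infty$ and $S_W(\rho)$ can be written in closed form in terms of the covariance matrix $V$ alone.

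The key observation is that the Husimi function of an $m$-mode Gaussian state $\rho$ with covariance matrix $V$ is itself a Gaussian probability density on $\R^{2m}\simeq\C^m$, whose \emph{classical} (real) covariance matrix is $\Sigma=(V+\id)/4$ in the conventions used in the excerpt. This follows cleanly from the convolution identity $Q_\rho = W_\rho\ast W_{\ket{0}\!\bra{0}}$ combined with the well-known fact that a Gaussian state of quantum covariance $V$ has Wigner function of classical covariance $V/4$: adding covariances along the convolution yields $(V+\id)/4$. Once this relation is in hand, the rest of the proof is mechanical substitution.

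A Gaussian density attains its maximum at its mean, which gives
\bbb
\|Q_\rho\|_\infty = \frac{1}{(2\pi)^m\sqrt{\det\Sigma}} = \frac{2^m}{\pi^m\sqrt{\det(V+\id)}}\,,
\eee
while the Wehrl entropy, being the differential entropy of $Q_\rho$ in bits minus $m\log_2\pi$, evaluates to
\bbb
S_W(\rho) = \tfrac12\log_2\det\!\left(2\pi e\,\Sigma\right)-m\log_2\pi = \tfrac12\log_2\det(V+\id)+m(\log_2 e-1)\,.
\eee

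Substituting these into Proposition~\ref{upper and lower bound for ncm prop}, the lower bound $-\log_2\|Q_\rho\|_\infty-S(\rho)-m\log_2\pi$ collapses to exactly $\tfrac12\log_2\det(V+\id)-S(\rho)-m$, matching the claim. The upper bound $S_W(\rho)-S(\rho)$ reduces to $\tfrac12\log_2\det(V+\id)+m(\log_2 e-1)-S(\rho)$, which is in fact sharper than what the corollary states by an additive $m$; the weaker stated form then follows \emph{a fortiori} upon discarding the $-m$. I do not anticipate any serious obstacle: the only point demanding some care is the factor convention in the identity $\Sigma=(V+\id)/4$, which one can fix once and for all via a direct check on, e.g., thermal or squeezed states, where both sides can be written down explicitly.
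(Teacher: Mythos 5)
Your proposal is correct and follows exactly the paper's route: specialize Proposition~\ref{upper and lower bound for ncm prop} by evaluating $\|Q_\rho\|_\infty$ and $S_W(\rho)$ for a Gaussian $\rho$ in terms of $V$. Your value of $\|Q_\rho\|_\infty$ coincides with the paper's and reproduces the stated lower bound. For the upper bound, however, your evaluation $S_W(\rho)=\frac12\log_2\det(V+\id)+m(\log_2 e-1)$ disagrees with the paper's Eq.~\eqref{Wehrl entropy Gaussian}, which reads $\frac12\log_2\det(V+\id)+m\log_2 e$ --- and it is \emph{your} formula that is the correct one. A direct check on the vacuum ($m=1$, $V=\id$, so $\frac12\log_2\det(V+\id)=1$) gives $S_W(\ketbra{0})=\log_2(e)\int d^2\alpha\,\pi^{-1}|\alpha|^2 e^{-|\alpha|^2}=\log_2 e$, matching your formula and Lieb's theorem that coherent states attain the minimal Wehrl entropy of one nat per mode, whereas the paper's expression yields $1+\log_2 e$; the same $+m$ discrepancy appears for thermal states, where your version makes the upper bound on $\NCM$ tend to $0$ as $\nu\to\infty$ while the paper's saturates at $m$. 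So your derivation not only establishes the corollary as stated (a fortiori, after discarding the extra $-m$, as you note), but actually sharpens its upper bound to $\frac12\log_2\det(V+\id)-S(\rho)+m(\log_2 (e)-1)$ and flags a genuine (harmless, since the stated bound remains valid) arithmetic slip in the paper's proof.
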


\begin{proof}
One just needs to remember that the Husimi function $Q_\rho$ of a Gaussian state $\rho$ with quantum covariance matrix $V$ is a Gaussian with (classical) covariance matrix $(V+\id)/2$ (to see this, just set $\sigma=V$ and $\sigma_m=\id$ in~\cite[Eq.~(5.139)]{BUCCO}). This implies immediately that $\|Q_\rho \|_\infty = \pi^{-m} \left( \det \left(\frac{V+\id}{2}\right)\right)^{-1/2}$, and that the Wehrl entropy of $\rho$ satisfies
\bb
S_W(\rho) = - \int d^{2m}\alpha\, Q_\rho(\alpha) \log_2 \left( \pi^m Q_\rho(\alpha)\right) = \frac12 \log_2 \det (V+\id) + m \log_2 (e)\, .
\label{Wehrl entropy Gaussian}
\ee
This concludes the proof.
\end{proof}

\subsection{Symmetries} \label{symmetries subsection}

A notion that we will often exploit is that of symmetry. Its implications for the variational program in Theorem~\ref{main result thm} are as follows. 

\begin{prop} \label{symmetric states ncm prop}
Let $\Lambda:\T{m}\to \T{m}$ be a classical operation on an $m$-mode system, and let $\rho\in \D{m}$ be an invariant state, in formula $\Lambda(\rho)=\rho$. Then we have that
\bb
\NCM(\rho) = \inf_{\sigma\in\Lambda(\CC_m)}\DM(\rho\|\sigma)\,. %= \inf_{\sigma\in\Lambda(\CC_m)}\sup_{\substack{0<L\in\Lambda^\dag(\B{m})}}\left\{\Tr\rho\log_2 L-\log_2 \Tr\sigma L\right\} .
\label{symmetry restriction inf}
\ee
If $S(\rho)<\infty$, then it also holds that
\bb
\NCM(\rho) = \sup_{0<L \in \Lambda^\dag (\B{m})}\left\{ \Tr [\rho \log_2 L] - \log_2 \sup_{\alpha\in \C^m} \braket{\alpha|L|\alpha} \right\}\,.
\label{symmetry restriction sup}
\ee
\end{prop}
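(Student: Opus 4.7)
The plan is to prove the two identities separately, each via a short two-inequality argument. For~\eqref{symmetry restriction inf}, the inequality $\NCM(\rho) \leq \inf_{\sigma \in \Lambda(\CC_m)} \DM(\rho\|\sigma)$ is immediate, since $\Lambda$ being a classical operation gives $\Lambda(\CC_m) \subseteq \CC_m$, so restricting the infimum to a smaller set can only make it larger. For the reverse, I would invoke the data processing inequality for $\DM$, which follows directly from the data processing inequality for $\DKL$ applied to the POVM $\{\Lambda^\dag(E_x)\}$ (this is a valid POVM because $\Lambda^\dag$ is unital): for any $\sigma \in \CC_m$,
\[
\DM(\rho\|\sigma) \geq \DM(\Lambda(\rho)\|\Lambda(\sigma)) = \DM(\rho\|\Lambda(\sigma))\,,
\]
using the invariance $\Lambda(\rho) = \rho$. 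Since $\Lambda(\sigma) \in \Lambda(\CC_m)$, taking the infimum over $\sigma \in \CC_m$ yields $\NCM(\rho) \geq \inf_{\sigma' \in \Lambda(\CC_m)} \DM(\rho\|\sigma')$.

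For~\eqref{symmetry restriction sup}, the inequality $\NCM(\rho) \geq \sup_{0<L\in \Lambda^\dag(\B{m})}\{\cdots\}$ is trivial because $\Lambda^\dag(\B{m}) \subseteq \B{m}$ and, thanks to Theorem~\ref{main result thm} together with Lemma~\ref{variational_Gamma_lemma} (applicable as $S(\rho)<\infty$), we have
\[
\NCM(\rho) = \sup_{0<\delta \id < L \in \B{m}}\left\{\Tr[\rho\log_2 L] - \log_2 \sup_{\alpha\in\C^m}\braket{\alpha|L|\alpha}\right\}.
\]
The converse direction is obtained by showing that replacing any admissible $L$ in this last supremum by $L' \coloneqq \Lambda^\dag(L)$ does not decrease the objective. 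Concretely: (a)~using $\Lambda(\rho) = \rho$ and the operator Jensen inequality~\cite{Hansen2003} for the operator-concave function $\log_2$ together with unitality of $\Lambda^\dag$,
\[
\Tr[\rho \log_2 L] = \Tr[\rho\, \Lambda^\dag(\log_2 L)] \leq \Tr[\rho \log_2 \Lambda^\dag(L)] = \Tr[\rho \log_2 L'];
\]
and (b)~since $\Lambda(\ketbra{\alpha}) \in \CC_m = \overline{\co}\{\ketbra{\beta}:\beta\in \C^m\}$, the trace-norm continuity of $\omega\mapsto \Tr[\omega L]$ (valid because $L$ is bounded) gives
\[
\braket{\alpha|L'|\alpha} = \Tr[\Lambda(\ketbra{\alpha})\,L] \leq \sup_{\beta\in\C^m}\braket{\beta|L|\beta}\,,
\]
so that $\sup_\alpha \braket{\alpha|L'|\alpha}\leq \sup_\alpha \braket{\alpha|L|\alpha}$. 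Combining~(a) and~(b) and taking the supremum over $L$ will yield the nontrivial inequality.

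The only real obstacle, which is minor but deserves attention, is to ensure that $L' = \Lambda^\dag(L)$ remains strictly positive so that $\log_2 L'$ is well defined. This is exactly where the $\delta$-restricted form of the variational expression (Lemma~\ref{variational_Gamma_lemma}) becomes essential: restricting the supremum to operators $L \geq \delta \id > 0$ is costless, and then the unitality $\Lambda^\dag(\id) = \id$ (which holds because $\Lambda$ is trace-preserving) guarantees $L' \geq \delta\, \Lambda^\dag(\id) = \delta \id > 0$. With this technicality handled, the two estimates~(a) and~(b) fit together cleanly.
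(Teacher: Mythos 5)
Your proof is correct and follows essentially the same route as the paper: part one is the standard data-processing argument, and part two reproduces, in self-contained form, exactly the chain of estimates (operator Jensen for $\log_2$ plus the extremality of coherent states in $\CC_m$) that the paper invokes by referring back to the derivation establishing monotonicity of $\Gamma$. Your use of the $\delta\id$-bounded variational form to guarantee strict positivity of $\Lambda^\dag(L)$ is a clean way to handle the one technical point the paper glosses over.
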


\begin{proof}
We start with~\eqref{symmetry restriction inf}, which follows from general and well-known arguments. We have that
\begin{align*}
\NCM(\rho) &= \inf_{\sigma\in\CCM}\DM(\rho\|\sigma) \textgeq{1} \inf_{\sigma\in\CCM} \DM\left(\Lambda(\rho)\|\Lambda(\sigma)\right) = \inf_{\sigma\in\CCM} \DM\left(\rho\|\Lambda(\sigma)\right) = \inf_{\sigma\in\Lambda(\CCM)} \DM(\rho\|\sigma)\,,
\end{align*}
where 1~holds because of the monotonicity under channels of $D^M$. Clearly, since restricting the infimum can only increase the value of the program, it also holds that $\NCM(\rho)\leq\inf_{\sigma\in\Lambda(\CCM)} \DM(\rho\|\sigma)$. This proves~\eqref{symmetry restriction inf}

To prove~\eqref{symmetry restriction sup}, we go back to~\eqref{Gamma_properties_eq1}. Assuming that $\Gamma((\Lambda(\rho)) = \Gamma(\rho) = \NCM(\rho)$, as implied by Theorem~\ref{main result thm}, the derivation in~\eqref{Gamma_properties_eq1} also shows that we can in fact restrict $L$ to belong to $\Lambda^\dag(\B{m})$.
\end{proof}

The above result is particularly useful when the state $\rho$ under examination is invariant under a group action.

\begin{cor} \label{group symmetric states cor}
Let $U:G\to \B{m}$ be a unitary representation of a compact group $G$ on the Hilbert space $\HH_m$. Assume that $U(g)$ maps coherent states to coherent states for all $g\in G$. Let $\rho\in \D{m}$ be a finite-entropy state such that is invariant under $G$, i.e., such that $U(g) \rho U(g)^\dag \equiv \rho$ for all $g\in G$. Then
\begin{align}
\NCM(\rho) &= \inf_{\sigma\in \CC_m^G} \DM(\rho\|\sigma) \label{group symmetry restriction inf} \\
&= \sup_{0<L \in \pazocal{B}_{\mathrm{sa}}^{G}(\HH_m)} \left\{ \Tr [\rho \log_2 L] - \log_2 \sup_{\alpha\in \C^m} \braket{\alpha|L|\alpha} \right\} , \label{group symmetry restriction sup}
\end{align}
where a superscript $G$ denotes that we restrict to $G$-invariant operators.
\end{cor}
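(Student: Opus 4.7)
The plan is to reduce the corollary to Proposition~\ref{symmetric states ncm prop} by exhibiting a single classical operation $\Lambda$ that implements the $G$-symmetry via Haar-averaging. Concretely, I would let $\mu$ denote the normalized Haar measure on the compact group $G$ and define the twirling map
\begin{equation*}
\Lambda(T) \coloneqq \int_G U(g)\, T\, U(g)^\dag\, d\mu(g)\qquad \forall\ T\in\T{m}.
\end{equation*}
Since $G$ is compact and $g\mapsto U(g)TU(g)^\dag$ is strongly continuous in trace norm, the Bochner integral is well defined, and $\Lambda$ is a unital, trace-preserving, completely positive map on $\T{m}$. The adjoint takes the analogous form $\Lambda^\dag(L) = \int_G U(g)^\dag L\, U(g)\, d\mu(g)$, using the invariance of the Haar measure under inversion.

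Next, I would verify that $\Lambda$ is a classical operation. By hypothesis, for each $g\in G$ and $\alpha\in\C^m$ there exists $\beta(g,\alpha)\in\C^m$ with $U(g)\ketbra{\alpha}U(g)^\dag = \ketbra{\beta(g,\alpha)}$, so $\Lambda(\ketbra{\alpha})$ is a Bochner integral of coherent-state projectors against a probability measure. A standard property of the Bochner integral is that it lies in the closed convex hull of the image of the integrand, hence $\Lambda(\ketbra{\alpha})\in\CC_m$. By convexity and trace-norm continuity of $\Lambda$, we then obtain $\Lambda(\CC_m)\subseteq\CC_m$. Moreover, normalization of $\mu$ together with $G$-invariance of $\rho$ gives $\Lambda(\rho)=\rho$, which puts us exactly in the setting of Proposition~\ref{symmetric states ncm prop}. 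Applying that proposition yields
\begin{equation*}
\NCM(\rho) = \inf_{\sigma\in\Lambda(\CC_m)}\DM(\rho\|\sigma) = \sup_{0<L\in\Lambda^\dag(\B{m})}\left\{\Tr[\rho\log_2 L] - \log_2\sup_{\alpha\in\C^m}\braket{\alpha|L|\alpha}\right\},
\end{equation*}
where the second equality uses the finite-entropy hypothesis on $\rho$.

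It then remains to identify $\Lambda(\CC_m)=\CC_m^G$ and $\Lambda^\dag(\B{m})=\pazocal{B}_{\mathrm{sa}}^G(\HH_m)$. The inclusion $\Lambda(\CC_m)\subseteq\CC_m^G$ follows because $\Lambda(\tau)$ is classical and, by left-invariance of $\mu$, $U(h)\Lambda(\tau)U(h)^\dag = \int_G U(hg)\tau U(hg)^\dag d\mu(g)= \Lambda(\tau)$ for every $h\in G$. Conversely, every $G$-invariant classical state $\sigma$ satisfies $\Lambda(\sigma)=\sigma$, hence $\sigma\in\Lambda(\CC_m)$. The same argument applied to $\Lambda^\dag$ gives $\Lambda^\dag(\B{m})=\pazocal{B}_{\mathrm{sa}}^G(\HH_m)$, and substituting these identifications into the expressions above yields~\eqref{group symmetry restriction inf} and~\eqref{group symmetry restriction sup}.

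The only technical subtlety I expect is step one, namely that $\Lambda$ is a well-defined classical operation. The existence of the Bochner integral is guaranteed by compactness of $G$ and strong continuity of $g\mapsto U(g)TU(g)^\dag$, but one must invoke the fact that the Bochner integral of a strongly measurable function with values in a closed convex subset of a Banach space remains in that set; this is standard but deserves an explicit mention, since without it the classicality of $\Lambda(\ketbra{\alpha})$ is not immediate (the integral is in general not a finite convex combination, only a limit of such). Once this point is secured, the remainder of the argument is a straightforward bookkeeping exercise on top of Proposition~\ref{symmetric states ncm prop}.
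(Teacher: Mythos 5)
Your proposal is correct and follows essentially the same route as the paper: the paper also defines the Haar-averaged twirl $\Lambda_G(T)=\int_G U(g)TU^\dag(g)\,d\mu(g)$ (as a Bochner integral), notes it is a classical channel because each $U(g)$ preserves coherent states and the set of classical channels is convex, and then invokes Proposition~\ref{symmetric states ncm prop}. Your write-up merely makes explicit the identifications $\Lambda(\CC_m)=\CC_m^G$ and $\Lambda^\dag(\B{m})=\pazocal{B}_{\mathrm{sa}}^G(\HH_m)$ and the closed-convex-hull property of the Bochner integral, which the paper leaves implicit.
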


\begin{proof}
It suffices to apply Proposition~\ref{symmetric states ncm prop} to the totally symmetrizing map
\begin{equation}
\Lambda_G:T\longmapsto \Lambda_G \left(T\right) \coloneqq \int_G U(g) T U^\dag (g)\, d\mu(g)\, ,
\label{totally symmetrizing}
\end{equation}
where $\mu$ denotes the \tcb{left} Haar measure on $G$, and the integral on the right-hand side is to be underdstood in the Bochner sense. Note that $\Lambda_G$ is a classical channel, because each $U(g)$ maps coherent states to coherent states, and the set of classical channels is convex.
\end{proof}

%Given a unitary representation $U:G\to \B{m}$ of a compact group $G$ on the Hilbert space $\HH_m=L^2\left(\R^m\right)$ of $m$ modes, we denote with $\Lambda^G: \T{m}\to \T{m}$ the  on the space of trace class operators defined by

%For a subspace $V$ of $\HH_m$, we will also denote with $\pazocal{S}_V^G$ the space of operators acting on $\Span G(V)$ that are invariant under $\Lambda^G$.

%\begin{rem}
%$\Lambda^G\circ\Pi_V$, with $\Pi_V$ being the projector over $\Span V$, is a CPTP unital map from $\bsa(\hilb)$ to $\pazocal{S}_V$. \tcb{[TP?]}
%\end{rem}

\section{Applications} \label{applications section}

To get a feeling of how tight the estimates in Theorem~\ref{bound_rates_NCMi_thm} for asymptotic transformation rates in the QRT of nonclassicality really are, we need to design distillation protocols that can provide lower bounds on those rates. To do so, we have to first compute or bound the resource content of the states we work with. Before going on, we fix some notation. Consider a two-mode CV quantum system with annihilation operators $a,b$, and pick $\lambda\in [0,1]$. The \textbf{beam splitter} with transmissivity $\lambda$ is represented by the unitary
\bb
U_\lambda \coloneqq e^{\arccos\sqrt{\lambda}\, ( a^\dag b - ab^\dag )}\, .
\ee
Its action on operators and vectors is given by
\begin{align}
U_\lambda \smatrix{ a \\ b } U_\lambda^\dag &= \smatrix{ \sqrt{\lambda} & - \sqrt{1-\lambda} \\ \sqrt{1-\lambda} & \sqrt{\lambda} } \smatrix{ a \\ b } , \\
U_\lambda\, \disp\left(\!\smatrix{ \alpha \\ \beta }\!\right) U_\lambda^\dag &= \disp\left(\!\smatrix{ \sqrt{\lambda} & \sqrt{1-\lambda} \\ -\sqrt{1-\lambda} & \sqrt{\lambda} }\! \smatrix{ \alpha \\ \beta }\right) .
\end{align}
Therefore, thanks to~\eqref{disp_vacuum} we see that
\bb
U_\lambda \ket{\alpha}\ket{\beta} = \ket{\sqrt\lambda \alpha + \sqrt{1-\lambda}\beta}\ket{-\sqrt{1-\lambda}\alpha + \sqrt\lambda \beta}\, .
\label{beam splitter coherent states}
\ee

\subsection{Fock diagonal states} \label{Fock-diagonal subsec}

We now compute or estimate our nonclassicality monotones for some multimode Fock-diagonal states. Denoting with $\{\ket{n}\}_{n\in \N^m}$ the Fock basis, as usual, define the \textbf{totally dephasing map} $\Delta$ by
\bb
\Delta(\rho) \coloneqq \sum_{n\in \N^m} \ketbra{n}\rho\ketbra{n}\, .
\label{Delta}
\ee
This is a classical channel because it is of the form~\eqref{totally symmetrizing}, for $G=(S^1)^{\times m}\simeq [0,2\pi)^m$ and $U(\varphi) = e^{i \sum_j \varphi_j a_j^\dag\! a_j}$. In other words,
\bbb
\Delta(\rho)=\frac{1}{2\pi}\int_0^{2\pi} d^m \varphi\, e^{i \sum_j \varphi_j a_j^\dag\! a_j} \rho e^{-i \sum_j \varphi_j a_j^\dag\! a_j}\, .
\eee
Clearly, the unitary $e^{i \sum_j \varphi_j a_j^\dag\! a_j}$, which is nothing but a phase space rotation, sends coherent states to coherent states.
%We can use this expression to prove that $\Delta$ is a classical operation, and in particular $\Delta(\CCM)\subset\CCMFD$:
%\begin{equation*}
%\begin{split}
%\Delta(\rho)=&\tcb{\frac{1}{2\pi}}\int d\mu(\alpha)\int d\varphi\, e^{i\varphi\hat{n}}\ketbra{\alpha}e^{-i\varphi\hat{n}}\\
%=&\tcb{\frac{1}{2\pi}}\int d\mu(\alpha)\int d\varphi\, \ketbra{e^{i\varphi}\alpha}\\
%=&\tcb{\frac{1}{2\pi}}\int d\varphi\int d\mu(e^{-i\varphi}\alpha)\,\ketbra{\alpha}\,.\\
%\end{split}
%\end{equation*}
%Clearly if $d\mu(\alpha)$ is a positive and normalized measure, also $\tcb{\frac{1}{2\pi}}\int d\varphi d\mu(e^{-i\varphi}\alpha)$ is so. \tcb{[Wouldn't it be easier to apply $\Delta$ to $\ketbra{\alpha}$ and then argue with convexity?]}

%Another useful property of the totally dephasing map is that it is self-adjoint.
Applying Corollary~\ref{group symmetric states cor} to any finite-entropy Fock-diagonal state $\rho\in \CCFD$ then yields
\begin{equation}
\NCM(\rho) = \NCMi(\rho) = \NCi(\rho) = \NC(\rho) = \inf_{\sigma\in\CCMFD} D(\rho\|\sigma) = \sup_{0<L \in \Delta (\B{1})}\left\{ \Tr [\rho \log_2 L] - \log_2 \sup_{\alpha\in \C^m} \braket{\alpha|L|\alpha} \right\} ,
\label{symmetry inf&sup FD}
\end{equation}
where the equalities $\NCM(\rho) = \NCMi(\rho) = \NCi(\rho) = \NC(\rho)$ comes from the fact that the optimal state $\sigma$, being Fock-diagonal, commutes with $\rho$, and $\DM(\rho\|\sigma)=D(\rho\|\sigma)$ whenever $[\rho,\sigma]=0$; thus, $\NCM(\rho) = \NC(\rho)$, which in turn makes the whole hierarchy~\eqref{hierarchy} collapse.

We now look at single-mode Fock-diagonal states with finite rank, since these will commonly be encountered in experimental applications.

\begin{prop} \label{finiterankfdstates}
Let $\rho$ be a single-mode Fock-diagonal state with finite rank. Let $M\coloneqq\max\{n:\braket{n|\rho|n} \neq 0\}$. Then in~\eqref{symmetry inf&sup FD} we can also take $L$ to have the same support as that of $\rho$ (and to be positive there only). In formula,
\begin{equation}
\NCM(\rho) = \sup_{L\in \widetilde{B}_{\mathrm{sa}}^{\mathrm{FD}}(\HH_1)} \left\{\Tr[\rho\log_2 L] - \log_2 \sup_{\alpha\in\left[0,\sqrt{M}\right]}\braket{\alpha|L|\alpha} \right\}\,,
\label{finiterankfdstates equation}
\end{equation}
where $\widetilde{B}_{\mathrm{sa}}^{\mathrm{FD}}(\HH_1) \coloneqq \left\{L\in\B{1}:\ L=\Delta(L),\ \supp L = \supp \rho,\ P_\rho L P_\rho > 0\right\}$, and $P_\rho:\HH_1\to \supp\rho$ is the projector onto the finite-dimensional space $\supp\rho$.
\end{prop}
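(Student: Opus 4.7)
The plan is to start from~\eqref{symmetry inf&sup FD}, which applies to $\rho$ because any finite-rank state has finite entropy. It provides the variational expression
\[
\NCM(\rho)=\sup_{0<L\in\Delta(\B{1})}\left\{\Tr[\rho\log_2 L]-\log_2\sup_{\alpha\in\C}\braket{\alpha|L|\alpha}\right\},
\]
so what remains is a two-step refinement: first, restrict $L$ to $\widetilde{B}_{\mathrm{sa}}^{\mathrm{FD}}(\HH_1)$; second, shrink the inner sup from $\alpha\in\C$ to $\alpha\in[0,\sqrt{M}]$.

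For the support restriction, set $S\coloneqq\{n\in\N:\braket{n|\rho|n}>0\}\subseteq\{0,\dots,M\}$, so that $P_\rho=\sum_{n\in S}\ketbra{n}$. Given any strictly positive Fock-diagonal $L=\sum_n\ell_n\ketbra{n}$ appearing on the right-hand side above, I would set $L'\coloneqq P_\rho L P_\rho=\sum_{n\in S}\ell_n\ketbra{n}$, which manifestly lies in $\widetilde{B}_{\mathrm{sa}}^{\mathrm{FD}}(\HH_1)$. Since $\rho$ is supported on $\Span\{\ket{n}:n\in S\}$, one has $\Tr[\rho\log_2 L']=\sum_{n\in S}p_n\log_2\ell_n=\Tr[\rho\log_2 L]$; moreover, discarding the nonnegative contributions from $n\notin S$ can only decrease the coherent-state matrix element, so $\sup_\alpha\braket{\alpha|L'|\alpha}\leq\sup_\alpha\braket{\alpha|L|\alpha}$. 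Hence the objective value does not decrease under $L\mapsto L'$, and the supremum over $\widetilde{B}_{\mathrm{sa}}^{\mathrm{FD}}(\HH_1)$ is at least that in~\eqref{symmetry inf&sup FD}. The reverse inequality is obtained by approximating $L'\in\widetilde{B}_{\mathrm{sa}}^{\mathrm{FD}}(\HH_1)$ by $L'+\varepsilon(\id-P_\rho)$ and sending $\varepsilon\to 0^+$, using boundedness of $L'$ to control the limit of the coherent-state supremum.

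The heart of the argument is the second reduction. Since $L$ is Fock-diagonal, $\braket{\alpha|L|\alpha}$ depends only on $r\coloneqq|\alpha|$, and the relevant function is
\[
f(r)\coloneqq e^{-r^2}\sum_{n\in S}\ell_n\,\frac{r^{2n}}{n!},
\]
which is a Gaussian times a polynomial (because $|S|<\infty$), continuous on $[0,\infty)$ and vanishing at infinity; its supremum is therefore attained. For any interior maximizer $r>0$, the stationarity condition $f'(r)=0$ rewrites, after dividing by $2r\,e^{-r^2}$, as
\[
r^2=\frac{\sum_{n\in S}n\,q_n(r)}{\sum_{n\in S}q_n(r)},\qquad q_n(r)\coloneqq\ell_n\,\frac{r^{2n}}{n!}\geq 0,
\]
so that $r^2$ is a weighted average of elements of $S\subseteq\{0,\dots,M\}$, forcing $r^2\leq M$. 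Combined with the trivial possibility $r=0\in[0,\sqrt{M}]$, this yields $\sup_{\alpha\in\C}\braket{\alpha|L|\alpha}=\sup_{\alpha\in[0,\sqrt{M}]}\braket{\alpha|L|\alpha}$, and the proposition follows. The only genuinely non-routine step is this critical-point computation; the support restriction is pure bookkeeping.
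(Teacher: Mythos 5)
Your proof is correct and follows essentially the same route as the paper's: first replace $L$ by $P_\rho L P_\rho$ (using that Fock-diagonality makes the trace term unchanged and the coherent-state expectation non-increasing), then restrict the inner supremum to $|\alpha|\le\sqrt{M}$. The only cosmetic difference is in the last step, where you locate interior critical points via a weighted-average identity while the paper simply observes that each term $e^{-r^2}r^{2n}/n!$ with $n\le M$ is monotonically decreasing for $r^2>M$; your explicit $\varepsilon$-perturbation argument for the reverse inclusion of the two $L$-domains is a welcome extra detail that the paper leaves implicit.
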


\begin{proof}
We have that
\begin{align*}
\NCM(\rho) &= \sup_{0 < L\in\Delta(\B{1})} \left\{\Tr[\rho\log_2 L] - \log_2 \sup_{\alpha\in\C} \braket{\alpha|L|\alpha}\right\} \\
&\texteq{1} \sup_{0 < L\in\Delta(\B{1})} \left\{ \Tr\left[\rho\log_2 \left(P_\rho L P_\rho\right)\right] - \log_2 \sup_{\alpha\in\C} \braket{\alpha|L|\alpha} \right\} \\
&\textleq{2} \sup_{0 < L\in\Delta(\B{1})} \left\{ \Tr\left[\rho\log_2 \left(P_\rho L P_\rho\right)\right] - \log_2 \sup_{\alpha\in\C} \braket{\alpha|P_\rho L P_\rho|\alpha} \right\} \\
&= \sup_{L\in \widetilde{B}_{\mathrm{sa}}^{\mathrm{FD}}(\HH_1)} \left\{\Tr[\rho\log_2 L] - \log_2 \sup_{\alpha\in \C} \braket{\alpha|L|\alpha} \right\} \\
&\texteq{3} \sup_{L\in \widetilde{B}_{\mathrm{sa}}^{\mathrm{FD}}(\HH_1)} \left\{\Tr[\rho\log_2 L] - \log_2 \sup_{\alpha\in \left[0,\sqrt{M}\right]} \braket{\alpha|L|\alpha} \right\} .
%&\texteq{2} \sup_{L\in\widetilde{\pazocal{B}}^{FD}_\rho(\hilb)} \left\{\Tr\rho\log_2 L -\log_2 \sup_{\alpha\in\C}\braket{\alpha|L|\alpha}\right\}\\
%\stackrel{3}{=}&\sup_{L\in\widetilde{\pazocal{B}}^{FD}_\rho(\hilb)}\left\{\Tr\rho\log_2 L -\log_2 \sup_{\alpha\in[0,M]}\braket{\alpha|L|\alpha}\right\}\\
\end{align*}
Here: 1~follows because
%Old (incomplete) motivation
%$\left[ \rho, P_\rho \rho P_\rho\right]$
\tcb{$[\rho,L]=0$ and hence $\Tr[\rho \log_2 L]=\Tr[\rho P_\rho(\log_2 L) P_\rho]=\Tr[\rho \log_2 (P_\rho L P_\rho)]$} (with a slight abuse of notation, we thought of $P_\rho$ as having the entire $\HH_1$ as codomain); 2~holds thanks to the fact that $L\geq P_\rho L P_\rho$ as both $L$ and $P_\rho$ are Fock-diagonal and hence commute; finally, in~3 we noticed that for $|\alpha|^2>M$ and for $L=\sum_{n=0}^M \ell_n \ketbra{n}$ the function
\bbb
\braket{\alpha|L|\alpha} = e^{-|\alpha|^2} \sum_{n=0}^M \frac{|\alpha|^{2n} \ell_n}{n!}
\eee
becomes monotonically decreasing in $|\alpha|$, essentially because it is a sum of monotonically decreasing functions. 
\end{proof}

\begin{rem} \label{rem truncation fdstates}
From Corollary~\ref{approximation ncm cor} we know that
\begin{equation}
\NCM(\rho_n) \tendsn{} \NCM(\rho)\,,
\end{equation}
where $\rho_n$ is the spectral truncation of the Fock-diagonal state $\rho$. Therefore, in principle we can use Proposition~\ref{finiterankfdstates} to approximate numerically $\NCM(\rho)$ for any Fock-diagonal state $\rho$ with arbitrary precision. Explicit estimates of the error associated with each truncation can be deduced from Corollary~\ref{approximation ncm cor}.
\end{rem}

The simplest example of Fock diagonal states is naturally given by Fock states themselves.\footnote{LL acknowledges useful discussions with Andreas Winter and Krishna Kumar Sabapathy on the problem of calculating $\NC(\ketbra{n})$.}

\begin{lemma} \label{NCM Fock lemma}
For a Fock state $\ket{n}$ we have that
\bb
\NCM(\ketbra{n}) = \NCMi(\ketbra{n}) = \NCi(\ketbra{n}) = \NC(\ketbra{n}) = \log_2 \left(\frac{n!e^n}{n^n}\right) = \frac12 \log_2 (2\pi n) + O(n^{-1})\, .
\label{NCM Fock}
\ee
\end{lemma}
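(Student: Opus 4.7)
The plan is to reduce everything to the variational formula of Proposition~\ref{finiterankfdstates}, applied to the rank-one Fock-diagonal state $\rho = \ketbra{n}$. First I would observe that since $\ketbra{n}$ is Fock-diagonal, by~\eqref{symmetry inf&sup FD} all four quantities $\NCM$, $\NCMi$, $\NCi$, $\NC$ coincide, so it suffices to compute $\NCM(\ketbra{n})$. Then, because $\rho = \ketbra{n}$ has support $\mathrm{span}\{\ket{n}\}$ and $M = n$, the admissible operators $L$ in~\eqref{finiterankfdstates equation} are forced to have the one-parameter form $L = \ell\, \ketbra{n}$ with $\ell > 0$.

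Plugging this ansatz into~\eqref{finiterankfdstates equation} yields $\Tr[\rho \log_2 L] = \log_2 \ell$ and
\begin{equation*}
\sup_{\alpha \in [0,\sqrt{n}]} \braket{\alpha | L | \alpha} = \ell \sup_{\alpha \in [0,\sqrt{n}]} e^{-\alpha^2}\, \frac{\alpha^{2n}}{n!}.
\end{equation*}
An elementary one-variable optimization (setting $t = \alpha^2$ and maximising $t^n e^{-t}/n!$) shows that the maximum over $t \geq 0$ is attained at $t = n$, giving the value $n^n e^{-n}/n!$; since the maximiser $\alpha = \sqrt{n}$ lies at the right endpoint of $[0,\sqrt{n}]$, this value is also the supremum over the restricted range. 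The scale parameter $\ell$ then cancels identically, and we obtain
\begin{equation*}
\NCM(\ketbra{n}) = \sup_{\ell>0}\left\{\log_2 \ell - \log_2 \ell - \log_2 \frac{n^n e^{-n}}{n!}\right\} = \log_2 \frac{n!\, e^n}{n^n}.
\end{equation*}

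Finally, the asymptotic $\frac{1}{2}\log_2(2\pi n) + O(n^{-1})$ follows directly from Stirling's formula $n! = \sqrt{2\pi n}\,(n/e)^n\,\bigl(1 + O(n^{-1})\bigr)$, which gives $n!\, e^n / n^n = \sqrt{2\pi n}\,\bigl(1 + O(n^{-1})\bigr)$ and hence the claimed expansion upon taking $\log_2$. There is no real obstacle here: the computation is essentially a direct application of Proposition~\ref{finiterankfdstates} combined with the well-known fact that $|\braket{\alpha|n}|^2$ (as a function of $|\alpha|^2$) is maximised at $|\alpha|^2 = n$; the automatic cancellation of the scale parameter $\ell$ is what makes the Fock-state case so clean.
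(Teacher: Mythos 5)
Your argument is correct and follows exactly the route the paper takes: reduce to the single-parameter program of Proposition~\ref{finiterankfdstates} via the collapse~\eqref{symmetry inf&sup FD}, note that the scale parameter $\ell$ cancels, maximise $t^n e^{-t}/n!$ at $t=n$, and finish with Stirling's formula. You have merely written out the elementary optimization that the paper leaves implicit.
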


\begin{proof}
The optimization in~\eqref{finiterankfdstates equation} involves a single parameter and is thus elementary. To deduce the asymptotic expansion on the righmost side, it suffices to apply Stirling's formula.
\end{proof}

%The function~\eqref{NCM Fock} is plotted in Figure~\ref{fockstateplot}.

%For a Fock state $\psi_n=\ketbra{n}$ we can assume $L\propto\psi_n$. Then it is easy to find that
%\begin{equation*}
%\NCM(\psi_n)=\log_2 \Big(\frac{n!e^n}{n^n}\Big)\underset{n\gg 1}{\approx}\frac{1}{2}\log_2  2\pi n.
%\end{equation*}

%\begin{figure}
%\centering
%\includegraphics[width=100mm]{}
%\caption{The measured relative entropy of nonclassicality for Fock states $\ketbra{n}$, for different values of $n$.}
%\label{fockstateplot}	
%\end{figure}

Another example of Fock diagonal state is a noisy Fock state, e.g., a Fock state mixed with a certain amount of thermal noise. These states, herafter called \textbf{noisy Fock states}, are defined by
\begin{equation}
\rho_{n,\nu}(p)\coloneqq p\ketbra{n}+(1-p)\tau_\nu\, ,
\label{noisyfock}
\end{equation}
where the thermal state $\tau_\nu$ is given in~\eqref{tau}. In principle, we can approximate the exact value of $\NCM(\rho_{n,\nu}(p))$ with arbitrary precision for any $n$ and $\nu$, as pointed out in Remark~\ref{rem truncation fdstates}. Let us first consider the simpler case $\nu=0$, which is a good approximation in certain regimes, e.g., optical frequencies at room temperature. The state then becomes $\rho_{n,0}(p) = p\ketbra{n}+(1-p)\ketbra{0}$, and thanks to Proposition~\ref{finiterankfdstates} we can assume $L$ to be in the form $L = \ell \ketbra{n} + \ketbra{0}$ (we already exploited the scale invariance). Now we have to perform just two nested optimizations over one real parameter each, that is,
\begin{equation}
\NCM(\rho_{n,0}(p)) = \sup_{\ell>0} \left\{p\log_2 \ell - \log_2 \max_{\alpha\in\left[0,\sqrt{n}\right]} e^{-\alpha^2}\left(1+\frac{\ell\alpha^{2n}}{n!}\right)\right\} .
\end{equation}
For $n\leq 4$ the above program can even be solved analytically, since the inner maximization reduces to solving a $n$-th order algebraic equation. For example, for $n=1$ one simply finds $\beta=\sqrt{p}$, $\ell=1/(1-p)$ and $\NCM(\rho_{1,0}(p))=p+(1-p)\log_2 (1-p)$. 
The case of a nonzero temperature can be tackled by considering truncations of $\rho$ and performing numerical optimizations until some tolerance threshold is achieved. The results for different values of $\nu$ and $n$ are reported in Figures~\ref{noisyfockstateplotfixedn} and~\ref{noisyfockstateplotfixedT}.

\begin{figure}
\centering
\subfigure[~1 photon]{\includegraphics[width=81mm]{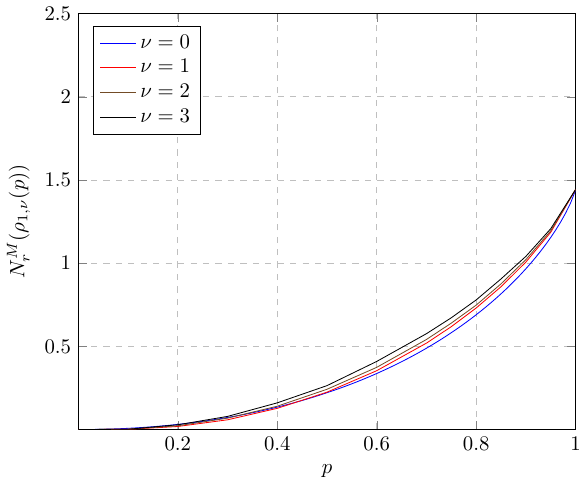}}
\subfigure[~2 photons]{\includegraphics[width=81mm]{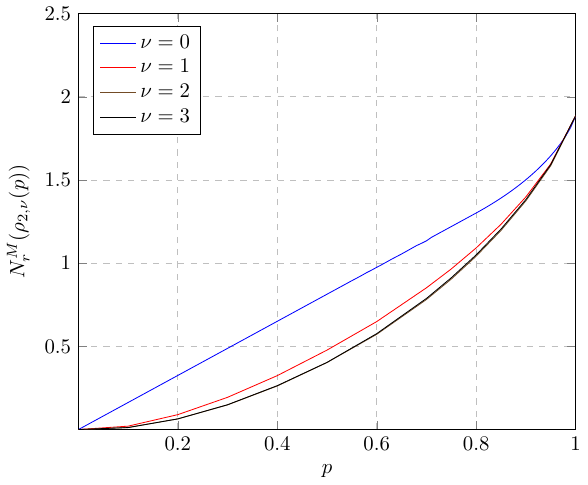}}
\subfigure[~3 photons]{\includegraphics[width=81mm]{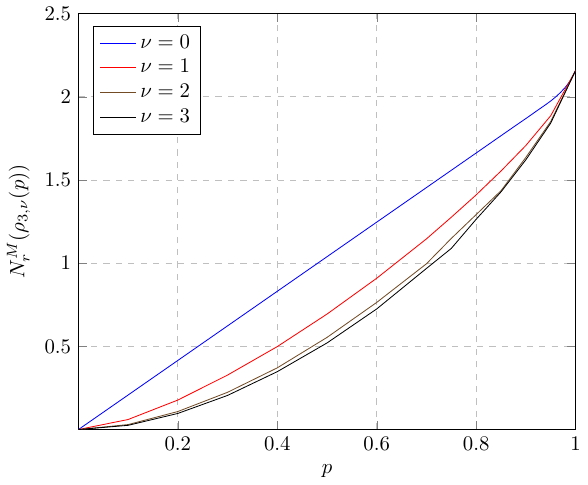}}
\subfigure[~4 photons]{\includegraphics[width=81mm]{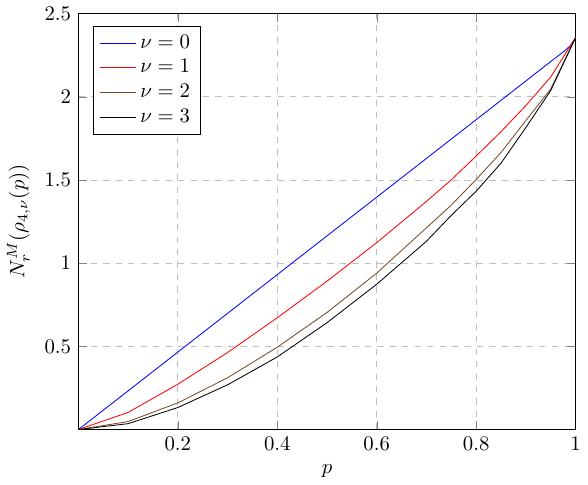}}
\caption{Nonclassicality for noisy Fock states: varying $\nu$ at fixed $n$.}
\label{noisyfockstateplotfixedn}
\end{figure}

\begin{figure}
\centering
\subfigure[~$\nu$=0]{\includegraphics[width=81mm]{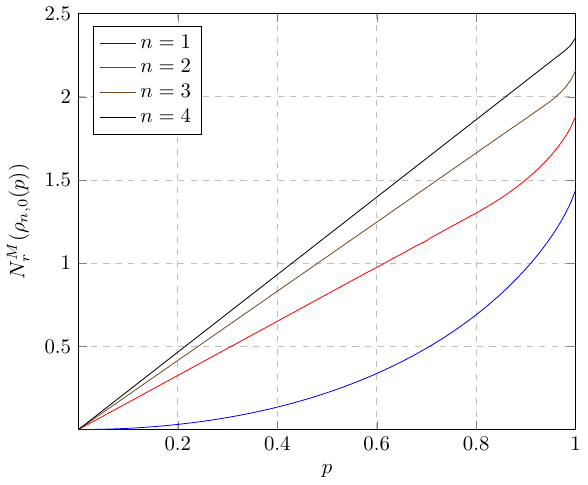}}
\subfigure[~$\nu$=1]{\includegraphics[width=81mm]{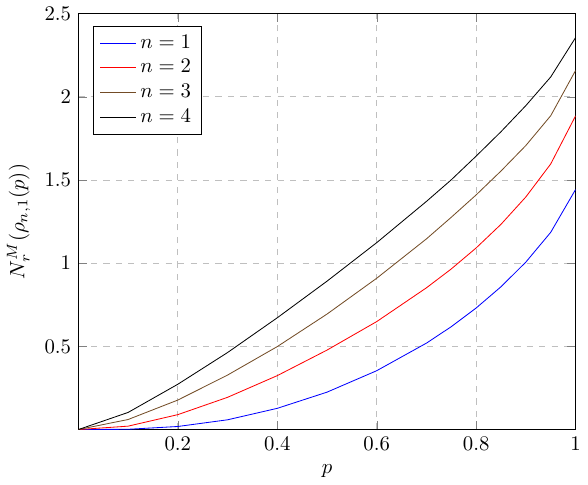}}
\subfigure[~$\nu$=2]{\includegraphics[width=81mm]{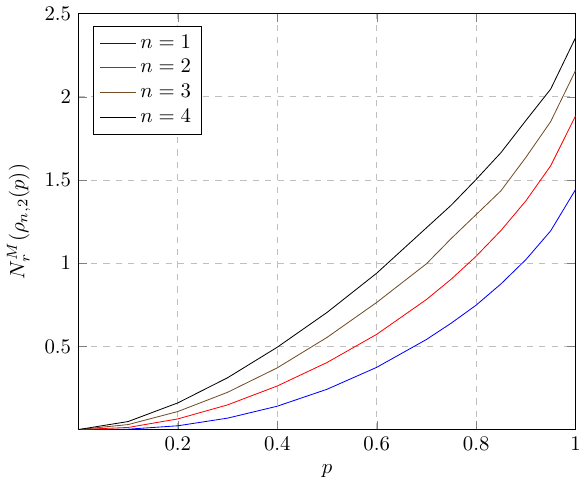}}
\subfigure[~$\nu$=3]{\includegraphics[width=81mm]{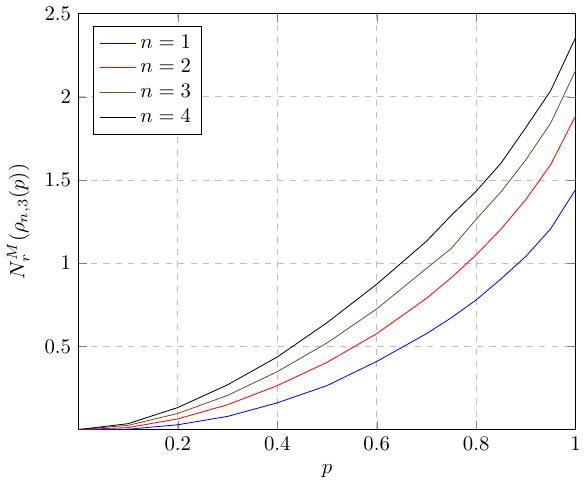}}
\caption{Nonclassicality for noisy Fock states: varying $n$ at fixed $\nu$.}
\label{noisyfockstateplotfixedT}
\end{figure}

\subsection{Schr\"{o}dinger cat states} \label{cat states subsec}

For $\alpha\in \C$, the associated \textbf{Schr\"{o}dinger cat states} (or simply \textbf{cat state}) is defined by~\cite{Dodonov1974}
\bb
\ket{\psi_{\alpha}^\pm} \coloneqq \frac{1}{\sqrt{2\left( 1\pm e^{-2|\alpha|^2}\right)}} \left(\ket{\alpha}\pm \ket{-\alpha}\right) .
\ee
It is a nonclassical state for all $\alpha\neq 0$. Since a phase space rotation acts as $e^{i\varphi a^\dag a}\ket{\psi_\alpha^\pm} = \ket{\psi_{e^{i\varphi}\alpha}^\pm}$, and all of our nonclassicality monotones are left invariant by such transformations, in what follows we can without loss of generality assume that $\alpha\in \R$.
%\begin{rem}\label{rem lambdag is a free operation}
Now, for a cat state with real $\alpha$, we can consider the group $G=\mathds{Z}^2$ and its representation $U:G\to \B{1}$ given by the reflection with respect to the real and/or imaginary axis. Applying Corollary~\ref{group symmetric states cor} to this setting (with $m=1$) shows immediately that~\eqref{group symmetry restriction inf}--\eqref{group symmetry restriction sup} hold with $\CC_1^G$ and $\pazocal{B}_{\mathrm{sa}}^{G}(\HH_1)$ being the sets of classical states and bounded operators that are invariant under reflections with respect to the real and/or imaginary axis.
%\end{rem}
%From Proposition~\ref{symmetric states ncm prop}, Remark~\ref{rem restrictions still work for f} and Remark~\ref{rem lambdag is a free operation} we see that
%\bb \begin{aligned}
%\NCM(\psi_\alpha^\pm)=&\inf_{\sigma\in\Lambda^G(\CC)}\sup_{\substack{0<L\in\Lambda^G(\bsa(\hilb))\\\rk %L<\infty}}\left\{\bra{\psi_\alpha^\pm}\log_2  L\ket{\psi_\alpha^\pm}-\log_2 \Tr\sigma L\right\}\\
%=&\sup_{\substack{0<L\in\Lambda^G(\bsa(\hilb))\\\rk L<\infty}}\left\{\bra{\psi_\alpha^\pm}\log_2  L\ket{\psi_\alpha^\pm}-\log_2 %\sup_{\beta\in\mathbb{C}}\bra{\beta}L\ket{\beta}\right\}\,.
%\end{aligned}
%\label{ncmforcatstates}
%\ee
A lower bound for $\NCM(\psi_\alpha^\pm)$ can be easily computed by setting a maximum rank for $L$ in the second line of~\eqref{group symmetry restriction sup} and then optimizing numerically. When $\rk L\leq 3$, in order to preserve the symmetry, $L$ must be supported on the subspace $V=\Span(\ket{\alpha},\ket{-\alpha},\ket{0})$. Analogously, an upper bound for $\NC(\psi_\alpha^\pm)$ can be found with a classical $\sigma$ belonging to $V$. In Figure~\ref{catstateplot} we report these two bounds for the even cat state $\psi_\alpha^+$, and an analogous lower bound for $N^M_r(\ketbra{\psi^-_\alpha})$.

\begin{figure}
\centering
\subfigure[~Even cat state]{\includegraphics[width=81mm]{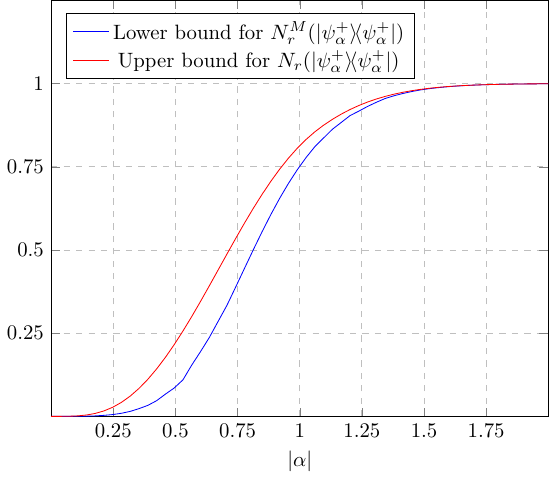}}
\subfigure[~Odd cat state]{\includegraphics[width=79.2mm]{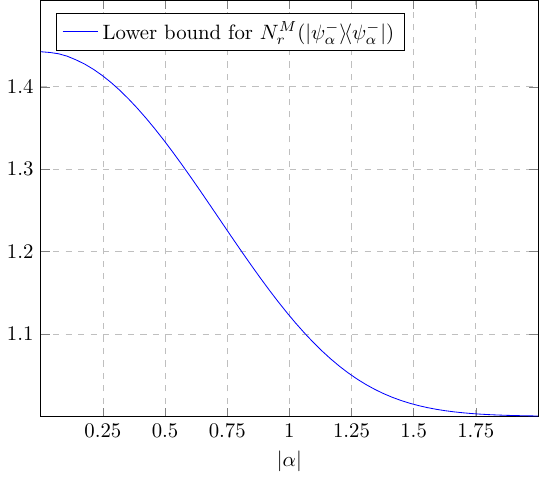}}
\caption{Bounds for the nonclassicality of a cat state, for different values of $|\alpha|$}
\label{catstateplot}
\end{figure}

\subsection{Squeezed states} \label{squeezed subsec}

A single-mode \textbf{squeezed vacuum state} is defined by~\cite[Eq.~(3.7.5)]{BARNETT-RADMORE}
\bb
\ket{\zeta_{r,\phi}}=\frac{1}{\sqrt{\cosh(r)}} \sum_{n=0}^\infty\sqrt{\binom{2n}{n}} \left(-\frac12\, e^{i\phi} \tanh(r)\right)^n \ket{2n}\,.
\ee
Since changing $\phi$ amounts to a simple rotation in phase space, and this cannot modify the value of any of our nonclassicality monotones, we will assume $\phi=0$ from now on. A squeezed state $\zeta_r\coloneqq \zeta_{r,0}$ has always finite energy $E(\psi_r)=\sinh^2(r)$, and hence we can use Proposition~\ref{bounded energy prop} to get the upper bound
\bbb
\NC(\zeta_r) \leq g(\sinh^2(r)) = 2\log_2 \cosh r - 2 \sinh^2(r) \log_2\tanh(r)\,.
\eee
A second upper bound on $\NC$ can be found by considering a (classical) squeezed thermal state
\bbb
\sigma_s = S(s)\tau_{N(s)}S^\dagger(s)=\sqrt{\frac{2}{\pi (e^{4s}-1)}} \int_{-\infty}^{+\infty} dt\, e^{-\frac{2t^2}{e^{4s}-1}}\, \ketbra{it}\, ,\quad s\geq 0\, ,
\eee
\tcb{where $S(s)$ is the usual squeezing unitary with real parameter $s$}, and plugging it in the infimum that defines $\NC$ (cf.~\eqref{NC_and_NCM}), i.e.,
\begin{align*}
\NC(\zeta_r)\leq&\ \inf_{s\geq 0} D(\zeta_r\| \sigma_s) = \inf_{s\geq 0} \left( \log_2 (1+N(s)) + 2\sinh^2(r-s) \log_2 \left(1+\frac{1}{N(s)}\right) \right),
\end{align*}
\tcb{where $N(s) \coloneqq \frac{e^{2s}-1}{2}$.} The %last
\tcb{rightmost side of the above} expression can be easily optimized numerically. A lower bound on $\NCM$ can be found from Corollary~\ref{upper and lower bound for ncm for gaussian states cor}. All these estimates are plotted in Figure~\ref{squeezedstateplot}.

\begin{figure}
\centering
\includegraphics[width=100mm]{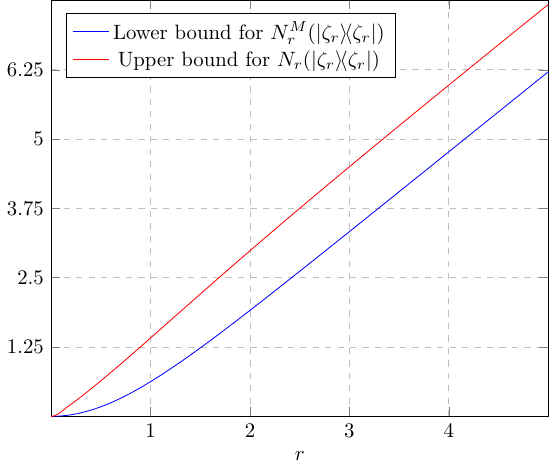}
\caption{Bounds for the nonclassicality of a squeezed state, for different values of $r$.}
\label{squeezedstateplot}	
\end{figure}

\subsection{Fock state dilution}

Now we are ready to report an example in which the bound in Theorem~\ref{bound_rates_NCMi_thm} is (asymptotically) tight.

\begin{prop}
Let $0<p\leq 1$ and $n\geq 2$ be fixed. Consider the transformation $\rho_{n,0}(p)\to \ketbra{n\!-\!1}$, where the noisy Fock state is defined in~\eqref{noisyfock}. It holds that
\bb
p \leq R\left(\rho_{n,0}(p)\to \ketbra{n\!-\!1}\right) \leq \frac{p \log_2\left( \frac{n! e^n}{n^n}\right)}{\log_2\left( \frac{(n-1)! e^{n-1}}{(n-1)^{n-1}}\right)} \tendsn{} p\, ,
\ee
with the upper bound being given by Theorem~\ref{bound_rates_NCMi_thm}.
\end{prop}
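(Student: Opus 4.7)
The upper bound follows directly from Theorem~\ref{bound_rates_NCMi_thm}: convexity of $\NC$ (Lemma~\ref{NC_subadd_lemma}) combined with $\NC(\ketbra{0})=0$ (the vacuum is classical) gives $\NC(\rho_{n,0}(p)) \leq p\,\NC(\ketbra{n})$, and both $\NC(\ketbra{n}) = \log_2(n!e^n/n^n)$ and $\NCM(\ketbra{n-1}) = \log_2((n-1)!e^{n-1}/(n-1)^{n-1})$ are supplied by Lemma~\ref{NCM Fock lemma}. The limit statement $\to p$ as $n\to\infty$ is then immediate from Stirling's expansion, which yields $\log_2(n!e^n/n^n) = \tfrac12\log_2(2\pi n) + O(1/n)$ and similarly for $n-1$, so the ratio tends to $1$.

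For the lower bound, the plan is to exhibit an explicit classical protocol achieving every rate $r<p$. The key ingredient is iterated photon subtraction: couple one input copy to an ancillary vacuum via a beam splitter of transmittance $\lambda$ close to $1$, then perform photon counting on the reflected mode. Using $U_\lambda a^\dagger U_\lambda^\dagger = \sqrt{\lambda}\,a^\dagger - \sqrt{1-\lambda}\,b^\dagger$, a direct computation shows that on input $\ketbra{n}$ the reflected mode yields $m$ photons with probability $\binom{n}{m}\lambda^{n-m}(1-\lambda)^m$ and the conditional transmitted state is $\ketbra{n-m}$; on input $\ketbra{0}$ the outcome is deterministically $m=0$ with vacuum left on the transmitted side. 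The protocol iterates this routine for at most $K$ rounds: $m=0$ triggers another attempt on the (unchanged) transmitted mode, $m=1$ signals success and outputs $\ket{n-1}$, while $m\geq 2$ or exhausting the $K$ rounds is declared failure (output vacuum). Summing a geometric series, the per-copy success probability on input $\ketbra{n}$ equals $\tfrac{n(1-\lambda)\lambda^{n-1}(1-\lambda^{nK})}{1-\lambda^n}$, which a short asymptotic analysis pushes above $1-\epsilon$ for any $\epsilon>0$ (for instance by taking $1-\lambda = \delta/n$ and $K \gtrsim n\log(1/\epsilon)/\delta$ with $\delta$ small enough). Since beam splitters, destructive photon counting, and classical feed-forward are free in the QRT of nonclassicality, the whole protocol is a classical operation.

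Applied independently to the $N$ copies of $\rho_{n,0}(p)$, the number $K_N$ of successful $\ket{n-1}$ outputs is a sum of i.i.d.\ Bernoulli variables of mean at least $p(1-\epsilon)N$. For any $r<p$, pick $\epsilon>0$ such that $p(1-\epsilon)>r$ and define $\Lambda_N$ to collect the first $\lfloor rN\rfloor$ successful outputs into the output slots, padding any deficit with vacuum. Hoeffding's inequality then gives $\left\|\Lambda_N\!\left(\rho_{n,0}(p)^{\otimes N}\right) - \ketbra{n-1}^{\otimes \lfloor rN\rfloor}\right\|_1 \leq 2\Pr\!\left[K_N < \lfloor rN\rfloor\right] \to 0$, so $r$ is achievable and hence $R(\rho_{n,0}(p)\to\ketbra{n-1})\geq p$. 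The main delicate point is the joint scaling of $\lambda$ and $K$ (both depending on $n$ and $\epsilon$) needed to drive the single-copy success probability arbitrarily close to $1$ while keeping the whole protocol expressible as a single uniformly described classical channel; everything downstream is a standard concentration argument.
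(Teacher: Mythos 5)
Your proof is correct and follows essentially the same route as the paper: the upper bound via Theorem~\ref{bound_rates_NCMi_thm}, convexity, and Lemma~\ref{NCM Fock lemma}, and the lower bound via the same iterated single-photon-subtraction protocol (beam splitter with $\lambda\to 1^-$, photon counting, feed-forward). The only differences are presentational --- you cap the iteration at $K$ rounds and handle the $\ketbra{n}$/$\ketbra{0}$ components by linearity instead of tracking the updated weight $p'$, and you spell out the Hoeffding concentration step that the paper leaves implicit --- which if anything makes the argument slightly more self-contained.
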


\begin{proof}
We start with the lower bound. Consider the following protocol, implemented with only linear optics, destructive measurements, and feed forward.
\begin{enumerate}[(1)]
    \item We send $\rho_{n,0}(p)$ into a beam splitter with transmissivity $\lambda$ whose second mode's initial state is the vacuum.
%    and a vacuum ancilla $\ketbra{0}$ at the other port, obtaining the following entangled state~\cite{knight2002}: \tcb{[I cannot reproduce the following formula: I assume you started from (7) and (8) in the cited reference (and exchanged reflectivity and transmistivity). But we have $r^2+t^2=1$ (and not $r+t=1$), hence the term $(\epsilon-1)^{2n-m-m'}$ should be replaced by $\sqrt{1-\epsilon^2}^{2n-m-m'}$? Even if I take a small $\epsilon$ approximation then there is no linear term... Maybe also make clear which mode is the auxiliary mode.]}
%    \bb\label{bs entangled state}
%    \rho^{out}=(1-p)\ketbra{00}+p\sum_{m,m'=0}^n(\epsilon-1)^{2n-m-m'}\epsilon^{m+m'}\sqrt{{n\choose m} {n\choose m'}}\ketbraa{m, n-m}{m', n-m'}\,.
%    \ee
    \item We perform photon counting on the ancillary mode.
%we measure 0 photons with probability $1-p+p(1-\epsilon)= 1-p\epsilon$, \tcb{[I also don't get this probability, neither with your nor my state, measuring on the first or the second mode. The terms with $\epsilon$ always contain an exponential in $n$]} 1 photon with probability $p(1-\epsilon)^{n-1}\epsilon$ and all the other outcomes with probabilities $\pazocal{O}(\epsilon^2)$ (the condition $\epsilon\ll\frac{1}{n}$ ensure that we can neglect the possibility of detecting more than 1 photon despite the growing combinatorial factor in~\eqref{bs entangled state}).
    \item If we measure 0 photons, the output state of the remaining mode is $\rho_{n,0}(p')$, with $p'\coloneqq \frac{p\lambda^n}{p\lambda^n + 1 - p}$. We restart with step (1).
%   , and we re-send the state through the beam splitter and repeat the procedure; if we measure 1 photon, we are left with $\rho^{fin}=\psi_{n-1}$ in the principal mode. 
    \item If we measure 1 photon, the output state of the remaining mode is $\ket{n-1}$, and we have succeeded.
    \item If we measure 2 or more photons, the protocol is aborted.
\end{enumerate}
Using the well-known formula~\cite{knight2002}
\bb
U_\lambda \ket{n,0} = \lambda^{\frac{n}{2}} \sum_{\ell=0}^{n} (-1)^\ell \sqrt{\binom{n}{\ell}} \left(\frac{1-\lambda}{\lambda}\right)^{\frac{\ell}{2}} \ket{n-\ell, \ell}\, ,
\ee
a lengthy but straightforward calculation shows that the global probability of success of this protocol is
\bb
P_{s}(n,p;\lambda) = \frac{pn (1-\lambda)\lambda^{2n-1}}{(1-\lambda^n)(p\lambda^n + 1-p)}\, .
\ee
Since we can take $\lambda$ arbitrarily close to $1$, we see that
\bbb
R\left(\rho_{n,0}(p)\to \ketbra{n\!-\!1}\right) \geq \lim_{\lambda\to 1^-} P_{s}(n,p;\lambda) = p\, ,
\eee
which proves the lower bound.

As for the upper bound, using~\eqref{symmetry inf&sup FD} together with convexity and~\eqref{NCM Fock}, we see that
\bbb
\NCMi(\rho_{n,0}(p)) = \NCMi(\rho_{n,0}(p)) \leq p \NCM(\ketbra{n}) = p \log_2\left( \frac{n! e^n}{n^n} \right) .
\eee
Leveraging once again~\eqref{NCM Fock}, this entails that
\bbb
R\left(\rho_{n,0}(p)\to \ketbra{n\!-\!1}\right) \leq \frac{p \log_2\left( \frac{n! e^n}{n^n}\right)}{\log_2\left( \frac{(n-1)! e^{n-1}}{(n-1)^{n-1}}\right)} \tendsn{} p\, .
\eee
This completes the proof.
\end{proof}

%$R(\rho_{n,0}(p)\to \psi_{n-1})\leq p\frac{\NCM(\psi_n)}{\NCM(\psi_{n-1})}$. But we just proved that the optimal transformation rate is $R(\rho_{n,0}(p)\to \psi_{n-1})\geq p$, which implies that $\NCM(\rho_{n,0}(p)\geq p\NCM(\psi_{n-1})$. By combining the results we find that, for large $n$, we have $R(\rho_{n,0}(p)\to \psi_{n-1})\approx p$, which is saturated by the protocol we presented. Moreover, we also proved that for large $n$ we have $\NCM(\rho_{n,0}(p))\approx p\log_2 n$.

\subsection{Cat state manipulation}

Let us now discuss some protocols to transform cat states, enlarging or reducing their amplitude $\alpha$. Hereafter we take without loss of generality $\alpha$ to be real. The first transformation we consider is amplification: $\psi_\alpha^+\to \psi_{\sqrt2 \alpha}^+$. Lund et al.~\cite{Lund2004} have provided a protocol that achieves exact conversion of two copies of the initial state with probability
\bb
P_{\mathrm{Lund}}\left(\psi_\alpha^+ \otimes \psi_\alpha^+ \to \psi_{\sqrt2 \alpha}^+\right) = \frac{e^{-\alpha^2} \cosh(2\alpha^2) \sinh^2\left(\alpha^2/2\right)}{\cosh^2(\alpha^2)}\, .
\label{Lund prob}
\ee
Hence,
\bb
R\left(\psi_\alpha^+ \to \psi_{\sqrt2 \alpha}^+\right) \geq \frac12 P_{\mathrm{Lund}}\left(\psi_\alpha^+ \otimes \psi_\alpha^+ \to \psi_{\sqrt2 \alpha}^+\right) = \frac{e^{-\alpha^2} \cosh(2\alpha^2) \sinh^2\left(\alpha^2/2\right)}{2 \cosh^2(\alpha^2)}\, .
\label{Lund rate}
\ee
Mimicking the protocol of Lund et al.\ but employing slightly better (yet less realistic) measurements, we are able to obtain a better bound.

\begin{prop} \label{Ferrari_prop}
In the QRT of nonclassicality it is possible to achieve exact conversion $\psi_\alpha^+ \otimes \psi_\alpha^+ \to \psi_{\sqrt2 \alpha}^+$ with probability
\bb
P_{\mathrm{our}}\left(\psi_\alpha^+ \otimes \psi_\alpha^+ \to \psi_{\sqrt2 \alpha}^+\right) = \frac12 \tanh^2(\alpha^2)\, .
\label{Ferrari prob}
\ee
Therefore,
\bb
R\left(\psi_\alpha^+ \to \psi_{\sqrt2 \alpha}^+\right) \geq \frac14\tanh^2(\alpha^2)\, .
\label{Ferrari rate}
\ee
\end{prop}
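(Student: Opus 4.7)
The plan is to mimic the structure of the Lund et al.\ protocol~\cite{Lund2004}, but to replace the final heralding step by an idealized (i.e., not necessarily homodyne or photon-counting) projective measurement. Concretely, I apply a 50:50 beam splitter ($\lambda=1/2$) to $\ket{\psi_\alpha^+}\otimes \ket{\psi_\alpha^+}$ and perform a dichotomic projective measurement on the second output mode. The crucial algebraic fact is that, by~\eqref{beam splitter coherent states}, the four coherent-state branches of the input satisfy $\ket{\pm\alpha,\pm\alpha}\mapsto \ket{\pm \sqrt{2}\alpha,0}$ and $\ket{\pm \alpha,\mp \alpha}\mapsto \ket{0,\pm \sqrt{2}\alpha}$. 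Recombining pairs of coherent states into cat states via $\ket{\sqrt{2}\alpha}+\ket{-\sqrt{2}\alpha}=\sqrt{2(1+e^{-4\alpha^2})}\,\ket{\psi_{\sqrt{2}\alpha}^+}$, the two-mode output takes the very symmetric form
\[
\ket{\Psi} = c\,\bigl(\ket{\psi_{\sqrt{2}\alpha}^+}\ket{0} + \ket{0}\ket{\psi_{\sqrt{2}\alpha}^+}\bigr),\qquad c=\frac{\sqrt{2(1+e^{-4\alpha^2})}}{2(1+e^{-2\alpha^2})}.
\]

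The next step is to decompose $\ket{0} = q\ket{\psi_{\sqrt{2}\alpha}^+} + \sqrt{1-q^2}\,\ket{\chi}$, where $q\coloneqq \braket{\psi_{\sqrt{2}\alpha}^+|0}$ and $\ket{\chi}\perp \ket{\psi_{\sqrt{2}\alpha}^+}$, and substitute this into $\ket{\Psi}$. The terms along $\ket{\psi_{\sqrt{2}\alpha}^+}\ket{\psi_{\sqrt{2}\alpha}^+}$ and along $\ket{\psi_{\sqrt{2}\alpha}^+}\ket{\chi}+\ket{\chi}\ket{\psi_{\sqrt{2}\alpha}^+}$ separate cleanly; in particular, only the latter cross term contributes a $\ket{\chi}$ factor on the second mode. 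Projecting the second mode onto $\ket{\chi}$ thus collapses the first mode exactly to $\ket{\psi_{\sqrt{2}\alpha}^+}$, with success probability $c^2(1-q^2)$. A brief simplification using $1-q^2=(1-e^{-2\alpha^2})^2/(1+e^{-4\alpha^2})$ reduces this to $\tfrac{1}{2}\tanh^2(\alpha^2)$, proving~\eqref{Ferrari prob}.

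For the rate bound~\eqref{Ferrari rate}, I would run the above protocol in parallel on $\lfloor n/2\rfloor$ disjoint pairs drawn from $n$ copies of $\ket{\psi_\alpha^+}$: by the weak law of large numbers the number of successful heralds concentrates around $\tfrac{1}{4}n\tanh^2(\alpha^2)$, producing that many exact copies of $\ket{\psi_{\sqrt{2}\alpha}^+}$ with asymptotically vanishing trace-distance error on the normalized output, so every $r<\tfrac{1}{4}\tanh^2(\alpha^2)$ is achievable in the sense of Definition~\ref{rate_def}. The only non-computational point to verify is that the overall protocol qualifies as a classical operation in the resource theory of nonclassicality: the beam splitter is passive linear optics, and the readout is a destructive measurement followed by classical feedforward of the outcome, so it lies within the admissible free operations. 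It is precisely the freedom to use the (less physical) POVM $\{\ket{\chi}\bra{\chi},\,I-\ket{\chi}\bra{\chi}\}$, instead of the homodyne readout employed in~\cite{Lund2004}, that yields the strict improvement over~\eqref{Lund prob}.
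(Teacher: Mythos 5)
Your proposal is correct and follows essentially the same route as the paper: a 50:50 beam splitter maps $\ket{\psi_\alpha^+}\ket{\psi_\alpha^+}$ to $c\,(\ket{\psi_{\sqrt2\alpha}^+}\ket{0}+\ket{0}\ket{\psi_{\sqrt2\alpha}^+})$, and the dichotomic measurement you build by projecting the second mode onto the Gram--Schmidt complement of $\ket{0}$ relative to $\ket{\psi_{\sqrt2\alpha}^+}$ is exactly the paper's $\ket{\chi}=\frac{1}{\sqrt2\sinh(\alpha^2)}\bigl(\sqrt{\cosh(2\alpha^2)}\ket{0}-\ket{\psi_{\sqrt2\alpha}^+}\bigr)$, yielding the same success probability $c^2(1-q^2)=\tfrac12\tanh^2(\alpha^2)$. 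The concentration argument for passing from the single-shot success probability to the asymptotic rate is the standard one the paper leaves implicit, so nothing is missing.
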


\begin{proof}
Consider the following protocol. Apply a beam splitter with trasmissivity $1/2$ to the initial state $\ket{\psi^+_\alpha} \ket{\psi^+_\alpha}$. Using~\eqref{beam splitter coherent states}, we obtain that
\begin{align*}
    U_{1/2} \ket{\psi^+_\alpha} \ket{\psi^+_\alpha} &= \frac{1}{2\left( 1+ e^{-2\alpha^2}\right)} \left( \ket{0}\ket{\sqrt2 \alpha} + \ket{0}\ket{- \sqrt2 \alpha} + \ket{\sqrt2 \alpha}\ket{0} + \ket{-\sqrt2 \alpha}\ket{0} \right) \\
    &= \frac{\sqrt{\cosh(2\alpha^2)}}{2 \cosh(\alpha^2)} \left( \ket{0}\ket{\psi_{\sqrt2 \alpha}^+} + \ket{\psi_{\sqrt2 \alpha}^+} \ket{0} \right) .
\end{align*}
Carrying out on the second mode the measurement $\{\ketbra{\chi}, \id-\ketbra{\chi}\}$, with
\bbb
\ket{\chi} \coloneqq \frac{1}{\sqrt2 \sinh(\alpha^2)}\left( \sqrt{\cosh(2\alpha^2)} \ket{0} - \ket{\psi_{\sqrt2 \alpha}^+} \right) ,
\eee
yields
\bbb
\tensor[_2]{\braket{\chi|U_{1/2} |\psi_\alpha^+ \psi_\alpha^+}}{_{12}} = \frac{1}{\sqrt2}\tanh(\alpha^2)\ket{\psi_{\sqrt2 \alpha}^+}\, ,
\eee
where the subscripts identify different modes. Computing the norm of the above vector yields~\eqref{Ferrari prob} and this in turn~\eqref{Ferrari rate}.
\end{proof}

We now move on to cat state dilution. We consider the slightly simpler task of balanced dilution $\psi_{\sqrt2 \alpha}^+ \to \psi^+_\alpha \otimes \psi_\alpha^-$.

\begin{prop} \label{sign_randomized_dilution_cat_prop}
In the QRT of nonclassicality it holds that
\bb
R\left(\psi_{\sqrt2 \alpha}^+ \to \psi^+_\alpha \otimes \psi_\alpha^- \right) \geq \frac{\sinh^2(\alpha^2)}{2 \cosh(2\alpha^2)}\, .
\label{my rate}
\ee
\end{prop}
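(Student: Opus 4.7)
The plan is to exhibit an explicit classical protocol, combining a single $50{:}50$ beam splitter with a destructive photon-number measurement, and then convert the resulting probabilistic single-shot success into an asymptotic rate via a standard typicality argument.

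First, using $U_{1/2}\ket{\pm\sqrt{2}\alpha}\ket{0}=\ket{\pm\alpha}\ket{\mp\alpha}$, which follows from the beam-splitter identities recalled at the start of this section, and expanding each resulting coherent state in the cat basis $\{\ket{\psi_\alpha^+},\ket{\psi_\alpha^-}\}$, I would verify by direct calculation the branch decomposition
\begin{equation*}
U_{1/2}\ket{\psi_{\sqrt{2}\alpha}^+}\ket{0}=\frac{(1+e^{-2\alpha^2})\ket{\psi_\alpha^+}\ket{\psi_\alpha^+}-(1-e^{-2\alpha^2})\ket{\psi_\alpha^-}\ket{\psi_\alpha^-}}{\sqrt{2(1+e^{-4\alpha^2})}}.
\end{equation*}
The two branches live in orthogonal parity subspaces of the ancilla mode, so a destructive photon-number measurement on the second mode (classical, being a destructive linear-optical measurement followed by feed-forward of the outcome) collapses the surviving mode onto $\ket{\psi_\alpha^+}$ with probability $p_+=(1+e^{-2\alpha^2})^2/[2(1+e^{-4\alpha^2})]$, and onto $\ket{\psi_\alpha^-}$ with probability $p_-=(1-e^{-2\alpha^2})^2/[2(1+e^{-4\alpha^2})]=\sinh^2(\alpha^2)/\cosh(2\alpha^2)$, the parity of the outcome revealing which of the two single-mode cats was produced.

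Next, I would apply this subroutine independently to each of the $n$ input copies, yielding an i.i.d.\ Bernoulli sequence of sign outcomes. By Hoeffding's inequality, for any $\varepsilon>0$ the numbers of produced $\ket{\psi_\alpha^\pm}$ copies simultaneously lie within $\varepsilon n$ of their expectations $np_\pm$ with probability converging to $1$ as $n\to\infty$. Since $p_+\geq 1/2\geq p_-$, on this typical event there are at least $(p_--\varepsilon)n$ copies available of each sign, so free mode permutations allow us to assemble $\lfloor(p_--\varepsilon)n\rfloor$ disjoint target pairs of the form $\ket{\psi_\alpha^+}\otimes\ket{\psi_\alpha^-}$, discarding the leftover $\ket{\psi_\alpha^+}$ copies via partial trace (a free operation).

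The main subtlety lies in converting this probabilistic construction into the deterministic trace-norm criterion of Definition~\ref{rate_def}: the output on typical outcome sequences is exactly $(\psi_\alpha^+\otimes\psi_\alpha^-)^{\otimes\lfloor(p_--\varepsilon)n\rfloor}$ (and so contributes zero to the trace distance), while the atypical contribution is bounded by $2$ times the (exponentially small) tail probability. Taking $n\to\infty$ and then $\varepsilon\to 0^+$ yields $R(\psi_{\sqrt{2}\alpha}^+\to\psi_\alpha^+\otimes\psi_\alpha^-)\geq p_-=\sinh^2(\alpha^2)/\cosh(2\alpha^2)$, which is in fact twice the bound stated in the proposition. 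The proposition thus follows \emph{a fortiori}.
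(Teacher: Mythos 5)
Your proof is correct and uses essentially the same protocol as the paper: a $50{:}50$ beam splitter followed by a destructive measurement on the ancilla that distinguishes the two branches $\ket{\psi_\alpha^+}\ket{\psi_\alpha^+}$ and $\ket{\psi_\alpha^-}\ket{\psi_\alpha^-}$, yielding the branch probabilities $p_\pm$ with $p_-=\sinh^2(\alpha^2)/\cosh(2\alpha^2)$. (Incidentally, your relative minus sign in the branch decomposition is the correct one under the paper's beam-splitter convention; the paper writes a plus, but this is immaterial for the probabilities.) Two points distinguish your write-up. First, you replace the cat-basis measurement on the ancilla by photon counting, exploiting that the two branches lie in opposite parity sectors; this is a legitimate and arguably more physical choice, and gives identical statistics. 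Second, and more substantively, your counting of the assembled pairs is sharper: since each run produces one single-mode cat, $n$ runs give roughly $np_+$ plus-cats and $np_-$ minus-cats with $p_+\geq p_-$, so about $np_-$ disjoint pairs $\psi_\alpha^+\otimes\psi_\alpha^-$ can be formed, not $np_-/2$. Your Hoeffding/typicality argument converting this into the trace-norm criterion of Definition~\ref{rate_def} is sound (the adaptive mode permutations and discarding are classical operations), so you in fact establish $R\geq \sinh^2(\alpha^2)/\cosh(2\alpha^2)$, twice the bound in the statement; the paper's own count is needlessly conservative by a factor of $2$, and the proposition follows a fortiori from your argument.
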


\begin{proof}
Consider the following protocol. Apply a beam splitter with trasmissivity $1/2$ to the initial state $\ket{\psi^+_{\sqrt2 \alpha}} \ket{0}$. Using one again~\eqref{beam splitter coherent states}, we obtain that
\begin{align*}
    U_{1/2} \ket{\psi^+_{\sqrt2\alpha}} \ket{0} &= \frac{1}{\sqrt{2\left(1 + e^{-4\alpha^2} \right)}} \left(\left(1+e^{-2\alpha^2}\right)\ket{\psi_\alpha^+}\ket{\psi_\alpha^+} + \left(1-e^{-2\alpha^2}\right) \ket{\psi_\alpha^-}\ket{\psi_\alpha^-} \right) .
\end{align*}
Therefore, measuring the second mode in the orthonormal basis whose first two elements are $\ket{\psi_\alpha^+}$ and $\ket{\psi_\alpha^-}$, we obtain that
\begin{align*}
    \tensor[_2]{\braket{\psi^+_\alpha | U_{1/2} |\psi^+_{\sqrt2\alpha}, 0}}{_{12}} &= \frac{\cosh(\alpha^2)}{\sqrt{\cosh(2\alpha^2)}}\, \ket{\psi^\pm_\alpha}\, , \\
    \tensor[_2]{\braket{\psi^-_\alpha | U_{1/2} |\psi^+_{\sqrt2\alpha}, 0}}{_{12}} &= \frac{\sinh(\alpha^2)}{\sqrt{\cosh(2\alpha^2)}}\, \ket{\psi^\mp_\alpha}\, .
\end{align*}
Computing the norms of the vectors on the right-hand side yields the estimates
\begin{align*}
P_{\mathrm{our}} \left(\psi_{\sqrt2 \alpha}^+ \to \psi_{\alpha}^+\right) &= \frac{\cosh^2(\alpha^2)}{\cosh(2\alpha^2)}\, , \\
P_{\mathrm{our}} \left(\psi_{\sqrt2 \alpha}^+ \to \psi_{\alpha}^-\right) &= \frac{\sinh^2(\alpha^2)}{\cosh(2\alpha^2)}\, .
\end{align*}

Applying the above protocol to $n$ copies of $\psi_{\sqrt2 \alpha}^+$ yields, in the limit of large $n$, at least $\frac{n \sinh^2(\alpha^2)}{2\cosh(2\alpha^2)}$ copies of $\psi_{\alpha}^+\otimes \psi_{\alpha}^-$. Hence,
\bbb
R\left(\psi_{\sqrt2 \alpha}^+ \to \psi^+_\alpha \otimes \psi_\alpha^- \right) \geq \frac{\sinh^2(\alpha^2)}{2 \cosh(2\alpha^2)}\, ,
\eee
which completes the proof.
\end{proof}

Finally, we upper bound the maximal asymptotic transformation rates of both amplification and dilution of cat states by means of the formula~\ref{bound_rates_NCMi}, and the numerical results reported in Figure~\ref{protocols_fig}.

%\section{Conclusions}

%In this work, we provided a fully general bound on asymptotic transformation rates valid both in finite as well as infinite-dimensional QRTs, which involves resource monotones $G$ that satisfy additivity on multiple copies of a state, lower semicontinuity, and strong superaddivity. We then applied our result to three of the most important QRTs involving infinite-dimensional systems, namely optical nonclassicality, quantum entanglement, and quantum thermodynamics. To the best of our knowledge, these are the first rigorous bounds on infinite-dimensional asymptotic transformation rates in these theories, thus marking a significant progress in our understanding of fundamental resource manipulation tasks. 

\section{Acknowledgments}
\tcr{GF and LL dedicate this work to the memory of Piero Angela, whose career devoted to the popularization of science, spanning more than 50 years, was of immense inspiration for generations of Italian scientists~\cite{PA}.} GF and LL contributed equally to this paper. LL is grateful to Krishna Kumar Sabapathy and Andreas Winter for enlightening discussions on the computation of the relative entropy of nonclassicality for Fock states. LL and MBP are supported by the ERC Synergy Grant BioQ (grant no.\ 319130) and HyperQ (grant no. 856432). LL thanks also the Alexander von Humboldt Foundation. GF acknowledges the support received from the EU through the ERASMUS+ Traineeship program and from the Scuola Galileiana di Studi Superiori.

\medskip
\noindent\textbf{Competing interest} --- The authors declare no competing interests. \\

\medskip
\noindent\textbf{Data availability} --- No data sets were generated during this study.

\appendix
\renewcommand{\thethm}{A\arabic{thm}}
\renewcommand{\theequation}{A\arabic{equation}}

\section{Restricted asymptotic continuity in infinite dimensions} \label{restr_asymp_cont_sec}

Among the many properties that a monotone may have, one of the most desirable is, in many scenarios, some form of continuity, since it allows to relate the resource content of a state to that of a \enquote{good} approximation of it. The most widely used notion of continuity, when it comes to resource monotones in finite-dimensional QRT, is that of asymptotic continuity~\cite{Synak2006} \tcb{(Definition~\ref{monotones}(c)). Its popularity is due to its role in proving bounds on asymptotic rates. Of course, the notion of asymptotic continuity becomes empty in infinite dimensions, ultimately leading to the ``asymptotic continuity catastrophe'' discussed in the Introduction. In this \tcb{A}ppendix, we show how asymptotic continuity can be defined in a certain sense for infinite-dimensional systems as well, why this extension presents some substantial limitations, and in which sense Theorem~\ref{general_bound_rates_thm} overcomes such limitations.}

%\begin{Def}\label{strong asym cont def}
%Let $(\mathcal{S},\mathcal{F})$ be a QRT equipped with a monotone $G$, and with $\mathcal{S}$ a family of finite-dimensional quantum systems. We say that $G$ is \textbf{asymptotically continuous} if it holds that
%\bb
%|G(\rho)-G(\sigma)|\leq C\|\rho-\sigma\|_1\log d+\eta(\|\rho-\sigma\|_1)\,,
%\ee
%where $\rho,\sigma\in\D{A}$ are generic states, $A\in\mathcal{S}$, $d=\dim\HH_A$, $C>0$ is a real constant and $\eta:\R\to\R$ is a real function independent of $d$ and such that $\lim_{x\to0}\eta(x)=0$.
%\end{Def}

%\begin{rem}
%Note that our terminology differs from the standard one by the addition of the term \enquote{strongly}. The reason for this choice will become clear in the next subsection.
%\end{rem}
%The crucial point of Definition~\ref{strong asym cont def} is the logarithmic scaling in $d$ of the RHS. Obviously, when $d=\infty$, the definition trivialize and becomes pointless. In this section we will introduce a weaker form of asymptotic continuity, valid for infinite-dimensional systems as well, and illustrate its limit in the study of asymptotic transformation rates.

\subsection{Abstract approach}

The traditional approach to the problem of bounding asymptotic transformation rates in infinite-dimensional QRTs makes use of the notion of restricted asymptotic continuity~\cite{Eisert2002, Shirokov-sq, Shirokov-AFW-1, Shirokov-AFW-2, Shirokov-AFW-3}. Let us give a formal definition, taken from~\cite[Corollary~7]{Shirokov-sq}.

\begin{Def} \label{restr_asymp_cont_def}
Let $(\mathcal{S},\mathcal{F})$ be a QRT equipped with a monotone $G$. For some $B\in\mathcal{S}$, fix a family of states $\mathcal{T} = \left\{ \mathcal{T}_{B^n}\right\}_{n\in \N_+}$, with $\mathcal{T}_{B^n}\subseteq \D{B^n} = \mathcal{D}\left( \HH^{\otimes n}_B \right)$. We say that $G$ is \textbf{weakly asymptotically continuous on $\boldsymbol{\mathcal{T}}$} if for all sequence of states $(\rho_n)_{n\in \N_+}$ and $(\sigma_n)_{n\in \N_+}$ with $\rho_n,\sigma_n\in \mathcal{T}_{B^n}$ and $\lim_{n\to\infty} \left\|\rho_n-\sigma_n\right\|_1 = 0$ it holds that
\bb
\lim_{n\to\infty} \frac{\left| G(\rho_n) - G(\sigma_n) \right|}{n} = 0\, .
\ee
\end{Def}

Formally, from the above definition the following can be deduced.

\begin{prop} \label{bound_rates_restr_asymp_cont_prop}
Let $(\mathcal{S},\mathcal{F})$ be a QRT equipped with a weakly additive monotone $G$. Consider $A,B\in \mathcal{S}$, and assume that there exists a family of states $\mathcal{T} = \left\{ \mathcal{T}_{B^n}\right\}_{n\in \N_+}$ on which $G$ is weakly asymptotically continuous. Pick $\rho_A\in \D{A}$ and $\sigma_B\in \D{B}$ with $\sigma_B^{\otimes n}\in \mathcal{T}_{B^n}$ for all sufficiently large $n$, and consider the modified asymptotic transformation rate
\bb
R_{\mathcal{T}}(\rho_{A} \to \sigma_{B}) \coloneqq \sup\left\{r:\ \lim_{n\to\infty} \inf_{\substack{ \Lambda_n\in\mathcal{F}\left( A^n\to B^{\floor{rn}}\right) \\ \Lambda_n(\rho_A^{\otimes n})\in \mathcal{T}_{B^n}}} \left\|\Lambda_n\left( \rho_{A}^{\otimes n}\right) - \sigma_{B}^{\otimes \floor{r n}} \right\|_1 = 0 \right\} .
\label{modified_rate}
\ee
%satisfies that
\tcb{Then it satisfies:}
\bb
R_{\mathcal{T}}(\rho_{A} \to \sigma_{B}) \leq \frac{G(\rho_A)}{G(\sigma_B)}\, ,
\ee
whenever the right-hand side is well defined.\footnote{This just means that it does not contain the fractions $\frac{\infty}{\infty}$ or $\frac{0}{0}$. We instead convene that $\frac{0}{\infty}=0$ and $\frac{\infty}{0}=\infty$.}
\end{prop}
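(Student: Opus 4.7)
The plan is to mirror the proof of Theorem~\ref{general_bound_rates_thm}: monotonicity together with weak additivity will control the input side, weak additivity alone controls the output, and weak asymptotic continuity on $\mathcal{T}$ bridges the two. Fix any $r$ lying in the defining set of $R_{\mathcal{T}}(\rho_A\to\sigma_B)$ in~\eqref{modified_rate}: by construction there is a sequence $\Lambda_n\in\mathcal{F}(A^n\to B^{\lfloor rn\rfloor})$ with $\Lambda_n(\rho_A^{\otimes n})\in\mathcal{T}_{B^{\lfloor rn\rfloor}}$ and $\|\Lambda_n(\rho_A^{\otimes n})-\sigma_B^{\otimes\lfloor rn\rfloor}\|_1\to 0$ as $n\to\infty$. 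Monotonicity and weak additivity of $G$ yield $G(\Lambda_n(\rho_A^{\otimes n}))\leq G(\rho_A^{\otimes n})= nG(\rho_A)$, while weak additivity on the target gives $G(\sigma_B^{\otimes\lfloor rn\rfloor})=\lfloor rn\rfloor G(\sigma_B)$ for all $n$ large enough that $\sigma_B^{\otimes\lfloor rn\rfloor}\in\mathcal{T}_{B^{\lfloor rn\rfloor}}$.

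The crux of the argument is to extract $|G(\Lambda_n(\rho_A^{\otimes n}))-G(\sigma_B^{\otimes\lfloor rn\rfloor})|=o(n)$ from Definition~\ref{restr_asymp_cont_def}, whose hypothesis demands sequences indexed by every $k\in\N_+$ with $k$-th entry in $\mathcal{T}_{B^k}$. I would therefore re-index by $k$ and fill in the gaps as follows: for each $k$ of the form $k=\lfloor rn\rfloor$, pick one such $n$ and set $\tilde\rho_k\coloneqq\Lambda_n(\rho_A^{\otimes n})$ and $\tilde\sigma_k\coloneqq\sigma_B^{\otimes k}$; for any remaining values of $k$, set $\tilde\rho_k=\tilde\sigma_k\coloneqq\sigma_B^{\otimes k}$ whenever the latter belongs to $\mathcal{T}_{B^k}$, and modify both sequences arbitrarily (but equal to each other and in $\mathcal{T}_{B^k}$) on the remaining finitely many indices. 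Both sequences then meet the hypothesis of Definition~\ref{restr_asymp_cont_def} and satisfy $\|\tilde\rho_k-\tilde\sigma_k\|_1\to 0$ as $k\to\infty$, since the non-trivial entries carry the vanishing errors and the trivial ones vanish exactly. Weak asymptotic continuity then gives $|G(\tilde\rho_k)-G(\tilde\sigma_k)|/k\to 0$; restricting to the subsequence $k_n=\lfloor rn\rfloor$, which tends to $\infty$ because $r>0$, produces precisely the desired $o(\lfloor rn\rfloor)=o(n)$ estimate.

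Chaining the three inequalities yields $\lfloor rn\rfloor G(\sigma_B)\leq nG(\rho_A)+o(n)$, and dividing by $n$ and taking $n\to\infty$ gives $rG(\sigma_B)\leq G(\rho_A)$. Under the standing assumption that $G(\rho_A)/G(\sigma_B)$ is well defined, the degenerate cases are easily dispatched (if $G(\sigma_B)=0$ then necessarily $G(\rho_A)>0$ and the bound is $+\infty$, hence vacuous; if $G(\sigma_B)=\infty$ then $G(\rho_A)<\infty$ forces $r=0$, matching $G(\rho_A)/G(\sigma_B)=0$), and otherwise one simply divides to obtain $r\leq G(\rho_A)/G(\sigma_B)$. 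Taking the supremum over achievable $r$ completes the argument. The structure is essentially identical to that of Theorem~\ref{general_bound_rates_thm}, with weak asymptotic continuity playing the role of the combined strong superadditivity and lower semicontinuity used there, so I expect the only genuine nuisance to be the re-indexing bookkeeping in the middle paragraph; once this is handled the rest is routine.
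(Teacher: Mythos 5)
Your proof is correct and follows essentially the same route as the paper's: monotonicity plus weak additivity on the input side, weak additivity on the target, and weak asymptotic continuity to bridge the protocol output and $\sigma_B^{\otimes\floor{rn}}$, chained into $\floor{rn}\,G(\sigma_B)\leq n\,G(\rho_A)+o(n)$. Your re-indexing construction to fit Definition~\ref{restr_asymp_cont_def} is a point the paper glosses over (it simply invokes weak asymptotic continuity in one step), so your version is if anything slightly more careful on that bookkeeping.
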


\begin{proof}
For any sequence of free operations $\Lambda_n\in\mathcal{F}\left( A^n \to B^{\floor{rn}}\right), \Lambda_n(\rho_A^{\otimes n})\in \mathcal{T}_{B^n}$ satisfying
\bbb
\liminf_{n\to\infty}  \left\|\Lambda_n\!\left( \rho_{\!A}^{\otimes n}\right) - \sigma_{\!B}^{\otimes \floor{r n}} \right\|_1\!\! = 0\,,
\eee
it holds that
%We have that
\begin{align*}
G(\rho_A) &\texteq{1} \liminf_{n\to\infty} \frac1n\,  G\left( \rho_A^{\otimes n} \right) \\
&\textgeq{2} \liminf_{n\to\infty} \frac1n\, G\left(\Lambda_n\left(\rho_A^{\otimes n}\right)\right) \\
&\geq \liminf_{n\to\infty} \left( \frac1n\, G\left( \sigma_B^{\otimes \floor{rn}} \right) - \frac1n \left| G\left(\Lambda_n\left(\rho_A^{\otimes n}\right)\right) - G\left( \sigma_B^{\otimes\floor{rn}} \right)\right| \right) \\
&\texteq{3} \liminf_{n\to\infty} \frac{\floor{rn}}{n}\, G\left( \sigma_B \right) \\
&= r\, G \left( \sigma_B\right) .
\end{align*}
Here, 1~follows from weak additivity, 2~from monotonicity, and 3~from weak asymptotic continuity and again weak additivity.
\end{proof}

The main problem with Proposition~\ref{bound_rates_restr_asymp_cont_prop} is that the rate in~\eqref{modified_rate} only takes into account a restricted set of possible free transformations. We will see what this means in a physically relevant setting below.

\subsection{A physically interesting case: energy-constrained asymptotic continuity}

The above definition may seem rather abstract. However, there is a physically very natural scenario where it can be applied. Let us assume that a certain system $B\in\mathcal{S}$ is equipped with a Hamiltonian (i.e., a self-adjoint operator) $H_B$. Let us assume that the Hamiltonians add up without interaction terms upon taking multiple copies of $B$, in formula $H_{B^n} = H_B \otimes \id_B^{\otimes (n-1)} + \id_B \otimes H_B \otimes \id_B^{\otimes (n-2)} + \ldots + \id_B^{\otimes (n-1)} \otimes H_B$. Now, for a real number $E$, set
\bb
\mathcal{T}_{B^n}^E \coloneqq \left\{ \rho\in \D{B^n}:\, \Tr\left[ \rho\, H_{B^n} \right] \leq nE \right\} .
\label{energy_constrained_set}
\ee
Basically, we are considering states whose energy increases at most linearly in the number of systems $n$. When the set~\eqref{energy_constrained_set} is chosen in Definition~\ref{restr_asymp_cont_def}, the corresponding notion of weakly asymptotic continuity becomes a physically relevant and indeed fruitful one.

\begin{Def}
Let $(\mathcal{S},\mathcal{F})$ be a QRT endowed with a monotone $G$. Let $B\in\mathcal{S}$ be equipped with a Hamiltonian $H_B$. If for all $E$ the monotone $G$ is weakly asymptotically continuous on the set $\mathcal{T}^E$ defined by~\eqref{energy_constrained_set}, then we say that it is \textbf{asymptotically continuous in the presence of an energy constraint}, or \textbf{EC asymptotically continuous} for short.
\end{Def}

In practice, the above definition just means that whenever $(\rho_n)_{n\in \N_+}$ and $(\sigma_n)_{n\in \N_+}$ are sequences of states with $\rho_n, \sigma_n \in \D{A^n}$, $\Tr[\rho_n H_{A^n}], \Tr[\sigma_n H_{A^n}]\leq nE$ (for a fixed but arbitrary real number $E$), and moreover $\lim_{n\to \infty} \left\|\rho_n - \sigma_n\right\|_1=0$, then
\bb
\lim_{n\to\infty} \frac{\left|G(\rho_n) - G(\sigma_n)\right|}{n} = 0\, .
\ee
This definition turns out to encompass a sufficiently wide set of monotones. For example, Shirokov has proved that many important entanglement monotones are EC asymptotically continuous, with respect to several physically relevant Hamiltonians~\cite{Shirokov-sq, Shirokov-AFW-1, Shirokov-AFW-2, Shirokov-AFW-3}. To deduce a useful result from Proposition~\ref{bound_rates_restr_asymp_cont_prop} we need to fix some terminology.

\begin{Def}[{\cite[p.~9]{VV-diamond}}]
Let $A,B$ be two quantum system equipped with Hamiltonians $H_A,H_B$. A quantum channel $\Lambda:\T{A}\to\T{B}$ from $A$ to $B$ is called \textbf{$\boldsymbol{(\kappa,\delta)}$-energy-limited} if $\Lambda^\dag (H_B) \leq \kappa H_A + \delta$, with $\Lambda^\dag:\B{B}\to \B{A}$ being the adjoint of $\Lambda$. The set of such channels will be denoted with $\mathrm{EL}_{\kappa,\delta}(A\to B)$, where the choice of the Hamiltonians is not made explicit and assumed to be clear from the context.
\end{Def}

In such a setting, directly from Proposition~\ref{bound_rates_restr_asymp_cont_prop} we deduce the following.

\begin{prop} \label{bound_rates_EC_asymp_cont_prop}
Let $(\mathcal{S},\mathcal{F})$ be a QRT. Let $H_A,H_B$ be two Hamiltonians on $A,B\in\mathcal{S}$, and let $G$ be a weakly additive monotone that is EC asymptotically continuous. Then for all $\rho_A\in \D{A}$ and $\sigma_B\in \D{B}$ with finite energy (i.e., such that $\Tr[\rho_A H_A], \Tr[\sigma_B H_B]<\infty$) the uniformly energy-constrained (UEC) asymptotic transformation rate defined by
\bb
R_{\text{UEC}}(\rho_{A} \to \sigma_{B}) \coloneqq \sup_{0<\kappa,\delta<\infty} \sup\left\{r:\ \lim_{n\to\infty} \inf_{ \Lambda_n\in\mathcal{F}\left( A^n\to B^{\floor{rn}}\right)\, \cap\, \mathrm{EL}_{\kappa,\delta}\left( A^n\to B^{\floor{rn}} \right)} \left\|\Lambda_n\left( \rho_{A}^{\otimes n}\right) - \sigma_{B}^{\otimes \floor{r n}} \right\|_1 = 0 \right\}
\label{UEC_rate}
\ee
satisfies that
\bb
R_{\text{UEC}}(\rho_{A} \to \sigma_{B}) \leq \frac{G(\rho_A)}{G(\sigma_B)}\, ,
\ee
whenever the right-hand side is well defined.
\end{prop}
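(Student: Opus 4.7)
The plan is to reduce this statement to Proposition~\ref{bound_rates_restr_asymp_cont_prop} by translating the channel-level energy constraint into a state-level one. Concretely, for each fixed pair $(\kappa,\delta)$ appearing in the outer supremum of~\eqref{UEC_rate}, I will identify a threshold energy $E=E(r,\kappa,\delta)$ such that every rate $r$ achievable under $(\kappa,\delta)$-energy-limited free operations is also achievable in the $\mathcal{T}^{E}$-restricted sense of~\eqref{modified_rate}, with $\mathcal{T}^{E}$ given by~\eqref{energy_constrained_set}. Once such a matching is in place, the result will follow by taking the supremum of the resulting inequality over $r$ and then over $(\kappa,\delta)$.

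The key calculation is an energy-propagation bound: for $\Lambda_n \in \mathrm{EL}_{\kappa,\delta}\bigl(A^n \to B^{\floor{rn}}\bigr)$, the defining inequality $\Lambda_n^\dagger(H_{B^{\floor{rn}}}) \leq \kappa H_{A^n} + \delta$ combined with the assumed additivity of the Hamiltonian across copies gives
\begin{equation*}
\Tr\bigl[\Lambda_n(\rho_A^{\otimes n})\, H_{B^{\floor{rn}}}\bigr] = \Tr\bigl[\rho_A^{\otimes n}\, \Lambda_n^\dagger(H_{B^{\floor{rn}}})\bigr] \leq \kappa n\, \Tr[\rho_A H_A] + \delta\,.
\end{equation*}
Since $\rho_A$ has finite energy and $n/\floor{rn} \to 1/r$, any $E > \kappa\, \Tr[\rho_A H_A]/r$ ensures that, for all sufficiently large $n$, the output satisfies $\Lambda_n(\rho_A^{\otimes n}) \in \mathcal{T}^{E}_{B^{\floor{rn}}}$. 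Choosing also $E \geq \Tr[\sigma_B H_B]$ makes $\sigma_B^{\otimes \floor{rn}} \in \mathcal{T}^{E}_{B^{\floor{rn}}}$ automatically, so the target of the transformation also lives in the restricted set.

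With this matching in place, Proposition~\ref{bound_rates_restr_asymp_cont_prop} --- which applies because EC asymptotic continuity of $G$ is precisely weak asymptotic continuity on every $\mathcal{T}^E$, and weak additivity is part of the hypothesis --- yields $r \leq G(\rho_A)/G(\sigma_B)$ for every $r$ achievable with fixed $(\kappa,\delta)$. A supremum first over $r$ and then over $(\kappa,\delta)$ produces the claimed bound on $R_{\mathrm{UEC}}$. I do not expect a serious obstacle: the only item requiring care is the bookkeeping around the floor function and the finitely many small $n$ for which the energy bound may fail to strictly place the output in $\mathcal{T}^{E}_{B^{\floor{rn}}}$. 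These finitely many indices are irrelevant to the $n\to\infty$ limit and can either be discarded or absorbed by enlarging $E$ by a harmless constant.
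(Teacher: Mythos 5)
Your proof is correct and follows essentially the same route as the paper's: propagate the energy bound through the $(\kappa,\delta)$-energy-limited channel via $\Lambda_n^\dagger\bigl(H_{B^{\floor{rn}}}\bigr) \leq \kappa H_{A^n} + \delta$, conclude that both $\Lambda_n\bigl(\rho_A^{\otimes n}\bigr)$ and $\sigma_B^{\otimes \floor{rn}}$ lie in a common energy-constrained set $\mathcal{T}^{E}$, and then invoke Proposition~\ref{bound_rates_restr_asymp_cont_prop} before taking suprema over $r$ and $(\kappa,\delta)$. Your threshold $E > \kappa\,\Tr[\rho_A H_A]/r$ is in fact slightly more careful than the paper's choice $E' = \max\{\kappa,1\}(E+1)$, since it accounts explicitly for the $n$-versus-$\floor{rn}$ normalization of the output energy constraint.
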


\begin{proof}
Let $E\coloneqq \max\left\{\Tr[\rho_A H_A],\, \Tr[\sigma_B H_B]\right\}<\infty$. Fix arbitrary $0<\kappa,\delta<\infty$, and let the rate $r$ be achievable in~\eqref{UEC_rate} by means of a sequence of protocols $\Lambda_n\in\mathcal{F}\left( A^n\to B^{\floor{rn}}\right)\, \cap\, \mathrm{EL}_{\kappa,\delta}\left( A^n\to B^{\floor{rn}} \right)$. Then for sufficiently large $n$ it holds that
\begin{align*}
\Tr\left[\sigma_B^{\otimes n} H_{B^n}\right] &= n \Tr[\sigma_B H_B] \leq nE \leq n E' , \\
\Tr\left[\Lambda_n\left(\rho_A^{\otimes n}\right) H_{B^n}\right] &= \Tr\left[\rho_A^{\otimes n} \Lambda_n^\dag\left(H_{B^n}\right)\right] \leq \Tr\left[\rho_A^{\otimes n} (\kappa H_{A^n} + \delta)\right] = n\kappa \Tr[\rho_A H_A] + \delta \leq n\kappa E + \delta \leq n E' ,
\end{align*}
where we set $E'\coloneqq \max\{\kappa,1\} (E+1)$. Since $G$ is asymptotically continuous on the set $\mathcal{T}^{E'}$, and we have just shown that $\sigma_B^{\otimes n},\, \Lambda_n\left(\rho_A^{\otimes n}\right)\in \mathcal{T}^{E'}$, then we can apply Proposition~\ref{bound_rates_restr_asymp_cont_prop} and conclude the proof.
\end{proof}

The reason why Proposition~\ref{bound_rates_EC_asymp_cont_prop} \tcb{is ultimately not satisfactory, and why we on the contrary deem Theorem~\ref{general_bound_rates_thm} more compelling,} is twofold. First, Proposition~\ref{bound_rates_EC_asymp_cont_prop} only allows us to bound the standard asymptotic transformation rate (Definition~\ref{rate_def}), while we have seen that in certain settings the relevant quantity is the maximal asymptotic transformation rate (Definition~\ref{max_rate_def}). Secondly, \tcb{the above result only takes} into account sequences of protocols $(\Lambda_n)_{n}$ that are \emph{uniformly} energy-constrained, meaning that the output energy is at most $E_{\text{out}}\leq \kappa E_{\text{in}} + \delta$, with $\kappa$ and $\delta$ fixed for the whole sequence. If each $\Lambda_n$ is $(\kappa_n,\delta_n)$-energy-limited for each $n$, but $\limsup_{n\to\infty} \kappa_n = +\infty$ or $\limsup_{n\to\infty} \frac{\delta_n}{n}\tcb{=+\infty}$, the above method does not seem to tell us much about the corresponding rate, even when the initial and final states have a fixed (and finite) energy. Therefore, for instance, a sequence of free operations on CV systems where each $\Lambda_n$ involves either (a)~a squeezing whose intensity increases with $n$ and tends to $\infty$ in the limit $n\to\infty$; or (b)~a displacement unitary whose parameter $\alpha_n$ is superlinear in $n$, are excluded from the bound in Proposition~\ref{bound_rates_EC_asymp_cont_prop}. %We deem this a physically unjustified assumption.

\tcb{As we have seen, our Theorem~\ref{general_bound_rates_thm}} eliminates the need for both of these requirements, and instead provides ultimate bounds on maximal (instead of standard) asymptotic transformation rates, in a setting where the free protocols employed are otherwise totally unconstrained.

\section{Approximation by spectral truncation}

Here we study the problem of approximating $\NCM$ by truncating the input state. The forthcoming Corollary~\ref{approximation ncm cor} has been used to check the numerical validity of the plots in Section~\ref{Fock-diagonal subsec} We state first a useful lemma, whose proof follows closely that of~\cite[Lemma~7]{tightuniform}, with some adaptations made to fit our infinite-dimensional case. In what follows, for a trace class operator $X\in \T{}$ with decomposition $X = X_+ - X_-$ into positive and negative parts, we denote with $|X|\coloneqq X_+ + X_-$ its absolute value.

\begin{lemma} \label{WAC lemma}
Let $\rho,\sigma\in \D{m}$ be two $m$-mode states, and set $\epsilon \coloneqq \frac12 \left\|\rho-\sigma\right\|_1$. Assume that the operator $|\rho-\sigma|$ has finite mean photon number $E\coloneqq \Tr \left[|\rho-\sigma| \left(\sumno_{j=1}^n a_j^\dag a_j\right)\right] < \infty$. Then, for $F=\NCM, \NCi, \NC$ it holds that
\bb
\left| F(\rho) - F(\sigma) \right| \leq m\, \epsilon\, g\left(\frac{E}{m\,\epsilon} \right) + g(\epsilon)\, ,
\label{WAC}
\ee
where $g(x) = (1+x) \log_2(1+x) - x \log_2 x$.
\end{lemma}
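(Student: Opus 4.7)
The plan is to adapt the Alicki--Fannes--Winter argument in its energy-constrained form (Winter, Lemma~7 of the cited \emph{tightuniform} reference) to the CV setting. The three ingredients are convexity of $F$ (Lemma~\ref{NC_subadd_lemma}), a Holevo-type quasi-convexity with a binary-entropy correction $h(\lambda)=-\lambda\log_2\lambda-(1-\lambda)\log_2(1-\lambda)$, and the energy bound $F(\omega) \leq m\, g(E_\omega/m)$ from Proposition~\ref{bounded energy prop}.

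First I would write the Jordan decomposition $\rho - \sigma = \epsilon(\omega_+ - \omega_-)$ with $\omega_\pm \in \D{m}$ normalised states; the hypothesis on $E$ forces $\Tr\!\left[\omega_\pm\, \sumno_j a_j^\dag a_j\right] \leq E/\epsilon$. The auxiliary state
\bbb
\Delta \coloneqq \tfrac{1}{1+\epsilon}(\rho + \epsilon \omega_-) = \tfrac{1}{1+\epsilon}(\sigma + \epsilon \omega_+)
\eee
admits two distinct convex decompositions. Applying convexity of $F$ to one, and the quasi-convexity
\bbb
F\!\left(\lambda\, \eta_1 + (1-\lambda)\, \eta_2\right) \geq \lambda\, F(\eta_1) + (1-\lambda)\, F(\eta_2) - h(\lambda)
\eee
to the other, yields $F(\Delta) \leq \tfrac{1}{1+\epsilon} F(\sigma) + \tfrac{\epsilon}{1+\epsilon} F(\omega_+)$ and $F(\Delta) \geq \tfrac{1}{1+\epsilon} F(\rho) + \tfrac{\epsilon}{1+\epsilon} F(\omega_-) - h\!\left(\tfrac{1}{1+\epsilon}\right)$. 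Subtracting and using the identity $(1+\epsilon)\, h(1/(1+\epsilon))=g(\epsilon)$, together with $F(\omega_-)\geq 0$, the bound $F(\omega_+) \leq m\, g(E/(m\epsilon))$ from Proposition~\ref{bounded energy prop} (using monotonicity of $g$), and the symmetry under $\rho \leftrightarrow \sigma$, produces~\eqref{WAC}.

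The whole analytical content is thus concentrated in the Holevo-type quasi-convexity. For $F = \NC$ it is standard: the mixing inequality $S(\lambda \eta_1 + (1-\lambda)\eta_2) \leq \lambda S(\eta_1) + (1-\lambda) S(\eta_2) + h(\lambda)$ lifts to $D(\,\cdot\, \| \tau) \geq \lambda D(\eta_1\|\tau) + (1-\lambda) D(\eta_2\|\tau) - h(\lambda)$ for every classical $\tau$, after which $\inf_{\tau \in \CC_m}$ preserves the inequality via the trivial $\inf(A + B) \geq \inf A + \inf B$. For $F = \NCi$, one can either rerun the AFW step directly at the regularised level, or argue that the single-letter inequality lifts because $h(\lambda)$ is independent of the number of modes so that the $h(\lambda)/n$ correction vanishes while the leading term survives Fekete-type regularisation. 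The main obstacle is $F = \NCM$: the naive per-POVM Holevo bound does not survive the supremum over POVMs, because suprema do not distribute over convex combinations. To bypass this I plan to exploit the variational formula $\NCM(\rho) = \sup_{L > 0}\!\left\{\Tr[\rho \log_2 L] - \log_2 \sup_{\alpha} \braket{\alpha|L|\alpha}\right\}$ of Theorem~\ref{main result thm}: given $\delta$-optimal operators $L_1, L_2$ for $\NCM(\eta_1)$ and $\NCM(\eta_2)$, I will build a joint test operator in a direct-sum-like form with weights $(\lambda,1-\lambda)$ that realises both near-optima simultaneously, with the $h(\lambda)$ correction emerging precisely as the entropy cost of the classical flag discriminating the two branches. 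Plugging this enhanced quasi-convexity back into the core estimate then recovers~\eqref{WAC} for $\NCM$ as well.
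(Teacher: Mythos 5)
Your single-copy estimate coincides with the paper's: the Jordan decomposition $\rho-\sigma=\epsilon(\omega_+-\omega_-)$, the interpolating state admitting two convex decompositions, convexity on one side, the $h_2$-corrected quasi-convexity on the other, and Proposition~\ref{bounded energy prop} to control $F(\omega_+)$ via its mean photon number $\leq E/\epsilon$. For $F=\NC$ this is complete. The two remaining cases, however, each contain a genuine gap.

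For $F=\NCi$ neither of your suggested routes closes. Applying the single-copy bound directly to $\rho^{\otimes n}$ versus $\sigma^{\otimes n}$ fails because $\tfrac12\|\rho^{\otimes n}-\sigma^{\otimes n}\|_1$ does not stay equal to $\epsilon$ (it tends to $1$), and convexity/quasi-convexity of the regularised functional is never established, so "rerunning AFW at the regularised level" is not available either. What the paper actually proves is the tensor-stable strengthening $\left|\NC(\rho\otimes\tau)-\NC(\sigma\otimes\tau)\right|\leq m\,\epsilon\,g\!\left(\tfrac{E}{m\epsilon}\right)+g(\epsilon)$ for an \emph{arbitrary} auxiliary state $\tau$; this needs two ingredients absent from your write-up, namely subadditivity of $\NC$ (so that $\NC(\omega_+\otimes\tau)\leq m\,g(E/(m\epsilon))+\NC(\tau)$) and monotonicity under partial trace (so that $\NC(\omega_-\otimes\tau)\geq\NC(\tau)$), which make the $\NC(\tau)$ contributions cancel. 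One then telescopes, swapping one tensor factor at a time with $\tau_k=\rho^{\otimes(n-k-1)}\otimes\sigma^{\otimes k}$, sums the $n$ single-factor errors, divides by $n$, and takes the limit. Without the auxiliary $\tau$ in the core estimate there is no telescoping and hence no bound on $\NCi$. (This is also exactly why the lemma cannot cover $\NCMi$: $\NCM$ is superadditive rather than subadditive, so the tensor-stable version is unavailable there.)

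For $F=\NCM$ you rightly flag the real difficulty --- the per-POVM Holevo bound does not survive the supremum over measurements, since $\sup_{\M}\left[\lambda A(\M)+(1-\lambda)B(\M)\right]\leq\lambda\sup_{\M}A(\M)+(1-\lambda)\sup_{\M}B(\M)$ goes the wrong way --- and the paper is admittedly terse here, asserting the quasi-convexity sandwich for $\NCM$ from the classical Kullback--Leibler estimates. But your proposed repair does not work as described: in the variational formula of Theorem~\ref{main result thm} the supremum runs over positive operators $L$ on the fixed $m$-mode space $\HH_m$, penalised by $\sup_{\alpha\in\C^m}\braket{\alpha|L|\alpha}$ over the coherent states of that same space, so there is no room to adjoin a classical flag and no meaningful direct sum $L_1\oplus L_2$. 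The natural substitute $\lambda L_1+(1-\lambda)L_2$ produces uncontrolled cross terms such as $\Tr[\eta_1\log_2 L_2]$ (possibly $-\infty$), and operator concavity of $\log_2$ points in the unhelpful direction. A different mechanism is needed to make the $\NCM$ quasi-convexity rigorous.
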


\begin{rem}
Recently, Shirokov~\cite{Shirokov-AFW-1} has put forward a more general technique that allows to obtain general continuity results for relative entropy distance measures in infinite dimensions, thus removing the need to make any assumption concerning the operator $|\rho-\sigma|$. While theoretically superior, his bounds are less tight and ultimately not suited for our practical purposes.
\end{rem}

\begin{proof}[Proof of Lemma~\ref{WAC lemma}]
We start with the case where $F=\NC$. Here we actually prove that
\bb
\left| \NC(\rho\otimes \tau) - \NC(\sigma\otimes \tau) \right| \leq m\, \epsilon\, g\left(\frac{E}{m\,\epsilon} \right) + g(\epsilon)\, ,
\label{better WAC}
\ee
for all $n$-mode auxiliary states $\tau$. Call $h_2(p)\coloneqq - p \log_2 p - (1-p) \log_2(1-p)$ the binary entropy function. Using the convexity of $\NC$ (Lemma~\ref{NC_subadd_lemma}) together with~\cite[Proposition~5.24]{PETZ-ENTROPY}, it is not difficult to observe, as done by Winter~\cite[Lemma~7]{tightuniform}, that
\bb
p \NC(\rho_1) + (1-p) \NC(\rho_2) - h_2(p) \leq \NC(p\rho_1 + (1-p) \rho_2) \leq p \NC(\rho_1) + (1-p) \NC(\rho_2)\, .
\label{convex_but_not_too_much}
\ee

We now construct two states $\delta, \delta'\in \D{m}$ such that 
\bbb
\rho - \sigma = \epsilon \left( \delta - \delta'\right) ,\quad |\rho - \sigma | = \epsilon \left(\delta + \delta'\right) .
\eee
In particular, the mean photon number of $\delta$ satisfies that 
\bbb
\Tr \left[\delta \left(\sumno_{j=1}^n a_j^\dag a_j\right)\right] \leq \frac{1}{\epsilon} \Tr \left[|\rho-\sigma| \left(\sumno_{j=1}^n a_j^\dag a_j\right)\right] \leq \frac{E}{\epsilon}\, .
\eee
Now, set
\bbb
\omega \coloneqq \frac{1}{1+\epsilon}\, \rho + \frac{\epsilon}{1+\epsilon}\, \delta' = \frac{1}{1+\epsilon}\, \sigma + \frac{\epsilon}{1+\epsilon}\, \delta\, .
\eee
Then, on the one hand
\begin{align*}
\NC(\omega \otimes \tau) &\textleq{1} \frac{1}{1+\epsilon}\, \NC(\sigma\otimes \tau) + \frac{\epsilon}{1+\epsilon}\, \NC(\delta\otimes \tau) \\
&\textleq{2} \frac{1}{1+\epsilon}\, \NC(\sigma \otimes \tau) + \frac{\epsilon}{1+\epsilon} \left( m\, g\left(\frac{E}{m\,\epsilon}\right) + \NC(\tau)\right) .
\end{align*}
Here, the estimate in~1 comes from convexity~\eqref{convex_but_not_too_much}, while that in~2 is an application of the subadditivity of $\NC$ (Lemma~\ref{NC_subadd_lemma}) together with Proposition~\ref{bounded energy prop}. On the other hand, we can write
\begin{align*}
\NC(\omega \otimes \tau) &= \NC\left( \frac{1}{1+\epsilon}\, \rho \otimes \tau + \frac{\epsilon}{1+\epsilon}\, \delta' \otimes \tau \right) \\
&\textgeq{3} \frac{1}{1+\epsilon}\, \NC(\rho \otimes \tau) + \frac{\epsilon}{1+\epsilon}\, \NC(\delta' \otimes \tau) - h_2\left( \frac{\epsilon}{1+\epsilon} \right) \\
&\textgeq{4} \frac{1}{1+\epsilon}\, \NC(\rho \otimes \tau) + \frac{\epsilon}{1+\epsilon}\, \NC(\tau) - h_2\left( \frac{\epsilon}{1+\epsilon} \right) ,
\end{align*}
where the inequality in~3 is the lower bound in~\eqref{convex_but_not_too_much}, and that in~4 holds because of the monotonicity of $\NC$ under the classical operation of tracing away subsystems. Putting all together we see that
\bbb
\NC(\rho \otimes \tau) - \NC(\sigma \otimes \tau) \leq m\, \epsilon\, g\left(\frac{E}{m\, \epsilon} \right) + (1+\epsilon) h_2\left( \frac{\epsilon}{1+\epsilon} \right) = m\, \epsilon\, g\left(\frac{E}{m\, \epsilon} \right) + g(\epsilon)\, .
\eee
Together with the corresponding inequality with $\rho$ and $\sigma$ exchanged, this yields~\eqref{better WAC}, and in particular proves~\eqref{WAC} for $F=\NC$.

Now, again borrowing a telescopic argument from~\cite{tightuniform}, for all $n\in \N_+$ we have that
\begin{align*}
\left| \NC( \rho^{\otimes n}) - \NC(\sigma^{\otimes n}) \right| &= \left| \sumno_{k=0}^{n-1} \left( \NC\left(\rho^{\otimes (n-k)}\otimes \sigma^{\otimes k}\right) - \NC\left(\rho^{\otimes (n-k-1)}\otimes \sigma^{\otimes (k+1)}\right) \right) \right| \\
&\leq \sumno_{k=0}^{n-1} \left| \NC\left(\rho^{\otimes (n-k)}\otimes \sigma^{\otimes k}\right) - \NC\left(\rho^{\otimes (n-k-1)}\otimes \sigma^{\otimes (k+1)}\right) \right| \\
&= \sumno_{k=0}^{n-1} \left| \NC\left(\rho \otimes \tau_k \right) - \NC\left(\sigma \otimes \tau_k\right) \right| \\
&\leq k\, \left( m\, \epsilon\, g\left(\frac{E}{m\,\epsilon} \right) + g(\epsilon) \right) ,
\end{align*}
where in the last line we applied~\eqref{better WAC}. Diving by $k$ and taking the limit for $k\to\infty$ we see that
\bbb
\left| \NCi(\rho) - \NCi(\sigma) \right| \leq m\, \epsilon\, g\left(\frac{E}{m\,\epsilon} \right) + g(\epsilon)\, ,
\eee
which proves~\eqref{WAC} also when $F=\NCi$.

The case of $F=\NCM$ can be tackled with exactly the same techniques, because $\NCM$ obeys an inequality analogous to~\eqref{convex_but_not_too_much}. In turn, this is a consequence of the fact that the classical Kullback--Leibler divergence satisfies the estimates in~\cite[Proposition~5.24]{PETZ-ENTROPY}.
\end{proof}

\begin{rem}
We have not been able to establish~\eqref{WAC} also for the remaining case of $F=\NCMi$, essentially because we lack a statement similar to~\eqref{better WAC} for $\NCM$. In turn, this is due to the fact that this latter quantity is not subadditive --- in fact, it is strongly superadditive!
\end{rem}

The application of Lemma~\ref{WAC lemma} that is of interest to us is as follows.

\begin{cor}\label{approximation ncm cor}
Let $\rho,\sigma\in \D{m}$ be two $m$-mode states, and set $\epsilon \coloneqq \frac12 \left\|\rho-\sigma\right\|_1$. Assume that $\Tr \left[\rho \left( \sumno_{j=1}^m a_j^\dag a_j\right)\right]\leq E$ and also $\Tr \left[\sigma \left( \sumno_{j=1}^m a_j^\dag a_j\right)\right]\leq E$. Then, for $F=\NCM, \NCi, \NC$ it holds that
\bb
\left| F(\rho) - F(\sigma) \right| \leq m\, \epsilon\, g\left(\frac{2E}{m\,\epsilon} \right) + g(\epsilon)\, ,
\label{WAC commuting}
\ee
where again $g(x) = (1+x) \log_2(1+x) - x \log_2 x$. In particular, denoting with $\rho=\sum_k p_k \ketbra{e_k}$ the spectral decomposition of $\rho$, the sequence of spectral truncations $\rho_n\coloneqq \left( \sumno_{k\leq n} p_k\right)^{-1} \sum_{k\leq n} p_k \ketbra{e_k}$ satisfies that
\bb
F(\rho) = \lim_{n\to\infty} F(\rho_n)\, .
\label{approximation}
\ee
\end{cor}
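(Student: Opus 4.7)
The plan is to deduce both statements from Lemma~\ref{WAC lemma}, whose application to the pair $(\rho,\sigma)$ requires an upper bound on the auxiliary quantity $E_{\mathrm{rel}} \coloneqq \Tr[\,|\rho-\sigma|\, \sumno_{j=1}^m a_j^\dagger a_j\,]$ in terms of the individual energies assumed in the corollary. The key observation is the operator inequality $|\rho-\sigma| \leq \rho+\sigma$, valid whenever $\rho$ and $\sigma$ share a common eigenbasis, since in that basis it reduces to the pointwise bound $|a-b| \leq a+b$ for $a,b \geq 0$. Granting this, tracing against the positive operator $\sumno_j a_j^\dagger a_j$ gives $E_{\mathrm{rel}} \leq 2E$, and substituting into Lemma~\ref{WAC lemma} together with the monotonicity of $g$ (immediate from $g'(x)=\log_2(1+1/x)>0$) produces~\eqref{WAC commuting}.

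For the convergence~\eqref{approximation}, the crucial remark is that $\rho_n$ and $\rho$ commute automatically, being jointly diagonal in $\{\ket{e_k}\}_{k\in\N}$, so the argument of the first part applies. Writing $c_n \coloneqq \sumno_{k\leq n} p_k$, I would expand $\rho - \rho_n = \sumno_{k>n} p_k\ketbra{e_k} - \frac{1-c_n}{c_n}\sumno_{k\leq n} p_k\ketbra{e_k}$: the two summands sit on orthogonal Fock sectors, whence $\epsilon_n \coloneqq \frac12\|\rho-\rho_n\|_1 = 1-c_n \to 0$ as $n\to\infty$. Similarly $\Tr[\rho_n\, \sumno_j a_j^\dagger a_j] = c_n^{-1}\sumno_{k\leq n} p_k\braket{e_k|\sumno_j a_j^\dagger a_j|e_k} \leq E/c_n$, which is at most $2E$ once $c_n \geq 1/2$. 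Thus $E' \coloneqq 2E$ is a valid uniform energy bound, and~\eqref{WAC commuting} can be applied to $(\rho,\rho_n)$ for all sufficiently large $n$.

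The final step will be to check that the right-hand side of~\eqref{WAC commuting} vanishes in the limit: specialized here, it reads $|F(\rho) - F(\rho_n)| \leq m\epsilon_n\, g(4E/(m\epsilon_n)) + g(\epsilon_n)$, and both terms tend to zero. Indeed $g(\epsilon_n) \to 0$ directly from the definition, while $\epsilon_n g(4E/(m\epsilon_n)) \sim \epsilon_n\log_2(1/\epsilon_n) \to 0$ because $g(y) \sim \log_2(ey)$ as $y \to +\infty$, an elementary asymptotic obtainable by rearranging $g$. This yields $F(\rho_n) \to F(\rho)$. The only genuinely delicate point in the whole argument is the operator inequality $|\rho-\sigma|\leq\rho+\sigma$: a short two-dimensional example (e.g.\ $\ketbra{0}$ versus $\ketbra{+}$) shows that it can fail without commutativity, so the bound $E_{\mathrm{rel}}\leq 2E$ should really be read in the commuting regime --- which fortunately is exactly the one triggered by the spectral-truncation application.
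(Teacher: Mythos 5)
Your proof is correct and follows essentially the same route as the paper's: both reduce \eqref{WAC commuting} to the bound $\Tr\left[|\rho-\sigma|\sumno_j a_j^\dag a_j\right]\le 2E$ via the inequality $|\rho-\sigma|\le\rho+\sigma$ in a common eigenbasis, and both then apply this to the commuting pair $(\rho,\rho_n)$ with $\epsilon_n = 1-c_n\to 0$ and energy at most $2E$, concluding via $\lim_{\epsilon\to 0^+}\epsilon\, g(\delta/\epsilon)=0$. Your explicit caveat that $|\rho-\sigma|\le\rho+\sigma$ can fail without commutativity is well taken --- the paper's proof makes the same simultaneous-diagonalization assumption tacitly (cf.\ the label of \eqref{WAC commuting}), so the first claim should indeed be read for commuting states, which is exactly the regime needed for the spectral-truncation application.
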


\begin{proof}
Thanks to Lemma~\ref{WAC lemma}, in order to prove~\eqref{WAC commuting} it suffices to show that $\Tr \left[|\rho-\sigma| \left( \sumno_{j=1}^m a_j^\dag a_j\right)\right] \leq 2E$. Indeed, if $\rho=\sum_k p_k \ketbra{e_k}$ and $\sigma=\sum_k q_k \ketbra{e_k}$ then 
\bbb
|\rho-\sigma| = \sum_k |p_k-q_k| \ketbra{e_k} \leq \sum_k (p_k+q_k) \ketbra{e_k}=\rho+\sigma\, ,
\eee
so that 
\bbb
\Tr \left[|\rho-\sigma| \left( \sumno_{j=1}^m a_j^\dag a_j\right)\right] \leq \Tr \left[(\rho+\sigma) \left( \sumno_{j=1}^m a_j^\dag a_j\right)\right] \leq 2E\, .
\eee

To deduce~\eqref{approximation}, note that $[\rho,\rho_n]=0$, with $\epsilon_n \coloneqq \frac12 \left\|\rho-\rho_n\right\|_1\tendsn{} 0$. Also, for sufficiently large $n$ the mean photon number of $\rho_n$ is at most twice that of $\rho$ (call it $E$), so that
\bbb
\left| F(\rho) - F(\rho_n) \right| \leq m\, \epsilon_n\, g\left(\frac{4E}{m\, \epsilon_n} \right) + g(\epsilon_n) \tendsn{} 0\, ,
\eee
where we used the well-known fact that $\lim_{\epsilon\to 0^+} \epsilon\, g(\delta/\epsilon)=0$ for all $\delta>0$.
\end{proof}

\section{A technical lemma}

The following result was invoked in the proof of Proposition~\ref{upper and lower bound for ncm prop} (more precisely, in step~3 of~\eqref{upper and lower bound for ncm eq1}) as an alternative to~\cite[Proposition~5.23(iv)]{PETZ-ENTROPY}. For the sake of completeness, we include a proof.

\begin{lemma}\label{stupid lemma}
Let us consider two $m$-mode states $\rho,\sigma$, with $S_W(\rho)<\infty$. Then, it holds that
\bbb
\DM(\rho\|\sigma)\geq \DKL(Q_\rho\|Q_\omega)\,.
\eee
\end{lemma}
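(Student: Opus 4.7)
The Husimi $Q$-function is precisely the outcome probability density associated with the (continuous) heterodyne POVM $d\mu(\alpha) = \frac{1}{\pi^m}\ketbra{\alpha}\,d^{2m}\alpha$: indeed $\frac{1}{\pi^m}\int \ketbra{\alpha}\,d^{2m}\alpha = \id$, so this constitutes a resolution of the identity, and $\frac{1}{\pi^m}\braket{\alpha|\tau|\alpha} = Q_\tau(\alpha)$ for any state $\tau$. Since the paper's definition of $\DM$ restricts to \emph{finite} POVMs (see Section~\ref{subsec_notation}), the strategy is to approximate heterodyne detection by a sequence of discretised POVMs and pass to the limit, using the monotonicity of the classical Kullback--Leibler divergence under coarse-graining.

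Concretely, for any finite measurable partition $\pi=\{C_1,\ldots,C_N\}$ of $\C^m$, define the finite POVM $\M_\pi \coloneqq \{E_i^\pi\}_{i=1}^N$ by
\bbb
E_i^\pi \coloneqq \frac{1}{\pi^m}\int_{C_i}\ketbra{\alpha}\,d^{2m}\alpha .
\eee
Each $E_i^\pi$ is positive and bounded (since $0\leq E_i^\pi \leq \id$), and $\sum_{i=1}^N E_i^\pi = \id$, so $\M_\pi$ is a bona fide quantum measurement in the sense of Section~\ref{subsec_notation}. Its outcome distribution on any state $\tau$ is
\bbb
P^{\M_\pi}_\tau(i) = \Tr\!\left[\tau\, E_i^\pi\right] = \int_{C_i} Q_\tau(\alpha)\,d^{2m}\alpha,
\eee
i.e.\ exactly the coarse-graining of the density $Q_\tau$ induced by $\pi$. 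By the defining supremum in~\eqref{measured re}, we have $\DM(\rho\|\sigma) \geq \DKL\!\left(P^{\M_\pi}_\rho\,\|\,P^{\M_\pi}_\sigma\right)$ for every such $\pi$, and therefore
\bbb
\DM(\rho\|\sigma) \;\geq\; \sup_{\pi} \DKL\!\left(P^{\M_\pi}_\rho\,\big\|\,P^{\M_\pi}_\sigma\right) \;=\; \DKL(Q_\rho\|Q_\sigma),
\eee
where the last equality is the classical variational characterisation of the Kullback--Leibler divergence between two probability densities as the supremum of the discrete KL divergences of their coarse-grainings over all finite measurable partitions of the underlying sample space (a standard result from classical information theory, due to Gel'fand--Yaglom--Perez and Dobrushin).

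The only place where attention is required is the handling of possibly infinite values. The hypothesis $S_W(\rho)<\infty$ guarantees that $-Q_\rho \log_2(\pi^m Q_\rho)$ is integrable, so that $\DKL(Q_\rho\|Q_\sigma) = -S_W(\rho) - m\log_2\pi - \int Q_\rho \log_2 Q_\sigma\, d^{2m}\alpha$ is a well-defined element of $[0,+\infty]$, and the variational formula applies unambiguously whether the right-hand side is finite or $+\infty$ (in the latter case, one simply exhibits partitions along which $\DKL(P^{\M_\pi}_\rho\|P^{\M_\pi}_\sigma)$ diverges). No further obstacle arises, and the inequality follows.
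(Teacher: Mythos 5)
Your proof is correct, and its overall architecture coincides with the paper's: both arguments observe that the Husimi function is the outcome density of the heterodyne POVM, discretise that continuous resolution of the identity into finite POVMs indexed by measurable partitions of $\C^m$, and invoke the defining supremum in~\eqref{measured re}. Where you diverge is in the final limiting step. The paper carries out an explicit construction: it chooses a partition consisting of small cells of diameter $\ell$ covering a large ball, plus two exceptional sets where $Q_\rho$ or $Q_\sigma$ is small or which lie far from the origin, and then controls the error terms by hand using the uniform continuity of $Q_\rho$, $Q_\sigma$, and $\log_2$ on compact sets, recovering $\DKL(Q_\rho\|Q_\sigma)$ as a generalized Riemann integral in the limit $\epsilon,\eta,\ell\to 0$, $r_1,r_2\to\infty$. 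You instead delegate exactly this step to the Gel'fand--Yaglom--Perez variational characterisation of the classical relative entropy as the supremum of its coarse-grainings over finite measurable partitions. This is a legitimate and cleaner route: it handles the case $\DKL(Q_\rho\|Q_\sigma)=+\infty$ automatically and avoids all of the $\pazocal{O}(\epsilon)+\pazocal{O}(\ell)$ bookkeeping, at the cost of importing a nontrivial classical theorem rather than keeping the argument self-contained. Your remark on the role of $S_W(\rho)<\infty$ is also consistent with the paper's (it is what makes the splitting of the integral, and hence the identification with the Wehrl entropy used downstream in Proposition~\ref{upper and lower bound for ncm prop}, legitimate), though strictly speaking the partition-supremum identity holds for arbitrary pairs of probability densities.
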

\begin{proof}
At fist sight, the result might seem a trivial application of the definitions of measured relative entropy as given in~\eqref{measured re} and of heterodyne detection as given in~\cite[5.4.2]{BUCCO}. The only obstacle is that, according to our definition, a POVM must be a finite collection of operators; this is also crucial for proving Lemma~\ref{Berta variational lemma}, and hence cannot be simply removed. The idea of the proof is precisely to show that an
heterodyne detection, which corresponds to the decomposition of the identity $\id= \int_{\C^m} \frac{d^{2m}\alpha}{\pi^m} \, \ketbra{\alpha}$, can be approximated, inside the Kullback--Leibler divergence, by a sequence of finite POVMs. Note that the hypothesis $S_W(\rho)<\infty$ is needed to ensure that $\DKL(\rho\|\sigma)$ is a well-defined Lebesgue integral for any $\sigma\in\D{}$ (possibly diverging to $+\infty$).

Let us fix $\epsilon,\,\eta,\,r_1,\,r_2,\,\ell>0$ and consider the family of subsets of $\C^m$, $\pazocal{A}=\{A_n\}_{n=0,\ldots,N+1}$, with $N$ a finite positive integer, constructed as follows:
\begin{itemize}
    \item $A_0\coloneqq\left\{\alpha\in\C^m:Q_\rho(\alpha)<\eta\text{ or }Q_\sigma(\alpha)<\eta\right\}\,.$
    \item For $n=1,\ldots,N$, $A_n$ is a subset of $\C^m$ such that:
    \begin{itemize}
        \item it is contained in a ball of radius $\ell$;
        \item $A_n\cap A_{n'}=\emptyset$ if $n\neq n'$;
        \item $B_m(r_1)\subset\bigcup_{n=0}^{N}A_n$ and $\bigcup_{n=1}^{N}A_n\subset B_m(r_2)$, where $B_m(r)$ is the $m$-dimensional ball with radius $r$ and centered at the origin of $\C^m$.
    \end{itemize}
    \item $A_{N+1}\coloneqq \C^m\setminus\bigcup_{n=0}^{N}A_n$.
\end{itemize}
Notice that we do not specify the dependence of $\pazocal{A}$ upon $\epsilon,\,\eta,\,r_1,\,r_2$ and $\ell$ for ease of notation. It is clear from the definition that the set $\pazocal{M}=\{E_n\}_{n=1,\ldots,N}$, with
\bbb
E_n= \int_{A_n} \frac{d^{2m}\alpha}{\pi^m}\, \ketbra{\alpha}
\eee
is a proper, finite POVM.

For a fixed state $\rho$ we can compute the classical probability distribution
\bbb
\pazocal{P}^{\pazocal{M}}_\rho(n)=\Tr[\rho E_n]= \int_{A_n} \frac{d^{2m}\alpha}{\pi^m} \,\braket{\alpha|\rho|\alpha}=\int_{A_n} \frac{d^{2m}\alpha}{\pi^m} \,Q_\rho(\alpha)\,.
\eee
Since $Q_\rho\in L^1(\C^m)$, we can always take $r_1$ big enough so that $\pazocal{P}^{\pazocal{M}}_\rho(N+1)<\epsilon$. Similarly we can always choose $\eta$ such that $\pazocal{P}^{\pazocal{M}}_\rho(0)<\epsilon$. Now, given another fixed state $\sigma$, we have:
\begin{align*}
D^M(\rho\|\sigma)&\geq\DKL\left(\pazocal{P}^{\pazocal{M}}_\rho\|\pazocal{P}^{\pazocal{M}}_\sigma\right)\\
&=\sum_{n=0}^{N+1}\int_{A_n}d^{2m}\alpha \,Q_\rho(\alpha)\left[\log_2\left(\int_{A_n}d^{2m}\beta \,Q_\rho(\beta)\right)-\log_2\left(\int_{A_n}d^{2m}\beta \,Q_\sigma(\beta)\right)\right]\\
&\textgeq{1}\sum_{n=0}^{N+1}\int_{A_n}d^{2m}\alpha \,Q_\rho(\alpha)\log_2\left(\int_{A_n}d^{2m}\beta \,Q_\rho(\beta)\right)-\sum_{n=1}^{N}\int_{A_n}d^{2m}\alpha \,Q_\rho(\alpha)\log_2\left(\int_{A_n}d^{2m}\beta \,Q_\sigma(\beta)\right)\\
&\texteq{2}\sum_{n=1}^{N}\int_{A_n}d^{2m}\alpha \,Q_\rho(\alpha)\left[\log_2\left(\int_{A_n}d^{2m}\beta \,Q_\rho(\beta)\right)-\log_2\left(\int_{A_n}d^{2m}\beta \,Q_\sigma(\beta)\right)\right]+\pazocal{O}(\epsilon)\\
&\textgeq{}\sum_{n=1}^N\int_{A_n}d^{2m}\alpha \,Q_\rho(\alpha)\left[\log_2\left(\cancel{\text{Vol}(A_n)}\inf_{\beta\in A_n} \,Q_\rho(\beta)\right)-\log_2\left(\cancel{\text{Vol}(A_n)}\sup_{\beta\in A_n} \,Q_\sigma(\beta)\right)\right]+\pazocal{O}(\epsilon)\\
&\texteq{3}\sum_{n=1}^N\int_{A_n}d^{2m}\alpha \,Q_\rho(\alpha)\left[\log_2\left(Q_\rho(\alpha)+\pazocal{O}(\ell)\right)-\log_2\left(Q_\sigma(\alpha)+\pazocal{O}(\ell)\right)\right]+\pazocal{O}(\epsilon)\\
&\texteq{4}\sum_{n=1}^N\int_{A_n}d^{2m}\alpha \,Q_\rho(\alpha)\left(\log_2Q_\rho(\alpha)-\log_2Q_\sigma(\alpha)\right)+\pazocal{O}(\epsilon)+\pazocal{O}(\ell)\\
&=\int_{\bigcup_{n=1}^NA_n}d^{2m}\alpha \,Q_\rho(\alpha)\left(\log_2Q_\rho(\alpha)-\log_2Q_\sigma(\alpha)\right)+\pazocal{O}(\epsilon)+\pazocal{O}(\ell)\,.
\end{align*}
Here, in~1 we have discarded two positive terms, in~2 we have used the fact that $\pazocal{P}_\rho^{\pazocal{M}}(0),\pazocal{P}_\rho^{\pazocal{M}}(N+1)<\epsilon$ and $\lim_{x\to0}x\log_2x=0$, in~3 we have used the fact that $Q_\rho$ and $Q_\sigma$ are uniformly continuous on any compact set and finally in~4 we have used the fact that also the logarithm is uniformly continuous on $\bigcup_{n=1}^NA_n$.
Now, by taking the proper limits, i.e., $\epsilon,\eta,\ell\to0$ and $r_1,r_2\to\infty$, we obtain precisely the generalized Riemann integral over $\C^m$ of the function $Q_\rho(\alpha)\left(\log_2Q_\rho(\alpha)-\log_2Q_\sigma(\alpha)\right)$. Since $S_W(\rho)<\infty$, the function is absolutely integrable, and the generalized Riemann integral is well defined and coincides with the Lebesgue one.
\end{proof}

\bibliographystyle{apsrev4-1}
\bibliography{mainbiblio}

\end{document}